\DeclareMathOperator*{\tr}{tr}
\date{}
\title{Average whenever you meet: \\
Opportunistic  protocols for community detection}
\author{Luca Becchetti \\
\scriptsize{Sapienza Università di Roma} \\
\scriptsize{Rome, Italy}\\ 
\footnotesize{\texttt{becchetti@dis.uniroma1.it}}
\and
Andrea Clementi \\ 
\scriptsize{Università di Roma Tor Vergata} \\
\scriptsize{Rome, Italy} \\
\footnotesize{\texttt{clementi@mat.uniroma2.it}}
\and
Pasin Manurangsi \\
\scriptsize{U.C. Berkeley} \\
\scriptsize{Berkeley, California} \\
\footnotesize{\texttt{pasin@berkeley.edu}}
\and
Emanuele Natale \\
\scriptsize{Max Planck Institute for Informatics} \\
\scriptsize{Saarbrücken, Germany} \\
\footnotesize{\texttt{enatale@mpi-inf.mpg.de}}
\and
Francesco Pasquale \\
\scriptsize{Università di Roma Tor Vergata} \\
\scriptsize{Rome, Italy} \\
\footnotesize{\texttt{pasquale@mat.uniroma2.it}}
\and
Prasad Raghavendra \\
\scriptsize{U.C. Berkeley} \\ 
\scriptsize{Berkeley, CA, United States} \\
\footnotesize{\texttt{raghavendra@berkeley.edu}}
\and
Luca Trevisan \\
\scriptsize{U.C. Berkeley} \\ 
\scriptsize{Berkeley, CA, United States} \\
\footnotesize{\texttt{luca@berkeley.edu}}
}
\newtheorem{definition}{Definition}[section]
\newtheorem{lemma}{Lemma}[section]
\newtheorem{obs}{Observation}[section]
\newtheorem{theorem}[lemma]{Theorem}
\newtheorem{cor}{Corollary}[section]
\newtheorem{fact}{Fact}[section]
\newtheorem{claim}{Claim}
\newtheorem{prop}{Proposition}
\newtheorem{remark}{Remark}
\newcommand{\bigO}{\mathcal{O}}
\newcommand{\planted}{\mathcal{G}_{n,p,q}}
\newcommand{\Prob}[2]{\mathbb{P}_{#1} \left[ #2 \right]}
\renewcommand{\Pr}[1]{\mathbb{P} \left[ #1 \right]}
\newcommand{\Expec}[2]{\mathbb{E}_{#1}\! \left[ #2 \right]}
\newcommand{\E}{\mathop{\mathbb{E}}}
\newcommand{\skproof}{\noindent\textit{Sketch of the Proof. }}
\newcommand{\beq}{\begin{equation}}
\newcommand{\eeq}{\end{equation}}
\newcommand{\ind}{\mathds{1}}
\newcommand{\Z}{\mathbb{Z}}
\newcommand{\geqs}{\geqslant}
\newcommand{\leqs}{\leqslant}
\newcommand{\cE}{\mathcal{E}}
\newcommand{\N}{\mathbb{N}}
\newcommand{\cM}{\mathcal{W}}
\def\Pr{\mathop{\mathbb P}}
\newcommand{\tend}{\text{e}}
\newcommand{\stored}{\text{s}}
\newcommand{\ttau}{\tilde{\tau}}
\newcommand{\uni}{\text{uniform}}
\newcommand{\initial}{\text{initial}}
\newcommand{\midv}{\text{ }\middle\vert\text{ }}
\newcommand{\incorrect}{\text{incorrect}}
\newcommand{\oR}{\bar{R}}
\newcommand{\aveproc}{\mbox{\sc Averaging}\xspace}
\newcommand{\labelsign}{\mbox{\sc Sign-Labeling}\xspace}
\newcommand{\labeljump}{\mbox{\sc Jump-Labeling}\xspace}
\newcommand{\CSL}{{\sc CSL}\xspace}
\newcommand{\dimsignature}{\ell}
\newcommand{\avg}{\mbox{\sc Averaging}\xspace}
\newcommand{\comment}[1]{\texttt{\textcolor{red}{[#1]}}}
\newcommand{\pasin}[1]{\comment{pasin: #1}}
\newcommand{\sgn}{\ensuremath{\mathbf{sgn}}}
\newcommand{\puki}{\ensuremath{p}} 
\newcommand{\hamm}{\ensuremath{\Delta}}
\newcommand{\lapl}{\mathcal{L}}
\newcommand{\norm}[1]{\lVert#1\rVert}
\def\bx{{\bf x}}
\def\by{{\bf y}}
\def\bv{{\bf v}}
\def\bw{{\bf w}}
\def\bz{{\bf z}}
\def\be{{\bf e}}
\def\bh{{\bf h}}
\def\bfied{{\bf f}}
\def\bchi{{\bm\chi}}
\def\bone{{\mathbf 1}}
\def\bzero{{\mathbf 0}}
\newcommand{\hsign}{\bh^{\mbox{sign}}_u\xspace}
\newcommand{\psign}{\puki^{sign}_u\xspace}
\def\R{{\mathbb R}}
\renewcommand{\le}{\leqslant}
\renewcommand{\leq}{\leqslant}
\renewcommand{\ge}{\geqslant}
\renewcommand{\geq}{\geqslant}
\renewcommand{\epsilon}{\varepsilon}
\newcommand{\avgW}{\overline{W}}
\begin{document}

\maketitle
\thispagestyle{empty}

\begin{abstract}
Consider the following asynchronous, opportunistic communication model over a
graph $G$: in each round, one edge is activated uniformly and independently at
random and (only) its two endpoints can exchange messages and perform local
computations. Under this model, we study the following random process:
\emph{The first time a vertex is an endpoint of an active edge, it chooses a
random number, say $\pm 1$ with probability $1/2$; then, in each round, the two
endpoints of the currently active edge update their values to their average.}
We show that, if $G$ exhibits a two-community structure (for example, two
expanders connected by a sparse cut), the values held by the nodes will
collectively reflect the underlying community structure over a suitable phase
of the above process, allowing efficient and effective recovery in important
cases.

In more detail, we first provide a first-moment analysis showing that, for a
large class of almost-regular clustered graphs that includes the
\emph{stochastic block model}, the expected values held by all but a negligible
fraction of the nodes eventually reflect the underlying cut signal. We prove
this property emerges after a ``mixing'' period of length $\bigO(n\log n)$. We
further provide a second-moment analysis for a more restricted class of regular
clustered graphs that includes the \emph{regular stochastic block model}. For
this case, we are able to show that most nodes can efficiently and locally
identify their community of reference over a suitable time window.  This
results in the first opportunistic protocols that approximately recover
community structure using only logarithmic (or polylogarithmic, depending on
the sparsity of the cut) work per node.  Even for the above class of regular
graphs, our second moment analysis requires new concentration bounds on the
product of certain random matrices that are technically challenging and
possibly of independent interest.

\bigskip
\noindent
\textbf{Keywords:} 
Distributed Community Detection,
Asynchronous Protocols,
Random Processes,
Spectral Analysis.

\end{abstract}




\section{Introduction}
\label{sec:intro}

Consider the following, elementary distributed process on an undirected graph
$G = (V,E)$ with $|V| = n$ nodes and $|E| = m$ edges. Each node $v$ holds a
real number $x_v$ (which we call the {\em state} of node $v$); at each time
step, one random edge $\{u,v\}$ becomes active and its endpoints $u$ and $v$
update their states to their average. 

Viewed as a protocol, the above process is consistent with asynchronous,
opportunistic communication models, such as those considered in~\cite{AAER07}
for \emph{population-protocols}; here, in every round,  one edge is activated
uniformly and independently at random and (only) its two endpoints can exchange
messages and perform local computations in that round\footnote{In an
essentially equivalent continuous-time model, each edge has a clock that ticks
at random intervals with a Poisson distribution of average 1; when the clock
ticks, then the edge endpoints  become activated. For $t$ larger than $n\log
n$, the behavior of the continuous time process for $t/n$ units of time and the
behavior of the discrete-time process for $t$ steps are roughly equivalent.}.
We further assume no global clock is available (nodes can at most count the
number of local activations) and  that the network is {\em anonymous}, i.e.,
nodes are not aware of theirs or their neighbors' identities and all nodes run
the same  process at all times.

The long-term behavior of the process outlined above is well-understood:
assuming $G$ is connected, for each initial global state $\bx\in \R^V$  the
system converges to a global state in which all nodes share a common value,
namely, the average of their initial states. A variant of an argument of Boyd
et al.~\cite{BGPS06} shows that convergence occurs in $\bigO \left(
\frac{1}{\lambda_2} n \log n \right)$ steps, where $\lambda_2$ is the second
smallest eigenvalue of the normalized Laplacian of $G$. 

Suppose now that $G$ is  \emph{well-clustered}, i.e. it exhibits a {\em
community structure} which in the simplest case consists of two equal-sized
expanders, connected by a sparse cut: This structure arises, for instance, when
the graph is sampled from the popular \emph{stochastic block
model}~\cite{MNS14} $\planted$ for $p \gg q$ and $p \geqslant \log n/n$.  If we
let the averaging process unfold on such a graph, for example starting from an
initial $\pm 1$ random global state, one might reasonably expect a faster,
transient convergence toward some local average within each community,
accompanied by a slower, global convergence toward the average taken over the
entire graph. If, as is likely the case, a gap exists between the local
averages of the two communities, the global state during the transient phase
would reflect the graph's underlying community structure.  This intuition
suggests the main questions we address in this paper:


\medskip\noindent
Is there {\em  a phase in which the global state carries information about
community structure}? If so, {\em how strong is the corresponding  ``signal''}?
Finally, {\em can nodes leverage local history to recover this information}?

\medskip

The idea of using averaging local rules to perform distributed community
detection is not new: In~\cite{BCNPT17}, Becchetti et al.  consider a
deterministic dynamics in which, at every round, each node updates its local
state to the average of its neighbors. The authors show that this results in a
fast clustering algorithm with provable accuracy on a wide class of
almost-regular graphs that includes the stochastic block model. We remark that
the  algorithm in~\cite{BCNPT17} works in a synchronous, parallel
communication model where every node exchanges data with all its neighbors in
each round. This implies considerable work and communication costs, especially
when the graph is dense. On the other hand, each step of the process is
described by the same matrix and its evolution unfolds according to the power
of this matrix applied to the initial state. In contrast, the averaging
process we consider in this paper is considerably harder to analyze than the
one in~\cite{BCNPT17}, since each step is described by a  random, possibly
different averaging matrix.
 
Differently from~\cite{BCNPT17}, our goal here is the design of simple,
lightweight protocols for fully-decentralized community detection which fit the
asynchronous, opportunistic communication model, in which a (random) link
activation represents  an opportunistic meeting that the endpoints can exploit
to  exchange one-to-one messages. More specifically, by ``lightweight'' we mean
protocols that require minimalistic assumptions as to network capabilities,
while performing their task with minimal work, storage and communication per
node (at most logarithmic or polylogarithmic in our case). In this respect, any
clustering strategies (like the one in~\cite{SZ17}) which construct (and then
work over) some static, sparse subgraph of the underlying graph are unfeasible
in the opportunistic model we consider here. This restrictive setting is
motivated by network scenarios in which individual agents need to autonomously
and locally uncover underlying, implicit communities of which they are members.
This has widespread applicability, for example in communication systems where
lightweight data can be locally shared via wireless opportunistic meetings when
agents come within close range~\cite{WWA12}.
    
We next discuss what it means to recover the ``underlying community structure''
in a distributed setting, a notion that can come in stronger or weaker flavors.
Ideally, we would like the protocol to reach a state in which, at least with
high probability, each node can use a simple rule to assign itself one of two
possible labels, so that labelling within each community is consistent and
nodes in different communities are assigned different labels. Achieving this
corresponds to {\em exact (block) reconstruction}. The next best guarantee is
{\em weak (block) reconstruction} (see Definition~\ref{def:weak}). In this
case, with high probability the above property is true for all but a small
fraction of misclassified nodes. In this paper, we introduce a third notion,
which we call {\em community-sensitive labeling} (\CSL for short): in this
case, there is a predicate that can be applied to pairs of labels so that, for
all but a small fraction of outliers, the labels of any two nodes within the
same community satisfy the predicate, whereas the converse occurs when they
belong to different communities\footnote{Note that a weak reconstruction
protocol entails a community-sensitive labeling. In this case, the predicate is
true if two labels are the same.}. In this paper, informally speaking, nodes
are labelled with binary signatures of logarithmic  length, while two labels
satisfy the predicate whenever their Hamming distance is  below a certain
threshold. This introduces a notion of similarity between nodes of the graph,
with labels behaving like profiles that reflect community
membership\footnote{Hence the phrase  \emph{community-sensitive Labeling} we
use to refer to our approach.}. Note that this  weaker notion of
community-detection allows nodes to locally tell ``friends'' in their community
from ``foes'' in the other community, which is the main application of
distributed community detection in the opportunistic setting we
consider here.

\subsection{Our results} \label{ssec:ours}

\noindent \textbf{First moment analysis.} Our first contribution is an analysis
of the expected evolution of the averaging process over a wide class of
almost-regular graphs (see Definition~\ref{def:regulargood}) that possess a
hidden and balanced partition of the nodes with the following properties: (i)
The cut separating the two communities is sparse, i.e., it contains $o(m)$
edges; (ii) the subgraphs induced by the two communities are  expanders, i.e.,
the gap $\lambda_3 - \lambda_2$ between the third and the second eigenvalues of
the normalized  Laplacian matrix $\lapl$ of the graph is constant. The above
conditions on the underlying graph are satisfied, for instance, by graphs
sampled from the  stochastic block model\footnote{See
Subsection~\ref{ss:related} for the definition of $\planted$ and for  more
details about our results for $\planted$.} $\planted$ for $q=o(p)$ and $p
\geqslant \log n/n$.

Let $L = D - A$ be the  Laplacian matrix of $G$.  The first moment analysis
considers the {\em deterministic} process  described by the  linear equation
$\bx^{(t+1)} = \avgW^t \cdot \bx^{(0)}$ ($t \geq 1$), where $\bx^{(0)} = \bx$
is the   vector with components the nodes' initial  random values and $\avgW :=
\Expec{}{W} = I - \frac{1}{2m} L$ is the expectation of the random matrix that
describes a single step of the averaging process.  While a formal proof of the
above equation can be found in Section \ref{se:expec}, our analysis reveals
that the expected values held by the nodes are correlated with the underlying
cut. This phenomenon follows from structural connections between the underlying
graph's community structure and some spectral properties of $\avgW$. This
allows us to show that, after an initial \textit{``mixing''} phase of $\Theta(n
\log n)$ rounds and for all but $o(n)$ nodes, the following properties hold:
(i) There exists a relatively large time window in which the signs of the
expected values of nodes are correlated with the community they belong to.
(ii) The expected values of nodes belonging to one of the communities increase
in each round, while those of nodes in the other community decrease.

The formal statements of the above claims can be found in
Theorem~\ref{thm:mainavg}. Here, we note that these results suggest two
different local criteria for community-sensitive labeling: (i) According to the
first one, every node uses the sign of its own state within the aforementioned
time window to set the  generic component of its binary label (we in fact use
independent copies of the averaging process to  get  binary labels of
logarithmic size - see Protocol~\labelsign{} in Section~\ref{se:label}).  (ii)
According to the second criterion, every node uses the signs of fluctuations of
its own value along consecutive rounds to set the generic component of its
binary label   (see Protocol~\labeljump{} in
Section~\ref{sec:reconstruct-analysis}) \footnote{Having a node set its label
within the correct time window is technically challenging in the asynchronous,
opportunistic communication model we consider. This issue is briefly discussed
in this and the following sections and formally addressed in Appendix
\ref{apx:toolsforcslsign}).}.

The above analysis describes the ``expected'' behaviour of the averaging
process over a large class of  well-clustered graphs, at the same time showing
that our approach might lead to efficient, opportunistic protocols for block
reconstruction. Yet, designing and analyzing protocols with provable,
high-probability guarantees, requires addressing the following questions:

\medskip


\begin{enumerate}

\item {\em Do realizations of the averaging process approximately 
follow its expected behavior with high, or even constant, 
probability?}
\item {\em If this is the case, how can nodes locally and 
asynchronously recover the cut signal, let alone guess
the ``right'' global time window?}
\end{enumerate}

\medskip
\noindent \textbf{Second moment analysis.} The first question above essentially
requires a characterization of the \emph{variance} of the process over time,
which turns out to be an extremely challenging task. The main reason for this
is that a realization of the averaging process is now described by an equation
of the form $\bx^{(t )} = W_t \cdot .....\cdot W_1\bx$, where the $W_i$'s are
sampled independently and uniformly  from some matrix distribution (see
Eq.~\eqref{randommat} in Section~\ref{se:expec}). Here, matrix $W_i$
``encodes'' both the $i$-th edge selected for activation and the averaging of
the values held by its endpoints at the end of the $(i-1)$-th step. 

Not much is known about concentration of the products of identically
distributed random matrices, but we are able to accurately characterize the
class of regular clustered graphs. We point out that many of the technical
results and tools we develop to this purpose apply to far more general settings
than the regular case and may be of independent interest. In more detail, we
are able to provide accurate upper bounds on the norm of $\bx^{(t )}$'s
projection onto the subspace spanned by the first and second eigenvector of
$\avgW$ (see the proof's outline of Theorem~\ref{thm:mainavg} and
Lemma~\ref{lm:f-bound}) for a class of regular clustered graphs that includes
the \emph{regular stochastic block model}\footnote{See
Subsection~\ref{ss:related} for more details about our results for regular
stochastic block models.}
\cite{BCNPT17,brito_recovery_2015,mossel_reconstruction_2014} - see
Definition~\ref{def:clusteredregulargood}. 

These bounds are derived separately for two different regimes, defined by the
sparseness of the cut separating the two communities. Assuming a good inner
expansion of the communities, the first concentration result concerns cuts of
size $o(m/\log^2 n)$ and it is given in Subsection \ref{ssec:secmom} while,
for  the case of cuts of size up to $\alpha m$ for any $\alpha < 1$, the
obtained concentration results are described in
Subsection~\ref{app:proofthm:main}.

These bounds alone are not sufficient to prove accuracy of the clustering
criteria, due to the asynchronous nature of the communication model we
consider, whereby every node only updates its value once every $m/d$ rounds
in expectation, where $d$ is the degree of the nodes.


\medskip
\noindent \textbf{Desynchronization.} The variance analysis outlined in the
previous paragraph ensures that, for any fixed step $t$, the actual states of a
large fraction of the nodes are ``close'' to their expectations, with high
probability. Unfortunately, the asynchrony of the model we consider does not
allow to easily apply this result (e.g., using a union bound) to prove that
most nodes will eventually label themselves within the right global window and
in a way that is consistent with the graph's community structure. Rather, we
show that there exists a large fraction of \textit{non-ephemeral} ``good
nodes'' whose states remain close to their expectations over a suitable
time-window. The technical form of the concentration bound and the relative
time window again depend on the sparsity of the cut: See
Definition~\ref{def:ephemeral} and Lemma~\ref{lem:main} for sparse cuts  and
Theorem~\ref{thm:main-concen} for dense cuts, respectively. 


\medskip
\noindent \textbf{Distributed community detection.} We exploit the second
moment analysis and the desynchronization above to devise two different
opportunistic protocols for community detection on regular clustered graphs.

\noindent
- In the case of sparse cuts (i.e. of size $o(m/\log^2 n)$), the obtained
bound on the variance of  non-ephemeral nodes (see Lemma~\ref{lem:main})
holds over a time window that essentially equals the one ``suggested'' by our
first moment analysis. This allows us to give rigorous bounds on the
performance of the opportunistic Protocol~\labelsign{} based on the sign
criterion (see Section~\ref{se:label}). This ``good'' time-window 
begins after $\bigO(n\log n )$ rounds: So, if the underlying graph has dense communities and
a sparse cut, nodes can collectively compute an accurare labeling \emph{before} the global mixing
time of the graph. For instance, if the cut is $\bigO(m/n^{\gamma})$, for
some constant $\gamma < 2$, our protocol is polynomially faster than the
global mixing time. In more detail, we prove that, given any regular clustered
graph with cut of size $o(m/\log^2 n)$, Protocol~\labelsign{} performs 
community-sensitive labeling for $n-o(n)$ nodes within global time\footnote{The
extra logarithmic factor is needed to let every node update each component of
its $\Theta(\log n)$-size binary label, independently (see
Subsection~\ref{se:label} for details).} $\bigO(n \log^2 n)$ and with 
work per node $\bigO(\log^2n)$, with high probability (see Theorem~\ref{thm:main_small}
and its Corollary~\ref{cor:csl-small} for formal and more general statements
about the performances of the protocol). Importantly enough, the 
costs of our first protocol do not depend on the cardinality of the edge set $E$.
  
\noindent
-  The bound on the variance  that allows us to adopt the sign-based criterion
   above does not hold when the cut is not sparse, i.e., whenever it is
$\omega(m/\log^2 n)$. For such dense cuts, we use a different bound on 
the variance of nodes' values  given in Theorem~\ref{thm:main-concen}, which
starts to hold after the global mixing  time of the underlying graph 
and over a time window of length $\Theta(n^2)$. In this case, the specific form of the
concentration bound  leads to adoption of the second clustering criterion
suggested by our first moment analysis, i.e., the one based on monotonicity
of the values of non-ephemeral nodes. To this aim, we consider a ``lazy''
version of the averaging process equipped with a local clustering 
criterion, whereby nodes use the signs of fluctuations of their own 
values along consecutive rounds to label themselves (see Protocol~\labeljump{} in
Section~\ref{sec:reconstruct-analysis}). A restricted but relevant version of
the algorithmic result we achieve in this setting can be stated as follows (see
Theorem~\ref{thm:main-reconstruct} and its corollaries for more general
statements): Given any regular clustered graph consisting of two expanders as
communities and a cut of size up to $\alpha m$ (for any $\alpha < 1$), the
opportunistic protocol~\labeljump{} achieves weak reconstruction for a fraction
$(1-\epsilon)n$ nodes (where $\epsilon$ is an arbitrary positive constant),
with high probability. The protocol converges within $\bigO(n \log n \, (\log^2
n + m / m_{1,2}))$ rounds (where we named $m_{1,2} = |E(V_1, V_2)|$ the size of
the cut) and every node performs $\bigO(\log^2 n + (m / m_{1,2})\log n)$ work,
with high probability. Notice that this second protocol achieves a stronger
form of community detection than the first one, but it is less 
efficient, especially when the underlying  graph is dense and exhibits a sparse cut (it requires to ``wait''
for the global mixing time of the graph). On the other hand, the two protocols have 
comparable costs in the parameter ranges they were designed 
for.\footnote{As for the fraction of outliers guaranteed by the two
protocols, please see the technical discussion after
Corollary~\ref{cor:main2}.}

\subsection{Comparison to previous work} \label{ss:related}

We earlier compared our results to those
of~\cite{BCNPT17}. The advantage of~\cite{BCNPT17} is that their 
analysis achieves concentration over a class of graphs that are
almost-regular and extends to the case of more than two communities.
Furthermore, in graphs in which the indicator of the cut is
an eigenvector of $\lambda_2$, the algorithm of~\cite{BCNPT17} achieves exact
reconstruction. On the other hand, as previously remarked, the advantage of our
work over~\cite{BCNPT17} is that, for the first time, it applies to the
asynchronous opportunistic model and the communication cost per node does not
depend on the degree of the graph, so it is much more efficient in dense graphs.

If we do not restrict to asynchronous and/or opportunistic protocols, recently,
in~\cite{SZ17}, Sun and Zanetti\footnote{Before the technical report
in~\cite{SZ17}, the same authors in~\cite{SZ16} presented a synchronous
distributed algorithm able to perform approximate reconstruction with multiple
communities. Sun and Zanetti then discovered a gap in their analysis (personal
communication), and they retracted the claims in~\cite{SZ16}.} introduced a
synchronous, averaging-based protocol that first computes a fixed random
subgraph of the underlying graph and then, working on this sparse subgraph, 
returns an efficient block-reconstruction for a wide class of almost-regular
clustered graphs including the stochastic block model. We remark that, besides
having no desynchronization issue to deal with, their second moment analysis
uses Chernoff-like concentration bounds on some random submatrix (rather than a
product of them, as in our setting) which essentially show that, under 
reasonable hypothesis, the signal of the cut can still be recovered from the corresponding
sparse subgraph the algorithm works on.
 
Further techniques for community detection and spectral clustering exist, which
are not based on averaging. In particular, Kempe and McSherry showed that the
top $k$ eigenvectors of the adjacency matrix of the underlying graph can be
computed in a  distributed fashion~\cite{kempe2004decentralized}.  These
eigenvectors can then be used to partition the graph; in our settings, since we
assume that the indicator of the cut is the second eigenvector of the graph,
applying Kempe and McSherry's algorithm with $k = 2$ immediately reveals the
underlying partition. Again, we note here that the downside of this algorithm
is that it is synchronous and quite complex. In particular, the algorithm
requires a computation of $Q_1 \bx$, for which $\lambda_2^{-1} \log n$ work per
node is a bottleneck, while our first algorithm only requires  $\lambda_3^{-1}
\log n$ work per node, a difference that can become significant for very sparse
cuts.

At a technical level, we note that our analysis establishes concentration
results for products of certain i.d.d. random matrices; concentrations of such
products have been studied in the ergodic theory
literature~\cite{crisanti_products_1993, ipsen_products_2015}, but under
assumptions that are not met in our setting, and with convergence rates that
are not suitable for our applications.


While we only focused on decentralized settings so far, we note that the
question of community detection, especially in stochastic block models, has
been extensively studied in  centralized  computational
models~\cite{abbe2014exact, coja2010graph, decelle_asymptotic_2011,
dyer1989solution, holland1983stochastic, jerrum_metropolis_1998,
mcsherry2001spectral}.  The stochastic block model offers a popular framework
for the probabilistic modelling of graphs that exhibit good clustering or
community properties. In its simplest version, the random graph $\planted$
consists of $n$ nodes and an edge probability distribution defined as follows:
The node set is partitioned into two subsets $V_1$ and $V_2$, each of size
$n/2$; edges linking nodes belonging to the same partition appear in $E$
independently at random with probability $p = p(n)$, while edges connecting
nodes from different partitions appear with probability $q = q(n) < p$. In the
centralized setting, the focus of most studies on stochastic block models is on
determining the threshold at which weak recovery becomes possible, rather than
simplicity or running time of the algorithm (as most algorithms are already
reasonably simple and efficient). After a remarkable line of
work~\cite{decelle_asymptotic_2011, mossel_reconstruction_2014,
massoulie_community_2014, mossel_proof_2013}, such a threshold has now been
precisely determined.

Calling $a = pn$ and $b = qn$, it is known~\cite{BCNPT17} that graphs sampled
from $\planted$ satisfy (w.h.p.) the approximate regularity and spectral gap
conditions required by our first moment analysis (i.e.
Theorem~\ref{thm:mainavg}) whenever $b = o(a)$ and $a = \Omega(\log n)$.
Versions of the stochastic block model in which the random graph is regular
have also been considered~\cite{mossel_reconstruction_2014,
brito_recovery_2015}. In particular Brito~et~al.~\cite{brito_recovery_2015}
show that strong reconstruction is possible in polynomial-time when $a-b >
2\sqrt{a+b-1}$. As for these regular random graphs, we remark that our
opportunistic protocol for sparse cut works whenever $a/b \geqslant\log^2 n$
(see also~\cite{B15, brito_recovery_2015}), while our protocol for dense cuts
works for any parameter $a$ and $b$ such that $a - b > 2(1 + \rho)\sqrt{a +
b}$, where $\rho$ is any positive constant. Since it is
(information-theoretically) impossible to reconstruct the graph when $a - b
\leqs \bigO(\sqrt{a + b})$~\cite{mossel_reconstruction_2014}, our result comes
within a constant factor of this threshold.

\subsection{Roadmap of the paper} \label{sec:road}
After presenting some preliminaries in Section \ref{sec:avg}, the first moment analysis 
for almost-regular graphs is given in Section \ref{se:expec}.
The   analysis of the variance of the 
averaging process in regular graphs for the case of sparse cuts and the analysis of the resulting sign-based protocol  are described in 
Section \ref{sec:second_moment}. In Section \ref{ssec:pl-analysis}, we 
address the case of dense cuts:  Similarly to the previous section, we first  give a second moment analysis and then   show how to 
apply it   to devise a suitable opportunistic protocol for this regime.

 Due to the considerable length of this paper,  most of the  technical results are given in a 
separate appendix.

\section{Preliminaries} \label{sec:avg}

We study the weighted version of the Averaging process described in the 
introduction. In each round, one edge of the graph is sampled uniformly at random
and the two endpoints of the sampled edge execute the following algorithm.

\vspace{-.3cm}

\begin{algorithm}\label{alg:aveproc}
\LinesNotNumbered
\DontPrintSemicolon
\SetAlgoVlined
\SetNoFillComment
\aveproc{$(\delta)$} \,
(for a node $u$ that is one of the two endpoints of an active edge)\;
\begin{description}
\setlength\itemsep{0mm} 
\item[Initialization:]
If it is the first time $u$ is active, then pick $\bx_u \in \{-1, +1\}$ u.a.r.
\item[Update:] Send $\bx_u$ to the other endpoint of the active edge \\ and then 
update  $\bx_u := (1-\delta) \bx_u + \delta r$,
where $r$ is the \\ value received from the other endpoint.
\end{description}
\caption{\label{algo:sparse_update} Updating rule for a node $u$
of an active edge, where $\delta \in (0,1)$
is the parameter measuring the weight given to the neighbor's value}
\end{algorithm}

\vspace{-.3cm}

For a graph $G$ with $n$ nodes and adjacency matrix $A$, let $0 = \lambda_{1} 
\leqs \cdots \leqs \lambda_{n}$ be the eigenvalues of the normalized Laplacian 
$\lapl = I - D^{-1/2}AD^{-1/2}$, where $D$ is the diagonal matrix with the degrees of the nodes.
We consider the following classes  of graphs.

\begin{definition}[Almost-regular graphs]
\label{def:regulargood} An $(n,d,\beta)$-almost-regular 
graph\footnote{This class is more general than the one
introduced in \cite{BCNPT17}, since there is no regularity constraint on
the outer node degree, i.e., on the  number of edges a node can have
towards  the other community.} $G = (V,E)$ is a connected, non-bipartite graph over 
vertex set $V$,  such that
every node has degree $d \pm \beta d$. 
\end{definition}


\begin{definition}[Clustered regular graphs]
\label{def:clusteredregulargood}
Let $n \geq 2$ be an even integer and $d$ and $b$ two positive integers such 
that $2 b < d < n$.  An $(n,d,b)$\emph{-clustered regular graph}
    $G = ((V_1,V_2),E)$ is a graph over node set $V = V_1 \cup V_2$, with 
    $|V_1|=|V_2|=n/2$ and such that: (i) Every node  has degree $d$ and (ii) Every
    node in $V_1$ has $b$ neighbors in $V_2$ and every node in $V_2$ has $b$ 
    neighbors in $V_1$.
\end{definition}

We remark that if a graph is clustered regular then we easily get that   the
indicator vector $\bchi$ of the cut $(V_1, V_2)$ is an eigenvector of $\lapl$
with eigenvalue $\frac {2b}d$; If we further assume that $\lambda_3 >
\frac{2b}d$, then $\bchi$ is an eigenvector of $\lambda_2$. We next recall
the notion of \emph{weak reconstruction} \cite{BCNPT17}.

\begin{definition}[Weak Reconstruction]
\label{def:weak}
A function $f: V \rightarrow \{\pm 1\}$ is said to be an
\emph{$\varepsilon$-weak reconstruction} of $G$ if subsets
$W_1 \subseteq V_1$ and $W_2 \subseteq V_2$ exist, each of size at least $(1 -
\varepsilon)n/2$, such that $f(W_1) \cap f(W_2) = \emptyset$. 
\end{definition}

In this paper, we introduce a weaker notion of distributed community detection.  
Namely, let  $\hamm(\mathbf{x}, \mathbf{y})$  denote the  {\em 
Hamming distance} between two binary strings $\mathbf{x}$ and $\mathbf{y}$.
\begin{definition}[Community-sensitive labeling] 
\label{def:senshash}
Let $G = (V, E)$  be a  graph, let $(V_1,V_2)$ be a partition of $V$ and let
$\gamma \in (0,1]$. For some $m \in \mathbb{N}$, a function $\bh\,:\, V_1 \cup
V_2 \rightarrow \{0,1\}^m$ is a $\gamma$-\emph{community-sensitive labeling}
for $(V_1,V_2)$ if a subset $\tilde{V} \subseteq V$ with size
$|\tilde{V}| \geqslant (1-\gamma) |V|$ and two constants $0 \leqslant c_1 < c_2
\leqslant 1$ exist, such that for all $u,v \in \tilde{V}$ it holds that
\[
\hamm(\bh_u,\bh_v)
\left\{
\begin{array}{cll}
\leqslant c_1 m & \mbox{ if } i_u = i_v  \ &  \mbox{\emph{(Case (i))},}\\[2mm]
\geqslant c_2 m & \mbox{ otherwise} \ & \mbox{\emph{(Case (ii))},}
\end{array}
\right.
\]
where $i_u = 1$ if $u \in V_1$ and $i_u = 2$ if $u \in V_2$.
\end{definition}


\section{First Moment Analysis}\label{se:expec}


\medskip In this section, we analyze the expected behaviour of Algorithm
\aveproc{$(1/2)$} on an almost-regular graph $G$ (see
Definition~\ref{def:regulargood}). The evolution of the resulting process can
be formally described by the recursion $\bx^{(t+1)} = W_t \cdot \bx^{(t)}$,
where $W_t = (W_t(i,j))$ is the random matrix that defines the updates of the
values at round $t$, i.e.,
\begin{equation}\label{randommat}
W_t(i,j) = 
\left\{
\begin{array}{cl}
0 & {\mbox{ if } i \neq j \mbox{ and } \{i,j\} \mbox{ is not sampled (at round $t$),}} \\[2mm]
1/2 & 
\begin{array}{l}
{\mbox{if $i = j$ and an edge with endpoint $i$ is sampled}} \\ 
{\mbox{or $i \neq j$ and edge $\{i,j\}$ is sampled,}}
\end{array}\\[3mm] 
1 & {\mbox{ if } i = j \mbox{ and } i \mbox{ is not an endpoint of sampled edge.}}
\end{array}
\right.
\end{equation}

\noindent
and the initial random vector $\bx^{(0)}$ is uniformly distributed in
$\{-1,1\}^n$.\footnote{Notice that, since each node chooses value $\pm 1$ with
probability $1/2$ the first time it is active, by using the principle of
deferred decisions we can assume there exists an ``initial'' random vector
$\bx^{(0)}$ uniformly distributed in $\{-1,+1\}^n$.} 

Notice that random matrices $\{ W_t \,:\, t \geqslant 0 \}$ are independent and
identically distributed and simple calculus shows that their expectation can be
expressed as (see Observation~\ref{obs:expecW} in the Appendix):
\begin{equation}\label{eq:expectedW}
\avgW := \Expec{}{W_t} = I - \frac{1}{2m} L \, , \, 
\end{equation}
where $L = D - A$ is the Laplacian matrix of $G$. Matrix $\avgW$ is thus 
symmetric and doubly-stochastic. We denote its eigenvalues as
$\bar{\lambda}_1, \dots, \bar{\lambda}_n$, with 
$
1 = \bar{\lambda}_1 \geqslant \bar{\lambda}_2 \geqslant \cdots \bar{\lambda}_n \geqslant -1\,.
$

We next provide a first moment analysis for $(n, d, 
\beta)$-almost regular graphs that exhibit a clustered structure.  Our analysis proves the
following results.

\begin{theorem}\label{thm:mainavg}
Let $G = (V,E)$ be an $(n, d, \beta)$-almost regular graph $G = (V,E)$ with a
balanced partition $V = (V_1,V_2)$ and such that: (i) The cut  $E(V_1,V_2)$ is
sparse, i.e.,  $m_{1,2} = |E(V_1,V_2)| = o(m)$; (ii) The gap $\lambda_3 -
\lambda_2 = \Omega(1)$.\footnote{In practice, this means that each of the
subgraphs induced by community $V_i$ ($i=1,2$) is an expander.} If nodes of $G$
execute Protocol~\aveproc{} then, with constant probability w.r.t. the initial
random vector $\bx^{(0)} \in \{-1,1\}^n$, after $\Theta(n \log n)$ rounds the
following holds for all but $o(n)$ nodes: (i) The expected value of a node $u$
increases or decreases depending on the community it belongs to, i.e.,
$\sgn\left( \Expec{}{\bx^{(t-1)}_u\,|\,\bx^{(0)}} -
\Expec{}{\bx^{(t)}_u\,|\,\bx^{(0)}} \right) = \sgn\left( \bchi_u \right)$; (ii)
Over a time window of length $\Omega(n \log n)$ the sign of the expected value
of a    node $u$ reflects the community $u$ belongs to, i.e.,  $ \sgn\left(
\Expec{}{\bx^{(t)}_u\,|\,\bx^{(0)}} \right) = \sgn\left( \alpha_2 \bchi_u
\right)$, for some $\alpha_2 = \alpha_2(\bx^{(0)})$.
\end{theorem}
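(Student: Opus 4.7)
The plan is to analyze the deterministic recursion $\Expec{}{\bx^{(t)}\mid\bx^{(0)}} = \avgW^{t}\bx^{(0)}$, which follows from independence of the $W_t$'s and~\eqref{eq:expectedW}, via the spectral decomposition of $\avgW = I-\tfrac{1}{2m}L$. Let $v_1,\dots,v_n$ be an orthonormal eigenbasis of $\avgW$ with eigenvalues $1=\bar\lambda_1\geq\bar\lambda_2\geq\cdots\geq\bar\lambda_n$, so that $v_1=\bone/\sqrt{n}$ and $\bar\lambda_i = 1-\mu_i/(2m)$, where $\mu_i$ is the $i$-th eigenvalue of the unnormalized Laplacian $L$. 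Writing the initial $\pm 1$ vector as $\bx^{(0)} = \sum_i \alpha_i v_i$, the first-moment trajectory reads
\[
\Expec{}{\bx^{(t)}\mid\bx^{(0)}} \;=\; \alpha_1 v_1 \;+\; \alpha_2 \bar\lambda_2^{\,t} v_2 \;+\; \sum_{i\geq 3}\alpha_i\bar\lambda_i^{\,t}v_i,
\]
and the whole argument reduces to (a) making $v_2$ almost proportional to $\bchi$ and (b) showing that the tail sum is coordinate-wise negligible for all but $o(n)$ nodes once $t=\Theta(n\log n)$.

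First, to pin down $v_2$, I would use that almost-regularity $d(1-\beta)\leq d_u\leq d(1+\beta)$ lets one compare $L$ with $d\,\lapl$ up to a multiplicative $(1\pm O(\beta))$ factor, so the assumed normalized gap $\lambda_3-\lambda_2=\Omega(1)$ transfers to a Laplacian gap $\mu_3-\mu_2=\Omega(d)$. Plugging the balanced indicator $\bchi$ into the Rayleigh quotient of $L$ yields $\mu_2 \leq \bchi^\top L\bchi/\|\bchi\|^2 = 4m_{1,2}/n = o(d)$; since $\bchi\perp\bone$, a Davis--Kahan / Rayleigh--Ritz bound on the rank-$2$ subspace $\mathrm{span}(\bone,\bchi)$, exploiting the $\Omega(d)$ gap to $\mu_3$, then forces $v_2 = \pm\bchi/\sqrt{n} + \be$ with $\|\be\|_2 = o(1)$.

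Second, I would control the tail modes and the random coefficients. For $i\geq 3$,
\[
\bar\lambda_i^{\,t} \;\leq\; \exp\bigl(-t\mu_3/(2m)\bigr) \;=\; \exp(-\Omega(t/n)),
\]
which is $n^{-\Omega(1)}$ once $t=\Theta(n\log n)$. Hence $\bignorm{\sum_{i\geq 3}\alpha_i\bar\lambda_i^{\,t}v_i}_2 \leq n^{-\Omega(1)}\|\bx^{(0)}\|_2 = n^{1/2-\Omega(1)}$, so Markov's inequality leaves at most $o(n)$ coordinates with an error above $n^{-1/4}$. On the other hand, $\alpha_1 = \langle v_1,\bx^{(0)}\rangle$ and $\alpha_2 = \langle v_2,\bx^{(0)}\rangle$ are mean-zero sub-Gaussians of variance $1$ under the random $\bx^{(0)}$, and a Berry--Esseen argument shows that $\cE := \{\alpha_2 > (1+c)|\alpha_1|\}$ has constant probability for a suitable absolute constant $c>0$. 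On $\cE$, the first-moment vector decomposes as
\[
\Expec{}{\bx^{(t)}_u\mid\bx^{(0)}} \;=\; \frac{\alpha_1}{\sqrt n} \;+\; \alpha_2\bar\lambda_2^{\,t}\frac{\bchi_u}{\sqrt n} \;+\; \err_u(t), \qquad |\err_u(t)| = o(1/\sqrt n),
\]
for all but $o(n)$ indices $u$; this is the engine for both conclusions.

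Finally, parts (i) and (ii) follow directly. For (i), the forward difference equals
\[
\Expec{}{\bx^{(t-1)}_u\mid\bx^{(0)}} - \Expec{}{\bx^{(t)}_u\mid\bx^{(0)}} \;=\; \alpha_2\bar\lambda_2^{\,t-1}(1-\bar\lambda_2)\frac{\bchi_u}{\sqrt n} + \delta_u(t),
\]
and $1-\bar\lambda_2 = \mu_2/(2m) > 0$ together with $\bar\lambda_2>0$ forces its sign to agree with $\sgn(\alpha_2\bchi_u)$, which on the subevent $\alpha_2>0$ matches $\sgn(\bchi_u)$ as stated. For (ii), $\mu_2=o(d)=o(m/n)$ gives $\bar\lambda_2 = 1-o(1/n)$, so $\bar\lambda_2^{\,t} = n^{-o(1)}$ stays bounded away from $0$ over a window of length $\Omega(n\log n)$; on $\cE$ the second term dominates the first, yielding $\sgn(\Expec{}{\bx^{(t)}_u\mid\bx^{(0)}}) = \sgn(\alpha_2\bchi_u)$. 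The main obstacle will be the Davis--Kahan step: turning ``sparse cut $+$ constant normalized spectral gap $+$ almost-regularity'' into a quantitative $o(1)$-angle bound between $v_2$ and $\bchi/\sqrt n$ with error small enough to keep the outlier set at size $o(n)$; every other step reduces to Rayleigh--Ritz, spectral decay, and Markov's inequality.
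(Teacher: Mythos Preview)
Your approach is essentially the paper's: spectral decomposition of $\avgW^t\bx^{(0)}$, closeness of the second eigenvector $v_2$ to $\bchi/\sqrt{n}$, exponential decay of the tail modes, and anticoncentration of the random coefficients $\alpha_1,\alpha_2$. The one methodological difference is that the paper replaces your Davis--Kahan step by a direct Rayleigh-quotient argument (Lemma~\ref{lm:f-bound}): writing $v_2=\bfied=\bfied_\parallel+\bfied_\perp$ with $\bfied_\parallel\parallel\bchi$ and evaluating $\bfied^\top\avgW\bfied$ both spectrally and via the cut yields $\|\bfied_\perp\|^2\leq\tfrac{2}{\bar\lambda_2-\bar\lambda_3}\cdot\tfrac{m_{1,2}}{nm}$, which is more elementary and immediately gives the $o(n)$ bound on bad nodes (Corollary~\ref{cor:no_bad}).

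There is, however, a genuine gap in your treatment of part~(i). The signal in the forward difference is $\alpha_2\bar\lambda_2^{t-1}(1-\bar\lambda_2)\bchi_u/\sqrt{n}$, which carries the extra small factor $1-\bar\lambda_2=\mu_2/(2m)=o(1/n)$ under the sparse-cut hypothesis. Your earlier coordinate bound $|\err_u(t)|=o(1/\sqrt{n})$ does \emph{not} control $\delta_u(t)=\err_u(t-1)-\err_u(t)$ at this scale: the tail contributes pointwise $O(\bar\lambda_3^{t}\sqrt{n})$ to $\delta_u(t)$, and you must explicitly check that $\bar\lambda_3^{t}\sqrt{n}\ll(1-\bar\lambda_2)\bar\lambda_2^{t}/\sqrt{n}$, i.e.\ that $(\bar\lambda_3/\bar\lambda_2)^{t}=o\bigl((1-\bar\lambda_2)/n\bigr)$. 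This is exactly the comparison that fixes the threshold $t\geq C\log n/\log(\bar\lambda_2/\bar\lambda_3)=\Theta(n\log n)$ in the paper's Lemma~\ref{lemma:incrdecr_crit}; your sentence ``$1-\bar\lambda_2>0$ together with $\bar\lambda_2>0$ forces its sign to agree'' skips this entirely. A related issue appears in (ii): saying $\bar\lambda_2^{t}=n^{-o(1)}$ ``stays bounded away from $0$'' is not sufficient---you need $\bar\lambda_2^{t}>|\alpha_1|/\bigl((1-\varepsilon)|\alpha_2|\bigr)$ explicitly, which bounds $t$ from above and delimits the window as in Lemma~\ref{lemma:sign_crit}.
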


\subsubsection*{Proof of Theorem~\ref{thm:mainavg}: An outline}
The proof  makes a black-box use of 
technical results that are rigorously given in   Appendix 
\ref{subsubse:lemmas}. We believe, these results are interesting in 
their own right, since they shed light on the evolution of the dynamics 
and its algebraic structure.

The hypotheses of Theorem~\ref{thm:mainavg} involve the
eigenvalues of the normalized Laplacian matrix 
$\lapl$ of the graph, while the expected evolution of the process is 
governed by matrix $\avgW$ and its eigenvalues $1 = \bar \lambda_1 \geqslant 
\cdots \bar \lambda_n \geqslant -1$  (see 
Lemma~\ref{lemma:expectedvalueimplicit}). However, $(n, d, 
\beta)$-almost regularity implies that the spectra of these two matrices are  
related. In particular, it is easy  to see that, under the hypotheses 
of Theorem~\ref{thm:mainavg}, we have (see Observation~\ref{obs:lambdarelations} in 
the Appendix)

\begin{equation}\label{eq:laplaciangapvswgap}
\frac{d}{2m}(1 - 2 \gamma)\left( \lambda_3 - 
\lambda_2 \right)
\leqslant \bar{\lambda}_2 - \bar{\lambda}_3 \leqslant 
\frac{d}{2m}(1 + 2\gamma)\left( \lambda_3 - 
\lambda_2 \right)
\end{equation}

In Lemma~\ref{lemma:expectedvalueimplicit}, we decompose 
$\Expec{}{\bx^{(t)}}$ into its components along the first two 
eigenvectors of $\avgW$ and into the corresponding orthogonal component
$\be^{(t)}$ (note that $\avgW$ admits an orthonormal eigenvector basis 
since it is symmetric). We further decompose the component along the 
second eigenvector of $\avgW$ into its component 
parallel to the partition indicator vector ($\alpha_2 \bar{\lambda}_2^t \bchi$)
and into the corresponding orthogonal one ($\alpha_2 \sqrt{n} \, \bar{\lambda}_2^t 
\bfied_\perp$). 
As a consequence, we can rewrite $\Expec{}{\bx^{(t)}}$ as 
\begin{equation}\label{eq:expecvalu}
\Expec{}{\bx^{(t)}_u} = \alpha_1 + \alpha_2 \bar{\lambda}_2^t \left[ \bchi_u 
+ \sqrt{n}\, \bfied_{\perp,u} \right]
+ \be^{(t)}_u, 
\end{equation}
where $\left\|\be^{(t)} \right\| \leqslant 
\bar{\lambda}_3^t \sqrt{n}$.
Hence, if $\alpha_2 \neq 0$ and $\bar{\lambda}_3 < \bar{\lambda}_2$, 
the term $\be^{(t)}_u$ becomes negligible w.r.t. the 
other two from some round $t$ onward. Moreover, for any node $u$ with $\bfied_{\perp,u} < 1 / 
\sqrt{n}$, $\sgn(\left[ \bchi_u + \sqrt{n}\, \bfied_{\perp,u} 
\right]) = \sgn(\bchi_u)$, i.e., $\bx^{(t)}_u$ \textit{identifies} the 
community $V_h$ node $u$ belongs to. 
Accordingly, we say a node $u \in [n]$ is $\varepsilon$-\emph{bad} if it does not satisfy the
above property (see 
Definition~\ref{def:badnodes}).\footnote{Consistently, a node is 
$\varepsilon$-\emph{good} otherwise.} Next, we derive an 
upper bound on the number of $\varepsilon$-bad nodes. To this 
purpose, we first prove an upper bound on the square norm of $\bfied_\perp$, as
a function of the gap $\bar{\lambda}_2 - \bar{\lambda}_3$ and the 
ratio between the size of the cut $m_{1,2}$ and the total number of edges in 
the graph (Lemma~\ref{lm:f-bound}). This easily implies an upper bound
on the number of $\varepsilon$-bad nodes (Corollary~\ref{cor:no_bad}) as a
function of the gap $\bar{\lambda}_2 - \bar{\lambda}_3$. 
From~\eqref{eq:laplaciangapvswgap} and the hypothesis $\lambda_3
- \lambda_2 = \Omega(1)$ the upper bound in
Corollary~\ref{cor:no_bad} turns out to be $\bigO(m_{1,2} / d)$, which in
turn is $o(n)$ under the hypothesis $m_{1,2} = o(m)$.

These results and \eqref{eq:expecvalu} imply the following conclusions 
for all $\varepsilon$-good nodes $u$: 

\noindent
(i) For all $t = \Omega(n \log n)$, the evolution of $\Expec{}{\bx^{(t)}_u}$
along two consecutive rounds identifies the block $u$ belongs to
(Lemma~\ref{lemma:incrdecr_crit}), namely: \[ \sgn\left(
\Expec{}{\bx^{(t-1)}_u} - \Expec{}{\bx^{(t)}_u} \right) = \sgn\left( \bchi_u
\right) \]

\noindent
(ii) If $|\alpha_2|$ is sufficiently larger than $|\alpha_1|$ and the second
and third largest eigenvalues of $\avgW$ satisfy appropriate conditions, for
all $t$ falling in a suitable time window, the sign of $\Expec{}{\bx^{(t)}_u}$
identifies the community node $u$ belongs to (Lemma~\ref{lemma:sign_crit}),
namely: \[ \sgn\left(\Expec{}{\bx^{(t)}_u} \right) = \sgn(\alpha_2 \bchi_u) \]

Moreover, Lemma~\ref{lm:initialrandom} implies that the initial
random vector $\bx$ satisfies the hypotheses of 
Lemma~\ref{lemma:incrdecr_crit} w.h.p. (i.e., $\alpha_2(\bx) \neq 0$ w.h.p.) and
those of Lemma~\ref{lemma:sign_crit} with constant
probability (i.e., $|\alpha_2(\bx)| \geqslant 2 
|\alpha_1(\bx)|/(1-\varepsilon)$ with constant probability). 

As a result, we can claim the 
following for any non-bad node $u$: if we consider the r.v.  
$\bh^{\mbox{jump}, (t)}_u = \sgn\left(\Expec{}{\bx^{(t-1)}_u - \bx^{(t)}_u \;|\; \bx^{(0)}}\right)$,
Lemma~\ref{lemma:incrdecr_crit} implies $\bh^{\mbox{jump},(t)}_u = \sgn(\alpha_2 
\bchi_u)$ w.h.p., for every $t$ such that 
\begin{equation}\label{eq:tbound_owv_monotcrit}
t \geqslant 3 \log \left(\frac{n}{1-\varepsilon} \right) 
/ \log(\bar{\lambda}_2 /\bar{\lambda}_3).
\end{equation}
Likewise, if we consider the r.v. $\bh^{\mbox{sign},(t)}_u = 
\sgn\left(\Expec{}{\bx^{(t)}_u} \right)$,
Lemma~\ref{lemma:sign_crit} implies 
\[
    \Prob{}{\bh^{\mbox{jump},(t)} = \sgn(\alpha_2 \bchi_u)} = \Omega(1),
\] 
for all $t$ such that
\begin{equation}\label{eq:tbound_owv_signcrit}
\frac{1}{\log(1/\bar{\lambda}_3)} \log (n / |\alpha_1|)
\leqslant t \leqslant
\frac{1}{\log (1/\bar{\lambda}_2)} \log\left(\frac{|\alpha_2|(1 - \varepsilon)}{2 
|\alpha_1|}\right).
\end{equation}
Finally, note that under the hypothesis $\lambda_3 - 
\lambda_2 = \Omega(1)$, the lower bounds on $t$ 
in~\eqref{eq:tbound_owv_monotcrit} and \eqref{eq:tbound_owv_signcrit} 
are both $\bigO(n \log n)$.


\section{Regular Graphs with a Sparse Cut  }
\label{sec:second_moment}

We   next  provide a second moment analysis of the $\aveproc(\delta)$  with $\delta = 1/2$
  on the class of $(n,d,b)$\emph{-clustered regular graphs} (see 
Definition~\ref{def:clusteredregulargood}) when the cut between the two communities is relatively 
sparse, i.e., for  $\lambda_2 = 2b/d =  
o(\lambda_3/\log n)$. This analysis is consistent with the 
``expected'' clustering behaviour of the dynamics explored in the 
previous section and highlights clustering properties that emerge well 
before global mixing time, as we show in Section \ref{se:label}. 
In particular, the main analysis results are discussed in Section 
\ref{ssec:secmom}, while in Section \ref{se:label}, we 
describe how the above  analysis in concentration  can be exploited to get
an opportunistic protocol for provably-good  community-sensitive labeling.

\subsection{Second moment analysis for sparse cuts}\label{ssec:secmom}

Restriction to $(n,d,b)$\emph{-clustered regular 
graphs} simplifies the analysis of the \aveproc dynamics. When $G$ is regular,
$\avgW$ defined in \eqref{eq:expectedW} can be written as



\[ 
	\avgW = \left( 1 - \frac 1n 
	\right) I + \frac 1n \, P = I - \frac 1n \, \lapl
\] 

This obviously implies that  $\avgW$ and $\lapl$ share the same 
eigenvectors, while every eigenvalue  $\lambda_i$ of $\lapl$  
corresponds to an eigenvalue $ \bar{\lambda}_i = 1- \lambda_i/n$ of $\avgW$. 
For $(n,d,b)$\emph{-clustered regular 
graphs}, these facts and our preliminary remarks in Section 
\ref{sec:avg} further imply
$\bar{\lambda}_2 = 1 - \lambda_2/n = 1 - 2b/dn$ whenever $\lambda_3 > 
\frac{2b}d$ while, very importantly,
the partition indicator vector  $\bchi$ turns out to be  the   
eigenvector of $\avgW$ corresponding to $\bar{\lambda}_2$ (see \eqref{eq:expectedW}). 
As a consequence, the orthogonal component
$\bfied_\perp$ in~\eqref{eq:expecvalu} is $\bzero$ in this case.


On the other hand, 
even in this restricted setting, 
our second moment analysis requires 
new, non-standard concentration results for the product of random
matrices that apply to far more general settings and may
be of independent interest.

For the sake of readability, we here denote by $\by^{(t)} = Q_2 \bx^{(t)}$ the component
of the state vector in the eigenspace of the second eigenvalue of 
$\avgW$, while $\bz^{(t)} = Q_{3 \cdots n} \bx^{(t)}$ denotes 
$\bx^{(t)}$'s projection onto the subspace 
orthogonal to $\bone$ and $\bchi$.
If we also set $\bx_{\|} = Q_1 \bx^{(0)}$, we can write:
\begin{equation}
    \bx^{(t)} = \bx_{\|} + \by^{(t)} + \bz^{(t)}. 
    \label{eq:decomp}
\end{equation}
Notice that, by taking  expectations in the equation above, we 
get~\eqref{eq:expecvalu} with $\Expec{}{\by^{(t)}} = \alpha_2 \bar{\lambda}_2^t
\bchi$ and $\Expec{}{\bz^{(t)}} = \be^{(t)}$.

Our analysis of the process induced by  \aveproc{$(1/2)$}  provides the following
bound.

\begin{theorem}[Second moment analysis]\label{thm:mom.bound}
    Let $G$ be  an $(n,d,b)$-clustered regular graph with $\lambda_2 
    = \frac{2b}{d} = o\left(\lambda_3 / \log n \right)$.
    Then, for every 
    $\frac{3n}{\lambda_3}\log n\le t \le \frac{n}{4\lambda_2}$ it holds that
    \begin{equation*}
        \Expec{}{\left\| \by^{(t)} + \bz^{(t)} - \by^{(0)} \right\|^2} 
        \leq  \frac{3 \lambda_2 t}{n} \,.
    \end{equation*}
\end{theorem}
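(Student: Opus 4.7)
\noindent\textit{Proof plan.} My plan is to decompose $\by^{(t)}+\bz^{(t)}-\by^{(0)}$ orthogonally and control each piece separately. Since $\by^{(t)}-\by^{(0)}$ lies in $\mathrm{span}(\bchi)$ (by definition of $\by^{(t)}$ as the projection onto the $\bar\lambda_2$-eigenspace) while $\bz^{(t)}$ lies in its orthogonal complement within $\bone^\perp$, Pythagoras gives
\[
\Expec{}{\|\by^{(t)}+\bz^{(t)}-\by^{(0)}\|^2} = \Expec{}{\|\by^{(t)}-\by^{(0)}\|^2} + \Expec{}{\|\bz^{(t)}\|^2}.
\]
I will show each summand is at most roughly $\lambda_2 t/n$, with different techniques for the two.

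For $Z_r := \Expec{}{\|\bz^{(r)}\|^2}$ and $Y_r := \Expec{}{\|\by^{(r)}\|^2}$, expanding $\|\bz^{(r+1)}\|^2 = \|\bx^{(r+1)}-\bx_{\|}\|^2 - \|\by^{(r+1)}\|^2$ and using the Dirichlet-form identity $\Expec{}{\|\bx^{(r+1)}-\bx_{\|}\|^2 \mid \mathcal F_r} = \|\bx^{(r)}-\bx_{\|}\|^2 - \tfrac{1}{2m}\bx^{(r)T}L\bx^{(r)}$, combined with $\tfrac{1}{2m}\bz^{(r)T}L\bz^{(r)} \ge \tfrac{\lambda_3}{n}\|\bz^{(r)}\|^2$, yields the recursion
\[
Z_{r+1} \;\le\; \Bigl(1-\tfrac{\lambda_3}{n}\Bigr) Z_r + \tfrac{\lambda_2}{n}\, Y_r.
\]
A side argument gives $Y_{r+1}-Y_r \le \tfrac{2\lambda_2}{n^2} Z_r$, so from $Y_0 = 1$ one gets $Y_t \le 1 + \tfrac{2\lambda_2 t}{n} \le 3/2$ throughout $t \le n/(4\lambda_2)$. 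Plugging $Y_r = O(1)$ into the linear recursion and using $Z_0 \le n$ gives $Z_t \le n\, e^{-\lambda_3 t/n} + O(\lambda_2/\lambda_3)$. The hypothesis $t \ge 3n\log n/\lambda_3$ makes the first summand $\le 1/n^2$, while $\lambda_2 = o(\lambda_3/\log n)$ makes the second $o(\lambda_2 t/n)$.

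For the $\by$ term, let $s^{(r)} := \langle\bchi,\bx^{(r)}\rangle$ so $\|\by^{(r)}-\by^{(0)}\|^2 = (s^{(r)}-s^{(0)})^2/n$. Because $\Expec{}{s^{(r+1)}\mid\mathcal F_r} = \bar\lambda_2 s^{(r)}$, the rescaling $M_r := s^{(r)}/\bar\lambda_2^r$ is a martingale, and telescoping yields $s^{(t)}-s^{(0)} = (\bar\lambda_2^t-1)s^{(0)} + \sum_{r=0}^{t-1}\bar\lambda_2^{t-r-1}\tilde\Delta_r$ with martingale differences $\tilde\Delta_r := \Delta_r - \Expec{}{\Delta_r\mid\mathcal F_r}$ and $\Delta_r := s^{(r+1)}-s^{(r)}$. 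Orthogonality of the $\tilde\Delta_r$ and to $s^{(0)}$ yields the clean variance decomposition
\[
\Expec{}{(s^{(t)}-s^{(0)})^2} \;=\; (1-\bar\lambda_2^t)^2\, n \;+\; \sum_{r=0}^{t-1}\bar\lambda_2^{2(t-r-1)}\, \Expec{}{\tilde\Delta_r^2}.
\]
The first summand divided by $n$ is at most $t^2\lambda_2^2/n^2 \le \lambda_2 t/(4n)$ when $t\le n/(4\lambda_2)$. The martingale sum is the technical heart: the key estimate is $\Expec{}{\Delta_r^2\mid\mathcal F_r} = \tfrac{1}{m}\bx^{(r)T}L_{\mathrm{cross}}\bx^{(r)} \le \tfrac{2\lambda_2}{n}\|\bx^{(r)}-\bx_{\|}\|^2$, which follows from $\lambda_{\max}(L_{\mathrm{cross}}) = 2b$ in the $b$-regular cut. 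I split the sum at $r_0 := 3n\log n/\lambda_3$: for $r<r_0$ the trivial bound $\|\bx^{(r)}-\bx_{\|}\|^2 \le n$ gives a total contribution $\le 2\lambda_2 r_0$, which divided by $n$ is $\le 2\lambda_2 t/n$ because $t\ge r_0$; for $r\ge r_0$ the just-obtained bounds give $\Expec{}{\|\bx^{(r)}-\bx_{\|}\|^2} = Y_r + Z_r = O(1)$, hence $\Expec{}{\tilde\Delta_r^2} = O(\lambda_2/n)$ and the contribution is of lower order.

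The main obstacle is the mutual coupling between the $Y_r$ and $Z_r$ recursions: controlling the noise of $\by$ requires $\bz$ to be near equilibrium, and the decay of $\bz$ relies on a uniform bound on $\by$. The plan resolves this with a bootstrap: first obtain a crude uniform bound $Y_r = O(1)$ using only the trivial estimate $\|\bx^{(r)}-\bx_{\|}\|^2 \le n$; then use it in the linear recursion to establish $Z_r = o(1)$ past the mixing threshold $r_0$; finally feed both bounds back into the martingale variance decomposition to close the argument. The precise role of the lower bound $t \ge 3n\log n/\lambda_3$ is to absorb the pre-mixing segment of the martingale sum into the final $O(\lambda_2 t/n)$ estimate, which would otherwise dominate.
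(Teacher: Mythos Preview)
Your plan is correct and follows essentially the same route as the paper: the same Pythagoras split, the same one-step recursion $Z_{r+1}\le(1-\lambda_3/n)Z_r+(\lambda_2/n)Y_r$ (which is the paper's Lemma~\ref{lem:znorm}), and the same bootstrap ``$Y_r=O(1)$ uniformly $\Rightarrow Z_r$ small past mixing $\Rightarrow$ close the bound''.

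The one genuine difference is in how you handle $\Expec{}{\|\by^{(t)}-\by^{(0)}\|^2}$. The paper writes it as $\Expec{}{\|\by^{(t)}\|^2}+\Expec{}{\|\by^{(0)}\|^2}-2\bar\lambda_2^t\Expec{}{\|\by^{(0)}\|^2}$ and then bounds $\Expec{}{\|\by^{(t)}\|^2}$ by unrolling a coupled $(Y,Z)$ recursion coming from Lemma~\ref{lem:ynorm}. You instead cast $s^{(r)}=\langle\bchi,\bx^{(r)}\rangle$ as a scalar multiplicative martingale and use orthogonality of increments. The two are algebraically equivalent: your martingale identity $Y_{r+1}=\bar\lambda_2^2 Y_r+\Expec{}{\tilde\Delta_r^2}/n$ together with $\Expec{}{\tilde\Delta_r^2}\le (2\lambda_2/n)(Y_r+Z_r)$ reproduces both the upper and lower bounds of Lemma~\ref{lem:ynorm}, and the key input---that the one-step variance is controlled by $L_{\mathrm{cross}}$ with $\lambda_{\max}(L_{\mathrm{cross}})=2b$---is exactly where the paper uses the $b$-regularity of the cut (its equations \eqref{eq:ynextuboundA}--\eqref{eq:ynextuboundC}). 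Your pre/post-mixing split of the martingale sum at $r_0$ makes the role of the lower threshold $t\ge 3n\log n/\lambda_3$ a bit more transparent than the paper's single unrolling, but leads to a slightly looser constant ($9/4+o(1)$ versus the paper's $2+o(1)$), still comfortably within the stated $3\lambda_2 t/n$.
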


We prove Theorem~\ref{thm:mom.bound} by bounding and tracking the lengths of the projections 
of $\bx^{(t)}$ onto the eigenspace of $\lambda_2$ and onto the space orthogonal
to $\bone$ and $\bchi$, i.e. $\| \by^{(t)} \|^2$ and $\|\bz^{(t)}\|^2$. 
Due to lack of space, the proof is deferred to Appendix \ref{apx:proofofhmmom.bound}.

We here want just to remark that  
 the only part using the regularity of the graph is 
the derivation of the upper bound on $\Expec{}{\|\by^{(t+1)}\|^2}$   
(see Lemma~\ref{lem:ynorm}), in particular its  second addend. This term arises from 
an expression involving the Laplacian of $G$, which is far from simple in 
general, but that very nicely simplifies in the regular case. We suspect that
increasingly weaker bounds should be achievable as the graph deviates from 
regularity.

Theorem~\ref{thm:mom.bound} gives an upper bound on the squared norm of the 
difference of the state vector at step $t$ with the state vector at step $0$. 
Corollary \ref{cor:goodt} below shows how such a \textit{global} bound can be used to derive 
\textit{pointwise} bounds on the values of the nodes.

\begin{definition}\label{def:ephemeral}
    A node $v$ is \emph{$\epsilon$-good} at time $t$ if  
    \[
        ( \bx^{(t)}_{v} - ( \bx_{\|,{v}} + \by^{(0)}_{v}) )^2 \leq 
        \frac {\epsilon^2}{n}\|\by^{(0)}\|^2,
    \]
    it is \emph{$\epsilon$-bad} otherwise. We define by $B_t$ the set of 
nodes  that are $\epsilon$-bad at time $t$:
$B_t = \{u: u\makebox{ is $\epsilon$-bad at time $t$}\}$.
\end{definition}

Observe first that, by  definition of $\epsilon$-bad node and some counting argument,  we can prove both the next inequality and 
the corollary below
  (see Appendix \ref{apx:proofofcoroll:goodt} for their proofs)

\begin{equation}\label{eq:numepsgood}
	|B_t|\le\frac n {\epsilon^2\|\by^{(0)}\|^2} \| \by^{(t)} + 
    \bz^{(t)}  - \by^{(0)} \|^2.
\end{equation}

\begin{cor}\label{cor:goodt}
Assume $3\frac{n}{\lambda_3}\log n\le t \le 3c\frac{n}{\lambda_3}\log n$ 
for any absolute constant $c \ge 1$ and $\lambda_2/\lambda_3\le 
\epsilon^4/(4c\log n)$:
\begin{equation}\label{eq:prob_good}
    \Prob{}{|B_t| > \epsilon n\,|\, \bx^{(0)} = \bx} \le \epsilon.   
\end{equation}
\end{cor}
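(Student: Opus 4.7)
The plan is to combine the three ingredients already present in the excerpt: the pointwise--to--global reduction \eqref{eq:numepsgood}, the second--moment bound in Theorem~\ref{thm:mom.bound}, and Markov's inequality. The whole argument takes place conditional on $\bx^{(0)} = \bx$, so $\|\by^{(0)}\|^2$ is a fixed quantity depending on $\bx$ (in the regular case, $\by^{(0)} = \frac{\bchi^\top \bx}{n}\bchi$), while the only remaining randomness is that of the sequence of activated edges.

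First, I would rewrite \eqref{eq:numepsgood} as an implication: if $|B_t| > \epsilon n$, then necessarily
\[
\|\by^{(t)} + \bz^{(t)} - \by^{(0)}\|^2 \;>\; \epsilon^3 \, \|\by^{(0)}\|^2,
\]
so that
\[
\Prob{}{|B_t| > \epsilon n \,\big|\, \bx^{(0)} = \bx}
\;\le\; \Prob{}{\|\by^{(t)} + \bz^{(t)} - \by^{(0)}\|^2 > \epsilon^3 \|\by^{(0)}\|^2 \,\big|\, \bx^{(0)} = \bx}.
\]
Then I would apply Markov's inequality to the non-negative random variable $\|\by^{(t)}+\bz^{(t)}-\by^{(0)}\|^2$ and invoke Theorem~\ref{thm:mom.bound}, which is applicable because the lower bound $t \ge 3n\log n/\lambda_3$ is assumed and the upper bound $t \le 3cn\log n/\lambda_3 \le n/(4\lambda_2)$ follows from the hypothesis $\lambda_2/\lambda_3 \le \epsilon^4/(4c\log n)$ (since $\epsilon \le 1$). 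This yields
\[
\Prob{}{|B_t| > \epsilon n \,\big|\, \bx^{(0)} = \bx}
\;\le\; \frac{3\lambda_2 t}{n\,\epsilon^3\,\|\by^{(0)}\|^2}.
\]

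Finally, I would plug in the upper bound $t \le 3cn\log n/\lambda_3$ to obtain
\[
\frac{3\lambda_2 t}{n\,\epsilon^3\,\|\by^{(0)}\|^2}
\;\le\; \frac{9c\,(\lambda_2/\lambda_3)\,\log n}{\epsilon^3\,\|\by^{(0)}\|^2}
\;\le\; \frac{9\,\epsilon}{4\,\|\by^{(0)}\|^2},
\]
using the bound $\lambda_2/\lambda_3 \le \epsilon^4/(4c\log n)$ in the final step. The right-hand side is at most $\epsilon$ provided $\|\by^{(0)}\|^2$ is bounded below by an absolute constant, which is an implicit (and benign) assumption on the starting vector $\bx$: under the uniform $\pm 1$ initialization one has $\E[\|\by^{(0)}\|^2] = 1$, so this holds with constant probability, and absorbing the leftover constant into $\epsilon$ (or strengthening the hypothesis on $\lambda_2/\lambda_3$ by a constant factor) gives exactly the stated bound.

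The only mildly delicate point is therefore book-keeping of the $\|\by^{(0)}\|^2$ factor: the estimate \eqref{eq:numepsgood} degrades when the projection of the initial state onto the $\bchi$-eigenspace is atypically small, so strictly speaking the corollary is informative only for $\bx$ with $\|\by^{(0)}\|^2 = \Omega(1)$. This is not a real obstacle because, by Lemma~\ref{lm:initialrandom}-type arguments, this event has constant probability over the $\pm 1$ initialization; consequently the unconditional statement inherits an additional constant factor which can be absorbed into the definition of $\epsilon$. No further tools beyond Markov's inequality and Theorem~\ref{thm:mom.bound} are needed.
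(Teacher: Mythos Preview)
Your proof is correct and follows essentially the same route as the paper: combine \eqref{eq:numepsgood}, Theorem~\ref{thm:mom.bound}, and Markov's inequality. The only cosmetic difference is the order of operations: the paper takes expectations in \eqref{eq:numepsgood} to bound $\Expec{}{|B_t|}$ and then applies Markov to $|B_t|$, whereas you apply Markov to $\|\by^{(t)}+\bz^{(t)}-\by^{(0)}\|^2$ first and then use the implication form of \eqref{eq:numepsgood}; the resulting bound is literally the same. You are also more careful than the paper about the $\|\by^{(0)}\|^2$ factor, which the paper silently absorbs (implicitly assuming $\|\by^{(0)}\|^2$ is of order one).
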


The next lemma   gives a bound on the
number of nodes that are good over a relatively large time-window. This is the key-property
that we  will   use to analyse the asynchronous protocol~\labelsign  (see the next subsection
and  Lemma~\ref{lemma:numberunlucky}).

\begin{lemma}[Non-ephemeral good nodes]\label{lem:main}
Let $\varepsilon > 0$ be an arbitrarily small value, let $G$ be an 
$(n,d,b)$-clustered regular graph with
$\frac{\lambda_2}{\lambda_3}\le\frac{\lambda_3\epsilon^4}{c \log^2 n}$, for a 
large enough costant $c$. If we  execute \aveproc{$(1/2)$} on $G$, it holds that
 
\[ 
\Prob{}{  |B_t|  \leq 3 \epsilon\cdot n \, ,   \, \forall  \, t \, : \,  
 6 \frac{n}{\lambda_3} \log n \leq t\leq 12 \frac{n}{\lambda_3} 
\log n } \geq 1 - \epsilon \,. 
\]
 \end{lemma}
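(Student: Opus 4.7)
The plan is to discretize the window into polylogarithmically many checkpoints, apply a Markov-type tail bound at each one via Theorem~\ref{thm:mom.bound}, and then interpolate between consecutive checkpoints using the key asynchrony observation that a node's value can change only when the node is an endpoint of the currently active edge.

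Concretely, partition the window $\mathcal{W} = [6n \log n/\lambda_3,\, 12 n \log n/\lambda_3]$ into $k$ equal subintervals of length $\Delta := \epsilon n$, giving checkpoints $t_0 < t_1 < \cdots < t_k$ with $k = \Theta(\log n/(\lambda_3 \epsilon))$. Each $t_i$ lies in the regime of Theorem~\ref{thm:mom.bound}, so combining that bound with inequality~\eqref{eq:numepsgood} and Markov's inequality yields
\[
    \Pr{|B_{t_i}| > \epsilon n} \;\le\; \frac{3\lambda_2 t_i}{n\,\epsilon^3 \|\by^{(0)}\|^2} \;=\; O\!\left(\frac{\lambda_2 \log n}{\lambda_3\, \epsilon^3 \|\by^{(0)}\|^2}\right).
\]
A union bound over the $k$ checkpoints then gives
\[
    \Pr{\exists i \colon |B_{t_i}| > \epsilon n} \;=\; O\!\left(\frac{\lambda_2 \log^2 n}{\lambda_3^2\, \epsilon^4 \|\by^{(0)}\|^2}\right),
\]
which is at most $\epsilon$ under the hypothesis $\lambda_2/\lambda_3 \le \lambda_3\epsilon^4/(c\log^2 n)$ for $c$ sufficiently large (absorbing a constant lower bound on $\|\by^{(0)}\|^2$, which holds w.h.p.\ for the $\pm 1$ initial vector).

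To interpolate between checkpoints, note that being $\epsilon$-bad is a function only of $\bx^{(t)}_v$, since $\bx_{\|,v}$ and $\by^{(0)}_v$ are fixed. Hence if $v \notin B_{t_i}$ and $v$ is not activated at any step in $(t_i, t]$, then $\bx^{(t)}_v = \bx^{(t_i)}_v$ and $v \notin B_t$; that is, $B_t \subseteq B_{t_i} \cup A_{(t_i,t]}$, where $A_{(t_i,t]}$ denotes the set of nodes activated in the interval. Since each round activates exactly two nodes, $|A_{(t_i,t]}| \le 2(t - t_i) \le 2\Delta = 2\epsilon n$, yielding $|B_t| \le |B_{t_i}| + 2\epsilon n$ for every $t \in (t_i, t_{i+1}]$. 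On the good event $\bigcap_i \{|B_{t_i}| \le \epsilon n\}$, which has probability at least $1 - \epsilon$, we therefore obtain $|B_t| \le 3 \epsilon n$ uniformly on all of $\mathcal{W}$, proving the lemma.

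The main obstacle is the tight balancing of the checkpoint spacing $\Delta$ against the number of checkpoints: a smaller $\Delta$ gives a tighter interpolation cushion ($2\Delta$ newly activated nodes) but inflates the union-bound cost by a factor $\sim 1/\Delta$. The choice $\Delta = \epsilon n$ is essentially the largest value producing the desired cushion of $2\epsilon n$ (so that a checkpoint bound of $\epsilon n$ expands to the required $3 \epsilon n$), and the resulting $O(\log n /(\lambda_3 \epsilon))$ checkpoints combine with the per-checkpoint $O(\log n/\lambda_3)$ factor from the length of the window to produce the two logarithmic factors in the denominator of the hypothesis $\lambda_2/\lambda_3 \le \lambda_3 \epsilon^4 / (c \log^2 n)$. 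A minor technical point to watch is ensuring $\|\by^{(0)}\|^2 = \Omega(1)$ on the success event of the initial randomization, which is standard and can be absorbed into $c$.
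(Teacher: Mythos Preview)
Your checkpoint-plus-interpolation strategy is a genuinely different route from the paper's, and the interpolation step is clean and correct: a node's good/bad status can only flip when it is an endpoint of the active edge, so between checkpoints spaced $\Delta = \epsilon n$ apart at most $2\epsilon n$ nodes can migrate into $B_t$. The paper instead applies Markov's inequality \emph{once} (at $t_1 = 6n\log n/\lambda_3$), then controls the drift of $\|\by^{(t)}+\bz^{(t)}-\by^{(0)}\|^2$ over the whole window by bounding $\|W_t\cdots W_{t_1}\chi - \chi\|^2$ via the number of \emph{cross} edges sampled (Claim~\ref{claim.wy}), to which Chernoff applies uniformly in $t$. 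The wrap-up then tracks a shrinking set $A_t$ of nodes that have never been averaged across the cut or with a bad node.

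There is, however, a quantitative slip in your union-bound step. With $k=\Theta(\log n/(\lambda_3\epsilon))$ checkpoints and per-checkpoint failure probability $O(\lambda_2\log n/(\lambda_3\epsilon^3))$, the union bound is $O\bigl(\lambda_2\log^2 n/(\lambda_3^2\epsilon^4)\bigr)$, which under the hypothesis $\lambda_2/\lambda_3 \le \lambda_3\epsilon^4/(c\log^2 n)$ is only $O(1/c)$, \emph{not} $\epsilon$. So with an absolute constant $c$ you obtain $\Pr[\cdot]\ge 1-O(1/c)$ rather than $\ge 1-\epsilon$; equivalently, your argument proves the lemma only under the stronger hypothesis with $\epsilon^5$ in place of $\epsilon^4$. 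The paper's single-Markov-plus-Chernoff route is precisely what buys back this lost factor of $k$. A secondary remark: the claim that $\|\by^{(0)}\|^2=\Omega(1)$ ``holds w.h.p.'' is not right (for a random $\pm1$ vector this holds only with constant probability), but in fact you do not need it---the $\|\by^{(0)}\|^2$ factor in \eqref{eq:numepsgood} cancels against the one implicit in the conditional form of Theorem~\ref{thm:mom.bound}, so the per-checkpoint bound $\E[|B_{t_i}|\mid\bx^{(0)}]\le C\lambda_2 t_i/\epsilon^2$ holds for every starting vector.
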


\subsubsection{Proof of Lemma \ref{lem:main}: An overview}

The main idea of the proof is to first show that with probability 
strictly larger than  $1-\epsilon$, the number of $\epsilon$-good nodes 
is at least $n\cdot (1-\epsilon/\log n)$ in every round $t\in [t_1, 
2t_1]$. Theorem~\ref{thm:mom.bound} already ensures this to be true in 
any given time step within a suitable window, but simply taking a union bound will not work, since we 
have $n \log n$ time steps and only a $1-\epsilon$ probability of 
observing the desired outcome in each of them. We will instead argue about 
the possible magnitude of the change in $ \| \by^{(t)} + \bz^{(t)} - 
\by^{(0)} \|^2$ over time, assuming this quantity is small at time 
$6 \frac{n}{\lambda_3} \log n$. 
We will then show that our argument implies that, with probability 
$1-\epsilon$, at least $n - \epsilon n$ nodes remain $\epsilon$-good 
over the entire window $[6 \frac{n}{\lambda_3} \log n, 12 \frac{n}{\lambda_3} \log n]$.

The full proof of Lemma \ref{lem:main} is given in Appendix \ref{apx:proofofmainlm}.

\subsection{ The \labelsign protocol }
\label{se:label}
Leveraging the results of Subsection~\ref{ssec:secmom}, we next propose a 
simple, lightweight opportunistic  protocol that provides community-sensitive labeling
for graphs that exhibit a relatively sparse cut.

The algorithm, denoted as~\labelsign, adds a simple
\emph{labeling rule} to  the \aveproc{$(1/2)$}\ process: Each node keeps track of the number of times it is 
activated. Upon its $T$-th activation, for a suitable $T = \Theta(\log n)$, 
the node uses the sign of its current 
value as a binary label. The above local strategy is applied to $\ell$ independent
runs of   \aveproc{$(1/2)$}, so that every node is eventually assigned
a binary signature of length $\ell$.


\begin{algorithm}
    \LinesNotNumbered
    \DontPrintSemicolon
    \SetAlgoVlined
    \SetNoFillComment
    \labelsign{(T,$\ell$)} (for a node $u$ that is one of the two endpoints of an active edge)\;
    \begin{description}
       \item[Component selection: ] Jointly\footnote{This instruction can be
                implemented using a simple strategy: $u$ and $v$ generate
                random values $j_u\in [\ell]$ and $j_v\in [\ell]$, exchange them and
                then set $j=j_u+j_v \mod\hspace{2pt} \ell$. } with the other
                endpoint choose a component\\  \hspace{100pt}$j \in [\ell]$ u.a.r.\;
        \item[Initialization and update:] {Run one step of \aveproc{ $(1/2)$  \\   \hspace{100pt}for component $j$}}.
        \item[Labeling:] If this is the $T$-th activation of component $j$: set 
            $\bh^{sign}_u(j) = \sgn(\bx_u(j))$.
    \end{description}
    \caption{\label{algo:sign_labeling} \labelsign
         algorithm for a node $u$ of an active edge.}
\end{algorithm}

\smallskip\noindent
Algorithm~\labelsign achieves community-sensitive labeling (see 
Definition~\ref{def:senshash}), as stated in the following theorem and corollary. 

\begin{theorem}[Community-sensitive labeling]
    \label{thm:main_small} \label{thm:csl-small}
    Let $\varepsilon > 0$ be an arbitrarily small value, let $G$ be an 
    $(n,d,b)$-clustered regular graph with 
    $\frac{\lambda_2}{\lambda_3}\le\frac{\lambda_3\epsilon^4}{c \log^2 n}$, for a 
    large enough constant $c$. 
    Then, protocol \labelsign$(T,\ell)$ with $T = (8/\lambda_3) \log n $ and 
    $\ell = 10 \varepsilon^{-1} \log n$ performs a $\gamma$-community-sensitive 
    labeling of $G$ according to Definition~\ref{def:senshash} with $c_1 = 
    4 \varepsilon$, $c_2 = 1/6$ and $\gamma = 6 \varepsilon$, w.h.p.
    The convergence time is $\bigO(n \ell \log n / \lambda_3)$ and the   work per node is 
    $\bigO( \ell    \log n / \lambda_3)$, w.h.p.
\end{theorem}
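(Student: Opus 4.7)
My plan is to exploit the independence of the $\ell$ components of \labelsign, reducing the theorem to a per-component correctness statement followed by a counting argument over the $\ell$ i.i.d.\ components. By construction, each component $j\in[\ell]$ is driven by its own fresh initial random vector and by the (independent) subsequence of rounds in which the two active endpoints jointly pick $j$, so the $\ell$ processes are mutually independent copies of \aveproc$(1/2)$ on $G$ and the signature $\bh^{\mbox{sign}}_u$ is the concatenation of $\ell$ independent label bits. For a fixed pair $(u,j)$, the component-time $\tau_{u,j}$ of $u$'s $T$-th in-component-$j$ activation is the $T$-th success of a Bernoulli$(2/n)$ process, so a standard Chernoff bound on the corresponding sum of geometric variables shows that $\tau_{u,j}$ is tightly concentrated around $Tn/2$ and, with the chosen $T=(8/\lambda_3)\log n$, falls inside the good window of Lemma~\ref{lem:main} except on an event of probability $n^{-\Omega(1)}$; I absorb this small failure probability by a union bound over all $(u,j)$ pairs.

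Conditional on $\tau_{u,j}$ lying in the good window, Lemma~\ref{lem:main} yields $|B_t|\leq 3\epsilon n$ uniformly in $t$ throughout the window with probability $\geq 1-\epsilon$, so $u$ is $\epsilon$-good at its labeling time except with probability $O(\epsilon)$ per component. Moreover, by the first-moment decomposition~\eqref{eq:expecvalu} and a Cauchy-style calculation on the random initial vector (as in the proof of Theorem~\ref{thm:mainavg}), there exists a constant $p_0>1/6$ such that with probability at least $p_0$ over the initial vector of component $j$, $|\alpha_2^{(j)}|$ dominates $|\alpha_1^{(j)}|$ strongly enough that every $\epsilon$-good $v$ satisfies $\sgn(\bx_v^{(\tau_{v,j})})=\sgn(\alpha_2^{(j)}\bchi_v)$; on the complementary ``$\alpha_1$-dominant'' event, every good $v$ instead labels itself $\sgn(\alpha_1^{(j)})$, a sign shared by all good nodes and hence independent of their community. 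Thus, modulo the bad nodes, the bit produced in component $j$ is a deterministic function of $\bchi_v$ and the independent randomness of $j$.

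Independence across components now lets me aggregate bad counts: the total number of (node, component) pairs carrying a ``bad label'' has expectation $O(\epsilon n\ell)$, and with $\ell=\Theta(\epsilon^{-1}\log n)$ a per-node Chernoff bound followed by a union bound over the $n$ nodes carves out an exceptional set of size at most $6\epsilon n$, outside of which every node is bad in at most $4\epsilon\ell$ of its $\ell$ components. For two non-exceptional nodes $u,v$ in the same community, in every component where neither is bad the two bits coincide (either both equal $\sgn(\alpha_2^{(j)}\bchi_u)=\sgn(\alpha_2^{(j)}\bchi_v)$, since $\bchi_u=\bchi_v$, or both equal $\sgn(\alpha_1^{(j)})$), hence $\hamm(\bh^{\mbox{sign}}_u,\bh^{\mbox{sign}}_v)\leq 4\epsilon\ell=c_1\ell$. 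For non-exceptional $u,v$ in different communities, in every $\alpha_2^{(j)}$-dominant component where both are good the two bits are opposite (since $\bchi_u=-\bchi_v$); by Chernoff over the $\ell$ independent components the number of $\alpha_2^{(j)}$-dominant components is at least $(p_0-\epsilon)\ell$, and subtracting the $O(\epsilon\ell)$ bad ones yields $\hamm(\bh^{\mbox{sign}}_u,\bh^{\mbox{sign}}_v)\geq(p_0-O(\epsilon))\ell\geq\ell/6=c_2\ell$. The running-time and work bounds follow from the same Chernoff on activation times: every node completes all $T\ell$ activations within $O(Tn\ell)=O(n\ell\log n/\lambda_3)$ global rounds, with $O(\ell\log n/\lambda_3)$ work per node, w.h.p.

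The main technical obstacle is precisely the desynchronization issue highlighted in the introduction: each node fixes its label at a random, node-specific global round, so a pointwise-in-$t$ concentration such as Theorem~\ref{thm:mom.bound} cannot be union-bounded over the $\Theta(n\log n/\lambda_3)$-long good window. Lemma~\ref{lem:main} absorbs exactly this difficulty by delivering a uniform-in-$t$ bound on $|B_t|$; once that is in hand, the rest of the argument reduces to a clean Chernoff-and-counting calculation over the $\ell$ independent components.
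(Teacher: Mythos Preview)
Your strategy matches the paper's: exploit the independence of the $\ell$ components, invoke Lemma~\ref{lem:main} for the uniform-in-$t$ bound on $|B_t|$, then aggregate across components by Chernoff. The paper's reference label is the community-average sign $h_{V_i}(j)=\sgn\bigl(\alpha_1^{(j)}+\alpha_2^{(j)}\chi_{V_i}\bigr)$, which is precisely the sign every $\epsilon$-good node in $V_i$ adopts; your $\alpha_1/\alpha_2$-dominance dichotomy is the same observation repackaged (indeed $\sgn(\mu_1)\ne\sgn(\mu_2)$ iff $|\alpha_2|>|\alpha_1|$, so your ``$\alpha_2$-dominant'' event has probability exactly $1/2$), but it is not exhaustive: on the $O(\epsilon)$-probability event $\bigl||\alpha_1|-|\alpha_2|\bigr|\le\epsilon|\alpha_2|$ neither of your two cases applies. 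The paper's community-average framing avoids this case split; you would need to fold that boundary event into the bad-component count (which works, but is not written).

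There is a genuine gap in the step ``$u$ is $\epsilon$-good at its labeling time except with probability $O(\epsilon)$ per component''. Lemma~\ref{lem:main} bounds the \emph{size} of $B_t$, not the probability that a \emph{fixed} $u$ lies in $B_t$; all it delivers (after summing over $u$ as in Lemma~\ref{lemma:numberunlucky}) is the aggregate $\sum_u p_u\le O(\epsilon n)$, where $p_u$ is $u$'s per-component error probability. Your per-node Chernoff then needs $p_u\le O(\epsilon)$ for each non-exceptional $u$, which you have not established and which need not hold. The paper inserts a Markov step here: define the unlucky set $U^\theta=\{u:p_u>\theta\}$, of size at most $O(\epsilon n/\theta)$, and run the per-node Chernoff only on its complement (this is exactly the split between Lemma~\ref{lm:csh_gen} and Lemma~\ref{lemma:numberunlucky}). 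Balancing at $\theta=\sqrt{\epsilon}$ yields $\gamma,c_1=O(\sqrt{\epsilon})$ rather than the $O(\epsilon)$ your sketch claims; obtaining the constants as stated in Theorem~\ref{thm:csl-small} then requires a reparameterization of $\epsilon$ against the hypothesis on $\lambda_2/\lambda_3$.
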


Notice that, according to the hypothesis of Theorem~\ref{thm:main_small}, in order to
set local parameters $T$ and $\ell$, nodes should know parameters $\varepsilon$ and
$\lambda_3$ (in addition to a polynomial upper bound on the number of the nodes). 
However, it easy to restate it in a slightly restricted form that does not require 
such assumptions on what nodes know about the underlying graph.

\begin{cor} \label{cor:csl-small}
Protocol \labelsign$(80 \log n, 600 \log n)$ performs a $(1/10)$-community-sensitive 
labeling, according to Definition~\ref{def:senshash} with $c_1 = 1/15$ 
and $c_2 = 1/6$, of any $(n,d,b)$-clustered regular graph $G$ with 
$\lambda_3 \geqslant 1/10$ and $\lambda_2 \leqslant 1/(c \log^2 n)$ for a large
enough constant $c$.
\end{cor}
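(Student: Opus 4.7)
The plan is to derive Corollary~\ref{cor:csl-small} as a direct specialization of Theorem~\ref{thm:main_small} obtained by fixing the free parameter $\varepsilon$ and verifying that the remaining hypotheses of the theorem follow from those of the corollary. The distinguished choice is $\varepsilon = 1/60$, which immediately matches all three advertised constants in the corollary: $\gamma = 6\varepsilon = 1/10$, $c_1 = 4\varepsilon = 4/60 = 1/15$, and $c_2 = 1/6$. The signature length also lines up: $\ell = 10\varepsilon^{-1}\log n = 600\log n$, exactly as in the statement.

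Next I would check that the spectral hypothesis of Theorem~\ref{thm:main_small}, namely $\lambda_2/\lambda_3 \le \lambda_3\varepsilon^4/(c\log^2 n)$, is implied by the corollary's hypotheses $\lambda_3 \ge 1/10$ and $\lambda_2 \le 1/(c''\log^2 n)$. Under these bounds one has $\lambda_2/\lambda_3 \le 10\lambda_2 \le 10/(c''\log^2 n)$, while $\lambda_3\varepsilon^4/(c\log^2 n) \ge 1/(10\cdot 60^4\cdot c\log^2 n)$. Choosing the corollary's constant $c''$ to be at least $100\cdot 60^4\cdot c$ (where $c$ is the constant from Theorem~\ref{thm:main_small}) then makes the spectral condition hold. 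This is what is meant by ``for a large enough constant $c$'' in the corollary's statement.

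The last point to address is the choice of $T$. Theorem~\ref{thm:main_small} prescribes $T = (8/\lambda_3)\log n$, a quantity depending on the unknown $\lambda_3$, while the corollary fixes $T = 80\log n$. Since $\lambda_3 \ge 1/10$ implies $(8/\lambda_3)\log n \le 80\log n$, the two choices agree at the extreme $\lambda_3 = 1/10$. I would argue that using the uniform value $T = 80\log n$ is also valid for all larger $\lambda_3$ by noting that the analysis driving Lemma~\ref{lem:main} (and therefore Theorem~\ref{thm:main_small}) depends on $\lambda_3$ only through a lower bound on the spectral gap: replacing $\lambda_3$ throughout the proof by the uniform constant lower bound $1/10$ yields a valid, possibly weaker, conclusion in which the good time window, the value of $T$, and the $\lambda_2$-condition are all expressed in terms of $1/10$ rather than $\lambda_3$. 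This substitution is what converts Theorem~\ref{thm:main_small} into the parameter-free statement of the corollary.

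The only genuine obstacle is making this last substitution rigorous: one has to revisit Lemma~\ref{lem:main} and verify that $\lambda_3$ enters its proof exclusively through inequalities of the form $\lambda_3 \ge \ldots$ or $1/\lambda_3 \le \ldots$, so that replacing $\lambda_3$ by any valid lower bound (here $1/10$) preserves every step. Granting this monotone dependence, the rest of the argument is essentially parameter bookkeeping, and the convergence-time and per-node work bounds claimed by the corollary follow from those in Theorem~\ref{thm:main_small} with $\lambda_3$ replaced by $1/10$, giving $O(n\log^2 n)$ and $O(\log^2 n)$ respectively, absorbed into the implicit constants.
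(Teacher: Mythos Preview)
Your proposal is correct and matches the paper's intended derivation; the paper itself does not supply a proof but simply presents the corollary as a straightforward restatement of Theorem~\ref{thm:main_small} that removes the dependence on $\lambda_3$ and $\varepsilon$. Your choice $\varepsilon = 1/60$ is exactly the right specialization, and your observation that the only non-bookkeeping step is the monotone dependence on $\lambda_3$ (allowing one to substitute the lower bound $1/10$ throughout Lemma~\ref{lem:main} and its supporting claims) is precisely the point the paper glosses over with ``it is easy to restate.''
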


\subsubsection{Proof of Theorem \ref{thm:csl-small}: An Overview} \label{sssec:prfskthm:thm:csl-small}
We here sketch the main   arguments  proving    Theorem 
\ref{thm:csl-small}: Its full  proof is deferred to Appendix \ref{apx:toolsforcslsign}.

Lemma~\ref{lem:main} essentially states that over a suitable time window of
size $\Theta(n\log n)$, for all nodes $u$ but a fraction
$\bigO\left(\varepsilon/\log n\right)$, we have $\sgn(\bx^{(t)}_u) =
\sgn(\bx_{\|,{u}} + \by^{(0)}_{u}) ))$. Recalling that $\bx_{\|}$ and
$\by^{(0)}$ respectively are $\bx^{(0)}$'s projections along $\bchi/\sqrt{n}$
and $\bone/\sqrt{n}$, this immediately implies that, with probability $1 -
\varepsilon$ and up to a fraction $\varepsilon$ of the nodes,
$\sgn(\bx^{(t)}_u) = \sgn(\bx^{(t)}_v)$, whenever $u$ and $v$ belong to the
same community and $t$ falls within the aforementioned window. As to the latter
condition, we prove that each node labels itself within the right window with
probability at least $1 - 1/n$.\footnote{ It may be worth noting that
$\sgn(\bx^{(t)}_u) = \sgn(\bx^{(t)}_v)$ for $u$ and $v$ belonging to the same
community does not imply $\sgn(\bx^{(t)}_u) \ne \sgn(\bx^{(t)}_v)$  when they
don't.} Moreover, $\sgn(\bx_{\|,{u}} + \by^{(0)}_{u}) )) = \sgn(\bchi_u)$,
whenever $\by^{(0)}_u$ exceeds $\bx_{\|,{u}}$ in modulus, which occurs with
probability $1/2 - o(1)$ from the (independent) Rademacher initialization. As a
consequence, if we run $\ell$ suitably independent copies of the process (see
Algorithm~\ref{algo:sign_labeling}), the following will happen for all but a
fraction $\bigO(\varepsilon)$ of the nodes: the signatures of two nodes
belonging to the same community will agree on $\ell - o(1)$ bits, whereas those
of two nodes belonging to different communities will disagree on $\Omega(\ell)$
bits, i.e., our algorithm returns a community-sensitive labeling of the graph.

\section{Regular Graphs with a Dense Cut}\label{ssec:pl-analysis} 
In this section, we extend our study to the lazy averaging algorithm 
\aveproc{$(\delta)$} where $\delta < 1/2$. Similar to the previous 
section, we assume that the underlying graph $G$   is an $(n, d, 
b)$-clustered regular graph and $\lambda_3 > \lambda_2 = 2b/d$. 
However, this new analyses and the clustering protocol we derive from    
will work even for large (constant) $\lambda_2$, in contrast to those 
in Section~\ref{sec:second_moment} which only works for small 
$\lambda_2 \ll 1/\log^2 n$. The structure of this section is similar to 
the previous one. Indeed, in      Subsection 
\ref{ssec:secondmoment-II}, we propose a second moment analysis of the 
\aveproc{$(\delta)$} for  the above-mentioned regime of  $\lambda_2$. 
Then, in Subsection \ref{sec:reconstruct-analysis}, we exploit the  
analysis above  to devise a   protocol that guarantees a weak 
reconstruction for the underlying graph with arbitrarily-large constant 
probability and thus, by running independent   ``copies'' of the 
protocol  (so, similarly to the previous section), we easily obtain a 
community-sensitive labeling of the graph, with high probability.

\subsection{Second moment analysis for  large $\lambda_2$} \label{ssec:secondmoment-II}

Informally speaking, we show that, for an appropriate value of 
$\delta$ and any $t$ such that $\Omega(n \log n) \leqs t \leqs 
\bigO(n^2)$, with large probability, the vector $\by^{(t)} + 
\bz^{(t)}$ is almost parallel to $\chi$, i.e., $\|\bz^{(t)}\|$ is much 
smaller than $\|\by^{(t)}\|$. A more precise statement is given  
below as Theorem~\ref{thm:main-concen}. Note that, for brevity, we 
write $\cE$ here to denote the sequence $\{(u_t, v_t)\}_{t \in \N}$ of 
the edges chosen by the protocol.

\begin{theorem} \label{thm:main-concen}
For any sufficiently large $n \in \N$, any\footnote{Here 0.8 is 
arbitrary and can be changed to any constant less than 1. However, we 
pick an absolute constant here to avoid introducing another parameter 
to our theorem.} $\delta \in (0, 0.8(\lambda_3 - \lambda_2))$ and any 
$t \in \left[\Omega\left(\frac{n}{\delta(\lambda_3 - \lambda_2)} 
\log\left(n/\delta\right)\right), 
\bigO\left(\frac{n^2}{\delta(\lambda_3 - \lambda_2)}\left(\frac{d 
(\lambda_3 - \lambda_2)}{\delta b}\right)^{2/3}\right)\right]$, we have
\begin{align*}
\Pr_{\bx^{(0)}, \cE}\left[\|\bz^{(t)}\|^2 \leqs \sqrt{\frac{\delta b}{d 
(\lambda_3 - \lambda_2)}} \|\by^{(t)}\|^2\right] \geqs 1 - 
\bigO\left(\sqrt[3]{\frac{\delta b}{d (\lambda_3 - \lambda_2)}} + 
\frac{1}{\sqrt{n}}\right).
\end{align*}
\end{theorem}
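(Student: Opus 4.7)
The plan is to follow the second-moment template of Theorem~\ref{thm:mom.bound}, but now tracking $\E[\|\by^{(t)}\|^2]$ and $\E[\|\bz^{(t)}\|^2]$ \emph{separately}, since the goal is to compare their magnitudes rather than to bound their joint deviation from the initial vector. Writing each step as $\bx^{(t+1)} = W_t \bx^{(t)}$ with $W_t = I - \delta B_{u_t v_t}$ and $B_{uv} = (\be_u - \be_v)(\be_u - \be_v)^{\top}$, and using that both $P_y := \bchi\bchi^{\top}/n$ and $P_z := I - \bone\bone^{\top}/n - P_y$ commute with $\lapl$ (because $\bchi$ is a $\lapl$-eigenvector in the clustered regular setting), a direct expansion gives
\[
\E[W_t\,P\,W_t] \;=\; P\bigl(I - \tfrac{4\delta}{n}\lapl\bigr) + \delta^2\,\E[B_{uv}\,P\,B_{uv}]
\]
for $P \in \{P_y, P_z\}$. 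The edge-coefficient $(\be_u - \be_v)^{\top}P(\be_u - \be_v)$ can be computed explicitly: for $P_y$ it vanishes on intra-community edges and equals $4/n$ on cut edges, whereas for $P_z$ it is essentially $2$ on every edge. Collecting terms, I obtain coupled recursions
\[
\E[\|\by^{(t+1)}\|^2] \;\le\; \bigl(1 - \tfrac{4\delta\lambda_2}{n}\bigr)\E[\|\by^{(t)}\|^2] + O\!\bigl(\tfrac{\delta^2 b}{n d}\bigr),
\]
\[
\E[\|\bz^{(t+1)}\|^2] \;\le\; \bigl(1 - \tfrac{4\delta\lambda_3}{n}\bigr)\E[\|\bz^{(t)}\|^2] + \tfrac{4\delta^2\lambda_2}{n}\,\E[\|\by^{(t)}\|^2],
\]
where the cross-term in the $\bz$-recursion encodes the ``noise'' injected into the $\bz$-direction by $\by$ through cut edges.

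Next I would unroll both recursions. For the $\by$-recursion the self-noise is subdominant whenever $\delta$ is small, so $\E[\|\by^{(t)}\|^2]$ stays within a constant factor of $(1 - \tfrac{4\delta\lambda_2}{n})^t \|\by^{(0)}\|^2$ throughout the stated time window. Iterating the $\bz$-recursion gives
\[
\E[\|\bz^{(t)}\|^2] \;\le\; \bigl(1-\tfrac{4\delta\lambda_3}{n}\bigr)^t \|\bz^{(0)}\|^2 + \sum_{s<t}\bigl(1-\tfrac{4\delta\lambda_3}{n}\bigr)^{t-1-s}\tfrac{4\delta^2\lambda_2}{n}\,\E[\|\by^{(s)}\|^2].
\]
The transient first term becomes negligible compared to the geometric sum once $t \gtrsim \tfrac{n}{\delta(\lambda_3 - \lambda_2)}\log(n/\delta)$, which is precisely the theorem's lower endpoint; the sum then evaluates to $O\bigl(\tfrac{\delta\lambda_2}{\lambda_3 - \lambda_2}\bigr)\E[\|\by^{(t)}\|^2] = O(C)\,\E[\|\by^{(t)}\|^2]$, where $C := \tfrac{\delta b}{d(\lambda_3 - \lambda_2)}$. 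This is the expectation-level version of the target statement.

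Finally I would lift the expectation bound to high probability. A Markov-type inequality on $\|\bz^{(t)}\|^2$ gives $\|\bz^{(t)}\|^2 \le \sqrt{C}\,\E[\|\by^{(t)}\|^2]$ except with probability $O(\sqrt{C})$. To replace $\E[\|\by^{(t)}\|^2]$ by $\|\by^{(t)}\|^2$ one needs a lower-tail bound on $\|\by^{(t)}\|^2$, which I would obtain by combining (i) Berry--Esseen-style anti-concentration of the Rademacher inner product $\langle \bx^{(0)}, \bchi\rangle$, giving $\|\by^{(0)}\|^2 = \Omega(1)$ except with probability $O(1/\sqrt{n})$, with (ii) a Chebyshev-type bound on the deviation of $\|\by^{(t)}\|^2$ from its expectation via a variance estimate for the scalar process $\|\by^{(t)}\|^2$. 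Optimizing the Markov exponent against the variance accumulation is what refines the failure probability to $\sqrt[3]{C}$ and forces the $2/3$-power in the upper endpoint of the time window.

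The main obstacle is this last variance estimate: morally, a second-moment analysis for the scalar $\|\by^{(t)}\|^2$ itself, which requires controlling products of \emph{four} random matrices $W_t$. Concretely, one extends the recursion machinery above to $\E[\|\by^{(t)}\|^4]$, and the key new objects are fourth-order noise terms built from two copies of $B_{uv}$ interacting with $P_y \otimes P_y$. The assumption $\delta < 0.8(\lambda_3 - \lambda_2)$ is precisely what keeps these higher-order $\delta^k$ perturbation terms strictly subdominant to the leading $\bar\lambda_2$ contraction, so that the unrolling closes; conversely, the upper endpoint on $t$ is the largest time at which the accumulated variance of $\|\by^{(t)}\|^2$ is still small enough for Chebyshev to yield the required $\sqrt[3]{C}$ failure probability, and getting the exponents clean is where most of the technical work sits.
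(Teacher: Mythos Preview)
Your overall architecture matches the paper's: set up coupled one-step recursions for $\E[\|\by^{(t)}\|^2]$ and $\E[\|\bz^{(t)}\|^2]$, unroll them, use Markov on $\|\bz^{(t)}\|^2$, and use a variance argument to lower-bound $\|\by^{(t)}\|$. Two corrections are worth noting. First, the $\bz$-contraction coefficient should be $1 - \tfrac{4\delta(1-\delta)\lambda_3}{n}$, not $1 - \tfrac{4\delta\lambda_3}{n}$: since $W_t^2 = I - 2\delta(1-\delta)B_{u_tv_t}$, the $(1-\delta)$ survives, and the hypothesis $\delta < 0.8(\lambda_3-\lambda_2)$ is what keeps $(1-\delta)\lambda_3 > \lambda_2$ so that $\bz$ contracts strictly faster than $\by$.

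Second, and more importantly, you do \emph{not} need fourth moments for the lower tail on $\|\by^{(t)}\|$. The eigenspace of $\lambda_2$ is one-dimensional (spanned by $\bchi$), so $\by^{(t)} = a_y(t)\,\bchi/\sqrt{n}$ for a scalar $a_y(t)$, and $\|\by^{(t)}\|^2 = a_y(t)^2$. The first moment $\E[a_y(t)] = (1 - 2\delta\lambda_2/n)^t a_y(0)$ is exact, and your second-moment recursion already gives $\E[a_y(t)^2] = \E[\|\by^{(t)}\|^2]$. Hence $\mathrm{Var}(a_y(t))$ is available for free, and Chebyshev on the scalar $a_y(t)$ yields $a_y(t) \in [\tfrac12\mu(t), \tfrac32\mu(t)]$ with failure probability $O(\kappa(\xi_1/\xi)^t - 1)$, where $\xi = (1-2\delta\lambda_2/n)^2$ and $\xi_1$ is the coefficient in the $\by$-recursion. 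No $\E[\|\by^{(t)}\|^4]$ is ever computed.

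Finally, the cube root does not come from optimizing a Markov exponent against fourth-moment variance accumulation. It comes from a parameter $\beta$ in the conditioning event $\|\bz^{(0)}\|^2 \le n\beta\|\by^{(0)}\|^2$: the Rademacher anti-concentration gives this event failing with probability $O(1/\sqrt{\beta})$, while the Chebyshev bound on $a_y(t)$ has error $O(\beta \cdot \varepsilon b/d)$ (the $\beta$ enters through the cross term in the $\by$-recursion). Balancing $1/\sqrt{\beta}$ against $\beta\,\varepsilon b/d$ at $\beta = (d/(\varepsilon b))^{2/3}$ is exactly what produces both the $\sqrt[3]{\,\cdot\,}$ failure probability and the $2/3$ exponent in the upper endpoint of the time window.
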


Theorem~\ref{thm:main-concen} should be compared to 
Theorem~\ref{thm:mom.bound}: both assert that $\|\by^{(t)}\|$ is much 
larger than $\|\bz^{(t)}\|$, but Theorem~\ref{thm:main-concen} works 
even when $\lambda_2$ is quite large whereas 
Theorem~\ref{thm:mom.bound} only holds for $\lambda_2 \ll 1/\log n$. 

While the parameter dependencies in Theorem~\ref{thm:main-concen} may 
look confusing at first, there are mainly two cases that are 
interesting here. First, for any error parameter $\epsilon$, we can 
pick $\delta$ depending only on $\varepsilon$ and $\lambda_3 - 
\lambda_2$ in such a way that Theorem~\ref{thm:main-concen} implies 
that, with probability $1 - \varepsilon$, $\|\bz^{(t)}\|^2$ is at most 
$\varepsilon \|\by^{(t)}\|^2$, as stated below.

\begin{cor}\label{cor.one}
For any constant $\varepsilon > 0$ and for any $\lambda_3 > \lambda_2$, 
there exists $\delta$ depending only on $\varepsilon$ and $\lambda_3 - 
\lambda_2$ such that, for any sufficiently large $n$ and for any $t \in 
[\Omega_{\varepsilon, \lambda_3 - \lambda_2}(n \log n), \bigO(n^2)]$, 
we have $$\Pr_{\bx^{(0)}, \cE}\left[\|\bz^{(t)}\|^2 \leqs \varepsilon 
\|\by^{(t)}\|^2\right] \geqs 1 - \varepsilon.$$
\end{cor}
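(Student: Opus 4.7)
}
The plan is to deduce the corollary from Theorem~\ref{thm:main-concen} by choosing $\delta$ as a small enough constant (depending on $\varepsilon$ and $\lambda_3 - \lambda_2$) so that the ratio
\[
R \;:=\; \frac{\delta b}{d(\lambda_3 - \lambda_2)}
\]
which controls both the deterministic bound $\sqrt{R}\,\|\by^{(t)}\|^2$ and the failure probability $O(R^{1/3} + 1/\sqrt n)$ in Theorem~\ref{thm:main-concen}, becomes simultaneously small in the appropriate senses. The key observation is that, since $G$ is an $(n,d,b)$-clustered regular graph, we have the identity $\lambda_2 = 2b/d$, so that $R = \frac{\delta \lambda_2}{2(\lambda_3 - \lambda_2)}$. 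Using the generic bound $\lambda_2 \leqs 2$ for the normalized Laplacian, we get the clean inequality $R \leqs \delta/(\lambda_3 - \lambda_2)$, whose right-hand side no longer depends on the underlying graph parameters $b,d,\lambda_2$ separately.

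The second step is to choose $\delta$. I would pick
\[
\delta \;=\; \min\!\left(0.8\,(\lambda_3 - \lambda_2),\ \frac{\varepsilon^3(\lambda_3 - \lambda_2)}{C}\right),
\]
where $C$ is a large enough absolute constant that absorbs the big-$O$ constant in the failure-probability estimate of Theorem~\ref{thm:main-concen}. With this choice, $R \leqs \varepsilon^3/C$, which simultaneously yields $\sqrt{R} \leqs \varepsilon$ (this makes the deterministic bound $\sqrt{R}\,\|\by^{(t)}\|^2 \leqs \varepsilon \|\by^{(t)}\|^2$ as required) and $C'\cdot R^{1/3} \leqs \varepsilon/2$ for the big-$O$ constant $C'$ in the failure probability. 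Taking $n$ large enough so that the $1/\sqrt n$ term contributes at most $\varepsilon/2$, the overall failure probability is at most $\varepsilon$. Moreover, this $\delta$ manifestly depends only on $\varepsilon$ and on $\lambda_3 - \lambda_2$, as required by the corollary.

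The third and last step is to verify that the time window of Theorem~\ref{thm:main-concen} contains an interval of the form $[\Omega_{\varepsilon,\lambda_3-\lambda_2}(n\log n),\,O(n^2)]$. Plugging in our choice of $\delta$, the lower endpoint $\frac{n}{\delta(\lambda_3-\lambda_2)}\log(n/\delta)$ is $\Theta_{\varepsilon,\lambda_3-\lambda_2}(n\log n)$, and the upper endpoint
\[
\frac{n^2}{\delta(\lambda_3-\lambda_2)} \cdot R^{-2/3}
\]
is $\Theta_{\varepsilon,\lambda_3-\lambda_2}(n^2)$, since both $\delta(\lambda_3-\lambda_2)$ and $R$ are constants depending only on $\varepsilon$ and $\lambda_3-\lambda_2$. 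Hence the interval in the corollary is contained in the interval of the theorem, and the corollary follows.

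There is no real obstacle here: the proof is a parameter-tuning exercise on top of Theorem~\ref{thm:main-concen}. The only thing to keep track of is that the cubic-root term $R^{1/3}$ is the binding constraint (rather than $\sqrt R$), so $\delta$ must be chosen proportional to $\varepsilon^3$ (not $\varepsilon^2$) times $\lambda_3-\lambda_2$; and that the simplification $R \leqs \delta/(\lambda_3-\lambda_2)$ via $\lambda_2 \leqs 2$ is what makes the resulting $\delta$ independent of the graph parameters $b$ and $d$.
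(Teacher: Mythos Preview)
Your approach is correct and matches the paper's: the corollary is a direct instantiation of Theorem~\ref{thm:main-concen} with $\delta$ chosen small in terms of $\varepsilon$ and $\lambda_3-\lambda_2$, using $\lambda_2=2b/d\leqs 2$ to make the bound graph-independent. One small slip: you write that $R$ is ``a constant depending only on $\varepsilon$ and $\lambda_3-\lambda_2$'', but in fact only your \emph{upper bound} $R\leqs \delta/(\lambda_3-\lambda_2)$ has this property, since $R$ itself still involves $\lambda_2$; this is harmless, because a smaller $R$ only enlarges the theorem's upper time endpoint $\frac{n^2}{\delta(\lambda_3-\lambda_2)}R^{-2/3}$, so the containment of $[\Omega_{\varepsilon,\lambda_3-\lambda_2}(n\log n),\,\bigO(n^2)]$ still holds.
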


Another interesting case is when $\delta = 1/2$ (i.e., we consider the 
basic averaging protocol). Recalling that $\lambda_2 = 2b/d$, 
observe that $\lambda_2$ appears in both the bound on $\|\bz^{(t)}\|^2$ 
and the error probability. Hence, we can derive a similar lemma as the 
one above, but with $\lambda_2$ depending on $\varepsilon$ instead of 
$\delta$:

\begin{cor}\label{cor.two}
Fix $\delta = 1/2$. For any constant $\varepsilon > 0$, 
any\footnote{0.7 here can be replaced by any constant larger than 0.5.} 
$\lambda_3 > 0.7$, any sufficiently small $\lambda_2$ depending only on 
$\varepsilon$, any sufficiently large $n$ and any $t \in 
[\Omega_{\varepsilon}(n \log n), \bigO(n^2)]$, we have 
$$\Pr_{\bx^{(0)}, \cE}\left[\|\bz^{(t)}\|^2 \leqs \varepsilon 
\|\by^{(t)}\|^2\right] \geqs 1 - \varepsilon.$$
\end{cor}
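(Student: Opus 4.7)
\textbf{Proof plan for Corollary \ref{cor.two}.} The plan is to derive the corollary by a direct, parameter-level specialization of Theorem~\ref{thm:main-concen}, choosing $\lambda_2$ small enough as a function of $\varepsilon$ so that all the quantities that control the error probability and the size of $\|\bz^{(t)}\|^2/\|\by^{(t)}\|^2$ collapse to something $\leqs \varepsilon$.

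First, I would verify admissibility of $\delta = 1/2$. The hypothesis of Theorem~\ref{thm:main-concen} requires $\delta < 0.8(\lambda_3 - \lambda_2)$. Since the statement gives $\lambda_3 > 0.7$ and lets us take $\lambda_2$ as small as we wish, it suffices to impose (for instance) $\lambda_2 \leqs 1/16$, which guarantees $\lambda_3 - \lambda_2 > 0.6375 > 5/8$, so $0.8(\lambda_3 - \lambda_2) > 1/2 = \delta$. In the rest of the argument I will treat $\lambda_3 - \lambda_2$ as a constant $\Theta(1)$ bounded below by $5/8$.

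Next, I would simplify the key quantity appearing in Theorem~\ref{thm:main-concen}. Using $\lambda_2 = 2b/d$, i.e.\ $b/d = \lambda_2/2$, one gets
\[
\frac{\delta b}{d(\lambda_3 - \lambda_2)} \;=\; \frac{\lambda_2}{4(\lambda_3 - \lambda_2)} \;\leqs\; \frac{\lambda_2}{5/2}.
\]
Hence by choosing $\lambda_2$ small enough as a function of $\varepsilon$ alone (say $\lambda_2 \leqs c\varepsilon^6$ for a small absolute $c$), we can simultaneously ensure
\[
\sqrt{\frac{\delta b}{d(\lambda_3 - \lambda_2)}} \;\leqs\; \varepsilon, \qquad \sqrt[3]{\frac{\delta b}{d(\lambda_3 - \lambda_2)}} \;\leqs\; \varepsilon/2.
\]
Plugging these bounds into the conclusion of Theorem~\ref{thm:main-concen} yields $\|\bz^{(t)}\|^2 \leqs \varepsilon\|\by^{(t)}\|^2$ with failure probability $\bigO(\varepsilon/2 + 1/\sqrt{n})$, which is at most $\varepsilon$ for all sufficiently large $n$.

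Finally, I would check that the time interval in the corollary statement is contained in the admissible window of Theorem~\ref{thm:main-concen}. With $\delta = 1/2$ and $\lambda_3 - \lambda_2 = \Theta(1)$, the lower endpoint $\frac{n}{\delta(\lambda_3 - \lambda_2)} \log(n/\delta)$ is $\bigO(n \log n)$, so any $t = \Omega_\varepsilon(n\log n)$ (with a sufficiently large hidden constant depending on $\varepsilon$) exceeds it. For the upper endpoint, $\frac{n^2}{\delta(\lambda_3 - \lambda_2)}\left(\frac{d(\lambda_3 - \lambda_2)}{\delta b}\right)^{2/3} = \Omega(n^2) \cdot \left(\frac{\lambda_3 - \lambda_2}{\lambda_2}\right)^{2/3}$ and both factors are $\Omega(1)$ (in fact the second is $\gg 1$ under our smallness assumption on $\lambda_2$), so any $t = \bigO(n^2)$ also lies below this endpoint. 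Combining these three steps gives exactly the statement of Corollary~\ref{cor.two}.

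There is no real obstacle here: the corollary is a direct re-packaging of Theorem~\ref{thm:main-concen}. The only thing that needs a little care is keeping track of which constants depend on $\varepsilon$ (the smallness threshold on $\lambda_2$ and the hidden constant in $\Omega_\varepsilon(n\log n)$) versus which are absolute (the admissibility constraint and the simplifications of the window endpoints), and checking that the choice of $\lambda_2$ required to kill the cube-root error term is the binding one.
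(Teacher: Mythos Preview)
Your proposal is correct and matches the paper's (implicit) approach: Corollary~\ref{cor.two} is stated in the paper without a separate proof, as a direct specialization of Theorem~\ref{thm:main-concen}, and your derivation carries out exactly that specialization with the right bookkeeping. One small remark: the binding constraint on $\lambda_2$ comes from the cube-root term in the error probability, so $\lambda_2 \leqs c\varepsilon^3$ already suffices (your $\varepsilon^6$ is harmlessly conservative), and the lower endpoint of the time window is actually $\Theta(n\log n)$ with an absolute constant here since neither $\delta$ nor $\lambda_3-\lambda_2$ depends on $\varepsilon$---the $\Omega_\varepsilon$ in the statement is just slack.
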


\subsubsection{Proof of Theorem~\ref{thm:main-concen}: An Overview}

Due to space constraint, the full proof of 
Theorem~\ref{thm:main-concen} is deferred to 
Appendix~\ref{app:pl-proof}. We provide a brief summary of the ideas 
behind the proof here. Compared to the proof of 
Theorem~\ref{thm:mom.bound}, the main additional technical challenge in 
the new proof is to show that $\|\by^{(t)}\|$ is large with reasonably 
high probability. In Theorem~\ref{thm:mom.bound}, this is true because 
$\lambda_2$ is so small that $\by^{(t)}$ remains almost unchanged from 
$\by^{(0)}$. However, in the setting of large $\lambda_2$, this is not 
true anymore; for constant $\lambda_2$, even $\E[\by^{(t)}]$ shrinks by 
a constant factor from $\by^{(0)}$ when $t \geqs 
\Omega_{\lambda_2}(n)$.

As a result, we need to develop a more fine-grained understanding of 
how $\|\by^{(t)}\|$, $\|\bz^{(t)}\|$ changes over time. Specifically, 
at the heart of our analysis lies the following lemma\footnote{Lemma 
\ref{lem:one-step-1} with its full statement and proof is given in 
Appendix \ref{app:pl-proof} as Lemma \ref{lem:one-step}. Recall that
$\lambda_2 = 2b/d.$} which allows us 
to understand how $\|\by^{(t)}\|$, $\|\bz^{(t)}\|$ behave, given 
$\|\by^{(t - 1)}\|$, $\|\bz^{(t - 1)}\|$:

\begin{lemma} \label{lem:one-step-1}
For any $t \in \N$,
\begin{align*}
\E [ \|\by^{(t)}\|^2 ] \leqs \left(1 - \frac{4\delta \lambda_2}{n} + 
\frac{8\delta^2\lambda_2}{n^2}\right)\| \by^{(t - 1)} \|^2 + 
\left(\frac {8\delta^2\lambda_2}{n^2}\right)\|\bz^{(t - 1)}\|^2
\end{align*}
and
\begin{align*}
\E [ \|\bz^{(t)}\|^2 ]  \leqs 
\left(\frac{4\delta^2\lambda_2}{n}\right)\|\by^{(t - 1)}\|^2 + \left( 1 
- \frac{4\delta(1 - \delta)\lambda_3}{n} \right) \|\bz^{(t - 1)}\|^2
\end{align*}
where the expectation is over the random edge selected at time $t$.
\end{lemma}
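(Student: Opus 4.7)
The plan is to start from the one-step identity that, when edge $\{u,v\}$ is activated,
\[
\bx^{(t)} = \bx^{(t-1)} - \delta(\bx^{(t-1)}_u - \bx^{(t-1)}_v)(\be_u - \be_v),
\]
and then apply the orthogonal projections $Q_2 = \chi\chi^T/n$ and $Q_{3\cdots n}$. Since $\be_u - \be_v \perp \bone$ and $Q_2(\be_u-\be_v) = \frac{\chi_u-\chi_v}{n}\chi$, we immediately get
\[
\by^{(t)} = \by^{(t-1)} - a\,\chi,
\qquad
\bz^{(t)} = \bz^{(t-1)} - \delta(\bx^{(t-1)}_u - \bx^{(t-1)}_v)\,\tilde\be_{uv},
\]
where $a := \tfrac{\delta(\chi_u-\chi_v)(\bx^{(t-1)}_u-\bx^{(t-1)}_v)}{n}$ and $\tilde\be_{uv} := (\be_u-\be_v) - \tfrac{\chi_u-\chi_v}{n}\chi$, and where $\|\tilde\be_{uv}\|^2 = 2 - (\chi_u-\chi_v)^2/n \le 2$. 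Throughout, I write $\by^{(t-1)} = \beta\chi$ with $\beta^2 n = \|\by^{(t-1)}\|^2$, and decompose $\bx^{(t-1)}_u - \bx^{(t-1)}_v = \beta(\chi_u-\chi_v) + (\bz^{(t-1)}_u-\bz^{(t-1)}_v)$. All the averages over the random edge will be evaluated via the Laplacian identities $\sum_{\{u,v\}\in E}(f_u-f_v)(g_u-g_v) = f^T L g$ and $|E(V_1,V_2)|/m = b/d = \lambda_2/2$.

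For the bound on $\E[\|\by^{(t)}\|^2]$, I will expand $\|\by^{(t)}\|^2 = \|\by^{(t-1)}\|^2 - 2a\langle \by^{(t-1)},\chi\rangle + a^2 n$. The linear term in expectation collapses to $-\tfrac{4\delta\lambda_2}{n}\|\by^{(t-1)}\|^2$ using $L\chi = 2b\chi$ together with $\chi \perp \bone, \bz^{(t-1)}$. In $n\E[a^2]$, the $\beta^2$-contribution exploits that $(\chi_u-\chi_v)^2 = 4\cdot\mathbf{1}_{\{u,v\}\in E(V_1,V_2)}$ and the cut fraction $b/d$ to produce $\tfrac{8\delta^2\lambda_2}{n^2}\|\by^{(t-1)}\|^2$; the cross term in $\beta$ vanishes because its expectation is proportional to $\chi^T L \bz^{(t-1)} = 2b\,\chi^T\bz^{(t-1)} = 0$; and the pure $\bz$-contribution is handled by the operator bound $L_{\text{cut}} \preceq 2bI$ (valid since the cut subgraph is $b$-regular bipartite), giving $\tfrac{8\delta^2\lambda_2}{n^2}\|\bz^{(t-1)}\|^2$. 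Summing these three pieces gives exactly the first claimed inequality.

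For the bound on $\E[\|\bz^{(t)}\|^2]$, I expand $\|\bz^{(t)}\|^2 = \|\bz^{(t-1)}\|^2 - 2\delta(\bx^{(t-1)}_u - \bx^{(t-1)}_v)(\bz^{(t-1)}_u - \bz^{(t-1)}_v) + \delta^2(\bx^{(t-1)}_u - \bx^{(t-1)}_v)^2\|\tilde\be_{uv}\|^2$, using $\langle \bz^{(t-1)},\tilde\be_{uv}\rangle = \bz^{(t-1)}_u - \bz^{(t-1)}_v$ (thanks to $\bz^{(t-1)} \perp \chi, \bone$). The linear term, after substituting the $\beta\chi + \bz^{(t-1)}$ decomposition, becomes $-2\delta\,\E[(\bz^{(t-1)}_u - \bz^{(t-1)}_v)^2] = -\tfrac{2\delta d}{m}\bz^{(t-1),T}\mathcal{L}\bz^{(t-1)}$, since the mixed term is again proportional to $\chi^T L \bz^{(t-1)} = 0$. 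In the quadratic term, the $\beta^2(\chi_u-\chi_v)^2\|\tilde\be_{uv}\|^2$ piece yields $\tfrac{4\delta^2\lambda_2}{n}\|\by^{(t-1)}\|^2$; the mixed piece vanishes by the cubic identity $(\chi_u-\chi_v)^3 = 4(\chi_u - \chi_v)$, which reduces it to a scalar multiple of $\E[(\chi_u-\chi_v)(\bz^{(t-1)}_u-\bz^{(t-1)}_v)] = 0$; and the pure $\bz$-piece is bounded by $\tfrac{2\delta^2 d}{m}\bz^{(t-1),T}\mathcal{L}\bz^{(t-1)}$ via $\|\tilde\be_{uv}\|^2 \le 2$. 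Combining the linear and quadratic $\bz$-contributions produces $-\tfrac{2\delta(1-\delta)d}{m}\bz^{(t-1),T}\mathcal{L}\bz^{(t-1)}$, which the Rayleigh-quotient bound $\bz^{(t-1),T}\mathcal{L}\bz^{(t-1)} \ge \lambda_3\|\bz^{(t-1)}\|^2$ (valid since $\bz^{(t-1)} \perp \bone, \chi$) transforms into $-\tfrac{4\delta(1-\delta)\lambda_3}{n}\|\bz^{(t-1)}\|^2$, yielding the second inequality.

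The main bookkeeping obstacle is keeping track of the non-trivial cross term $\beta(\chi_u-\chi_v)(\bz^{(t-1)}_u - \bz^{(t-1)}_v)\|\tilde\be_{uv}\|^2$ in the expansion of $\|\bz^{(t)}\|^2$: the fact that it vanishes in expectation hinges on the clustered-regular structure, namely the cubic identity $\bar\chi_{uv}^3 = 4\bar\chi_{uv}$ which collapses $\|\tilde\be_{uv}\|^2$ to a pure affine function of $\bar\chi_{uv}^2$, after which orthogonality of $\chi$ and $\bz^{(t-1)}$ delivers zero. Beyond that, the two inequalities follow from careful arithmetic using $m = nd/2$ and $\lambda_2 = 2b/d$.
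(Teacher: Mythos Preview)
Your proof is correct. The computation for $\E[\|\by^{(t)}\|^2]$ is essentially the same as the paper's, just organized through Laplacian quadratic forms rather than an explicit internal-edge/cross-edge case split; the two are equivalent, and your bound on the pure-$\bz$ piece via $L_{\mathrm{cut}}\preceq 2bI$ amounts to the paper's use of $(z_u-z_v)^2\le 2(z_u^2+z_v^2)$ combined with $b$-regularity of the cut.

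Where you genuinely diverge is in the bound for $\E[\|\bz^{(t)}\|^2]$. The paper never expands $\|\bz^{(t)}\|^2$ directly; instead it first bounds $\E[\|\by'+\bz'\|^2]$ using the matrix identity $W^2=(2\delta-1)I+2(1-\delta)W$ (valid because $W=I-\delta\,\be_{uv}\be_{uv}^T$ with $(\be_{uv}\be_{uv}^T)^2=2\,\be_{uv}\be_{uv}^T$), so that $\E[W^2]=(2\delta-1)I+2(1-\delta)\overline W$ and the eigenstructure of $\overline W$ delivers the $\lambda_3$ term immediately. It then obtains $\E[\|\bz'\|^2]$ by Pythagoras, subtracting the \emph{lower} bound $\E[\|\by'\|^2]\ge\bigl(1-\tfrac{4\delta\lambda_2}{n}+\tfrac{8\delta^2\lambda_2}{n^2}\bigr)\|\by\|^2$. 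Your route avoids the $W^2$ trick at the cost of having to kill the mixed term $\beta(\chi_u-\chi_v)(\bz_u-\bz_v)\|\tilde\be_{uv}\|^2$, which you do cleanly via the cubic identity $(\chi_u-\chi_v)^3=4(\chi_u-\chi_v)$ on $\{-2,0,2\}$; this is exactly where the clustered-regular structure enters your argument. The paper's approach is slicker and also yields the matching lower bound on $\E[\|\by'\|^2]$ for free (used later in the analysis), while yours is more self-contained and makes the role of $\bz\perp\chi$ explicit at every cancellation.
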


For simplicity of the overview, let us pretend that the cross terms 
were not there, i.e., that $\E [ \|\by^{(t)}\|^2 ] \leqs \left(1 - 
\frac{4\delta \lambda_2}{n} + \frac{8\delta^2\lambda_2}{n^2}\right)\| 
\by^{(t - 1)} \|^2$ and $\E [ \|\bz^{(t)}\|^2 ]  \leqs \left( 1 - 
\frac{4\delta(1 - \delta)\lambda_3}{n} \right) \|\bz^{(t - 1)}\|^2$. 
These imply that
\begin{align}
\E [ \|\by^{(t)}\|^2 ] \leqs \left(1 - \frac{4\delta \lambda_2}{n} + 
\frac{8\delta^2\lambda_2}{n^2}\right)^t\| \by^{(0)} \|^2
\label{eq:second-moment-yt}
\end{align}
and
\begin{align}
\E [ \|\bz^{(t)}\|^2 ]  \leqs \left( 1 - \frac{4\delta(1 - \delta)\lambda_3}{n} \right)^t \|\bz^{(0)}\|^2.
\label{eq:second-moment-zt}
\end{align}

Now, by Markov's inequality, (\ref{eq:second-moment-zt}) implies that, 
with 0.99 probability, $\|\bz^{(t)}\|$ is at most $\bigO\left(\left( 1 
- \frac{2\delta(1 - \delta)\lambda_3}{n} \right)^t 
\|\bz^{(0)}\|\right)$. However, it is not immediately clear how 
(\ref{eq:second-moment-yt}) can be used to lower bound $\|\by^{(t)}\|$. 
Fortunately for us, it is rather simple to see that, for a fixed 
$\by^{(0)}$, $\E[\by^{(t)}]$ can be computed exactly; in particular,
\begin{align}
\E[\by^{(t)}] = \left(1 - \frac{2 \delta \lambda_2}{n}\right)^t \by^{(0)}.
\label{eq:first-moment-yt}
\end{align}
Let $a_y(t) \in \mathbb{R}$ be such that $\by^{(t)} = a_y(t) \cdot 
(\chi / \sqrt{n})$. (\ref{eq:first-moment-yt}) can equivalently be 
stated as $\E[a_y(t)] = (1 - 2\delta \lambda_2/n)^t a_y(0)$. This, 
together with (\ref{eq:second-moment-yt}), can be used to bound the 
variance of $a_y(t)$ as follows:
\begin{align*}
\text{Var}(a_y(t)) &\leqs \left(1 - \frac{4\delta \lambda_2}{n} + \frac{8\delta^2\lambda_2}{n^2}\right)^t a_y(0)^2 - (1 - 2\delta \lambda_2/n)^{2t} a_y(0)^2 \\
&= \bigO_{\delta, \lambda_2}(t/n^2) \left(\E[a_y(t)]\right)^2.
\end{align*}
Hence, when $t \ll n^2$, Chebyshev's inequality implies that $a_y(t)$ 
concentrates around $\E[a_y(t)]$ or, equivalently, $\|\by^{(t)}\|$ 
concentrates around $\left(1 - \frac{2 \delta 
\lambda_2}{n}\right)^t\|\by_0\|$. 

Finally, observe that, since $\lambda_2 < \lambda_3$, for sufficiently 
small $\delta$, we have $2 \delta \lambda_2 < 2\delta(1 - 
\delta)\lambda_3$. Hence, when $t \gg n \log n$, $\left(1 - \frac{2 
\delta \lambda_2}{n}\right)^t$ is polynomially (say $n^{10}$ times) 
larger than $\left(1 - \frac{2 \delta(1 - \delta) 
\lambda_3}{n}\right)^t$. It is also not hard to see that, for a random 
starting vector, $\|\bz^{(0)}\| \ll n^{10} \|\by^{(0)}\|$ with high 
probability. This means that, for this range of $t$, we have $\left(1 - 
\frac{2 \delta \lambda_2}{n}\right)^t\|\by^{(0)}\| \gg \left(1 - 
\frac{2 \delta (1 - \delta) \lambda_3}{n}\right)^t\|\bz^{(0)}\|$ with 
high probability. Since $\|\by^{(t)}\|$ concentrates on the former 
quantity whereas $\|\bz^{(t)}\|$ often does not exceed a constant 
factor of the latter, we can conclude that $\|\by^{(t)}\|$ is indeed 
often much larger than $\|\bz^{(t)}\|$.

This wraps up our proof overview of Theorem~\ref{thm:main-concen}.

\subsection{The \labeljump protocol} 
\label{sec:reconstruct-analysis}


Relying on our insights from the previous section, we propose a 
lightweight protocol named \labeljump, which makes use of the  lazy 
version of the averaging process. Here $\delta \in [0, 1]$ and 
$\tau^{\stored}, \ttau^{\stored}, \tau^{\tend}, \ttau^{\tend} \in \N$ 
are parameters that will be chosen later. Intuitively, protocol 
\labeljump exploits the expected monotonicity in the behaviour of 
$\sgn(\bx_u^{(t)} - \bx^{(t-1)})$ highlighted in Section \ref{se:expec}. 
Though this property does not hold for a single realization of the averaging 
process in general, the results of Section \ref{ssec:pl-analysis} 
allow us to show that the sign of $\bx^{(\tau_u^e)} - \bx^{(\tau_u^s)}$ reflects 
$u$'s community membership for most vertices with probability $1 - 
o(1)$ (i.e., the algorithm achieves weak reconstruction) when 
$\tau_u^s$ and $\tau_u^e$ are randomly chosen within a 
suitable interval. This is the intuition behind the main result of this 
section. Due to space constraints, the full proof of Theorem 
\ref{thm:main-reconstruct} below is deferred to Appendix~\ref{app:reconstruct}.

\begin{theorem} \label{thm:main-reconstruct}
    Let $n$ be any sufficiently large even positive integer. For any $0 <
    \delta < 0.8(\lambda_3 - \lambda_2)$, there exist 
    $\tau^{\stored},
    \ttau^{\stored}, \tau^{\tend}, \ttau^{\tend} \in \N$ such that, with
    probability $1 - \bigO\left(\sqrt[8]{\frac{\delta b}{d (\lambda_3 -
    \lambda_2)}} + \sqrt[4]{\frac{1}{\log n}}\right)$, after
    $\bigO\left(\frac{n}{\delta(\lambda_3 - \lambda_2)} \log\left(n/\delta\right) +
    \frac{nd}{b\delta}\right)$ rounds of  \, \labeljump{($\delta,
    \tau^{\stored}, \ttau^{\stored}, \tau^{\tend}, \ttau^{\tend}$)}, every
    node labels its cluster and this labelling is a
    $\left(\sqrt[8]{\frac{\delta b}{d (\lambda_3 - \lambda_2)}} +
    \sqrt[4]{\frac{1}{\log n}}\right)$-weak reconstruction of $G$.
    The convergence  time of this  algorithm is $\Omega_{\delta}\left(n
\left(\log n + \frac{d}{b}\right)\right)$. 
\end{theorem}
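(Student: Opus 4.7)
The overall strategy is to combine the monotone decay of $\E[\by^{(t)}]$ with the near-parallelism of $\bx^{(t)}$ to $\bchi$ furnished by Theorem~\ref{thm:main-concen}, so that the sign of the jump $\bx^{(\tau_u^e)}_u - \bx^{(\tau_u^s)}_u$ correctly reads off $u$'s community for most $u$. First, I would fix two global times $T^s, T^e$ well inside the window $\left[\Omega\!\left(\frac{n}{\delta(\lambda_3 - \lambda_2)} \log(n/\delta)\right), \bigO\!\left(\frac{n^2}{\delta(\lambda_3 - \lambda_2)}\left(\frac{d(\lambda_3-\lambda_2)}{\delta b}\right)^{2/3}\right)\right]$ of Theorem~\ref{thm:main-concen}, and with $T^e - T^s = \Theta(n d/b\delta)$ so that
\[
\E[a_y(T^e)] - \E[a_y(T^s)] = \left((1 - 2\delta \lambda_2/n)^{T^e} - (1 - 2\delta \lambda_2/n)^{T^s}\right) a_y(0)
\]
is a definite constant fraction of $a_y(0)$, where $a_y(t)$ denotes the scalar such that $\by^{(t)} = a_y(t)\bchi/\sqrt{n}$. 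The protocol parameters $\tau^{\stored}, \ttau^{\stored}, \tau^{\tend}, \ttau^{\tend}$ should then be chosen so that each node's $\tau_u^{\stored}$-th and $\tau_u^{\tend}$-th local activations fall, with high probability, in small windows around $T^s$ and $T^e$.

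At the two fixed global times, Theorem~\ref{thm:main-concen} yields $\|\bz^{(t)}\|^2 \leqs \sqrt{\delta b /(d(\lambda_3 - \lambda_2))} \cdot \|\by^{(t)}\|^2$ with probability $1 - \bigO(\sqrt[3]{\delta b/(d(\lambda_3 - \lambda_2))})$. Writing $\bx^{(t)} = \bx_{\|} + (a_y(t)/\sqrt{n})\bchi + \bz^{(t)}$ and applying Markov's inequality coordinatewise, for all but an $\bigO(\sqrt[4]{\delta b/(d(\lambda_3 - \lambda_2))})$ fraction of nodes $u$ we obtain
\[
\left|\bx^{(t)}_u - \bx_{\|,u} - a_y(t) \bchi_u/\sqrt{n}\right| \;\ll\; |a_y(t)| / \sqrt{n},
\]
for $t \in \{T^s, T^e\}$. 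Combining this with a Chebyshev-type argument for the scalar $a_y(t)$ in the spirit of the proof sketch of Theorem~\ref{thm:main-concen} (using $t \ll n^2$), $a_y(T^s)$ and $a_y(T^e)$ are concentrated around their expectations. Taking the difference for any such good node then gives $\sgn\!\left(\bx^{(T^e)}_u - \bx^{(T^s)}_u\right) = \sgn\!\left(a_y(T^e) - a_y(T^s)\right) \cdot \sgn(\bchi_u)$, where the first factor is a single global sign determined by the random initial vector. Splitting the output according to this global sign then yields a weak reconstruction.

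To lift this fixed-time statement to the node-specific \emph{local} activation times, I would use concentration of activation counts. Since each round touches $u$ with probability $d/m = 2/n$, a Chernoff bound implies that the global time $T_u^{(k)}$ of $u$'s $k$-th activation is $(1\pm o(1))$-concentrated around $kn/2$, provided $k \gg \log n$. Thus setting $\tau^{\stored} \approx 2T^s/n$ and $\tau^{\tend} \approx 2T^e/n$, together with windows $\ttau^{\stored}, \ttau^{\tend}$ matching the Chernoff deviation, ensures that $\tau_u^{\stored}$ and $\tau_u^{\tend}$ lie within a short global window around $T^s$ and $T^e$ for almost every $u$. It then remains to show that the ratio $\|\bz^{(t)}\|^2/\|\by^{(t)}\|^2$ does not deteriorate substantially over such a short window: each round changes only two coordinates and Lemma~\ref{lem:one-step-1} bounds the per-round multiplicative increment of $\|\by^{(t)}\|^2$ and $\|\bz^{(t)}\|^2$ by $1+\bigO(\delta/n)$, so the window width can be chosen polynomially smaller than $nd/(\delta b)$ while preserving both the signal size $|a_y(T^e) - a_y(T^s)|$ and the Theorem~\ref{thm:main-concen} bound up to constants.

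The main obstacle, as in the \CSL{} analysis of Section~\ref{ssec:secmom}, is precisely this desynchronization step. Theorem~\ref{thm:main-concen} is pointwise-in-time with only a small-constant success probability, so one cannot naively union-bound over the entire random window of candidate local times. I expect the cleanest route is a slowly-varying argument for $\|\bz^{(t)}\|^2/\|\by^{(t)}\|^2$ along consecutive rounds, possibly combined with a supermartingale tail bound derived from the recursions of Lemma~\ref{lem:one-step-1} to control the worst-case deviation over the window; alternatively, one can discretise the window into $\polylog(n)$ checkpoints, apply Theorem~\ref{thm:main-concen} at each with a boosted success probability, and interpolate. The eighth-root exponent in the final error bound, compared to the cube-root in Theorem~\ref{thm:main-concen}, reflects the additional Markov step (squaring the fraction) plus the slack absorbed by this window-to-pointwise lifting; balancing these losses against the $\sqrt[4]{1/\log n}$ term coming from the Chernoff deviation on activation counts is what determines the final rate.
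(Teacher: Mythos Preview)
Your plan has a genuine gap precisely at the desynchronisation step you flag as the main obstacle, and none of your three suggested remedies closes it.

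The issue is this. Even if you succeed in showing that the \emph{global} ratio $\|\bz^{(t)}\|^2/\|\by^{(t)}\|^2$ stays small throughout the window around $T^s$ (or $T^e$), this only tells you that at each fixed time in the window, \emph{most} coordinates of $\bz^{(t)}$ are small. It does not tell you that the particular coordinate $u$ is small at the particular time $T_u(\tau_u^{\stored})$, because the set of ``bad'' coordinates can rotate adversarially across the window while keeping the total number small at each instant. This is exactly the extreme case warned against in Section~\ref{sec:reconstruct-overview}: for many $u$, $T_u(\tau_u^{\stored})$ could be a step at which $u$ is bad. Your ``slowly-varying norm'' idea and the supermartingale derived from Lemma~\ref{lem:one-step-1} both control norms, not coordinates, so they do not rule this out. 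The checkpoint idea fails for the reason you already note: Theorem~\ref{thm:main-concen} has only constant success probability, so you cannot union-bound over $\polylog n$ checkpoints, and there is no obvious way to boost it.

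The paper's fix is different in kind, not just in detail. It exploits two features you do not use. First, the bad-set sequence is \emph{compatible} with the edge process: node $u$'s good/bad status is frozen between consecutive activations of $u$, because $\bx_u^{(t)}$ only changes when $u$ is an endpoint. Second, the local time $\tau_u^{\stored}$ is itself \emph{random} over a long interval $[\tau^{\stored},\ttau^{\stored}]$. Together these mean that $\Pr_{\tau}[u\in S_{T_u(\tau)}]$ is essentially the \emph{time-average} of $\ind[u\in S_t]$ over the global window $[T_u(\tau^{\stored}),T_u(\ttau^{\stored})]$, and summing over $u$ turns this into the average of $|S_t|$ over the global window (Lemma~\ref{lem:globaltolocal}). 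Since Lemmas~\ref{lem:tstored} and~\ref{lem:tend} bound $\E_{\cE}|S_t|$ at each $t$ by $O(n\sqrt{\varepsilon b/d})$, a single Markov step on the time-average suffices; no union bound over times is needed. You should also note that the paper does not fix two target global times $T^s,T^e$ and aim at them with Chernoff; instead the whole interval $[0.4n\tau^{\stored},0.6n\ttau^{\stored}]$ is the target, and uniformity of the local-to-global map over that interval is what matters.
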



\begin{algorithm}
    \LinesNotNumbered
    \DontPrintSemicolon
    \SetAlgoVlined
    \SetNoFillComment
    \labeljump{($\delta, \tau^{\stored}, \ttau^{\stored}, \tau^{\tend}, \ttau^{\tend}$)} 
    (for a node $u$ that is one of the two endpoints of an active edge)\;
    \begin{description}
        \item[Initialization:] The first time it is activated, $u$    chooses $\tau^{\stored}_u, \tau^{\tend}_u \in \N$
            independently\\  uniformly at random from $[\tau^{\stored}, \ttau^{\stored}]$ and $[\tau^{\tend}, \ttau^{\tend}]$ respectively.
            Moreover, let $\tau_u = 0$.
        \item[Update (and \aveproc's initialization):]
            Run one step \\ of \aveproc{$(\delta)$}.
        \item[Labeling:] 
            If $\tau_u = \tau^{\stored}_u$, then set $x^{\stored}_u = x_u$. \\
            If $\tau_u = \tau^{\tend}_u$, then label $\bh^{jump}_u = \sgn(x^{\stored}_u - x_u)$.
    \end{description}
    \caption{\label{algo:jump_labeling} \labeljump
         algorithm for a node $u$ of an active edge. Here, $\tau_u$ 
         is a local counter keeping track of the number of times $u$ 
         was an endpoint of an active edge, while $x_u$ is $u$'s current 
         value.}
\end{algorithm}
\smallskip

\begin{remark}
    The $nd/b$ dependency in the running time is necessary; imagine we start with a good state where $\bx^{(0)} = \bz^{(0)} = 0$. In this case, the values on one side of the partition are all $a_y(0)$ and the values on the
    other side are $-a_y(0)$. It is simple to see that, after $o(nd/b)$ steps of our protocol, $1 - o(1)$ fraction of the values remain the same. For these nodes, it is impossible  them to determine which cluster they are in and, hence, no good reconstruction can be achieved.
\end{remark}

Similarly to our concentration result in Section \ref{ssec:pl-analysis}, let us
demonstrate the use of Theorem~\ref{thm:main-reconstruct} to the two
interesting cases. First, let us start with the case where $\lambda_3 -
\lambda_2$ is constant. Again, in this case, for any error parameter
$\varepsilon > 0$, we can pick $\delta = \delta(\varepsilon, \lambda_2 -
\lambda_3)$ sufficiently small so that, with probability $1 - \varepsilon$, the
protocol achieves $\varepsilon$-weak reconstruction, as stated below.

\begin{cor} 
    \label{cor:reconstruct}
    For any constant $\varepsilon > 0$ and for any $\lambda_3, \lambda_2$,
    there exists $\delta$ depending only on $\varepsilon$ and $\lambda_3 -
    \lambda_2$ such that, for any sufficiently large $n$, there exists
    $\tau^{\stored}, \ttau^{\stored}, \tau^{\tend}, \ttau^{\tend} \in \N$ such
    that, with probability $1 - \varepsilon$, after $\bigO_{\varepsilon, \lambda_3
    - \lambda_2}\left(n \log n + \frac{n}{\lambda_2}\right)$ rounds of
    \labeljump{($\delta, \tau^{\stored}, \ttau^{\stored}, \tau^{\tend},
    \ttau^{\tend}$)}, every node labels its cluster and this labelling is a
    $\varepsilon$-weak reconstruction of $G$.
\end{cor}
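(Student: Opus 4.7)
The plan is to derive Corollary~\ref{cor:reconstruct} as a direct specialization of Theorem~\ref{thm:main-reconstruct}, by choosing $\delta$ small enough (depending only on $\varepsilon$ and the gap $\lambda_3 - \lambda_2$) to drive both error terms in the theorem below $\varepsilon/2$ each. The one mildly delicate point is that the first error term $\sqrt[8]{\delta b/(d(\lambda_3 - \lambda_2))}$ appears to depend on $b$ and $d$ individually, whereas the corollary forbids $\delta$ from depending on $\lambda_2 = 2b/d$ at all. The way around this is to note that $b/d \leqs 1/2$ for every clustered regular graph (by Definition~\ref{def:clusteredregulargood}), so
\[
\sqrt[8]{\frac{\delta b}{d(\lambda_3 - \lambda_2)}} \;\leqs\; \sqrt[8]{\frac{\delta}{2(\lambda_3 - \lambda_2)}}.
\]

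With this uniform bound in hand, I would set
\[
\delta \;=\; \min\!\bigl(0.4(\lambda_3 - \lambda_2),\; c\,(\lambda_3 - \lambda_2)\,\varepsilon^8\bigr),
\]
for a sufficiently small absolute constant $c$ chosen to absorb the hidden multiplicative factor in the $\bigO$ of Theorem~\ref{thm:main-reconstruct}. This choice depends only on $\varepsilon$ and $\lambda_3 - \lambda_2$, satisfies $\delta \in (0, 0.8(\lambda_3 - \lambda_2))$ as required by the theorem, and makes the first error term at most $\varepsilon/2$. The second error term $\sqrt[4]{1/\log n}$ is also at most $\varepsilon/2$ once $n \geqs \exp(16/\varepsilon^4)$, which is absorbed by the hypothesis that $n$ is sufficiently large. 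Invoking Theorem~\ref{thm:main-reconstruct} with this $\delta$ then yields the parameters $\tau^{\stored}, \ttau^{\stored}, \tau^{\tend}, \ttau^{\tend}$ whose existence the corollary asserts, together with an $\varepsilon$-weak reconstruction with probability at least $1 - \varepsilon$.

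For the running time, since $\delta$ and $\lambda_3 - \lambda_2$ are constants once $\varepsilon$ and the gap are fixed, the first summand in the bound of Theorem~\ref{thm:main-reconstruct} satisfies $\frac{n}{\delta(\lambda_3 - \lambda_2)}\log(n/\delta) = \bigO_{\varepsilon, \lambda_3 - \lambda_2}(n\log n)$, while the second, using $d/b = 2/\lambda_2$, becomes $\frac{nd}{b\delta} = \frac{2n}{\delta\lambda_2} = \bigO_{\varepsilon, \lambda_3 - \lambda_2}(n/\lambda_2)$. Summing gives the claimed $\bigO_{\varepsilon, \lambda_3 - \lambda_2}(n\log n + n/\lambda_2)$ bound. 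The corollary is thus pure parameter bookkeeping on top of Theorem~\ref{thm:main-reconstruct}; the only real observation required is the decoupling $b/d \leqs 1/2$, which is what allows $\delta$ to be chosen independently of $\lambda_2$.
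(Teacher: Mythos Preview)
Your proposal is correct and matches the paper's approach: the paper does not spell out a proof of this corollary beyond remarking that one picks $\delta = \delta(\varepsilon, \lambda_3 - \lambda_2)$ sufficiently small in Theorem~\ref{thm:main-reconstruct}. Your explicit observation that $b/d < 1/2$ (from Definition~\ref{def:clusteredregulargood}) is what allows $\delta$ to be chosen independently of $\lambda_2$, and is in fact more detailed than anything the paper writes down for this step.
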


As in Section \ref{ssec:pl-analysis}, we can consider the (non-lazy) averaging
protocol and view $\lambda_2$ instead as a parameter. On this front, we arrive
at the following reconstruction guarantee.

\begin{cor}
    \label{cor:reconstruct2}
    Fix $\delta = 1/2$. For any constant $\varepsilon > 0$, any $\lambda_3 > 0.7$,
    any sufficiently small $\lambda_2$ depending only on $\varepsilon$, any
    sufficiently large $n$, there exists $\tau^{\stored}, \ttau^{\stored},
    \tau^{\tend}, \ttau^{\tend} \in \N$ such that, with probability $1 -
    \varepsilon$, after $\bigO_{\varepsilon}\left(n \log n +
    \frac{n}{\lambda_2}\right)$ rounds of \labeljump{($\delta, \tau^{\stored},
    \ttau^{\stored}, \tau^{\tend}, \ttau^{\tend}$)}, the nodes' labelling is a
    $\varepsilon$-weak reconstruction of $G$.
\end{cor}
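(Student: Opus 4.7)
The plan is to obtain Corollary \ref{cor:reconstruct2} as a direct specialization of Theorem \ref{thm:main-reconstruct}, substituting $\delta = 1/2$ and then tuning $\lambda_2$ and $n$ so that both the failure probability and the weak-reconstruction error guaranteed by the theorem fall below $\varepsilon$. The choice of $\tau^{\stored}, \ttau^{\stored}, \tau^{\tend}, \ttau^{\tend}$ is inherited verbatim from Theorem \ref{thm:main-reconstruct}.

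\textbf{Admissibility of $\delta=1/2$.} The theorem requires $\delta < 0.8(\lambda_3 - \lambda_2)$. With $\delta = 1/2$, this becomes $\lambda_3 - \lambda_2 > 5/8$. Since by hypothesis $\lambda_3 > 0.7$, it suffices to insist that $\lambda_2 < 0.075$, a condition depending only on the absolute constant $0.7$ in the statement, not on $\varepsilon$ or on the precise value of $\lambda_3$. Hence for every sufficiently small $\lambda_2$ the theorem is applicable.

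\textbf{Driving the two error terms below $\varepsilon$.} Using $\lambda_2 = 2b/d$ and $\delta = 1/2$,
\[
\frac{\delta b}{d(\lambda_3 - \lambda_2)} \;=\; \frac{\lambda_2}{4(\lambda_3 - \lambda_2)} \;\leqslant\; \frac{\lambda_2}{4 \cdot (5/8)} \;=\; \frac{2\lambda_2}{5},
\]
so $\sqrt[8]{\delta b/(d(\lambda_3 - \lambda_2))} \leqslant (2\lambda_2/5)^{1/8}$ can be made smaller than $\varepsilon/2$ by taking $\lambda_2$ sufficiently small as a function of $\varepsilon$ alone (crucially, the crude lower bound $\lambda_3-\lambda_2 > 5/8$ lets us erase any dependence on $\lambda_3$). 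The second term $\sqrt[4]{1/\log n}$ drops below $\varepsilon/2$ for $n \geqslant \exp((2/\varepsilon)^4)$, i.e.\ for $n$ large enough in terms of $\varepsilon$. Adding the two contributions gives both a failure probability and a weak-reconstruction error bounded by $\varepsilon$.

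\textbf{Running time.} Substituting $\delta=1/2$ and using $\lambda_3 - \lambda_2 > 5/8$ as an absolute constant, the running time bound of Theorem \ref{thm:main-reconstruct} becomes
\[
\bigO\!\left(\frac{2n}{\lambda_3-\lambda_2}\log(2n) + \frac{2nd}{b}\right) \;=\; \bigO_{\varepsilon}\!\left(n \log n + \frac{n}{\lambda_2}\right),
\]
since $d/b = 2/\lambda_2$. This is exactly the complexity claimed by the corollary.

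There is essentially no technical obstacle once Theorem \ref{thm:main-reconstruct} is in hand; the only subtlety worth flagging is to verify that the upper bound imposed on $\lambda_2$ depends only on $\varepsilon$ and not on $\lambda_3$, which is achieved by bounding $\lambda_3 - \lambda_2$ from below by the absolute constant $5/8$ (itself a consequence of $\lambda_3 > 0.7$ and the smallness of $\lambda_2$). With this care, every ingredient of Corollary \ref{cor:reconstruct2} follows immediately.
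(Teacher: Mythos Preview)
Your proposal is correct and matches the paper's (implicit) approach: the corollary is a direct specialization of Theorem~\ref{thm:main-reconstruct} with $\delta=1/2$, using $\lambda_3>0.7$ together with $\lambda_2$ small to guarantee $\lambda_3-\lambda_2>5/8$ (so the hypothesis $\delta<0.8(\lambda_3-\lambda_2)$ holds) and then choosing $\lambda_2$ and $n$ small/large enough in terms of $\varepsilon$ to push both error expressions below the threshold. The only cosmetic point is that the failure probability in Theorem~\ref{thm:main-reconstruct} carries a hidden big-$\bigO$ constant, so when you write ``below $\varepsilon/2$'' you should really aim for $\varepsilon/(2C)$ with $C$ that absolute constant; this is harmless since it is absorbed into ``sufficiently small $\lambda_2$'' and ``sufficiently large $n$''.
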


While the weak reconstruction in the above claims is guaranteed only with arbitrarily-large constant probability, we can boost this success probability considering  the same 
approach we used in Subsection \ref{se:label} to get community-sensitive  binary strings  of size $\ell= \Theta(\log n)$
 from the sign-based protocol. 

Indeed, we    first   run $\ell = \Theta_{\varepsilon}(\log n)$ copies of \labeljump where, 
similarly to Algorithm \ref{algo:sign_labeling}, ``running $\ell$ copies'' of \labeljump means that each node keeps $\ell$ copies of the states of \labeljump and, when an edge $\{u, v\}$ is activated, $u$ and $v$ jointly sample a random $j \in [\ell]$ and run the $j$-th copy of \labeljump. 

In the previous section, we have seen that Lemma \ref{lem:main}  and      the  repetition approach above   allowed us to get
a good community-sensitive labeling, w.h.p. (not a good weak-reconstruction).
Interestingly enough, the somewhat stronger concentration results given in this section allow us to  ``add''  a simple \emph{majority} rule on the top of the
$\ell$ components and get  a  ``good'' single-bit label, as described below.

 When all $\ell$ components of a node $u$ have been set,  
  node $u$ sets $\bh^{jump}_u = \sc{Majority}_{j \in [\ell]}(\bh^{jump}_u(i))$ where $\bh^{jump}_u(j)$ is the binary label of $u$ from the $j$-th copy of the protocol.

Observe that the weak reconstruction guarantee of \labeljump shown earlier implies that the expected number of mislabelings of each copy
 is at most $2\varepsilon n$, i.e., $\E[\{u \in V \mid |\bh^{jump}_u(i) \ne \chi_u|\}] \leqs 2\varepsilon n$. Now, since the number of mislabelings of each copy is independent, the total number of mislabelings is at most $\bigO(\varepsilon n \ell)$, w.h.p. However, if the eventual label of $u$ is incorrect, it must contributes to mislabeling across at least $\ell/2$ copies. As a result, there are at most $\bigO(\varepsilon n)$ mislabelings in the new protocol, w.h.p. 

The above    approach   in fact works for any weak reconstruction protocol (not just \labeljump) and, in our case, it easily  gives the following result.

\begin{cor}\label{cor:main2}
For any constant $\varepsilon > 0$ and $\lambda_3 > \lambda_2$, there is a protocol that yields an $\varepsilon$-weak reconstruction of $G$ , w.h.p. The convergence time
is  
$\Theta_{\varepsilon, \lambda_3 - \lambda_2}\left(n\left(\log^2 n + \frac{\log n}{\lambda_2}\right)\right)$ rounds, while the work per node is $\bigO_{\varepsilon, \lambda_3
- \lambda_2}\left(\log^2 n + \frac{\log n}{\lambda_2}\right)$.
\end{cor}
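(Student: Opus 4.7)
The plan is to apply the parallel-copies-plus-majority boosting technique already outlined in the paragraph preceding the statement to the single-copy weak-reconstruction guarantee of Corollary~\ref{cor:reconstruct} (or Corollary~\ref{cor:reconstruct2}), in complete analogy with how Protocol~\labelsign\ is obtained from its single-component version in Section~\ref{se:label}.

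First I would fix $\varepsilon' = \varepsilon/C$ for a sufficiently large absolute constant $C$ and invoke Corollary~\ref{cor:reconstruct} with parameter $\varepsilon'$, so that a single copy of \labeljump\ produces an $\varepsilon'$-weak reconstruction with probability $1 - \varepsilon'$ in $\bigO_{\varepsilon, \lambda_3 - \lambda_2}(n \log n + n/\lambda_2)$ rounds and $\bigO_{\varepsilon, \lambda_3 - \lambda_2}(\log n + 1/\lambda_2)$ work per node. I would then run $\ell = \Theta_\varepsilon(\log n)$ independent copies in parallel through the component-selection gadget already used in Algorithm~\ref{algo:sign_labeling}: at each edge activation the two endpoints jointly draw $j \in [\ell]$ uniformly at random and perform one step of the $j$-th copy, and upon completion each node outputs the majority of its $\ell$ labels. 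A standard coupon-collector/Chernoff argument shows that each individual copy's local clock is slowed by a $\Theta(\ell)$ factor relative to the single-copy execution, which inflates the round- and work-per-node bounds by $\Theta(\log n)$ and matches the claimed $\Theta_{\varepsilon,\lambda_3 - \lambda_2}(n\log^2 n + n \log n/\lambda_2)$ and $\bigO_{\varepsilon,\lambda_3 - \lambda_2}(\log^2 n + \log n/\lambda_2)$ bounds.

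For the correctness bound I would proceed in three steps. \emph{(i)} After a WLOG global sign-flip per copy (take the orientation of each copy that minimises its mislabelings with respect to $\bchi$), the weak-reconstruction guarantee produces at most $\varepsilon' n$ mislabelings on the high-probability event and at most $n$ otherwise, so the expected per-copy mislabeling count is bounded by $2\varepsilon' n$. \emph{(ii)} Since the $\ell$ copies are mutually independent---the edge-activation sequence each copy sees is an i.i.d.\ thinning of the global sequence and the initial $\bx^{(0)}$ vectors are drawn independently---a standard Chernoff bound yields an aggregate mislabeling count of at most $\bigO(\varepsilon' n \ell)$ w.h.p. \emph{(iii)} A node whose final majority label disagrees with $\bchi_u$ must have been mislabeled in at least $\ell/2$ of its $\ell$ copies, so dividing the aggregate bound by $\ell/2$ bounds the number of finally-mislabeled nodes by $\bigO(\varepsilon' n) \le \varepsilon n$, which is the desired $\varepsilon$-weak reconstruction.

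The main obstacle I expect is the \emph{orientation ambiguity} of \labeljump: each copy may assign $+1$ or $-1$ to cluster $V_1$ depending on the sign of the projection $\alpha_2(\bx^{(0)})$ onto $\bchi$, so a per-node majority over unaligned copies would be meaningless. The analysis above implicitly handles this by counting mislabelings under the best global sign choice per copy; implementing the alignment in the actual protocol requires pinning the orientation of each copy to the label produced at a canonical reference vertex (piggybacked on the existing messages at $\bigO(\ell)$ extra bits per node) and checking that this preserves the independence of the copies required for the Chernoff step. Beyond this point the argument is routine concentration and counting.
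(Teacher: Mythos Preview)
Your proposal is correct and matches the paper's own argument almost verbatim: the paper runs $\ell=\Theta_\varepsilon(\log n)$ independent copies of \labeljump\ with the same component-selection gadget, bounds the expected per-copy mislabelings by $2\varepsilon n$, applies independence plus Chernoff to get $\bigO(\varepsilon n\ell)$ total mislabelings w.h.p., and then uses the ``a wrongly-majority-labeled node contributes to $\geq\ell/2$ copies'' counting step to conclude. Your orientation-ambiguity concern is legitimate and is in fact glossed over by the paper (which simply writes $\E[|\{u:\bh^{jump}_u(j)\neq\chi_u\}|]\leq 2\varepsilon n$ without addressing the sign of $a_y(0)$); note however that your proposed fix via a canonical reference vertex sits awkwardly with the paper's anonymity assumption.
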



We finally remark that, for the dense-cut case we focus on  in this section (i.e. $\lambda_2 = 2b/d = \Theta(1)$), the fraction of outliers turns out to be 
a  constant we can made arbitrarily small. If we relax the  condition to   $\lambda_2 = o(1)$, then this fraction can be made $o(1)$,  accordingly.
This issue will be clarified in the full version of the paper.

\subsubsection{Proof of Theorem \ref{thm:main-reconstruct}: An Overview}
\label{sec:reconstruct-overview}

We now give an informal overview of our proof, which builds on the 
concentration results from Section \ref{ssec:pl-analysis}. Since our 
discussion here will involve both local times and global times, let us 
define the following notation to facilitate the discussion: for each 
vertex $u \in V$, let $T_u: \N \to \N$ be a function that maps the 
local time of $u$ to the global time, i.e., $T_u(\tau) \triangleq \min 
\{t \in \N \mid |\{i \leqs t \mid u \in \{u_i, v_i\}\}| \geqs \tau\}$ 
where $(\{u_i, v_i\})_{i \in \N}$ is the sequence of active edges.

Recall from the previous section that we let $a_y(t) \in \mathbb{R}$ be such that $\by^{(t)} = a_y(t) \cdot (\chi/\sqrt{n})$. Let us also assume without loss of generality that $a_y(0) \geqs 0$. Observe first that our concentration result implies the following: for any $t$ such that $\Omega(n \log n) \leqs t \leqs \bigO(n^2)$, with large probability, $\chi_u
 (\bx_u^{(t)} - \bx_{||, u})$ is roughly $\E_{\cE} a_y(t) / n$ for most vertices $u \in
 V$; let us call these vertices \emph{good for time $t$}. Imagine for a moment
 that we change the protocol in such a way that each $u$ has access to the
 global time $t$ and $u$ assigns $\bh^{jump}_u = \sgn(\bx_u^{(t^{\tend}) }-
 \bx_u^{(t^{\stored})})$ for some $t^{\stored}, t^{\tend} \in [\Omega(n \log n),
 \bigO(n^2)]$ that do not depend on $u$. If $t^{\tend} - t^{\stored}$ is large
 enough, then $\E_{\cE} a_y(t^{\stored}) \gg \E_{\cE} a_y(t^{\tend})$. This
 means that, if a vertex $u \in V$ is good at both times $t^{\stored}$ and
 $t^{\tend}$, then we have that $\chi_u (\bx_u^{(t^{\stored})} - \bx_{||, u}) \approx
 \E_{\cE} a_y(t^{\stored}) / n \gg \E_{\cE} a_y(t^{\tend}) / n \approx \chi_u
 (\bx_u^{(t^{\tend})} - \bx_{||, u})$. Note that when $\chi_u \cdot \bx_u^{(t^{\stored})} > \chi_u \cdot \bx_u^{(t^{\tend})}$, we have $\bh^{jump}_u = \chi_u$. From
 this and from almost all vertices are good at both times $t^{\stored}$ and
 $t^{\tend}$, $\bh^{jump}$ is indeed a good weak reconstruction for the graph!
 
 The problem of the modified protocol above is of course that, in our settings,
 each vertex does not know the global time $t$. Perhaps the simplest approach
 to imitate the above algorithm in this regime is to fix
 $\tau^{\stored}, \tau^{\tend} \in [\Omega(\log n), \bigO(n)]$ and, for each $u \in
 V$, proceed as in \labeljump except with $\tau_u^{\stored} =
 \tau^{\stored}$ and $\tau_u^{\tend} = \tau^{\tend}$. In other words, $u$ assigns
 $\bh^{jump}_u = \sgn(\bx_u^{(T_u(\tau^{\stored}))} - \bx_u^{(T_u(\tau^{\tend}))})$. The
 problem about this approach is that, while we know that $\E_{\cE}
 T_u(\tau^{\stored}) = 0.5 n \tau^{\stored}$ and $\E_{\cE} T_u(\tau^{\tend}) =
 0.5 n \tau^{\tend}$, the actual values of $T_u(\tau^{\stored})$ and
 $T_u(\tau^{\tend})$ differ quite a bit from their means, i.e., on average they
 will be $\Omega(n \sqrt{\log n})$ of away their mean. Since our concentration
 result only says that, at each time $t$, we expect 99\% of the vertices to be
 good, it is unclear how this can rule out the following extreme case: for many
 $u \in V$, $T_u(\tau^{\stored})$ or $T_u(\tau^{\tend})$ is a time step at
 which $u$ is bad. This case results in $\bh^{jump}$ not being a good
 weak reconstruction of $V$.

 The above issue motivates us to arrive at our eventual algorithm, in which
 $\tau^{\stored}_u$ and $\tau^{\tend}_u$ are not fixed to be the same for every
 $u$, but instead each $u$ pick these values randomly from specified
 intervals $[\tau^{\stored}, \ttau^{\stored}]$ and $[\tau^{\tend},
 \ttau^{\tend}]$. To demonstrate why this overcomes the above problem, let us
 focus on the interval $[\tau^{\stored}, \ttau^{\stored}]$. While
 $T_u(\tau^{\stored})$ and $T_u(\ttau^{\stored})$ can still differ from their
 means, the interval  $[T_u(\tau^{\stored}), T_u(\ttau^{\stored})]$ still, with
 large probability, overlaps with most of $[0.5 n \tau^{\stored}, 0.5 n
 \ttau^{\stored}]$ if $\ttau^{\stored} - \tau^{\stored}$ is sufficiently large.
 Now, if $T_u(\tau + 1) - T_u(\tau)$ are the same for all $\tau \in
 [\tau^{\stored}, \ttau^{\stored}]$, then the distribution of
 $\bx_u^{(T_u(\tau^{\stored}))}$ is very close to $\bx_u^{(t^{\stored}_u)}$ if we
 pick $t^{\stored}_u$ randomly from $[0.5 n \tau^{\stored}, 0.5 n
 \ttau^{\stored}]$. From the usual global time step argument, it is easy to see
 that the latter distribution results in most $u$ being good at time
 $t^{\stored}_u$. Of course, $T_u(\tau + 1) - T_u(\tau)$ will not be the same
 for all $\tau \in [\tau^{\stored}, \ttau^{\stored}]$, but we will be able to
 argue that, for almost all such $\tau$, $T_u(\tau + 1) - T_u(\tau)$ is not too
 small, which is sufficient for our purpose.


\bibliographystyle{alpha}
\bibliography{pbm}

\onecolumn
\newpage
\appendix
\begin{center}
\Huge{\textbf{Appendix}}
\end{center}

\section{Tools from linear algebra}

\subsection{Projections on the main eigenspaces}
\begin{lemma}[Projection on the first two eigenvectors]
	\label{lem:toptwoprojections}
	For all $\epsilon \in (0,1)$, for a random $\bx \in \{-1,1\}^n$, with 
	probability at least $1-\bigO(\epsilon)$ we have,
	\begin{align*}
		& \Prob{}{|\bx \cdot \bone \pm \bx \cdot \bchi|  \geq \epsilon \cdot \sqrt{n}} \geq 1 - \bigO(\epsilon)
		\quad \text{and}\\
		& \Prob{}{|\bx \cdot \bone| \leq |\bx \cdot \bchi|  - \epsilon \cdot \sqrt{n} 
		\,\,|\,\, |\bx \cdot \bone \pm \bx \cdot \bchi|  \geq \epsilon \cdot \sqrt{n}} = \frac{1}{2} .
	\end{align*}
\end{lemma}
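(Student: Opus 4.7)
\medskip

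\noindent\textit{Proof Plan.} The plan is to reduce both claims to elementary statements about two independent Rademacher sums, using the special algebraic structure of $\bone$ and $\bchi$ on a balanced partition.

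\medskip

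The first step is a change of basis. Writing $S_1 = \sum_{u\in V_1}\bx_u$ and $S_2 = \sum_{u\in V_2}\bx_u$, we immediately get
\[
\bx\cdot\bone = S_1+S_2,\qquad \bx\cdot\bchi = S_1-S_2,
\]
so that $\bx\cdot\bone + \bx\cdot\bchi = 2S_1$ and $\bx\cdot\bone - \bx\cdot\bchi = 2S_2$. The $S_1,S_2$ are independent sums of $n/2$ i.i.d.\ Rademacher variables. With this reformulation the event in the first bullet becomes $\{|S_1|\geq \epsilon\sqrt{n}/2\}\cap\{|S_2|\geq\epsilon\sqrt{n}/2\}$.

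\medskip

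The second step is anti-concentration. By a standard binomial tail estimate (or the local central limit theorem for simple random walk), for a sum $S$ of $k$ i.i.d.\ Rademachers one has $\Pr[|S|\leq c]=O(c/\sqrt{k})$ uniformly in $c$. Applied with $k=n/2$ and $c=\epsilon\sqrt{n}/2$, this gives $\Pr[|S_i|<\epsilon\sqrt{n}/2]=O(\epsilon)$ for $i=1,2$. A union bound then yields the first bullet of the lemma with the desired $1-O(\epsilon)$ probability.

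\medskip

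The third step handles the conditional claim by a symmetry/sign argument. Straightforward case analysis on the signs of $S_1,S_2$ shows that, whenever $S_1,S_2$ have the same sign, $|S_1-S_2|-|S_1+S_2| = -2\min(|S_1|,|S_2|)\leq 0$, while if they have opposite signs, $|S_1-S_2|-|S_1+S_2| = 2\min(|S_1|,|S_2|)\geq 0$. Hence, after multiplying by two and rearranging, the event $\{|\bx\cdot\bone|\leq|\bx\cdot\bchi|-\epsilon\sqrt{n}\}$ is equivalent to $\{\operatorname{sgn}(S_1)\neq\operatorname{sgn}(S_2)\}\cap\{2\min(|S_1|,|S_2|)\geq\epsilon\sqrt{n}\}$. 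Conditioned on the first bullet, $\min(|S_1|,|S_2|)\geq\epsilon\sqrt{n}/2$ automatically, so the event reduces simply to $\{\operatorname{sgn}(S_1)\neq\operatorname{sgn}(S_2)\}$. Finally, the involution $\bx\mapsto\bx'$ obtained by flipping the signs of all coordinates in $V_2$ is a measure-preserving bijection on $\{-1,+1\}^n$ that preserves the conditioning event (it only flips $S_2$) and swaps $\{\operatorname{sgn}(S_1)=\operatorname{sgn}(S_2)\}$ with $\{\operatorname{sgn}(S_1)\neq\operatorname{sgn}(S_2)\}$. Since these two events are disjoint and, under the conditioning, partition the conditioned probability space (neither $S_i$ can be zero), each has conditional probability exactly $1/2$.

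\medskip

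The only mildly technical point is the anti-concentration estimate in step two, but this is a standard fact about Rademacher sums; the rest is straightforward algebra and a clean symmetry argument, so I do not expect any real obstacle.
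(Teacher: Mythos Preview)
Your proposal is correct and follows essentially the same approach as the paper: both rewrite $\bx\cdot(\bone\pm\bchi)$ as $2S_1,2S_2$ with $S_i$ independent Rademacher sums over $V_i$, use binomial anti-concentration for the first claim, and exploit the sign-flip symmetry on $V_2$ for the second. Your treatment of the second claim is in fact more explicit than the paper's (which simply notes that $\bx\cdot\bone$ and $\bx\cdot\bchi$ are identically distributed and symmetric, hence each ordering of their absolute values occurs with probability $1/2$ once ties are ruled out by the conditioning), but the underlying argument is the same.
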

\begin{proof}
Note that $\bx \cdot (\bone + \bchi) = 2 \bx \cdot \bone_{V_1}$ and $\bx \cdot (\bone - \bchi) = 2 \bx \cdot \bone_{V_2}$.   
Using properties of the binomial distribution, it is easy to see that 
\[ 
    \Pr{( `` |\bx \cdot \bone + \bx \cdot \bchi| \geq \epsilon \sqrt{n} "
        \wedge 
        `` | \bx \cdot \bone - \bx \cdot \bchi| \geq \epsilon \sqrt{n} " )} 
        \geq  1 - \frac{4}{\sqrt{2\pi}}\epsilon .
\]
The above event implies $\left| |\bx \cdot \bone| - |\bx \cdot \bchi| \right| \geq \epsilon \sqrt{n}$. 
Since $\bx \cdot \bone$ and $\bx \cdot \bchi$ are independent sums of Rademacher random variables, 
they have the same chances of being positive or negative, thus with probability
at least $\frac{1}{2}$ we will have $|\bx \cdot \bone| \leq |\bx \cdot \bchi|$.
\end{proof}

\subsection{Properties of the spectrum of the main matrices}
We consider here Algorithm \aveproc{$(\delta)$} assuming  $\delta = 
1/2$ and recall the main notations: 

\begin{itemize}
\item $A$ is the adjacency matrix of the clustered graph $G((V_1,V_2); E)$, with $|V_h|=n/2$, 
 $m = |E|$ and  $m_{1,2}= |E(V_1,V_2)| $ is the  number of edges in the cut $(V_1,V_2)$;
\item $D$ is the diagonal matrix with the degrees of nodes;
\item $L = D - A$ is the Laplacian matrix;
\item $\mathcal{L} = D^{-1/2} L D^{-1/2}$is the  normalized Laplacian;
\item $P = D^{-1} A$ is the  transition matrix;
\item For each node $i = 1, \dots, n$, we name $d_i = a_i + b_i$ the degree of
node $i$, where $a_i$ is the number of neighbors in its own block and $b_i$
is the number of neighbors in the other block
\end{itemize}

The next facts are often used in our analysis.

\begin{obs}
Let $W = (W(i,j)) \sim \cM$ be the random matrix of one step of the averaging process,
then
\[
w_{i,j} = 
\left\{
\begin{array}{cl}
0 & \mbox{ if } i \neq j \mbox{ and } \{i,j\} \mbox{ not sampled} \\
1/2 & \mbox{ if $i = j$ and some edge incident on $i$ sampled  
or $i \neq j$ and edge $\{i,j\}$ sampled} \\
1 & \mbox{ if } i = j \mbox{ and } i \mbox{ not incident to a sampled edge. }
\end{array}
\right.
\]
\end{obs}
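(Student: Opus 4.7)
The plan is a direct case analysis based on the update rule of Algorithm \aveproc{$(\delta)$} with $\delta = 1/2$. Since one step of the averaging process is, conditional on the active edge $\{u,v\}$, a deterministic linear map on $\bx^{(t)}$, it suffices to fix such an edge and verify row-by-row that the matrix $W$ defined by $\bx^{(t+1)} = W \bx^{(t)}$ agrees with the piecewise formula in the statement.

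For a row indexed by a node $i \notin \{u,v\}$: the protocol leaves $x_i$ unchanged, so $x_i^{(t+1)} = x_i^{(t)}$, which forces $W(i,i) = 1$ and $W(i,k) = 0$ for every $k \neq i$; this matches the third case of the formula and simultaneously accounts for the $W(i,j) = 0$ sub-case of the first case whenever $\{i,j\}$ is not the sampled edge. For a row indexed by $i \in \{u,v\}$ (say $i = u$ and $j = v$): the update rule gives $x_i^{(t+1)} = (1-\delta)x_i^{(t)} + \delta x_j^{(t)} = \tfrac12 x_i^{(t)} + \tfrac12 x_j^{(t)}$, so $W(i,i) = W(i,j) = \tfrac12$ and $W(i,k) = 0$ for every $k \notin \{i,j\}$. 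This covers both subclauses of the ``$1/2$'' case: the diagonal subclause ($i = j$ and some edge incident to $i$ is sampled) and the off-diagonal subclause ($i \neq j$ and edge $\{i,j\}$ is sampled). Running over all possible sampled edges exhausts the distribution of $W \sim \cM$ and the two cases together produce the claimed entries.

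There is no real technical obstacle here: the observation is a bookkeeping restatement of the update rule as a linear transformation. The one mildly subtle point worth highlighting in the write-up is that the ``$1/2$'' entry arises for two syntactically distinct reasons (one diagonal, one off-diagonal) corresponding to the same underlying event that the active edge is incident to $i$; keeping these straight is what allows the single piecewise formula to cover all cases. The real purpose of stating the observation is that this explicit matrix form is what powers the subsequent computation $\avgW = \Expec{}{W} = I - \tfrac{1}{2m} L$ in \eqref{eq:expectedW} that drives the first-moment analysis, so the proof is essentially making the protocol-to-matrix correspondence precise.
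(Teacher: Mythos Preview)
Your proposal is correct and matches the paper's treatment: the paper states this observation without proof, treating it as a direct restatement of the update rule in matrix form (indeed, the same formula already appears as \eqref{randommat} in Section~\ref{se:expec}). Your row-by-row verification is exactly the right way to make the protocol-to-matrix correspondence explicit, and there is nothing to add.
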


\begin{obs}\label{obs:expecW}
The expectation of $W$ is 
\[
\avgW := \Expec{}{W} = I - \frac{1}{2m} L \, .
\]
\end{obs}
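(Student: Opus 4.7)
The approach is a direct, entry-wise calculation from the definition of $W$ given in the preceding observation, using the fact that the active edge is sampled uniformly from $E$, so each edge is selected with probability $1/m$.

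First I would compute the off-diagonal entries. Fix $i \neq j$. If $\{i,j\} \notin E$, then $W(i,j) = 0$ deterministically, so $\avgW(i,j) = 0 = A(i,j)/(2m)$. If $\{i,j\} \in E$, then $W(i,j) = 1/2$ with probability $1/m$ (when this edge is sampled) and $0$ otherwise, so $\avgW(i,j) = 1/(2m) = A(i,j)/(2m)$. In both cases $\avgW(i,j) = A(i,j)/(2m)$.

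Next, for the diagonal entries, fix $i$ and let $d_i$ be its degree. The probability that the sampled edge is incident to $i$ equals $d_i/m$ (since exactly $d_i$ of the $m$ equiprobable edges have $i$ as endpoint). On this event $W(i,i) = 1/2$; on its complement $W(i,i) = 1$. Hence
\begin{equation*}
\avgW(i,i) \;=\; \tfrac{1}{2}\cdot \tfrac{d_i}{m} + 1\cdot \left(1 - \tfrac{d_i}{m}\right) \;=\; 1 - \tfrac{d_i}{2m} \;=\; 1 - \tfrac{D(i,i)}{2m}.
\end{equation*}

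Combining the two cases gives $\avgW = I - \tfrac{1}{2m} D + \tfrac{1}{2m} A = I - \tfrac{1}{2m}(D - A) = I - \tfrac{1}{2m} L$, as claimed. There is no real obstacle here; the only thing to be careful about is the bookkeeping of the probability that a given vertex is incident to the (single) sampled edge, which is simply $d_i/m$ by the uniform choice over $E$.
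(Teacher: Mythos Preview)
Your proof is correct and follows essentially the same entry-wise computation as the paper's own proof; in fact it is slightly more careful, as the paper's terse argument writes $\Expec{}{W(i,j)} = \tfrac{1}{2m}$ for all $i\neq j$ without explicitly separating the non-edge case $\{i,j\}\notin E$ that you handle.
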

\begin{proof}
\begin{enumerate}
\item If $i \neq j$ then $\Expec{}{W(i,j)} = \frac{1}{2m}$
\item If $i = j$ then $\Expec{}{W(i,j)} = 1 \left(1 - \frac{d_i}{m}\right) + 
\frac{1}{2} \frac{d_i}{m}
= 1 - \frac{d_i}{2m}$
\end{enumerate}
\end{proof}

As for the spectrum of the main matrices above, defined by  the averaging process, we have the following useful
properties we can derive from standard spectral algebra.

\begin{obs}
\begin{enumerate}
\item Il $\lambda$ is an egenvalue of $L$ then $1 - \lambda/(2m)$ is an 
eigenvalue of $\avgW$.
\item Vector $\bone$ is an eigenvector of $\avgW$.
\item If the underlying graph $G$ is $(n,d,b)$-regular then $\bchi$ is an 
eigenvector of $\avgW$.
\end{enumerate}
\end{obs}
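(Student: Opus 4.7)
The plan is to prove each of the three items directly from the definition $\avgW = I - \frac{1}{2m} L$, together with elementary properties of the combinatorial Laplacian of $G$. Item (1) will be the engine, reducing (2) and (3) to statements about eigenvectors of $L$ itself.

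For item (1), the calculation is immediate: if $v$ is an eigenvector of $L$ with eigenvalue $\lambda$, then
\[
\avgW v \;=\; \left(I - \tfrac{1}{2m} L\right) v \;=\; v - \tfrac{\lambda}{2m} v \;=\; \left(1 - \tfrac{\lambda}{2m}\right) v,
\]
so $v$ is an eigenvector of $\avgW$ with eigenvalue $1 - \lambda/(2m)$.

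For item (2), I would invoke the well-known fact that each row of $L = D - A$ has entry sum equal to $d_i - d_i = 0$, hence $L\bone = \bzero$. So $\bone$ is an eigenvector of $L$ with eigenvalue $0$, and by item (1) it is an eigenvector of $\avgW$ with eigenvalue $1$.

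For item (3), I would compute $L\bchi$ componentwise using the $(n,d,b)$-regularity. Fix $u \in V_1$, so $\bchi_u = +1$. By the regularity assumption, $u$ has exactly $d-b$ neighbors in $V_1$ and $b$ neighbors in $V_2$, hence
\[
(A\bchi)_u \;=\; (d-b)\cdot(+1) + b\cdot(-1) \;=\; d - 2b,
\]
so $(L\bchi)_u = d - (d - 2b) = 2b = 2b\,\bchi_u$. A symmetric calculation for $u \in V_2$ gives $(L\bchi)_u = -2b = 2b\,\bchi_u$. Hence $L\bchi = 2b\,\bchi$, and by item (1), $\avgW\bchi = \bigl(1 - b/m\bigr)\bchi$. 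There is no real obstacle here: the only thing worth emphasizing is that item (3) uses the full $(n,d,b)$-regularity (both uniform degree and uniform cross-degree) in a crucial way, so that $(A\bchi)_u$ depends only on which block $u$ belongs to, not on $u$ itself.
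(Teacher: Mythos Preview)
Your proof is correct. The paper states this observation without proof, treating it as immediate from the definition $\avgW = I - \tfrac{1}{2m}L$; your direct computation is exactly the natural argument the paper implicitly has in mind.
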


\begin{obs}\label{obs:eigenvaluesrelations}
Consider a graph $G$ with adjacency matrix $A$ and diagonal degree matrix $D$.
\begin{enumerate} 
\item Let $1 = \lambda_1^P \geqslant \lambda_2^P \geqslant \cdots \geqslant
\lambda_n^P$ be the eigenvalues of the transition matrix $P = D^{-1} A$ and let
$0 = \lambda_1 \leqslant \lambda_1 \leqslant \cdots
\leqslant \lambda_1$ be the eigenvalues of the normalized Laplacian 
$\mathcal{L} = I - D^{-1/2} A D^{-1/2}$. For every $i = 1, \dots, n$ it holds
that
\[
\lambda_i = 1 - \lambda_i^{P} \, .
\]
\item Let $1 = \bar\lambda_1 \geqslant \bar\lambda_2 \geqslant 
\cdots \geqslant \bar \lambda_n$ be the eigenvalues of   
$\avgW = I - L/(2m)$ and let $0 = \lambda_1^{L} \leqslant \lambda_2^{L} 
\leqslant \cdots \leqslant \lambda_n^{L}$ be the eigenvalues of the Laplacian matrix 
$L = D - A$. For every $i = 1, \dots, n$ it holds
that
\[
\lambda_i^{L} = 2m(1 - \bar \lambda_i) \, .
\]
\item For every $i = 1, \dots, n$, the  eigenvalues of $L$ and $\mathcal{L}$ satisfy
\[
d_\text{min} \lambda_i 
\leqslant \lambda_i^{L} \leqslant
d_\text{max} \lambda_i \, ,
\]
where $d_\text{min}$ and $d_\text{max}$ are the minimum and the maximum degree
of the nodes, respectively.
\end{enumerate}
\end{obs}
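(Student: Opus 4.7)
The statement collects three standard spectral identities/inequalities, so the plan is to treat each item in turn with the appropriate similarity transformation or Rayleigh-quotient argument.

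For item 1, I would exploit the fact that the (non-symmetric) transition matrix $P = D^{-1}A$ is similar to the symmetric matrix $S := D^{-1/2} A D^{-1/2}$ via the conjugation $P = D^{-1/2} S \, D^{1/2}$. Hence $P$ and $S$ have the same spectrum, so $\lambda_i^P = \lambda_i(S)$ for every $i$. Since $\mathcal{L} = I - S$, the eigenvalues satisfy $\lambda_i(\mathcal{L}) = 1 - \lambda_{n+1-i}(S)$, and a quick check of the stated orderings (the $\lambda_i$ of $\mathcal{L}$ are increasing while the $\lambda_i^P$ of $P$ are decreasing) makes this match $\lambda_i = 1 - \lambda_i^P$.

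For item 2, I would observe that $\avgW = I - L/(2m)$ is an affine polynomial in $L$, so $\avgW$ and $L$ share an orthonormal eigenbasis, with eigenvalues related by $\bar\lambda = 1 - \lambda^L/(2m)$. The largest eigenvalue of $\avgW$ corresponds to the smallest of $L$, matching the stated orderings, and solving for $\lambda_i^L$ gives $\lambda_i^L = 2m(1-\bar\lambda_i)$.

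For item 3, I would use the Courant--Fischer variational characterization. Writing $\mathcal{L} = D^{-1/2} L D^{-1/2}$ and substituting $y = D^{-1/2} x$, the Rayleigh quotient of $\mathcal{L}$ becomes
\[
\frac{x^\top \mathcal{L} x}{x^\top x} \;=\; \frac{y^\top L y}{y^\top D y}.
\]
Sandwiching $y^\top D y$ by $d_{\min}\|y\|^2$ and $d_{\max}\|y\|^2$ yields
\[
\frac{1}{d_{\max}}\cdot\frac{y^\top L y}{y^\top y} \;\leqslant\; \frac{y^\top L y}{y^\top D y} \;\leqslant\; \frac{1}{d_{\min}}\cdot\frac{y^\top L y}{y^\top y}.
\]
Since the map $x \mapsto D^{-1/2} x$ is a bijection on $\mathbb{R}^n$ preserving dimensions of subspaces, applying min-max over $i$-dimensional subspaces on all three quantities yields $\lambda_i^L / d_{\max} \leqslant \lambda_i \leqslant \lambda_i^L / d_{\min}$, which rearranges to the claimed $d_{\min}\lambda_i \leqslant \lambda_i^L \leqslant d_{\max}\lambda_i$.

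The only mildly delicate point is item 3, where one must be careful that the change of variables $y = D^{-1/2} x$ is a linear bijection (so it maps $i$-dimensional subspaces to $i$-dimensional subspaces), which legitimizes the transfer of the Rayleigh sandwich through the min-max characterization; items 1 and 2 are pure similarity/spectral-mapping arguments and should be immediate.
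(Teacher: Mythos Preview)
Your proposal is correct. The paper itself does not supply a proof of this observation; it is stated as a collection of facts ``we can derive from standard spectral algebra'' and is used without further justification. Your arguments via similarity (item~1), the spectral mapping of the affine shift $I - L/(2m)$ (item~2), and the Courant--Fischer min--max with the substitution $y = D^{-1/2}x$ (item~3) are exactly the standard derivations one would expect, and your remark about the bijection $x \mapsto D^{-1/2}x$ preserving subspace dimensions is the right observation to make item~3 rigorous.
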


As a consequence of the above relationships among the eigenvalues of matrices $L$, $\mathcal{L}$, and 
$\avgW$, we easily get the following useful bounds.

\begin{obs} \label{obs:lambdarelations}
Let $G$ be an $(n,d,\gamma)$-clustered graph   and let 
$\bar \lambda_i$ and $\lambda_i$, for $i = 1, \dots, n$, be
the eigenvalues of   $\avgW$ and   $\mathcal{L}$, 
respectively, in non-decreasing order. It holds that
\[
\frac{d}{2m}(1 - 2 \gamma)\left( \lambda_3 - 
\lambda_2 \right)
\leqslant \bar\lambda_2 - \bar\lambda_3 \leqslant 
\frac{d}{2m}(1 + 2\gamma)\left( \lambda_3 - 
\lambda_2 \right) \, .
\]
\end{obs}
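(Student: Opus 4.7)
The plan is to chain together items~2 and~3 of Observation~\ref{obs:eigenvaluesrelations}, using the almost-regular hypothesis to convert bounds involving $d_{\min}$ and $d_{\max}$ into bounds in terms of $d$ and $\gamma$.

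First, I will apply item~2 of Observation~\ref{obs:eigenvaluesrelations} to rewrite the gap of interest as
\[
\bar{\lambda}_2 - \bar{\lambda}_3 \;=\; \left(1 - \frac{\lambda_2^{L}}{2m}\right) - \left(1 - \frac{\lambda_3^{L}}{2m}\right) \;=\; \frac{\lambda_3^{L} - \lambda_2^{L}}{2m},
\]
where the compatibility of the orderings is immediate from the affine relation ($\bar{\lambda}_i$ decreasing matches $\lambda_i^{L}$ increasing). This reduces the claim to sandwiching $\lambda_3^{L} - \lambda_2^{L}$ between $d(1 \mp 2\gamma)(\lambda_3 - \lambda_2)$.

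Next, item~3 of Observation~\ref{obs:eigenvaluesrelations} gives $d_{\min}\lambda_i \leq \lambda_i^{L} \leq d_{\max}\lambda_i$ for every $i$. Applying the lower bound at $i = 3$ together with the upper bound at $i = 2$ (and symmetrically for the reverse direction) yields
\[
d_{\min}\lambda_3 - d_{\max}\lambda_2 \;\leq\; \lambda_3^{L} - \lambda_2^{L} \;\leq\; d_{\max}\lambda_3 - d_{\min}\lambda_2.
\]
The $(n,d,\gamma)$-clustered assumption guarantees $d_{\min} \geq d(1-\gamma)$ and $d_{\max} \leq d(1+\gamma)$. Substituting and collecting terms then gives
\[
d(\lambda_3 - \lambda_2) \;-\; d\gamma(\lambda_3 + \lambda_2) \;\leq\; \lambda_3^{L} - \lambda_2^{L} \;\leq\; d(\lambda_3 - \lambda_2) \;+\; d\gamma(\lambda_3 + \lambda_2).
\]

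The final step will be to absorb the additive error $d\gamma(\lambda_3 + \lambda_2)$ into a multiplicative distortion of the main term $d(\lambda_3 - \lambda_2)$, extracting exactly the factor $(1 \mp 2\gamma)$ claimed in the statement. This is the main subtlety, since it requires the elementary inequality $\lambda_3 + \lambda_2 \leq 2(\lambda_3 - \lambda_2)$, equivalently $\lambda_3 \geq 3\lambda_2$. This is precisely the regime in which the observation is invoked in the body of the paper (see the hypotheses of Theorem~\ref{thm:mainavg}): the sparse-cut condition $m_{1,2} = o(m)$ together with the spectral-gap assumption $\lambda_3 - \lambda_2 = \Omega(1)$ forces $\lambda_2 \to 0$ while $\lambda_3$ stays bounded away from $0$, so $\lambda_3 \geq 3\lambda_2$ holds comfortably. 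Under this standing hypothesis the last inequality is immediate and the $(1 \mp 2\gamma)$ factor drops out; dividing through by $2m$ finishes the proof.
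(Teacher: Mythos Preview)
Your proposal is correct and follows essentially the same route as the paper's proof: reduce $\bar\lambda_2-\bar\lambda_3$ to $(\lambda_3^L-\lambda_2^L)/(2m)$, sandwich via $d_{\min}\lambda_i\le\lambda_i^L\le d_{\max}\lambda_i$, substitute $d_{\min}\ge d(1-\gamma)$ and $d_{\max}\le d(1+\gamma)$, and finish by invoking $\lambda_3+\lambda_2\le 2(\lambda_3-\lambda_2)$ under the standing assumption $\lambda_3\ge 3\lambda_2$. The paper's proof is identical in structure, including the explicit remark that $\lambda_3\ge 3\lambda_2$ is justified by the hypotheses of Theorem~\ref{thm:mainavg}.
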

\begin{proof}
We have:
\[
	\bar{\lambda}_2 - \bar{\lambda}_3 = \frac{\lambda_3^L - 
	\lambda_2^L}{2m}.
\]
To derive the lower bound, we write:
\begin{align}
	&\frac{\lambda_3^L - 
	\lambda_2^L}{2m}\ge\frac{\lambda_3d_\text{min} - 
	\lambda_2d_\text{max}}{2m}\ge\frac{\lambda_3d(1 - \gamma) - 
	\lambda_2d(1 + \gamma)}{2m} = 
	\frac{(\lambda_3 - \lambda_2)d -\gamma 
	d(\lambda_3 + \lambda_2)}{2m}\\
	&\ge\frac{(1 - 
	2\gamma)(\lambda_3 - \lambda_2)}{2m}.
\end{align}
Here, the first inequality is a direct consequence of Observation 
\ref{obs:eigenvaluesrelations}, the second follows from the definition 
of $(n,d,\gamma)$-clustered graph, while the last inequality follows 
since $\lambda_3 + \lambda_2\le 
2(\lambda_3 - \lambda_2)$, whenever
$\lambda_3 \ge 
3\lambda_2$.\footnote{Note that the latter condition 
holds, since the hypotheses of Theorem \ref{thm:mainavg} state that 
$\lambda_3 - \lambda_2 = \Omega(1)$, while 
the conditions on the cut implies that the graph's conductance is 
$o(1)$. The condition $\lambda_3 \ge 
3\lambda_2$ is thus a consequence of Cheeger's 
inequality.}
The upper bound is derived in the same way, again using $\lambda_3 \ge 
3\lambda_2$.
\end{proof}

\begin{obs}
Since the   random edges sequentially  selected by the process are mutually independent,
 the expected state of the process at time $t$ can be written as :
 
 \begin{equation} \label{eq:expectedstate}
 \Expec{}{\bx^{(t)}} \, = \ \Expec{}{W_t\cdot \ldots \cdot W_1 \cdot \bx} = \left(\Expec{}{W}\right)^{t} \bx \, = \, \avgW^t \cdot \bx \, .
 \end{equation}
\end{obs}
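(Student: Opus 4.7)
The statement follows by iterating the one-step recursion and then exploiting the mutual independence of the random matrices $W_1, \ldots, W_t$, which is immediate since each $W_i$ is a deterministic function of the edge sampled in round $i$, and these edges are drawn i.i.d.\ uniformly from $E$.

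My plan is as follows. First, I would unroll the recursion $\bx^{(s+1)} = W_{s+1}\bx^{(s)}$ from $s = 0$ up to $s = t-1$, obtaining the identity $\bx^{(t)} = W_t\, W_{t-1}\cdots W_1\, \bx^{(0)}$ as a deterministic function of $\bx^{(0)}$ and the sampled matrices $W_1,\ldots,W_t$. At this point, in order to get a clean factorization, I would condition on the initial vector $\bx^{(0)} = \bx$, so that the only randomness left on the right-hand side comes from the edge samples determining $W_1,\ldots,W_t$. By linearity of expectation, $\bx$ can be pulled out:
\[
\Expec{}{\bx^{(t)} \cond \bx^{(0)} = \bx} \;=\; \Expec{}{W_t \cdots W_1}\, \bx.
\]

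Second, I would invoke the mutual independence of $W_1,\ldots,W_t$, which is the crucial structural fact: because each $W_i$ depends only on the edge activated at round $i$ (and edges are sampled independently across rounds), the expectation of the product factors as the product of expectations. Applying this entrywise to the product of matrices, and noting that all $W_i$ share the same distribution, gives
\[
\Expec{}{W_t \cdots W_1} \;=\; \Expec{}{W_t}\cdots \Expec{}{W_1} \;=\; \bigl(\Expec{}{W}\bigr)^{t} \;=\; \avgW^{t},
\]
where $\avgW = I - \tfrac{1}{2m}L$ by Observation~\ref{obs:expecW}. Substituting into the previous display yields $\Expec{}{\bx^{(t)} \mid \bx^{(0)} = \bx} = \avgW^t\, \bx$, and the unconditional form in the statement follows by identifying $\bx$ with the (possibly random) initial vector $\bx^{(0)}$.

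There is essentially no technical obstacle here, since the observation is a direct consequence of linearity of expectation together with the i.i.d.\ nature of the matrix sequence; the only minor subtlety worth spelling out is the conditioning on $\bx^{(0)}$, needed because $\bx^{(0)}$ is itself random (uniform in $\{-1,+1\}^n$, via the deferred-decisions argument in Section~\ref{se:expec}), and one should verify that $\bx^{(0)}$ is independent of the edge-sampling sequence $(W_s)_{s\geq 1}$. This independence is built into the model: initial $\pm 1$ choices are made locally at first activation and are independent of which edges are subsequently sampled, so conditioning on $\bx^{(0)}$ does not alter the distribution of $W_1,\ldots,W_t$ and the factorization above remains valid.
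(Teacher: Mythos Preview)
Your proposal is correct and matches the paper's treatment: the observation is stated without proof in the paper, and your argument---unrolling the recursion, conditioning on $\bx^{(0)}$, and factoring the expectation of the product of i.i.d.\ matrices into the product of expectations---is exactly the standard justification the paper implicitly relies on.
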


\section{Proofs for Section \ref{se:expec}} \label{apx:expec-proofs}

\subsection{Proof of Theorem~\ref{thm:mainavg}: Technical lemmas}\label{subsubse:lemmas}
The next lemma decomposes the state of the system at time $t$, 
explicitely identifying components parallel to $\bone$ and $\bchi$ 
respectively.
\begin{lemma}[Main decomposition]\label{lemma:expectedvalueimplicit}
Let $\bx \in \{-1,1\}^n$ be an arbitrary initial vector of values and, for 
$h = 1, 2$, let $\mu_h = \mu_h(\bx) = (2/n) \sum_{i \in V_h} \bx(i)$ be the average
of the initial values in block $h$. The expected vector of values at round $t$
conditional on the initial vector being $\bx$ can be written as
\[
\Expec{}{\bx^{(t)} \,|\, \bx^{(0)} = \bx} 
= \alpha_1 \bone + \alpha_2 \bar{\lambda}_2^t \bchi 
+ \alpha_2 \sqrt{n} \, \bar{\lambda}_2^t \bfied_\perp + \be^{(t)} \, ,
\]
where
\[
\alpha_1 = \frac{\mu_1 + \mu_2}{2} \, , 
\qquad\quad 
\alpha_2 = \frac{1}{\|\bfied - \bfied_\perp\|^2} \left( \frac{\mu_1 - \mu_2}{2}
- \frac{\langle \bx, \bfied_\perp\rangle}{\sqrt{n}} \right)
\]
and, moreover,  
\[ \left\|\be^{(t)} \right\| \leqslant 
\bar{\lambda}_3^t \sqrt{n} \, . \]
\end{lemma}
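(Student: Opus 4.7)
My plan is to start from the identity
\[
\Expec{}{\bx^{(t)} \mid \bx^{(0)} = \bx} \;=\; \avgW^{t}\bx,
\]
which follows from the independence of the $W_s$'s and the fact $\Expec{}{W_s} = \avgW$ (this is precisely \eqref{eq:expectedstate} together with Observation~\ref{obs:expecW}). Since $\avgW = I - \frac{1}{2m}L$ is symmetric, the spectral theorem furnishes an orthonormal basis $\{\bfied^{(i)}\}_{i=1}^{n}$ of eigenvectors with eigenvalues $1 = \bar\lambda_1 \geqslant \bar\lambda_2 \geqslant \cdots \geqslant \bar\lambda_n$; moreover $\bfied^{(1)} = \bone/\sqrt n$ (since $\bone$ is a left/right eigenvector at eigenvalue $1$), and we let $\bfied \triangleq \bfied^{(2)}$ be any unit eigenvector at $\bar\lambda_2$.

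First I would expand
\[
\avgW^{t}\bx \;=\; \left(\frac{\bx\cdot\bone}{n}\right)\bone \;+\; \bar\lambda_2^{t}(\bx\cdot\bfied)\bfied \;+\; \sum_{i\geqslant 3}\bar\lambda_i^{t}(\bx\cdot\bfied^{(i)})\bfied^{(i)}.
\]
The first coefficient equals $(\mu_1+\mu_2)/2 = \alpha_1$, using $\bx\cdot\bone = \tfrac{n}{2}(\mu_1+\mu_2)$. The tail sum I would call $\be^{(t)}$. For the almost-regular graphs we consider, the eigenvalues of $\avgW$ lie in $[0,1]$ (indeed $\bar\lambda_n \geqslant 1 - d_{\max}/m = 1 - O(1/n)$ by Observation~\ref{obs:eigenvaluesrelations}), so $\bar\lambda_i \leqslant \bar\lambda_3$ for every $i\geqslant 3$, and Parseval together with $\|\bx\|^{2}=n$ yields
\[
\|\be^{(t)}\|^{2} \;=\; \sum_{i\geqslant 3}\bar\lambda_i^{2t}(\bx\cdot\bfied^{(i)})^{2} \;\leqslant\; \bar\lambda_3^{2t}\sum_{i\geqslant 3}(\bx\cdot\bfied^{(i)})^{2} \;\leqslant\; \bar\lambda_3^{2t}\,n,
\]
proving the claimed norm bound on $\be^{(t)}$.

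The remaining step is to rewrite the $\bar\lambda_2$-component $(\bx\cdot\bfied)\bfied$ in the form $\alpha_2\bchi + \alpha_2\sqrt{n}\bfied_\perp$. Here $\bfied_\perp$ denotes the component of $\bfied$ orthogonal to the cut indicator; writing $\bfied = \bfied_{\parallel} + \bfied_\perp$ with $\bfied_{\parallel}\in\mathrm{span}(\bchi)$, one has $\bfied_\parallel = \bfied - \bfied_\perp$ and hence $\bchi = \bchi \cdot \|\bfied-\bfied_\perp\|^{-2}(\bfied-\bfied_\perp)\cdot(\text{scalar})$; more concretely, $\bfied - \bfied_\perp = \frac{\bfied\cdot\bchi}{n}\,\bchi$, whose squared norm is $\|\bfied-\bfied_\perp\|^{2}=(\bfied\cdot\bchi)^{2}/n$. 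Substituting this into $(\bx\cdot\bfied)\bfied = (\bx\cdot\bfied)\bfied_\parallel + (\bx\cdot\bfied)\bfied_\perp$ and collecting the scalar in front of $\bchi$ and of $\sqrt n\,\bfied_\perp$ gives a common factor $\alpha_2$; using $\bx\cdot\bchi = \tfrac{n}{2}(\mu_1-\mu_2)$ to evaluate $\bx\cdot\bfied$ through the decomposition $\bfied = \bfied_\parallel + \bfied_\perp$, straightforward bookkeeping produces the stated expression
\[
\alpha_2 \;=\; \frac{1}{\|\bfied - \bfied_\perp\|^{2}}\left(\frac{\mu_1-\mu_2}{2} - \frac{\langle \bx,\bfied_\perp\rangle}{\sqrt n}\right).
\]

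\textbf{Main obstacle.} The algebraically routine but delicate part is bookkeeping in the last step: getting the normalizing factor $\|\bfied-\bfied_\perp\|^{-2}$ and the sign in front of $\langle\bx,\bfied_\perp\rangle/\sqrt n$ exactly right relies on the precise convention chosen for $\bfied_\perp$ (projection of $\bfied$ orthogonal to $\bchi$ vs.\ against $\bchi/\sqrt n$, and the sign with which $\bfied$ is picked out of the $\bar\lambda_2$-eigenspace). The conceptual steps—symmetry of $\avgW$, orthonormal eigendecomposition, and the Parseval bound using non-negativity of the spectrum of $\avgW$—are clean; the main care is just this book-keeping and, of course, justifying that for the graph classes considered all $\bar\lambda_i$ are non-negative so that $\bar\lambda_3$ really does dominate the tail.
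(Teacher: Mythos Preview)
Your overall plan is exactly the paper's: expand $\avgW^t\bx$ in the orthonormal eigenbasis of $\avgW$, read off $\alpha_1$ from the $\bone$-component, collect indices $i\geqslant 3$ into $\be^{(t)}$, and bound $\|\be^{(t)}\|$ via Parseval (your remark on nonnegativity of the spectrum is actually more explicit than the paper's one-line justification). The one genuine slip is in the last step, and it is precisely the bookkeeping you flagged as delicate.

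You have the roles of $\bfied$ and $\bw_2$ swapped. In the paper's notation, $\bfied$ is \emph{not} the second eigenvector: it is the normalized cut indicator $\bfied = \bchi/\sqrt n$, and $\bfied_\perp$ is the component of $\bfied$ orthogonal to the second eigenvector $\bw_2$ (so that $\bw_2 = (\bfied-\bfied_\perp)/\|\bfied-\bfied_\perp\|$, and indeed $\bfied_\perp$ is orthogonal to the first two eigenspaces, as used later in Lemma~\ref{lm:f-bound}). With this convention the computation is immediate:
\[
\langle\bx,\bw_2\rangle\bw_2
=\frac{\langle\bx,\bfied-\bfied_\perp\rangle}{\|\bfied-\bfied_\perp\|^2}(\bfied-\bfied_\perp)
=\underbrace{\frac{\langle\bx,\bfied-\bfied_\perp\rangle}{\sqrt n\,\|\bfied-\bfied_\perp\|^2}}_{=\alpha_2}\bchi \;-\;\alpha_2\sqrt n\,\bfied_\perp,
\]
and using $\langle\bx,\bfied\rangle=\langle\bx,\bchi\rangle/\sqrt n=(\mu_1-\mu_2)\sqrt n/2$ gives the stated formula for $\alpha_2$. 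With \emph{your} convention ($\bfied=\bw_2$ and $\bfied_\perp$ orthogonal to $\bchi$), the coefficient in front of $\bchi$ would be $(\bx\cdot\bw_2)(\bw_2\cdot\bchi)/n$ while that in front of $\bfied_\perp$ would be $\bx\cdot\bw_2$; these coincide only when $\bw_2\cdot\bchi=\sqrt n$, i.e.\ $\bw_2=\bchi/\sqrt n$, which is exactly the regular case but not the general almost-regular setting the lemma covers. So the ``straightforward bookkeeping'' you promise does not go through with your reading of $\bfied_\perp$; once you flip the convention, it does, and your proof matches the paper's.
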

\begin{proof}
Let $1 = \bar{\lambda}_1 > \bar{\lambda}_2 \geqslant \cdots \geqslant 
\bar{\lambda}_n$ be the eigenvalues of $\avgW$ and let $\bw_1 = 
\bone / \sqrt{n}, \bw_2, \dots, \bw_n$ be a basis of orthonormal eigenvectors of
$\avgW$, so that we can write 
\[
\Expec{}{\bx^{(t)} \,|\, \bx^{(0)} = \bx} 
= \sum_{i=1}^n \bar{\lambda}_i^t \langle \bx, \bw_i \rangle  \bw_i\,.
\]
Since $\bar{\lambda}_1 = 1$ and $\bw_1 = \bone/\sqrt{n}$, we have that $\langle \bx, 
\, \bw_1 \rangle \bw_1 = (1/n) \langle \bx, \, \bone \rangle \bone$. Hence, 
$\alpha_1 = (1/n) \sum_{i \in V} \bx(i) = (\mu_1 + 
\mu_2) / 2$.

\smallskip\noindent
Since $\bw_2 = (\bfied - \bfied_\perp) / \| \bfied - \bfied_\perp \|$ we have
that
\begin{align*}
\langle \bx, \bw_2 \rangle \bw_2 
& = \left\langle \bx, \frac{\bfied - \bfied_\perp}{\| \bfied - \bfied_\perp \|}
\right\rangle \frac{\bfied - \bfied_\perp}{\| \bfied - \bfied_\perp \|}
= \frac{\langle \bx, \, \bfied - \bfied_\perp \rangle \bfied 
- \langle \bx, \, \bfied - \bfied_\perp \rangle \bfied_\perp}{\| \bfied - 
\bfied_\perp \|^2}\\
& = \frac{\langle \bx, \, \bfied - \bfied_\perp \rangle}{\| \bfied - 
\bfied_\perp \|^2\sqrt{n}}\, \bchi 
- \frac{\langle \bx, \, \bfied - \bfied_\perp \rangle}{\| \bfied - 
\bfied_\perp \|^2} \bfied_\perp 
\end{align*}
Hence,
\begin{align*}
\alpha_2 & =  \frac{\langle \bx, \, \bfied - \bfied_\perp \rangle}{\| \bfied - 
\bfied_\perp \|^2\sqrt{n}}
= \frac{1}{\| \bfied - \bfied_\perp \|^2}
\left(
\frac{\langle \bx, \, \bfied \rangle}{\sqrt{n}} 
- \frac{\langle \bx, \, \bfied_\perp \rangle}{\sqrt{n}} 
\right) \\
& = \frac{1}{\| \bfied - \bfied_\perp \|^2}
\left(
\frac{\langle \bx, \, \bchi \rangle}{n} 
- \frac{\langle \bx, \, \bfied_\perp \rangle}{\sqrt{n}} 
\right)
= \frac{1}{\| \bfied - \bfied_\perp \|^2}
\left(
\frac{\mu_1 - \mu_2}{2} 
- \frac{\langle \bx, \, \bfied_\perp \rangle}{\sqrt{n}} 
\right)
\end{align*}
Finally, the bound on $\left\|\be^{(t)} \right\|$ easily follows since, by definition, $\be\perp \bw_1, \bw_2$ and $\bar{\lambda}_3 \geq
\max\{ \bar{\lambda}_i \, | \, i \geq 4 \}$.
\end{proof}


\begin{lemma} \label{lm:f-bound}
Recall that we name $m_{1,2} = |E(V_1,V_2)|$ the size of the cut. It holds that
\[
\| \bfied_\perp \|^2 \leqslant \frac{2}{\bar{\lambda}_2 - \bar{\lambda}_3} 
\cdot \frac{m_{1,2}}{nm}\,.
\]
\end{lemma}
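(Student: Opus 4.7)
The plan is to exploit the Rayleigh quotient of the Laplacian applied to the cut indicator $\bchi$. Reading back Lemma~\ref{lemma:expectedvalueimplicit}, I would first identify $\bfied = \bchi/\sqrt{n}$ (so $\|\bfied\|=1$) and $\bfied_\perp$ as the component of $\bfied$ orthogonal to the second eigenvector $\bw_2$ of $\avgW$; since $\avgW = I - L/(2m)$, $\bw_2$ is also an eigenvector of $L$. Because $|V_1|=|V_2|$, $\bfied$ is orthogonal to $\bw_1 = \bone/\sqrt{n}$, so in the orthonormal eigenbasis $\{\bw_i\}_{i\ge 1}$ I can write $\bfied = \sum_{i\ge 2} c_i \bw_i$ with $\sum_{i\ge 2}c_i^2 = 1$, and observe that $\|\bfied_\perp\|^2 = \sum_{i\ge 3} c_i^2 = 1 - c_2^2$.

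Next I would compute the quadratic form $\bchi^T L\bchi$ directly from the edge expansion
\[
\bx^T L \bx \;=\; \sum_{\{u,v\}\in E} (\bx_u - \bx_v)^2.
\]
Substituting $\bx = \bchi$, each cut edge contributes $4$ while each intra-community edge contributes $0$, so $\bchi^T L\bchi = 4m_{1,2}$ and hence
\[
\bfied^T L\bfied \;=\; \frac{4m_{1,2}}{n}.
\]

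Then I would lower bound the same quantity via the spectral decomposition. Denoting by $\lambda_i^L$ the eigenvalues of $L$ in increasing order,
\[
\bfied^T L \bfied \;=\; \sum_{i\ge 2} \lambda_i^L c_i^2 \;\ge\; \lambda_2^L c_2^2 + \lambda_3^L \!\!\sum_{i\ge 3} c_i^2 \;=\; \lambda_2^L(1-\|\bfied_\perp\|^2) + \lambda_3^L \|\bfied_\perp\|^2,
\]
where I used $\lambda_i^L \ge \lambda_3^L$ for $i\ge 3$. Rearranging and using $\lambda_2^L \ge 0$ (as $L$ is PSD) gives
\[
(\lambda_3^L - \lambda_2^L)\,\|\bfied_\perp\|^2 \;\le\; \frac{4m_{1,2}}{n} - \lambda_2^L \;\le\; \frac{4m_{1,2}}{n}.
\]

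Finally, I would convert back using $\bar\lambda_i = 1 - \lambda_i^L/(2m)$ from Observation~\ref{obs:eigenvaluesrelations}, so $\lambda_3^L - \lambda_2^L = 2m(\bar\lambda_2 - \bar\lambda_3)$, yielding
\[
\|\bfied_\perp\|^2 \;\le\; \frac{4m_{1,2}}{n(\lambda_3^L - \lambda_2^L)} \;=\; \frac{2}{\bar\lambda_2 - \bar\lambda_3}\cdot\frac{m_{1,2}}{nm},
\]
as required. There is no real technical obstacle here: once the decomposition of Lemma~\ref{lemma:expectedvalueimplicit} is correctly parsed (so that $\|\bfied_\perp\|^2$ is recognized as $1-c_2^2$, i.e.\ the mass of $\bchi/\sqrt n$ lying outside the second eigenspace of $\avgW$), the bound is a standard one-line Rayleigh--Ritz argument applied to the cut quadratic form.
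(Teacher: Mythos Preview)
Your proof is correct and follows essentially the same approach as the paper: both compute the quadratic form of the (normalized) cut vector with the Laplacian in two ways---once combinatorially via the cut size, once spectrally via the eigendecomposition---and compare. The only cosmetic difference is that the paper works with $\avgW = I - L/(2m)$ throughout (bounding $\bfied^T\avgW\bfied$ from above by $\bar\lambda_2\|\bfied_\parallel\|^2 + \bar\lambda_3\|\bfied_\perp\|^2$), whereas you work with $L$ directly and convert to $\bar\lambda$'s at the end; these are the same argument up to the affine change of variables.
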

\begin{proof}
Observe that since $\bfied$ is orthogonal to $\bone$, we can write 
$\bfied^\intercal \avgW \bfied$ as
\begin{align}\label{eq:fiedeigendec}
\bfied^\intercal \avgW \bfied 
& = \left(\bfied_\parallel + \bfied_\perp \right)^\intercal 
\avgW \left(\bfied_\parallel + \bfied_\perp \right) 
= \bfied_\parallel^\intercal \avgW \bfied_\parallel
+ \bfied_\perp^\intercal \avgW \bfied_\perp
+ 2 \bfied_\parallel^\intercal \avgW \bfied_\perp \nonumber \\ 
& = \bfied_\parallel^\intercal \avgW \bfied_\parallel + 
\bfied_\perp^\intercal \avgW \bfied_\perp 
= \bar{\lambda}_2 \|\bfied_\parallel\|^2 + \bfied_\perp^\intercal \avgW
\bfied_\perp
\end{align}
where we used the fact that $\avgW$ is symmetric, thus 
$\bfied_\perp^\intercal \avgW \bfied_\parallel 
= \bfied_\parallel^\intercal \avgW \bfied_\perp = 0$, and the fact that
$\bfied_\parallel$ is in the second eigenspace of $\avgW$, thus
$\bfied_\parallel^\intercal \avgW \bfied_\parallel = \bar{\lambda}_2 
\|\bfied_\parallel\|^2$.
Moreover, since $\bfied_\perp$ is orthogonal to the first two eigenspaces of 
$\avgW$, the third eigenvalue of $\avgW$ is $\bar{\lambda}_3 = 
\sup_{\bx \perp \bone, \bw_2} \frac{\bx^\intercal \avgW \bx}{\|\bx\|^2} 
\geqslant \frac{\bfied_\perp^\intercal \avgW 
\bfied_\perp}{\|\bfied_\perp\|^2}$ from~\eqref{eq:fiedeigendec} it follows that
\begin{equation}\label{eq:fiedub}
\bfied^\intercal \avgW \bfied 
\leqslant \bar{\lambda}_2 \|\bfied_\parallel\|^2 + \bar{\lambda}_3 \|\bfied_\perp\|^2
\end{equation}

Observe that, since $\avgW = I - L/(2m) = I - \frac{D - A}{2m}$ we can
also write $\bfied^\intercal \avgW \bfied$ as a function of  $m_{1,2}$, indeed
\begin{align}\label{eq:fiedcut}
\bfied^\intercal \avgW \bfied 
= 1 - \frac{1}{2m} \left(\bfied^\intercal 
D \bfied  - \bfied^\intercal A \bfied \right)
= 1 - \frac{1}{2mn} \left(\bchi^\intercal 
D \bchi  - \bchi^\intercal A \bchi \right)
= 1 - 2\frac{m_{1,2}}{m n} 
\end{align}
where we used the fact that $\bchi^\intercal D \bchi = 2m$, the fact that
$\bchi^\intercal A \bchi = \sum_{i} a_i - \sum_i b_i = 2m - 4 m_{1,2}$.

From~\eqref{eq:fiedub} and \eqref{eq:fiedcut} and the fact that 
$1 = \|\bfied \|^2 = \|\bfied_\parallel\|^2 + \|\bfied_\perp\|^2$ we have
\[
1 - 2\frac{m_{1,2}}{m n} \leqslant \bar{\lambda}_2 \|\bfied_\parallel\|^2 
+ \bar{\lambda}_3 \|\bfied_\perp\|^2 = \bar{\lambda}_2 - (\bar{\lambda}_2 - \bar{\lambda}_3) 
\|\bfied_\perp \|^2 
\]
and thus
\[
\|\bfied_\perp \|^2 \leqslant \frac{2 \frac{m_{1,2}}{m n} 
- (1 -\bar{\lambda}_2)}{\bar{\lambda}_2 - \bar{\lambda}_3}
\leqslant \frac{2}{\bar{\lambda}_2 - \bar{\lambda}_3} \cdot \frac{m_{1,2}}{nm} \, .
\]
\end{proof}

\begin{definition}[Bad nodes]\label{def:badnodes}
We say a node $u \in [n]$ is $\varepsilon$-\emph{bad} if $|\bfied_{\perp,u}| \geq 
\varepsilon/\sqrt n$, for some $\varepsilon > 0$ and we call $B_\varepsilon$ the set of 
$\varepsilon$-bad nodes.
\end{definition}
Notice that the property of being a bad node only depends on the graph and on 
the protocol, not on the ``execution'' of the protocol.

From the Lemma~\ref{lm:f-bound}, an upper bound on the number of $\varepsilon$-bad 
nodes easily follows.
\begin{cor}[Number of bad nodes]\label{cor:no_bad}
The number $|B_\varepsilon|$ of $\varepsilon$-bad nodes is upper bounded by
\[
|B_\varepsilon| \le \frac{2m_{1, 2}}{\varepsilon^2(\bar{\lambda}_2 - \bar{\lambda}_3)m} \, .
\]
\end{cor}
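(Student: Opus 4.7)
The plan is very short because Lemma~\ref{lm:f-bound} already gives a global $\ell_2$ bound on $\bfied_\perp$, and the corollary is just a pointwise--to--global counting argument (a Markov-type inequality applied to the squared coordinates of $\bfied_\perp$).

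First I would unpack the definition: by Definition~\ref{def:badnodes}, every $u \in B_\varepsilon$ contributes at least $\varepsilon^2/n$ to the squared norm of $\bfied_\perp$, since $\bfied_{\perp,u}^2 \geqslant \varepsilon^2/n$. Summing only over bad indices, and noting that all other coordinates contribute nonnegatively, gives
\[
\|\bfied_\perp\|^2 \;=\; \sum_{u \in V} \bfied_{\perp,u}^2 \;\geqslant\; \sum_{u \in B_\varepsilon} \bfied_{\perp,u}^2 \;\geqslant\; |B_\varepsilon| \cdot \frac{\varepsilon^2}{n}.
\]

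Next I would combine this with the upper bound from Lemma~\ref{lm:f-bound}, namely $\|\bfied_\perp\|^2 \leqslant \frac{2}{\bar{\lambda}_2 - \bar{\lambda}_3} \cdot \frac{m_{1,2}}{nm}$. Chaining the two inequalities and solving for $|B_\varepsilon|$ yields
\[
|B_\varepsilon| \;\leqslant\; \frac{n}{\varepsilon^2} \cdot \frac{2}{\bar{\lambda}_2 - \bar{\lambda}_3} \cdot \frac{m_{1,2}}{nm} \;=\; \frac{2 m_{1,2}}{\varepsilon^2 (\bar{\lambda}_2 - \bar{\lambda}_3) m},
\]
which is exactly the claimed bound. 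There is essentially no obstacle here: the only ingredient is Lemma~\ref{lm:f-bound}, and the argument is a one-line Markov-style counting step on the coordinates of $\bfied_\perp$.
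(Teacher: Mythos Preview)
Your proof is correct and matches the paper's own argument essentially line for line: both use the fact that each $\varepsilon$-bad coordinate contributes at least $\varepsilon^2/n$ to $\|\bfied_\perp\|^2$ and then invoke Lemma~\ref{lm:f-bound}. There is nothing to add.
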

\begin{proof}
Assume $B_\varepsilon$ vertices satisfy $(\bfied_\perp)_i > \frac{\varepsilon}{\sqrt{n}}$.
Lemma~\ref{lm:f-bound} implies:
\begin{align*}
& \frac{\varepsilon^2}{n}B_\varepsilon \le\frac{2}{\bar{\lambda}_2 - \bar{\lambda}_3} \cdot 
\frac{m_{1,2}}{nm},
\end{align*}
from which the thesis follows.
\end{proof}

\begin{lemma}[Monotonicity property]\label{lemma:incrdecr_crit}
Let $G$ be a connected graph, let $\bar{\lambda}_i$ for $i = 1, \dots, n$ be 
the eigenvalues of matrix $\avgW$, let $\varepsilon$ be such that $0 < \varepsilon
< 1$, and let $\bx \in \{-1,1\}^n$ be an arbitrary initial vector such that 
$|\alpha_2| = |\alpha_2(\bx)| > 0$. Then, for every node $u \notin B_\varepsilon$ and for 
any round $t$ such that 
$t \geqslant 3 \log \left(\frac{n}{1-\varepsilon} \right) 
/ \log(\bar{\lambda}_2 /\bar{\lambda}_3)$,
it holds that
\begin{equation} \label{eq:signblock}
\sgn\left( \Expec{}{\bx^{(t-1)}_u \,|\, \bx^{(0)} = \bx} 
- \Expec{}{\bx^{(t)}_u \,|\, \bx^{(0)} = \bx} \right) 
= \sgn\left( \alpha_2 \bchi_u \right) \ .
\end{equation}
\end{lemma}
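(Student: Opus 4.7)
The plan is to use the explicit decomposition of $\E[\bx^{(t)}\mid\bx^{(0)}=\bx]$ provided by Lemma~\ref{lemma:expectedvalueimplicit} and then separate the expression $\E[\bx^{(t-1)}_u - \bx^{(t)}_u]$ into a ``signal'' term that carries the cut information and an ``error'' term, proving that the former dominates for every non-bad node $u$ once $t$ is large enough. Concretely, applying Lemma~\ref{lemma:expectedvalueimplicit} at steps $t-1$ and $t$ and subtracting, the component along $\bone$ cancels and we get
\[
\E[\bx^{(t-1)}_u - \bx^{(t)}_u\mid\bx^{(0)}=\bx]
\;=\;\alpha_2\,\bar{\lambda}_2^{\,t-1}(1-\bar{\lambda}_2)\bigl(\bchi_u+\sqrt{n}\,\bfied_{\perp,u}\bigr)\;+\;\bigl(\be^{(t-1)}_u-\be^{(t)}_u\bigr).
\]

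Next I would control the sign of the signal term. Since $u\notin B_\varepsilon$, Definition~\ref{def:badnodes} gives $\sqrt{n}\,|\bfied_{\perp,u}|<\varepsilon$, and because $\bchi_u\in\{-1,+1\}$ the quantity $\bchi_u+\sqrt{n}\,\bfied_{\perp,u}$ has the same sign as $\bchi_u$ and satisfies $|\bchi_u+\sqrt{n}\,\bfied_{\perp,u}|\geq 1-\varepsilon$. Since $G$ is connected and non-bipartite, $0<\bar{\lambda}_2<1$, so $(1-\bar{\lambda}_2)\bar{\lambda}_2^{\,t-1}>0$. Hence the signal term has sign $\sgn(\alpha_2\bchi_u)$ and magnitude at least $|\alpha_2|(1-\varepsilon)(1-\bar{\lambda}_2)\bar{\lambda}_2^{\,t-1}$.

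I would then bound the error term using the inequality $\|\be^{(t)}\|\leq\bar{\lambda}_3^{\,t}\sqrt{n}$ from Lemma~\ref{lemma:expectedvalueimplicit}, which yields
\[
|\be^{(t-1)}_u-\be^{(t)}_u|\;\le\;\|\be^{(t-1)}\|+\|\be^{(t)}\|\;\le\;2\bar{\lambda}_3^{\,t-1}\sqrt{n}.
\]
Thus the signal strictly dominates the error, and \eqref{eq:signblock} holds, as soon as
\[
\Bigl(\tfrac{\bar{\lambda}_2}{\bar{\lambda}_3}\Bigr)^{\!t-1}\;>\;\frac{2\sqrt{n}}{|\alpha_2|(1-\varepsilon)(1-\bar{\lambda}_2)}.
\]

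Taking logarithms yields a threshold of the form $t-1\ge\log\!\bigl(2\sqrt{n}/[|\alpha_2|(1-\varepsilon)(1-\bar{\lambda}_2)]\bigr)/\log(\bar{\lambda}_2/\bar{\lambda}_3)$. To obtain the exact bound $t\ge 3\log(n/(1-\varepsilon))/\log(\bar{\lambda}_2/\bar{\lambda}_3)$ stated in the lemma, the main technical point—the part I expect to be the bookkeeping obstacle—is absorbing the factors $|\alpha_2|$ and $1-\bar{\lambda}_2$ into a single $\operatorname{poly}(n)$ factor. Since $G$ is connected we have $1-\bar{\lambda}_2=\lambda_2^L/(2m)\ge 1/\operatorname{poly}(n)$, and since $\bx\in\{-1,+1\}^n$ with $\alpha_2(\bx)\neq 0$ one has $|\alpha_2|\ge 1/\operatorname{poly}(n)$ (e.g. by observing that $|\alpha_2|\|\bfied-\bfied_\perp\|^2$ is a nonzero rational with denominator at most $n\sqrt{n}$). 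Bundling these polynomial factors gives the stated $3\log(n/(1-\varepsilon))$ in the numerator, completing the proof.
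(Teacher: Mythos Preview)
Your proof is essentially identical to the paper's: decompose via Lemma~\ref{lemma:expectedvalueimplicit}, isolate the signal $\alpha_2\bar\lambda_2^{\,t-1}(1-\bar\lambda_2)(\bchi_u+\sqrt{n}\,\bfied_{\perp,u})$ and bound it below by $|\alpha_2|(1-\varepsilon)(1-\bar\lambda_2)\bar\lambda_2^{\,t-1}$ for $u\notin B_\varepsilon$, bound the error via $\|\be^{(t)}\|\le\bar\lambda_3^{\,t}\sqrt{n}$, and then absorb $|\alpha_2|^{-1}$ and $(1-\bar\lambda_2)^{-1}$ as $\mathrm{poly}(n)$ factors to reach the stated threshold. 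One caveat on your final bookkeeping: your rationality argument for $|\alpha_2|\ge 1/\mathrm{poly}(n)$ does not work in general since $\bfied_\perp$ need not have rational entries, but the paper itself simply asserts ``if $|\alpha_2|>0$ then it is at least $1/n$'' without proof (which is clean only in the exactly regular case where $\bfied_\perp=0$), so on this point you match the paper as well.
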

\begin{proof}
Thanks to Lemma~\ref{lemma:expectedvalueimplicit}, the expected difference of
the value of a node in two consecutive rounds is
\[
\Expec{}{\bx^{(t-1)}_u \,|\, \bx^{(0)} = \bx} 
- \Expec{}{\bx^{(t)}_u \,|\, \bx^{(0)} = \bx} 
= \alpha_2 \bar{\lambda}_2^{(t-1)}(1 - 
\bar{\lambda}_2)[\bchi_u -
\sqrt{n} \, \bfied_{\perp,u}] + \be^{(t-1)} - \be^{(t)}
\]
From the above equation we get that, as soon as 
\[ 
|\be^{(t-1)}_u - \be^{(t)}_u| < |\alpha_2| \, \bar{\lambda}_2^{t} 
(1 - \bar{\lambda}_2) | \bchi_u \pm \varepsilon | , 
\] 
the sign of the expected difference between two consecutive values of a non 
$\varepsilon$-bad node $u$ indicates the community  node $u$ belongs to. Moreover,
since $|\bchi_u \pm \varepsilon| \geqslant 
1 - \varepsilon$ and $|\be^{(t-1)}_u - \be^{(t)}_u| \leqslant 2 \, 
\bar{\lambda}_3^t$, the above sign property turns out to be true for every 
round $t$ such that 
\begin{equation}\label{eq:lbont_incrdec_crit}
t \geqslant \log \left( \frac{2}{|\alpha_2| (1 - \bar{\lambda}_2) (1 -\varepsilon)}
\right) / \log(\bar{\lambda}_2 / \bar{\lambda}_3)
\end{equation}
The thesis thus follows from the fact that, if $|\alpha_2|> 0$ then it is at least 
$1/n$ and if the graph is connected then $1 - \bar{\lambda}_2 \geqslant 1/n$.
Hence, any  \[ t \geqslant 3 \log \left(\frac{n}{1-\varepsilon} \right) 
/ \log(\bar{\lambda}_2 /\bar{\lambda}_3) \] satisfies \eqref{eq:lbont_incrdec_crit}.
\end{proof}

\begin{lemma}[Sign property]\label{lemma:sign_crit}
Let $G$ be a connected graph, let $\bar{\lambda}_i$ for $i = 1, \dots, n$ be 
the eigenvalues of matrix $\avgW$, let $\varepsilon$ be such that $0 < \varepsilon
< 1$, and let $\bx \in \{-1,1\}^n$ be an arbitrary initial vector such that 
$\alpha_1 = \alpha_1(\bx)$ and $\alpha_2 = \alpha_2(\bx)$ satisfy $|\alpha_2| 
> 2 |\alpha_1| / (1-\varepsilon)$. Then, for every node $u \notin B_\varepsilon$ and for 
any round $t$ such that 
\[
\frac{1}{\log(1/\bar{\lambda}_3)} \log (n / |\alpha_1|)
\leqslant t \leqslant
\frac{1}{\log (1/\bar{\lambda}_2)} \log\left(\frac{|\alpha_2|(1 - \varepsilon)}{2 
|\alpha_1|}\right) 
\]
it holds that $\sgn\left( \Expec{}{\bx^{(t)}_u \,|\, \bx^{(0)} = \bx} \right) 
= \sgn\left( \alpha_2 \bchi_u \right)$.
\end{lemma}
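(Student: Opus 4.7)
The plan is to prove this sign property by invoking the main decomposition from Lemma~\ref{lemma:expectedvalueimplicit} and then carefully comparing the magnitudes of the ``signal'' and ``noise'' components of $\Expec{}{\bx^{(t)}_u \,|\, \bx^{(0)} = \bx}$, exactly as was done for the monotonicity property in Lemma~\ref{lemma:incrdecr_crit}. Specifically, I will write the expected value at the generic node $u$ as
\[
\Expec{}{\bx^{(t)}_u \,|\, \bx^{(0)} = \bx} \;=\; \underbrace{\alpha_2 \bar{\lambda}_2^t \left[\bchi_u + \sqrt{n}\,\bfied_{\perp, u}\right]}_{\text{signal}} \;+\; \underbrace{\alpha_1 + \be^{(t)}_u}_{\text{noise}},
\]
and argue that under the stated hypotheses on $t$, the signal dominates the noise in absolute value while having sign $\sgn(\alpha_2 \bchi_u)$.

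For the sign of the signal: since $u \notin B_\varepsilon$, Definition~\ref{def:badnodes} gives $|\sqrt{n}\,\bfied_{\perp,u}| < \varepsilon < 1 = |\bchi_u|$, so the bracket $[\bchi_u + \sqrt{n}\,\bfied_{\perp,u}]$ has sign $\sgn(\bchi_u)$ and absolute value at least $1-\varepsilon$. Consequently the signal has sign $\sgn(\alpha_2 \bchi_u)$ and absolute value at least $|\alpha_2|\bar{\lambda}_2^t(1-\varepsilon)$.

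For the noise bound, I will use the two-sided hypothesis on $t$. The upper bound $t \leqslant \log\!\bigl(\tfrac{|\alpha_2|(1-\varepsilon)}{2|\alpha_1|}\bigr)/\log(1/\bar{\lambda}_2)$ rearranges to $|\alpha_2|\bar{\lambda}_2^t(1-\varepsilon) \geqslant 2|\alpha_1|$, while the lower bound $t \geqslant \log(n/|\alpha_1|)/\log(1/\bar{\lambda}_3)$ rearranges to $\bar{\lambda}_3^t \leqslant |\alpha_1|/n$. Since the bound $\|\be^{(t)}\| \leqslant \bar{\lambda}_3^t \sqrt{n}$ from Lemma~\ref{lemma:expectedvalueimplicit} controls every single component, we get $|\be^{(t)}_u| \leqslant \bar{\lambda}_3^t \sqrt{n} \leqslant |\alpha_1|/\sqrt{n} \leqslant |\alpha_1|$. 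Combining,
\[
|\text{signal}| \;\geqslant\; |\alpha_2|\bar{\lambda}_2^t(1-\varepsilon) \;\geqslant\; 2|\alpha_1| \;>\; |\alpha_1| + |\be^{(t)}_u| \;\geqslant\; |\text{noise}|,
\]
which forces $\sgn(\Expec{}{\bx^{(t)}_u \,|\, \bx^{(0)} = \bx}) = \sgn(\alpha_2 \bchi_u)$, as desired.

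I do not anticipate any serious obstacle: the proof is essentially a book-keeping exercise once the decomposition of Lemma~\ref{lemma:expectedvalueimplicit} is in hand. The only mildly delicate point is to recognize that one needs the factor $2$ in the hypothesis $|\alpha_2| > 2|\alpha_1|/(1-\varepsilon)$ in order to leave ``room'' for the orthogonal error $\be^{(t)}_u$, and to verify that the lower bound on $t$ really is what is needed to make $|\be^{(t)}_u|$ no larger than $|\alpha_1|$ — this is why the two bounds on $t$ work in tandem, the upper one killing $\alpha_1$ relative to the signal, and the lower one killing the orthogonal error $\be^{(t)}_u$ relative to $\alpha_1$.
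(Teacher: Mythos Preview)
Your proof is correct and follows essentially the same approach as the paper's: decompose $\Expec{}{\bx^{(t)}_u}$ via Lemma~\ref{lemma:expectedvalueimplicit}, use the upper bound on $t$ to ensure the signal $|\alpha_2|\bar{\lambda}_2^t(1-\varepsilon) \geqslant 2|\alpha_1|$, and use the lower bound on $t$ to ensure $|\be^{(t)}_u| \leqslant |\alpha_1|$. The only cosmetic difference is that the paper bounds $|\be^{(t)}_u| \leqslant n\bar{\lambda}_3^t$ (a slightly looser per-entry bound) rather than your $\sqrt{n}\,\bar{\lambda}_3^t$, but both yield $|\be^{(t)}_u| \leqslant |\alpha_1|$ under the stated lower bound on $t$.
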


\begin{proof}
From~\eqref{eq:expecvalu} it follows that, if $|\alpha_1 + \be^{(t)}_u| < 
|\alpha_2| \, \bar{\lambda}_2^{t} \, | \bchi_u \pm \varepsilon |$, then the sign of
the expected value of a non $\varepsilon$-bad node $u$ indicates the block node $u$ 
belongs to. Notice that, for
\begin{equation}\label{eq:lb_timewindow_expec}
t \geqslant \frac{1}{\log(1/\bar{\lambda}_3)} \log (n / |\alpha_1|)
\end{equation}
we have that
\[
|\alpha_1 + \be^{(t)}_u| 
\leqslant |\alpha_1| + |\be^{(t)}_u|
\leqslant |\alpha_1| + n \bar{\lambda}_3^{(t)}
\leqslant 2 |\alpha_1|
\]
And for
\begin{equation}\label{eq:ub_timewindow_expec}
t \leqslant \frac{1}{\log (1/\bar{\lambda}_2)} 
\log\left(\frac{|\alpha_2|(1 - \varepsilon)}{2 |\alpha_1|}\right) 
\end{equation}
we have that
\[
2 |\alpha_1| 
\leqslant |\alpha_2| \, \bar{\lambda}_2^{t} \, (1 - \varepsilon) 
\leqslant |\alpha_2| \, \bar{\lambda}_2^{t} \, |\bchi_u \pm \varepsilon|
\]
Hence, if the time-window defined by~\eqref{eq:lb_timewindow_expec} 
and~\eqref{eq:ub_timewindow_expec} is non-empty then for all $t$ in it 
\[
\frac{1}{\log(1/\bar{\lambda}_3)} \log (n / |\alpha_1|)
\, \leqslant t \leqslant \,
\frac{1}{\log (1/\bar{\lambda}_2)} 
\log\left(\frac{|\alpha_2|(1 - \varepsilon)}{2 |\alpha_1|}\right) 
\]
the sign of the expected value of a non-bad node $u$ equals the sign of 
$\alpha_2 \bchi_u$.
\end{proof}

\begin{lemma}[Projection of the initial random vector.]\label{lm:initialrandom}
Let $\bx $ be chosen uniformly at random in $\{-1,1\}^n$. Then,  two  absolute constants $\beta_1,\beta_2 >0$ exist,
such that:
\begin{itemize}
\item  with probability at least $\beta_1$, it holds that $\alpha_2 = \alpha_2(\bx) > 0$,
\item with probability at least $\beta_2$, it holds that $|\alpha_2| 
> 2 |\alpha_1| / (1-\varepsilon)$.
\end{itemize}
\end{lemma}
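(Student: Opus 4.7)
The plan is to express $\alpha_1$ and $\alpha_2$ as normalized Rademacher linear forms against an orthonormal pair of directions. From the computation in Lemma~\ref{lemma:expectedvalueimplicit} (using $\bfied - \bfied_\perp = \bfied_\parallel$) one can write $\alpha_1 = \langle \bx, \bw_1\rangle/\sqrt{n}$ and $\alpha_2 = \langle \bx, \bw_2\rangle/(\sqrt{n}\,\|\bfied_\parallel\|)$, where $\bw_1 := \bone/\sqrt{n}$ and $\bw_2 := \bfied_\parallel/\|\bfied_\parallel\|$ are the first two (orthonormal) eigenvectors of $\avgW$. Set $U := \langle \bx, \bw_1\rangle$ and $V := \langle \bx, \bw_2\rangle$; both are centered Rademacher sums against unit vectors, with $\mathbb{E}[U^2] = \mathbb{E}[V^2] = 1$, and with joint law close to that of two iid standard Gaussians by the multivariate CLT.

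For the first assertion, the involution $\bx \mapsto -\bx$ flips the sign of $V$, and hence of $\alpha_2$, so $\mathbb{P}[\alpha_2 > 0] = (1 - \mathbb{P}[V = 0])/2$. The standard fourth-moment bound for Rademacher sums yields $\mathbb{E}[V^4] \le 3 \|\bw_2\|^4 = 3$, and the Paley--Zygmund inequality applied to $V^2$ gives $\mathbb{P}[V^2 \ge 1/2] \ge (1 - 1/2)^2 / 3 = 1/12$. Hence $\mathbb{P}[V \ne 0] \ge 1/12$, which proves the first claim with $\beta_1 = 1/24$.

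For the second assertion, since $\|\bfied_\parallel\| \le 1$ we have $|\alpha_2|/|\alpha_1| \ge |V|/|U|$, so it suffices to lower bound $\mathbb{P}[|V| > K |U|]$ with the fixed positive constant $K := 2/(1 - \varepsilon)$. Orthonormality of $\bw_1, \bw_2$ lets us apply a quantitative multivariate CLT (e.g.\ the Berry--Esseen theorem for convex sets, applied to each of the four convex cones whose union is $\{|V| > K|U|\}$), giving that the law of $(U, V)$ is within $O(n^{-1/2})$ of that of iid standard Gaussians $(G_1, G_2)$ on this event. In the Gaussian limit $G_2/G_1$ is Cauchy-distributed, so $\mathbb{P}[|G_2| > K|G_1|] = 1 - (2/\pi) \arctan K$ is a positive absolute constant, and taking $\beta_2$ to be half this value handles all sufficiently large $n$. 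The only delicate step is the quantitative CLT input; a self-contained alternative combines Paley--Zygmund on $V$ (giving $\mathbb{P}[|V| \ge c] \ge p_1$) with Chebyshev on $U$ (giving $\mathbb{P}[|U| > c/K] \le K^2/c^2$) and a small-ball correction to handle their correlation, yielding the same conclusion with explicit but weaker constants. Since the lemma asks only for the existence of absolute constants $\beta_1, \beta_2 > 0$, any of these routes suffices.
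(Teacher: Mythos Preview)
Your argument for the first item is correct and clean: the sign symmetry $\bx\mapsto -\bx$ together with Paley--Zygmund on $V^2$ (using $\E[V^4]\le 3$ for any unit Rademacher form) gives $\Pr[\alpha_2>0]\ge 1/24$, and this needs no assumption on $\bw_2$ beyond $\|\bw_2\|=1$.

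The second item has a genuine gap. The multivariate Berry--Esseen error for $(U,V)=\sum_i\bigl(n^{-1/2},(\bw_2)_i\bigr)X_i$ is governed by $\sum_i\bigl(n^{-1}+(\bw_2)_i^2\bigr)^{3/2}$, which is of order $\|\bw_2\|_\infty$ in general, not $O(n^{-1/2})$; orthonormality of $\bw_1,\bw_2$ fixes the limiting covariance but says nothing about the rate. Your fallback is also insufficient as written: Paley--Zygmund gives $\Pr[|V|\ge c]\ge (1-c^2)^2/3$ only for $c<1$, while Chebyshev's $\Pr[|U|>c/K]\le K^2/c^2$ is vacuous unless $c>K=2/(1-\varepsilon)>2$. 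These ranges for $c$ are disjoint, so no union-bound combination works, and the unspecified ``small-ball correction'' would have to carry the entire argument.

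The paper's route is different and avoids this obstacle: rather than work with the eigenvector $\bw_2$, it invokes Lemma~\ref{lem:toptwoprojections}, which compares $\langle\bx,\bone\rangle$ and $\langle\bx,\bchi\rangle$. The point there is that $A=\langle\bx,\bone_{V_1}\rangle$ and $B=\langle\bx,\bone_{V_2}\rangle$ are \emph{independent} and identically distributed, with $\langle\bx,\bone\rangle=A+B$ and $\langle\bx,\bchi\rangle=A-B$; this independence (equivalently, the fact that all coordinates of $\bchi/\sqrt n$ are $\pm n^{-1/2}$, so the third-moment sum genuinely is $O(n^{-1/2})$) is what makes the comparison elementary. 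The passage from $\langle\bx,\bchi\rangle$ back to $\alpha_2$ then uses the smallness of $\|\bfied_\perp\|$ from Lemma~\ref{lm:f-bound} under the hypotheses of Theorem~\ref{thm:mainavg}. If you wish to salvage the eigenvector-based route, you would need to import that same bound to control $\|\bw_2-\bchi/\sqrt n\|$ and hence the Berry--Esseen error.
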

\skproof
Both $\alpha_1$ and $\alpha_2$ are linear combinations of a sum of $n$ independent Rademacher random variables.
Then, the two claims can be derived easily from Lemma \ref{lem:toptwoprojections}.
\qed

\section{Proofs for Section \ref{ssec:secmom}} \label{apx:smalllambda-sign}
\subsection{Proof of Theorem \ref{thm:mom.bound}} \label{apx:proofofhmmom.bound}
From the fact that   random matrix $W \sim \mathcal{W}$ defined by one step of our averaging process is  symmetric and 
idempotent ($W^\intercal W = W$) we get the following upper bound on the 
expected squared norm of $\by + \bz$ at the next step as a function of their
squared norm at the current step. For readability sake,   in the following proofs of this section 
we use $\by'$ and $\bz'$ for random variables $\by^{(t+1)}$ and $\bz^{(t+1)}$ 
conditional on the state at round $t$ being $\bx^{(t)} = \bx = \bx_{\|} + 
\by + \bz$.

\begin{lemma}\label{lem:yplusznorm} 
Let $\bx = \bx_{\|} + \by +\bz \in [-1,1]^n$ be an arbitrary vector of states. 
After one step of Algorithm~\ref{algo:sparse_update} it holds that
\[
\Expec{}{\norm{\by^{(t+1)}+\bz^{(t+1)}}^{2} \,|\, \bx^{(t)} = \bx}
\leq \left(1-\frac{\lambda_{2}}{n}\right)\norm{\by}^{2}
+ \left(1-\frac{\lambda_{3}}{n}\right)\norm{\bz}^{2}.
\]
\end{lemma}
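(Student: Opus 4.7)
The plan is to exploit three structural facts about a single step of \aveproc{$(1/2)$}: (i) every realization of the random averaging matrix $W$ fixes $\bone$, so the $\bx_{\|}$-component of the state is preserved along every sample path; (ii) $W$ is symmetric and idempotent, so $\norm{W\bv}^{2} = \bv^{\intercal} W \bv$ for any $\bv$, which turns a second-moment computation into a first-moment one; (iii) for an $(n,d,b)$-clustered regular graph, $\avgW = I - \lapl/n$ admits an orthonormal eigenbasis in which $\by$ and $\bz$ lie in orthogonal eigenspaces, so the cross-term in the resulting quadratic form vanishes.

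First, I would observe that every averaging step preserves the sum of node values, i.e., $W\bone = \bone$ for every realization of $W$ (a direct calculation on the three types of rows in \eqref{randommat}). Consequently $Q_{1}\bx^{(t+1)} = Q_{1}\bx^{(t)} = \bx_{\|}$ along every sample path and
\[
\by^{(t+1)} + \bz^{(t+1)} \;=\; \bx^{(t+1)} - \bx_{\|} \;=\; W(\by + \bz).
\]
Since $W$ is symmetric and $W^{2} = W$ (direct $2\times 2$ check on the non-trivial block), we get $W^{\intercal} W = W$, hence
\[
\norm{W(\by+\bz)}^{2} \;=\; (\by+\bz)^{\intercal} W^{\intercal} W (\by+\bz) \;=\; (\by+\bz)^{\intercal} W (\by+\bz).
\]
Taking expectation over the random active edge and using Observation~\ref{obs:expecW} yields
\[
\Expec{}{\norm{\by^{(t+1)} + \bz^{(t+1)}}^{2} \,|\, \bx^{(t)} = \bx}
\;=\; (\by+\bz)^{\intercal}\, \avgW\, (\by+\bz).
\]

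Next, I would invoke the $d$-regularity of $G$ to rewrite $\avgW = I - L/(2m) = I - \lapl/n$, so $\avgW$ and $\lapl$ share the same eigenbasis. Under the hypothesis $\lambda_{3} > \lambda_{2} = 2b/d$, the second eigenspace of $\avgW$ is spanned by $\bchi$, so $\by$ is a scalar multiple of $\bchi$ with $\avgW \by = (1 - \lambda_{2}/n)\,\by$, while $\bz$ lies in the span of eigenvectors of $\lapl$ with eigenvalues $\lambda_{3} \le \cdots \le \lambda_{n}$. Orthogonality of eigenspaces kills the cross term $\by^{\intercal} \avgW \bz = 0$, and the Rayleigh-quotient bound on the restriction of $\avgW$ to $\mathrm{span}(\bone,\bchi)^{\perp}$ gives $\bz^{\intercal} \avgW \bz \le (1 - \lambda_{3}/n)\,\norm{\bz}^{2}$. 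Summing the two diagonal contributions yields exactly the claimed bound.

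The main conceptual point — and the reason idempotency of $W$ is essential — is the identity $\norm{W\bv}^{2} = \bv^{\intercal} W \bv$: without it, one would be forced to analyse $\Expec{}{W^{\intercal} W}$ separately, which is a strictly more delicate object than $\avgW$. Once this identity is in place, the remainder is a short spectral decomposition that is only possible because regularity forces $\avgW$ and $\lapl$ to share the same eigenbasis, so I do not anticipate any further technical obstacle in the argument.
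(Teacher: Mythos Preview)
Your proposal is correct and follows essentially the same approach as the paper: use $W^{\intercal}W = W$ to turn $\Expec{}{\norm{W(\by+\bz)}^{2}}$ into $(\by+\bz)^{\intercal}\avgW(\by+\bz)$, then exploit the spectral decomposition of $\avgW = I - \lapl/n$ in the regular case. The paper's proof is terser (it omits the preliminary justification that $\by^{(t+1)}+\bz^{(t+1)} = W(\by+\bz)$ and the explicit vanishing of the cross term), but the ideas are identical.
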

\begin{proof} 
Since random matrix $W$ is symmetric and idempotent, it holds that
\begin{align*}
\Expec{}{\norm{\by'+\bz'}^{2}} & =\Expec{}{\norm{W\left(\by+\bz\right)}^{2}}\\
 & =\left(\by+\bz\right)^{T}\left(I-\frac{1}{n} L\right)\left(\by+\bz\right)\\
 & =\norm{\by}^{2}+\norm{\bz}^{2}-\frac{1}{n}\by^{T} L\by-\frac{1}{n}\bz^{T} L\bz\\
 & \leq\left(1-\frac{\lambda_{2}}{n}\right)\norm{\by}^{2}+\left(1-\frac{\lambda_{3}}{n}\right)\norm{\bz}^{2}.
\end{align*}
\end{proof}

\noindent
The squared norm of $\by$ at the next step can be lower bounded as a function
of the squared norms of $\by$ and $\bz$ at the current time step as follows. 
If the underlying graph is $(n,d,b)$-regular, we can get an upper bound as well.

\begin{lemma} \label{lem:ynorm}  
Let $\bx = \bx_{\|} + \by +\bz \in [-1,1]^n$ be an arbitrary vector of states. 
After one step of Algorithm~\ref{algo:sparse_update} it holds that 
\[
\Expec{}{\norm{\by^{(t+1)}}^{2} \,|\, \bx^{(t)} = \bx}
\geqslant \left(1-\frac{2\lambda_{2}}{n}\right)\norm{\by}^{2}.
\]
Moreover, if the underlying graph $G$ is an $(n, d, b)$-clustered regular graph
with $\lambda_2 = 2b/d = o(\lambda_3 / \log n)$ we also have that 
\[
    \Expec{}{\norm{\by^{(t+1)}}^{2} \,|\, \bx^{(t)} = \bx} 
    \leqslant \left(1-\frac{2\lambda_{2}}{n}\right)\norm{\by}^{2}
    + 2 \frac{\lambda_{2}}{n^{2}}\left(\norm{\by}^{2}
    + \norm{\bz}^{2}\right).
\]
\end{lemma}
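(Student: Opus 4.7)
The plan is to derive the lower bound via a short Jensen-type argument, and then to obtain the upper bound by an explicit edge-by-edge expansion of $W\bchi$ that crucially uses the regular clustered structure.

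For the lower bound, since the graph is $(n,d,b)$-clustered regular, $\bchi/\sqrt n$ is a unit eigenvector of $\avgW = I - \lapl/n$ with eigenvalue $1 - \lambda_2/n$. Writing $\by^{(t+1)} = Q_2(W_{t+1}\bx)$, symmetry of $W_{t+1}$ and the eigenvector property give
\[
\Expec{}{\by^{(t+1)} \,|\, \bx^{(t)} = \bx} \, = \, Q_2 \avgW \bx \, = \, (1-\lambda_2/n)\,\by.
\]
Applying the trivial bound $\Expec{}{\|\by^{(t+1)}\|^2} \geqslant \|\Expec{}{\by^{(t+1)}}\|^2$ (non-negativity of the coordinate-wise variance) then yields $\Expec{}{\|\by^{(t+1)}\|^2} \geqslant (1-\lambda_2/n)^2\|\by\|^2 \geqslant (1 - 2\lambda_2/n)\|\by\|^2$.

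For the upper bound, I will work from the identity $\|\by^{(t+1)}\|^2 = (\bx^\intercal W \bchi)^2 / n$ and expand $W_e \bchi$ edge by edge. For an intra-cluster edge $e=\{i,j\}$, we have $W_e\bchi = \bchi$, while for a cut edge with $i \in V_1$, $j \in V_2$, $W_e\bchi = \bchi - (e_i - e_j)$. Setting $s = \langle \bx,\bchi\rangle$, so that $\|\by\|^2 = s^2/n$, one finds $(\bx^\intercal W_e \bchi)^2 = s^2$ on inner edges and $(s - (\bx_i - \bx_j))^2$ on cut edges. Averaging uniformly over the $m$ edges and using the identity $\sum_{e \in E(V_1,V_2)} (\bx_i - \bx_j) = b\langle \bx,\bchi\rangle = bs$, a direct consequence of $b$-regularity of each side of the cut, gives
\[
\Expec{}{\|\by^{(t+1)}\|^2} = \left(1 - \frac{2\lambda_2}{n}\right)\|\by\|^2 + \frac{1}{mn} \sum_{e \in E(V_1,V_2)} (\bx_i - \bx_j)^2.
\]

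It then remains to bound the cut-edge sum by $(2\lambda_2/n^2)(\|\by\|^2 + \|\bz\|^2)$. Decomposing $\bx = \bx_\| + \by + \bz$, the constant $\bx_\|$ drops out of $\bx_i - \bx_j$, and $\by_i - \by_j = 2s/n$ on each cut edge, so $(\bx_i - \bx_j)^2 = (2s/n + (\bz_i-\bz_j))^2$. Expanding and summing, the cross term $\sum_e(\bz_i - \bz_j) = b\langle \bz,\bchi\rangle$ vanishes since $\bz \perp \bchi$; the constant part contributes $4m_{1,2}s^2/n^2 = (2b/n)s^2$, which after division by $mn$ is exactly $(2\lambda_2/n^2)\|\by\|^2$. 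The residual square term equals $\bz^\intercal L_{\mathrm{cut}}\bz$, where $L_{\mathrm{cut}}$ is the Laplacian of the bipartite cross-subgraph, and that subgraph is $b$-regular on each side, so $L_{\mathrm{cut}}$ has spectral radius at most $2b$ and $\bz^\intercal L_{\mathrm{cut}}\bz \leqslant 2b\|\bz\|^2$; dividing by $mn$ gives a contribution of at most $(2\lambda_2/n^2)\|\bz\|^2$. Adding the three terms yields the claimed upper bound. The step I expect to be most delicate is the clean cancellation of the linear cross term in $\bz$ combined with the spectral bound on $L_{\mathrm{cut}}$; both rely essentially on the clustered-regular hypothesis, which is presumably why the authors flag the upper bound as working only in this restricted regime.
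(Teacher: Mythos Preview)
Your proposal is correct and follows essentially the same route as the paper. Both arguments hinge on the identical observations: only cut edges perturb $\bchi$, the linear cross term in $\bz$ vanishes by $b$-regularity of the cut together with $\bz\perp\bchi$, and the quadratic $\bz$ term is controlled by the spectral radius $2b$ of the bipartite cut Laplacian. Your organization via $\|\by'\|^2=(\bx^\intercal W\bchi)^2/n$ and casework on inner vs.\ cut edges is a clean repackaging of the paper's expansion of $\Expec{}{(\bv_2^T L_{u,v}(\by+\bz))^2}$; the computations match term by term. For the lower bound, your Jensen step $\Expec{}{\|\by'\|^2}\geqslant\|\Expec{}{\by'}\|^2=(1-\lambda_2/n)^2\|\by\|^2$ is a mild rephrasing of the paper's ``drop the nonnegative quadratic remainder'' argument and in fact yields a marginally sharper constant.
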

\begin{proof}
Let $\{u,v\} \in E$ be the random edge sampled at step $t$, call $W_{u,v} \sim 
\mathcal{W}$ be the corresponding random matrix with   $L_{u,v}$ be such that
$W = I - \frac{1}{2} L_{u,v}$. 
As for the lower bound we have 
\begin{align*}
\Expec{}{\norm{\by'}^{2}} & =\Expec{}{\norm{Q_2W_{u,v}\left(\by+\bz\right)}^{2}}\\
 & =\Expec{}{\norm{Q_2\left(I-\frac{1}{2}L_{u,v}\right)\left(\by+\bz\right)}^{2}}\\
 & =\Expec{}{\norm{\by-\frac{1}{2}Q_2L_{u,v}\left(\by+\bz\right)}^{2}}\\
 & =\Expec{}{\norm{\by}^{2}-\by^{T}Q_2L_{u,v}\left(\by+\bz\right)+\norm{\frac{1}{2}Q_2L_{u,v}\left(\by+\bz\right)}^{2}}\\
 & \geq\norm{\by}^{2}-\by^{T}Q_2\Expec{}{L_{u,v}}\left(\by+\bz\right)\\
 & =\norm{\by}^{2}-\frac{1}{n}\by^{T}Q_2L\by\\
 & =\norm{\by}^{2}\left(1-\frac{2\lambda_{2}}{n}\right),
\end{align*}
where the last equality follows since $Q_2$ is the projector along the 
direction of $\bv_2 = \bchi / \| \bchi \|$, which in turn is $L$'s second eigenvector.

\noindent
As for the upper bound, it holds that
\begin{align}\label{eq:ynextubound}
\Expec{}{\norm{\by'}^{2}} & =\Expec{}{\norm{Q_2W_{u,v}\left(\by+\bz\right)}^{2}} 
\nonumber\\
 & =\Expec{}{\norm{Q_2\left(I-\frac{1}{2}L_{u,v}\right)\left(\by+\bz\right)}^{2}}
\nonumber \\
 & =\Expec{}{\norm{\by-\frac{1}{2}Q_2L_{u,v}\left(\by+\bz\right)}^{2}}
\nonumber\\
& =\Expec{}{\norm{\by}^{2}-\by^{T}Q_2L_{u,v}\left(\by+\bz\right)
+\norm{\frac{1}{2}Q_2L_{u,v}\left(\by+\bz\right)}^{2}}
\nonumber\\
 & =\left(1-\frac{2\lambda_{2}}{n}\right)\norm{\by}^{2}
+\Expec{}{\left(\frac{1}{2}\bv_{2}^{T}L_{u,v}\left(\by+\bz\right)\right)^{2}}
\nonumber\\
 & =\left(1-\frac{2\lambda_{2}}{n}\right)\norm{\by}^{2}
+\frac{1}{4}\norm{\by}^{2}\Expec{}{\left(\bv_{2}^{T}L_{u,v}\bv_{2}\right)^{2}}
\nonumber\\
 & \qquad+\frac{1}{2}\Expec{}{\bv_{2}^{T}L_{u,v}\by\bv_{2}^{T}L_{u,v}\bz}
+\frac{1}{4}\Expec{}{\left(\bv_{2}^{T}L_{u,v}\bz\right)^{2}}
\nonumber \\
 & =\left(1-\frac{2\lambda_{2}}{n}\right)\norm{\by}^{2}
+\frac{1}{4}\norm{\by}^{2}\Expec{}{\left(\bv_{2}\left(u\right)
-\bv_{2} \left(v\right)\right)^{4}}
\nonumber \\
& +\frac{1}{2}\norm{\by}\Expec{}{\left(\bv_{2}\left(u\right)
-\bv_2\left(v\right)\right)^{3}\left(\bz\left(u\right)
-\bz\left(v\right)\right)}
\nonumber \\
& \qquad +\frac{1}{4}\Expec{}{\left(\bv_{2}\left(u\right)
-\bv_2\left(v\right)\right)^{2}\left(\bz\left(u\right)-\bz\left(v\right)\right)^{2}}.
\end{align}
Next, note that we have
\begin{align}\label{eq:ynextuboundA}
\Expec{}{\left(\bv_{2}\left(u\right)-\bv_2\left(v\right)\right)^{4}} & = 
\frac{1}{nd}\sum_{(u, v)\in E(V_1,V_2)}\frac{4}{n^2} = \frac{b}{dn^2} = 
\frac{\lambda_2}{2n^2},
\end{align}
where we used that $\lambda_2 = 2b/d$ and the fact that $\bv_2(u) = 1/\sqrt{n}$ 
if $u$ belongs to first community and $\bv_2(u) = -1/\sqrt{n}$ when $v$ belongs to the 
second community.
We further get that
\begin{align}\label{eq:ynextuboundB}
\Expec{}{\left(\bv_{2}\left(u\right)-\bv_2\left(v\right)\right)^{3}\left(\bz\left(u\right)-\bz\left(v\right)\right)} & = 
\frac{1}{nd}\cdot\frac{8}{n\sqrt{n}}\sum_{(u, v)\in 
E(V_1,V_2)}\left(\bz\left(u\right)-\bz\left(v\right)\right)
\nonumber\\
& = \frac{1}{nd}\cdot\frac{8}{n\sqrt{n}}\left(\sum_{u\in 
V_1}b\bz(u) - \sum_{v\in V_2}b\bz(v)\right) = 0,
\end{align} 
where it is understood that if $(u, v)$ belongs to the cut, then $u\in V_1$ and 
$v\in V_2$ and where, to derive the last equality, we recall that 
$\bz\perp\mathbf{span}\{\bone, \bchi \}$.

\noindent Finally, we get that  
\begin{align}\label{eq:ynextuboundC}
\Expec{}{\left(\bv_{2}\left(u\right)-\bv_2\left(v\right)\right)^{2}
\left(\bz\left(u\right)-\bz\left(v\right)\right)^{2}} 
& = \frac{8}{dn^2}  \sum_{(u, v)\in E(V_1,V_2)}(\bz(u) - \bz(v))^2 \\
& = \frac{8}{dn^2} \|\bz\|^2 \sum_{(u, v)\in E(V_1,V_2)}\frac{(\bz(u) 
- \bz(v))^2}{\|\bz\|^2} \nonumber \\
& \le \frac{16b}{dn^2} \|\bz\|^2
= \frac{8\lambda_2}{n^2} \|\bz\|^2,
\end{align} 
where the last inequality follows by observing that $\sum_{(u, v)\in E(V_1,V_2)}\frac{(\bz(u) - 
\bz(v))^2}{\|\bz\|^2}$ is the Rayleigh quotient of the unnormalized 
Laplacian of a bipartite $b$-regular graph and the largest possible 
eigenvalue is $2b$.
The thesis follows by using~\eqref{eq:ynextuboundA}, \eqref{eq:ynextuboundB}, and
\eqref{eq:ynextuboundC} in~\eqref{eq:ynextubound}.
\end{proof}


\begin{lemma}\label{lem:znorm}
Let $\bx = \bx_{\|} + \by +\bz \in [-1,1]^n$ be an 
arbitrary vector of states. 
After one step of Algorithm~\ref{algo:sparse_update} it holds that 
\[
\Expec{}{\norm{\bz^{(t+1)}} \,|\, \bx^{(t)} = \bx}^{2}
\leq \frac{\lambda_{2}}{n}\norm{\by}^{2}
+ \left(1-\frac{\lambda_{3}}{n}\right)\norm{\bz}^{2}.
\]
\end{lemma}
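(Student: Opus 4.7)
My plan is to avoid redoing a computation from scratch and instead derive the bound as an immediate consequence of the two preceding lemmas, using the orthogonality of $\by^{(t+1)}$ and $\bz^{(t+1)}$.

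First I would observe that, deterministically and at every step, $\by^{(t+1)}$ and $\bz^{(t+1)}$ lie in mutually orthogonal subspaces: by definition $\by^{(t+1)} = Q_2 \bx^{(t+1)}$ and $\bz^{(t+1)} = Q_{3\cdots n} \bx^{(t+1)}$, and $Q_2$ and $Q_{3\cdots n}$ project onto orthogonal eigen-subspaces of $\avgW$ (this uses that $\avgW$ is symmetric, hence admits an orthonormal eigenbasis). Therefore
\[
\|\by^{(t+1)} + \bz^{(t+1)}\|^2 \;=\; \|\by^{(t+1)}\|^2 + \|\bz^{(t+1)}\|^2
\]
pointwise on the sample space.

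Second, I would take conditional expectations on both sides given $\bx^{(t)} = \bx$, rearrange, and sandwich the right-hand side using the two preceding lemmas, namely the upper bound of Lemma~\ref{lem:yplusznorm} for $\Expec{}{\|\by^{(t+1)}+\bz^{(t+1)}\|^2}$ and the lower bound of Lemma~\ref{lem:ynorm} for $\Expec{}{\|\by^{(t+1)}\|^2}$:
\[
\Expec{}{\|\bz^{(t+1)}\|^2 \,|\, \bx^{(t)} = \bx}
= \Expec{}{\|\by^{(t+1)}+\bz^{(t+1)}\|^2 \,|\, \bx^{(t)} = \bx} - \Expec{}{\|\by^{(t+1)}\|^2 \,|\, \bx^{(t)} = \bx}.
\]
Plugging in the two bounds, the $\|\by\|^2$ coefficient becomes $(1-\lambda_2/n) - (1-2\lambda_2/n) = \lambda_2/n$, while the $\|\bz\|^2$ coefficient is left unchanged as $(1-\lambda_3/n)$, giving exactly the claimed inequality.

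There is no real obstacle here since the hard work was done in the two preceding lemmas; the only thing to double-check is that the orthogonality step is valid pointwise (not only in expectation), which it is because $Q_2$ and $Q_{3\cdots n}$ are orthogonal projections that do not depend on the random matrix $W_t$. Note in particular that this derivation does not require any additional regularity assumption beyond those already used in Lemma~\ref{lem:ynorm}, so the statement is proved in exactly the same regime as the other lemmas in this subsection.
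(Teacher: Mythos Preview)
Your proposal is correct and matches the paper's own proof essentially verbatim: the paper also invokes Pythagoras to write $\Expec{}{\norm{\bz'}^{2}} = \Expec{}{\norm{\by'+\bz'}^{2}} - \Expec{}{\norm{\by'}^{2}}$, then subtracts the lower bound of Lemma~\ref{lem:ynorm} from the upper bound of Lemma~\ref{lem:yplusznorm}. Your additional remark that only the \emph{lower} bound of Lemma~\ref{lem:ynorm} is needed here (and that this part does not rely on the $(n,d,b)$-regularity assumption) is a correct and useful observation.
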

\begin{proof}
From Pythagoras' Theorem and Lemmas~\ref{lem:yplusznorm} and~\ref{lem:ynorm},
we get  
\begin{align*}
\Expec{}{\norm{\bz'}^{2}} 
& = \Expec{}{\norm{\by'+\bz'}^{2}}-\Expec{}{\norm{\by'}^{2}}\\
& \leq\left(1-\frac{\lambda_{2}}{n}\right)
\norm{\by}^{2}+\left(1-\frac{\lambda_{3}}{n}\right)
\norm{\bz}^{2}-\left(1-\frac{2\lambda_{2}}{n}\right)\norm{\by}^{2}\\
& =\frac{\lambda_{2}}{n}\norm{\by}^{2}+\left(1-\frac{\lambda_{3}}{n}\right)
\norm{\bz}^{2}.
\end{align*}

\end{proof}

\noindent
Finally, by unrolling the double recursion, we get that the expected squared 
norm of $\bz$ and $\by$ at round $t$ satisfy the following inequality.

\begin{lemma}\label{lemma:unrolling}
Let $G$ be  an $(n,d,b)$-clustered regular graph with $\lambda_2 
= \frac{2b}{d} = o\left(\lambda_3 / \log n \right)$.
For every starting state $\bx^{(0)} \in \{-1, +1\}^n$ and for every $t \in \mathbb{N}$
it holds that
\[
\Expec{}{\norm{\bz^{\left(t\right)}}^{2}} 
\leq \frac{\lambda_{2}}{\lambda_{3}}\left(1-\frac{\lambda_{2}}{n}\right)^{-t} 
\Expec{}{\norm{\by^{\left(t\right)}}^{2}} 
+\left(1-\frac{\lambda_{3}}{n}\right)^{t} \Expec{}{\|\bz^{(0)}\|^2} \, .
\]
\end{lemma}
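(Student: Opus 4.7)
The goal is to decouple the pair of recursions for $y_t := \E[\|\by^{(t)}\|^2]$ and $z_t := \E[\|\bz^{(t)}\|^2]$ implied by the previous lemmas. Taking total expectations, Lemma~\ref{lem:ynorm} and Lemma~\ref{lem:znorm} respectively give the one-step scalar bounds $y_{t+1} \geq (1 - 2\lambda_2/n)\, y_t$ and $z_{t+1} \leq (\lambda_2/n)\, y_t + (1 - \lambda_3/n)\, z_t$. The strategy is a standard ``variation-of-constants'' unrolling: first express $z_t$ as an accumulated convolution of the $y_s$'s with the decay kernel $(1 - \lambda_3/n)^{t-1-s}$, then replace each $y_s$ by a backward bound in terms of $y_t$ obtained from the $y$-recursion, and finally evaluate the resulting geometric sum.

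Concretely, a short induction on $t$ applied to the $z$-recursion gives
\[
z_t \;\leq\; \frac{\lambda_2}{n}\sum_{s=0}^{t-1}\left(1 - \frac{\lambda_3}{n}\right)^{t-1-s} y_s \;+\; \left(1 - \frac{\lambda_3}{n}\right)^t z_0,
\]
where the two contributions model, respectively, the cumulative leakage from the $\by$-subspace into the $\bz$-subspace along the trajectory, and the pure exponential contraction of the initial $z_0$. Iterating the lower bound $y_{s+1} \geq (1 - 2\lambda_2/n)\, y_s$ backward yields $y_s \leq (1 - 2\lambda_2/n)^{-(t-s)} y_t$ for every $s \leq t$. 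Inserting this into the sum above produces a geometric progression in $s$ whose common ratio is $(1 - 2\lambda_2/n)/(1 - \lambda_3/n)$, which strictly exceeds $1$ and is bounded away from $1$ by $\Omega(\lambda_3/n)$ whenever $\lambda_3 > 2\lambda_2$. Under this condition the sum is dominated by its last term, and a direct telescoping calculation bounds it by $n/(\lambda_3 - 2\lambda_2)$.

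Combining the two estimates gives $z_t \leq \lambda_2/(\lambda_3 - 2\lambda_2)\cdot y_t + (1 - \lambda_3/n)^t z_0$, and the form stated in the lemma follows: the prefactor is at most $(\lambda_2/\lambda_3)(1 - \lambda_2/n)^{-t}$ once one absorbs the benign gap between $\lambda_3$ and $\lambda_3 - 2\lambda_2$ and the mild inflation $(1 - \lambda_2/n)^{-t} \geq 1$, both of which are harmless under the hypothesis $\lambda_2 = o(\lambda_3/\log n)$.

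\textbf{Main obstacle.} The delicate ingredient is the sign and magnitude of $\lambda_3 - 2\lambda_2$, which controls the common ratio of the geometric sum. If this quantity were not bounded away from zero by a sufficient margin, the sum would not collapse to $\Theta(n/\lambda_3)$, and the clean prefactor $\lambda_2/\lambda_3$ would disintegrate; this is exactly the point where the sparse-cut assumption $\lambda_2 = o(\lambda_3/\log n)$ is used and the reason the argument of this section does not extend to the dense-cut regime treated in Section~\ref{ssec:pl-analysis}.
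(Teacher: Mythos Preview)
Your approach is essentially the paper's: unroll the $z$-recursion from Lemma~\ref{lem:znorm} into a convolution, then control each $y_s$ in terms of $y_t$ via the lower bound of Lemma~\ref{lem:ynorm}. The only cosmetic difference is that the paper first replaces every $y_s$ by $\max_{i\le t-1} y_i$ (so the remaining geometric sum in $(1-\lambda_3/n)^{t-1-s}$ collapses cleanly to at most $n/\lambda_3$, giving the prefactor $\lambda_2/\lambda_3$ directly) and only afterwards bounds the max by $(1-2\lambda_2/n)^{-t}y_t$; you instead substitute the per-term bound into the sum and evaluate a geometric series with ratio $(1-2\lambda_2/n)/(1-\lambda_3/n)$, arriving at the prefactor $\lambda_2/(\lambda_3-2\lambda_2)$.

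One small caveat: your last sentence asserting that $\lambda_2/(\lambda_3-2\lambda_2)\le(\lambda_2/\lambda_3)(1-\lambda_2/n)^{-t}$ ``for every $t$'' is not literally true for small $t$ (when $(1-\lambda_2/n)^{-t}\approx 1$). This is harmless, since under the hypothesis $\lambda_2=o(\lambda_3/\log n)$ your prefactor is at most $2\lambda_2/\lambda_3$, which is all that is used downstream (in the proof of Theorem~\ref{thm:mom.bound} only $t\ge 3(n/\lambda_3)\log n$ matters and there $(1-\lambda_2/n)^{-t}$ absorbs any constant). The paper's own writeup has an analogous slip at the same spot (it passes from $(1-2\lambda_2/n)^{-t}$ to $(1-\lambda_2/n)^{-t}$ in the wrong direction), so neither argument yields the stated constant exactly for all $t$, but both give what is needed.
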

\begin{proof}
We first prove the following inequality

\begin{align}\label{eq:zunroll}
\Expec{}{\norm{\bz^{\left(t\right)}}^{2}}
\leq\frac{\lambda_{2}}{\lambda_{3}}\max_{i}\Expec{}{\norm{\by^{\left(i\right)}}^{2}}
+\left(1-\frac{\lambda_{3}}{n}\right)^{t}\Expec{}{\norm{\bz^{\left(0\right)}}^{2}}.
\end{align}
Indeed, from Lemma~\ref{lem:znorm} we get 
\begin{align*}
\Expec{}{\norm{\bz^{\left(t\right)}}^{2}} & 
\leq\frac{\lambda_{2}}{n}\Expec{}{\norm{\by^{\left(t-1\right)}}^{2}}
+ \left(1-\frac{\lambda_{3}}{n}\right)\Expec{}{\norm{\bz^{\left(t-1\right)}}^{2}}\\
& \leq \frac{\lambda_{2}}{n}\sum_{i\leq t-1}\Expec{}{\norm{\by^{\left(i\right)}}^{2}}
\left(1-\frac{\lambda_{3}}{n}\right)^{t-1-i}
+ \left(1-\frac{\lambda_{3}}{n}\right)^{t}\Expec{}{\norm{\bz^{\left(0\right)}}^{2}} \\
& \leq \frac{\lambda_{2}}{n}\max_{i \leq t-1}\Expec{}{\norm{\by^{\left(i\right)}}^{2}}
\sum_{i}\left(1-\frac{\lambda_{3}}{n}\right)^{t-1-i}
+ \left(1-\frac{\lambda_{3}}{n}\right)^{t}\Expec{}{\norm{\bz^{\left(0\right)}}^{2}} \\
& = \frac{\lambda_{2}}{n}\max_{i\leq t-1}\Expec{}{\norm{\by^{\left(i\right)}}^{2}}
\frac{1-\left(1-\frac{\lambda_{3}}{n}\right)^{t}}{\frac{\lambda_{3}}{n}}
+ \left(1-\frac{\lambda_{3}}{n}\right)^{t}\Expec{}{\norm{\bz^{\left(0\right)}}^{2}} \\
& \leq \frac{\lambda_{2}}{\lambda_{3}}\max_{i \leq t-1} 
\Expec{}{\norm{\by^{\left(i\right)}}^{2}}+\left(1-\frac{\lambda_{3}}{n}\right)^{t}
\Expec{}{\norm{\bz^{\left(0\right)}}^{2}}.
\end{align*}

\noindent
Next we observe from Lemma~\ref{lem:ynorm}, for each $i$ we have 
\[
\Expec{}{\norm{\by^{\left(t\right)}}^{2}}
\geq\left(1-\frac{2\lambda_{2}}{n}\right)^{i}\Expec{}{\norm{\by^{\left(t-i\right)}}^{2}}
\]
which means that 
\[
\Expec{}{\norm{\by^{\left(t-i\right)}}^{2}}
\leq\left(1-\frac{2\lambda_{2}}{n}\right)^{-i}
\Expec{}{\norm{\by^{\left(t\right)}}^{2}}
\leq\left(1-\frac{2\lambda_{2}}{n}\right)^{-t}\Expec{}{\norm{\by^{\left(t\right)}}^{2}}.
\]
Hence,
\begin{align}\label{eq:ymaxub}
\max_{i \leq t-1}\Expec{}{\norm{\by^{\left(i\right)}}^{2}}
\leq\left(1-\frac{\lambda_{2}}{n}\right)^{-t}
\Expec{}{\norm{\by^{\left(t\right)}}^{2}}.
\end{align}
The thesis follows by using~\eqref{eq:ymaxub} in~\eqref{eq:zunroll}.
\end{proof}


\subsubsection*{Wrapping up: Proof of Theorem~\ref{thm:mom.bound}}
\label{ssec:prmombound}
Since $\by^{(t)} - \by^{(0)}$ and $\bz^{(t)}$ are orthogonal we can write 
\begin{equation}\label{eq:distfrominitial}
\Expec{}{\left\| \by^{(t)} + \bz^{(t)} - \by^{(0)} \right\|^2} 
= \Expec{}{\left\| \by^{(t)} - \by^{(0)} \right\|^2} 
+ \Expec{}{\left\| \bz^{(t)} \right\|^2}.
\end{equation}
The proof proceeds by bounding the two terms above separately. 
As for the first term, observe that
\begin{align}\label{eq:triangley}
\Expec{}{\norm{\by^{(t)} - \by^{(0)}}^2\, |\by^{(0)}}
& = \Expec{}{\norm{\by^{(t)}}^2} + \norm{\by^{(0)}}^2 
- 2 \left\langle \Expec{}{\by^{(t)}}, \by^{(0)} \right\rangle \nonumber \\
& \leq \Expec{}{\norm{\by^{(t)}}^2} + \norm{\by^{(0)}}^2 
- 2 \left(1 - \frac{\lambda_2}{n}\right)^t \norm{\by^{(0)}}^2,
\end{align}
where the last inequality follows from the lower bound in
Lemma~\ref{lem:ynorm}. By taking the expectation of both sides of the previous
inequality with respect to the random choice of the initial state, we
immediately have that \eqref{eq:triangley} holds if $\norm{\by^{(0)}}^2$ is
replaced by $\Expec{}{\norm{\by^{(0)}}^2}$. Moreover, the upper bound in 
Lemma~\ref{lem:ynorm} yields
\[
    \Expec{}{\left\|\by'\right\|^2} 
    \leqslant \left( 1 - \frac{2\lambda_2}{n} \right) \E[\|\by\|^2] 
    + 2 \frac{\lambda_2}{n^2} \left(\E[\norm{\by}^2] + \E[\norm{\bz}^2] \right).
\]
Recall the upper bound on $\Expec{}{\norm{\bz^{(t-1)}}^2}$ given by 
Lemma~\ref{lemma:unrolling}, namely
\begin{equation}
    \Expec{}{\norm{\bz^{(t-1)}}^2}\le 
    \frac{\lambda_{2}}{\lambda_{3}}\left(1-\frac{\lambda_{2}}{n}\right)^{-\left(t-1\right)}
    \Expec{}{\norm{\by^{(t-1)}}^{2}}+\left(1-\frac{\lambda_{3}}{n}\right)^{t-1} \Expec{}{\|\bz^{(0)}\|^2}.
    \label{eq:EZagain}
\end{equation}

\noindent
We thus get, for $t = \bigO(n/\lambda_2)$,
\begin{equation}
\Expec{}{\norm{\by^{(t)}}^{2}} 
\leqslant f(n)\Expec{}{\norm{\by^{(t-1)}}^{2}} + 
\frac{2\lambda_2}{n^2}\left(1 - \frac{\lambda_3}{n}\right)^{t-1}\Expec{}{\|\bz^{(0)}\|^2},
\end{equation}
where 
\begin{equation}
f(n) = \left(1 - \frac{2\lambda_2}{n}\right) + 
\frac{2\lambda_2}{n^2} + 
\frac{2\lambda_2^2}{\lambda_3n^2}\left(1 - 
\frac{\lambda_2}{n}\right)^{-(t-1)} \le \left(1 - 
\frac{\lambda_2}{n}\right) \leqslant 1, \label{eq:fn}
\end{equation}
for the values of $t$ under consideration.
We can unfold the recursion above to obtain
\begin{align}
    \label{eq:ubyt}
    \Expec{}{\norm{\by^{(t)}}^{2}} 
    & \leqslant \left(1 - \frac{\lambda_2}{n} \right)^t \Expec{}{\norm{\by^{(0)}}^2} + 
    \frac{2\lambda_2}{n^2}\Expec{}{\|\bz^{(0)}\|^2}
    \sum_{j=0}^{t-1}\left(1 - \frac{\lambda_3}{n}\right)^{j} \nonumber \\
    & \leqslant \left(1 - \frac{\lambda_2}{n} \right)^t \Expec{}{\norm{\by^{(0)}}^2} 
    + \frac{2 \lambda_2}{n \lambda_3}\Expec{}{\|\bz^{(0)}\|^2}.
\end{align}

As for the second term in~\eqref{eq:distfrominitial}, from \eqref{eq:EZagain} 
we have that 
\begin{align}\label{eq:ubzetat}
    \Expec{}{\norm{\bz^{\left(t\right)}}^{2}} 
    & \leqslant \frac{\lambda_{2}}{\lambda_{3}} 
    \left(1-\frac{\lambda_{2}}{n}\right)^{-t}
    \Expec{}{\norm{\by^{\left(t\right)}}^{2}} 
    + \left(1-\frac{\lambda_{3}}{n}\right)^{t} \Expec{}{\|\bz^{(0)}\|^2} \nonumber \\ 
    & \leqslant \frac{\lambda_2}{\lambda_3}
    \Expec{}{\norm{\by^{(0)}}^2} + \frac{2}{n}\left(1-\frac{\lambda_{2}}{n}\right)^{-t} 
\left(\frac{\lambda_2}{\lambda_3}\right)^2 \Expec{}{\|\bz^{(0)}\|^2}
    + \left(1-\frac{\lambda_{3}}{n}\right)^{t} \Expec{}{\|\bz^{(0)}\|^2} \nonumber \\
& \leqslant \frac{\lambda_2}{\lambda_3} \Expec{}{\|\by^{(0)}\|^2}
    + \left(\frac{4}{n} \left(\frac{\lambda_2}{\lambda_3}\right)^2 + 
    \frac{1}{n^3}\right) \Expec{}{\|\bz^{(0)}\|^2},
\end{align}
where in the last inequality we used the fact that 
$\left(1 - \lambda_2 / n \right)^{-t}\le 2 $ for $t \le n / (4\lambda_2)$
and the fact that $\left(1 - \lambda_3/ n \right)^t 
\leqslant 1/n^3$ for 
$t \geqslant (3n / \lambda_3) \log n$.

Using~\eqref{eq:ubyt} in~\eqref{eq:triangley} and then~\eqref{eq:triangley} 
and~\eqref{eq:ubzetat} in~\eqref{eq:distfrominitial} we get
\begin{align*}
    & \Expec{}{\left\| \by^{(t)} + \bz^{(t)} - \by^{(0)} \right\|^2}\\
    & \leq \left[ 1 - \left(1-\frac{\lambda_2}{n} \right)^t + \frac{\lambda_2}{\lambda_3} \right] 
    \Expec{}{\|\by^{(0)}\|^2}
    + \left[\frac{2 \lambda_2}{n \lambda_3} 
    + \frac{4}{n}\left(\frac{\lambda_2}{\lambda_3} \right)^2 + \frac{1}{n^3} \right]
    \Expec{}{\|\bz^{(0)}\|^2} \\
    & \leqslant 2 \frac{\lambda_2 t}{n} \Expec{}{\|\by^{(0)}\|^2} 
    + \frac{1}{n^3} \Expec{}{\|\bz^{(0)}\|^2}.
\end{align*}
The thesis then follows from the fact that $\Expec{}{\|\by^{(0)}\|^2} = 1$ and
$\Expec{}{\|\bz^{(0)}\|^2} \leqslant 1/n$.
\qed

\subsection{Proofs of Corollary \ref{cor:goodt} and of  Equation \eqref{eq:numepsgood}} \label{apx:proofofcoroll:goodt}
 
\noindent
- As for Corollary \ref{cor:goodt}, we first note that  $t$ meets the conditions 
of Theorem~\ref{thm:mom.bound}. Moreover, $t\le c\frac{n}{\lambda_3}\log 
n$, so we immediately have:
\begin{align*}
	&\Expec{}{B_t}\le \frac n {\epsilon^2\|\by^{(0)}\|^2}\Expec{}{\| \by^{(t)} + 
    \bz^{(t)}  - \by^{(0)} \|^2}\le \frac{4\lambda_2 t}{\epsilon^2}\le 
    4c\frac{\lambda_2}{\lambda_3}n\log n,
\end{align*}
which is at most $\epsilon^2 n$, whenever 
$\frac{\lambda_2}{\lambda_3}\le \frac{\epsilon^4}{4c\log n}$.
Hence, the second claim follows directly from Markov's inequality.

\noindent
- As for Equation \eqref{eq:numepsgood}, note that the definition of $\epsilon$-bad node implies:
\begin{equation*}
    |B_t|\le\frac n {\epsilon^2\|\by^{(0)}\|^2} \| \bx^{(t)} - \bx_{\|} - \by^{(0)} \|^2 
    \stackrel{(a)}{=} \frac n {\epsilon^2\|\by^{(0)}\|^2} \| \by^{(t)} + 
    \bz^{(t)}  - \by^{(0)} \|^2,
\end{equation*}
where in $(a)$ we used \eqref{eq:decomp}. This easily implies \eqref{eq:numepsgood}.

\subsection{Proof of Lemma \ref{lem:main}} \label{apx:proofofmainlm}
In order to prove Lemma \ref{lem:main}, we need some preliminary results.

\begin{claim}\label{claim.wy}
Let $W_1,\ldots, W_t$ be denote a sequence of matrices describing $t$ 
steps of the \avg protocol that includes $c$ cross edges and $t-c$ internal edges. Then
\[ 
    \| W_t \cdots W_1 \bchi - \bchi \|^2 \leq 4 c.
\]
Furthermore, if $W_1,\ldots, W_t$ are chosen randomly according to the 
\avg protocol we have,
\begin{align*}
    \Expec{}{\| W_t \cdots W_1 \bchi - \bchi \|^2} & \leq 4\frac {tb}{d}
    \text{ and }\\
    \Prob{}{ \| W_t \cdots W_1 \bchi - \bchi \|^2 \geq 8 t \frac bd } &\leq e^{- \Omega( bt/d)} .
\end{align*}
\end{claim}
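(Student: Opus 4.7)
The plan is to track the evolution of the vector $\bv^{(i)} := W_i \cdots W_1 \bchi$ (with $\bv^{(0)}=\bchi$) and control $\|\bv^{(i)}-\bchi\|^2$ one step at a time. The key structural observation is that $W_i$ acts as the identity outside the two coordinates $u,v$ of the active edge, and on those two coordinates it replaces both values by their average. Hence only those two coordinates contribute to the one-step change of $\|\bv^{(i)}-\bchi\|^2$, and a short computation (expand the squares and apply $X^2-Y^2 = (X-Y)(X+Y)$) yields the identity
\[
\|\bv^{(i)}-\bchi\|^2 - \|\bv^{(i-1)}-\bchi\|^2 \;=\; -\tfrac12 (a-b)^2 + (a-b)(s_u-s_v),
\]
where $a=\bv^{(i-1)}_u$, $b=\bv^{(i-1)}_v$, $s_u=\bchi_u$, $s_v=\bchi_v$.

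From this identity I would split into two cases. If $\{u,v\}$ is internal then $s_u=s_v$ and the change is $-\tfrac12(a-b)^2 \leq 0$. If $\{u,v\}$ crosses the cut then $(s_u-s_v)^2=4$, and completing the square rewrites the expression as $-\tfrac12\bigl((a-b)-(s_u-s_v)\bigr)^2 + \tfrac12(s_u-s_v)^2 \leq 2$. Summing these per-step bounds over $t$ rounds, only cross-edge steps can contribute, giving the deterministic bound $\|\bv^{(t)}-\bchi\|^2 \leq 2c \leq 4c$, which is the first assertion.

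For the random versions, I would use that under \avg each active edge is drawn independently and uniformly from $E$, so at each step the probability of picking a cross edge is $|E(V_1,V_2)|/|E| = (nb/2)/(nd/2) = b/d$. Hence $c$ is a sum of $t$ i.i.d.\ Bernoulli$(b/d)$ variables with mean $tb/d$. Taking expectations of the deterministic bound gives $\E[\|\bv^{(t)}-\bchi\|^2] \leq 2tb/d \leq 4tb/d$. For the tail bound, a standard multiplicative Chernoff bound yields $\Pr[c \geq 4tb/d] \leq e^{-\Omega(tb/d)}$, and on the complementary event the deterministic inequality gives $\|\bv^{(t)}-\bchi\|^2 \leq 2c < 8tb/d$, which is exactly the claimed concentration.

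The only step that requires any care is the per-step identity together with the completing-the-square estimate in the cross-edge case; after that, everything reduces to a one-line Chernoff bound plus the deterministic inequality, so I do not anticipate any obstacle.
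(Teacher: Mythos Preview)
Your proof is correct and actually proceeds along a slightly different (and marginally sharper) route than the paper. The paper expands $\|W_t\cdots W_1\bchi-\bchi\|^2 = \|W_t\cdots W_1\bchi\|^2 - 2\bchi^T W_t\cdots W_1\bchi + \|\bchi\|^2$, bounds the first term trivially by $n$, and then tracks the inner product $\bchi^T\bv^{(i)}$, showing it is preserved by internal-edge steps and drops by at most $2$ per cross-edge step; this yields $\|\bv^{(t)}-\bchi\|^2 \le 2n - 2(n-2c)=4c$. You instead track $\|\bv^{(i)}-\bchi\|^2$ directly, derive the exact one-step identity, and use completing the square to get an increment of at most $2$ per cross-edge step, giving the sharper bound $2c$. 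Both arguments are elementary; yours avoids the loss from the trivial estimate $\|\bv^{(i)}\|^2\le n$ and hence gains a factor of $2$ in the constant, though this is immaterial for the claim as stated. The random part (expectation and Chernoff tail for the Binomial count $c$) is handled identically in both.
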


\begin{proof} 
    We have
    \begin{align*}
        \| W_t \cdots W_1 \bchi - \bchi \|^2 
        &= \| W_t \cdots W_1 \bchi \|^2 - 2 \bchi^T W_t \cdots W_1 \bchi + \| \bchi \|^2 \\
        &\leq 2 n -  2 \bchi^T W_t \cdots W_1 \bchi.
    \end{align*}
    To complete the proof, observe that for every vector $\bw$ such that $\| \bw \|_\infty \leq 1$, we have
    $ \bchi^T W_{u,v} \bw = \bchi^T \bw $
    if $(u,v)$ is an internal edge and
    \begin{align*}
        \bchi^T W_{u,v} \bw 
            &= \bchi^T \left ( \bw  +  \frac 12 (w_u - w_v) \bone_v + \frac 12 (w_v - w_u) \bone_u \right) \\
            &\geq \bchi^T \bw  - 2.
    \end{align*}
    By induction we thus get
    \[ 
        \bchi^T W_t \cdots W_1 \bchi  \geq n - 2 c ,
    \]
    which implies the first part of the lemma. The furthermore part follows by noting that the average of $c$ is $bt/d$, and that
    Chernoff bounds imply that $c$ is concentrated around its average.
\end{proof}

\begin{claim}
\label{lem:one} 
    Let $t_1 = 6\frac{n}{\lambda_3} \log n$ and assume
	$\frac{\lambda_2}{\lambda_3}\le\frac{\epsilon^4}{96\log n}$. Then 
	\begin{equation}\label{eq:always_good}
		\Prob{}{\forall t\in\{t_1, \ldots , 2t_1\}: 
	|B_t|\le\frac{48\lambda_2t}{\epsilon^3}} \ge 1 - \epsilon - \frac{\log n}{n^2}.
	\end{equation}
\end{claim}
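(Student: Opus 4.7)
The claim is a uniform-in-$t$ bound on $|B_t|$ over the window $[t_1,2t_1]$. A naive union bound fails because Markov applied to Theorem~\ref{thm:mom.bound} gives only constant success probability per time step, while the window contains $\Theta(n\log n/\lambda_3)$ steps. The plan is instead to exploit a near-monotone structure of the deviation $E_t:=\|\bv^{(t)}\|^2$, with $\bv^{(t)}:=\by^{(t)}+\bz^{(t)}-\by^{(0)}$, which reduces the uniform-in-$t$ statement to a pair of single-quantity estimates.

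\emph{Key identity (near-monotonicity).} Using $W_{t+1}\bx_{\|}=\bx_{\|}$ and the idempotence $W_{t+1}^\intercal W_{t+1}=W_{t+1}$, one has $\bv^{(t+1)}=W_{t+1}\bv^{(t)}+\bm{\xi}^{(t+1)}$ with $\bm{\xi}^{(t+1)}:=(W_{t+1}-I)\by^{(0)}$, so
\[
E_{t+1}-E_t \;=\; -\tfrac{1}{2}\,\bv^{(t)\intercal}L_{e_{t+1}}\bv^{(t)} \;+\; 2\,\langle W_{t+1}\bv^{(t)},\bm{\xi}^{(t+1)}\rangle \;+\; \|\bm{\xi}^{(t+1)}\|^2.
\]
If the active edge is $\{u,v\}$, then $\bm{\xi}^{(t+1)}$ is supported on $\{u,v\}$ with $\bm{\xi}^{(t+1)}_u+\bm{\xi}^{(t+1)}_v=0$, whereas $(W_{t+1}\bv^{(t)})_u=(W_{t+1}\bv^{(t)})_v$; hence the cross-term vanishes identically. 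Moreover, since $\by^{(0)}$ is parallel to $\bchi$, $\bm{\xi}^{(t+1)}=\bzero$ on same-community activations, while on cross edges $\|\bm{\xi}^{(t+1)}\|^2 = 2\alpha_2^2$. Discarding the non-positive Laplacian term then gives the deterministic upper bound
\[
E_t \;\le\; E_{t_1} \;+\; 2\alpha_2^2\cdot N_t, \qquad N_t := \#\bigl\{\text{cross activations in }(t_1,t]\bigr\}.
\]

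\emph{Two ingredients and conclusion.} On the initialization event $c_0\le \|\by^{(0)}\|^2 \le C_0$, which by Lemma~\ref{lem:toptwoprojections} and a standard Chernoff bound on the Rademacher sum $\bx\cdot\bchi$ holds with probability at least $1-O(\epsilon)$ for suitable constants $c_0,C_0$, Markov applied to Theorem~\ref{thm:mom.bound} at $t=t_1$ yields $E_{t_1}\le C_1\lambda_2 t_1/(\epsilon n)$ with probability $\ge 1-\epsilon/4$. Since $N_t$ is monotone nondecreasing in $t$, it suffices to upper-bound $N_{2t_1}\sim\mathrm{Bin}(t_1,\lambda_2/2)$, and a Chernoff bound gives $N_{2t_1}\le\lambda_2 t_1$ with very small failure probability. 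The resulting contribution $2\alpha_2^2 N_{2t_1}\le 2(C_0/n)\cdot\lambda_2 t_1 = O(\lambda_2 t_1/n)$ is of the same order as the bound on $E_{t_1}$; hence $E_t \le C_2\lambda_2 t/(\epsilon n)$ uniformly in $t\in[t_1,2t_1]$, with the additive $\log n/n^2$ term in the failure probability absorbing the small tail events used to control $\|\by^{(0)}\|^2$ and $N_{2t_1}$. Finally, inequality \eqref{eq:numepsgood} together with $\|\by^{(0)}\|^2\ge c_0$ converts the $E_t$-bound into $|B_t|\le 48\lambda_2 t/\epsilon^3$ after tuning absolute constants.

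\emph{Main obstacle.} The central insight is the near-monotonicity of $E_t$: once the cross-term in the increment decomposition is shown to vanish, uniform control over $[t_1,2t_1]$ reduces to a deterministic inequality in terms of the monotone counting process $N_t$, entirely bypassing the need for a martingale maximal inequality over $\Theta(n\log n)$ steps. The remaining delicate point is to simultaneously control $\|\by^{(0)}\|^2$ from below (to translate $E_t$-bounds into $|B_t|$-bounds via \eqref{eq:numepsgood}) and $\alpha_2^2=\|\by^{(0)}\|^2/n$ from above (to keep the growth term $2\alpha_2^2 N_t$ bounded) on a single constant-probability event on $\bx^{(0)}$.
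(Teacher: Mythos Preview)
Your strategy matches the paper's: bound $E_{t_1}$ once via Markov on Theorem~\ref{thm:mom.bound}, then control the growth $E_t-E_{t_1}$ deterministically through the cross-edge count, and handle that count by a single Chernoff bound. Your increment identity with the vanishing cross term is a slightly sharper variant of what the paper packages as Claim~\ref{claim.wy} (where $\|W_t\cdots W_{t_1}\bchi-\bchi\|^2\le 4c$ is combined with a triangle-inequality decomposition $E_t \le 3\|W_t\cdots W_{t_1}\by^{(t_1)}-\by^{(t_1)}\|^2 + 3E_{t_1}$); your version avoids both the factor~$3$ and the intermediate step \eqref{eq:norm_y1} relating $\|\by^{(t_1)}\|^2$ back to $\|\by^{(0)}\|^2$.

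There is one slip in the conversion to $|B_t|$. Applying Markov to the \emph{unconditional} bound of Theorem~\ref{thm:mom.bound} gives $E_{t_1}\le C_1\lambda_2 t_1/(\epsilon n)$ with no $\|\by^{(0)}\|^2$ factor, so you compensate by imposing $c_0\le\|\by^{(0)}\|^2\le C_0$. But that event has probability $1-O(\epsilon)$ only if $c_0=\Theta(\epsilon^2)$ (anti-concentration of $\bx\cdot\bchi$), and then dividing by $\|\by^{(0)}\|^2\ge c_0$ in \eqref{eq:numepsgood} yields $|B_t|=O(\lambda_2 t/\epsilon^5)$, not $48\lambda_2 t/\epsilon^3$; taking $c_0$ an absolute constant recovers the $\epsilon^{-3}$ dependence but only with constant success probability. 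The fix---implicit in the paper's \eqref{eq:first}---is to use the conditional form from the proof of Theorem~\ref{thm:mom.bound} (the line just before setting $\E[\|\by^{(0)}\|^2]=1$), namely $\E[E_{t_1}\mid\bx^{(0)}]\le \tfrac{2\lambda_2 t_1}{n}\|\by^{(0)}\|^2+O(n^{-2})$. Markov then gives $E_{t_1}\le\tfrac{4\lambda_2 t_1}{\epsilon n}\|\by^{(0)}\|^2$ with conditional probability $\ge 1-\epsilon$, and since your growth term $2\alpha_2^2 N_t=\tfrac{2\|\by^{(0)}\|^2}{n}N_t$ is also proportional to $\|\by^{(0)}\|^2$, the factor cancels exactly in \eqref{eq:numepsgood} and no event on $\|\by^{(0)}\|^2$ is needed at all.
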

\begin{proof} 
Recall that $\by^{(t)} + \bz^{(t)} = W_{t} \ldots 
	W_{t_1}(\by^{(t_1)} + \bz^{(t_1)})$. Then, we deterministically have:
\begin{align} 
	&\norm{ \by^{(t)} + \bz^{(t)} - \by^{(0)} }^2 = \norm{ W_{t} \ldots 
	W_{t_1}(\by^{(t_1)} + \bz^{(t_1)}) - \by^{(0)} }^2\nonumber \\
	&\le 3 \norm{ W_{t} \ldots W_{t_1}\by^{(t_1)} - \by^{(t_1)} }^2 + 3 
	\norm{ \by^{(t_1)} - \by^{(0)} }^2 + 3 \norm{W_{t} \ldots W_{t_1}\bz^{(t_1)} }^2\nonumber\\
	&\le 3 \norm{ W_{t} \ldots W_{t_1}\by^{(t_1)} - \by^{(t_1)} }^2 + 3 
	\norm{ \by^{(t_1)} - \by^{(0)} }^2 + 3 \norm{\bz^{(t_1)} }^2\nonumber\\
	& = 3\|W_{t} \ldots W_{t_1}\by^{(t_1)} - \by^{(t_1)}\|^2 + 3\norm{ \by^{(t_1)} + \bz^{(t_1)}- \by^{(0)} }^2,
	\label{eq:one}
\end{align}
where the first inequality follows from obvious manipulations, the 
second follows since $W_{t} \ldots 	W_{t_1}$ has norm one, while the 
last equality follows from Pythagoras' Theorem. 

The proof of Claim 
\ref{lem:one} next proceeds in the following steps:

\medskip
\noindent{\em 1. Upper bound  to $\norm{ \by^{(t_1)} + \bz^{(t_1)}- \by^{(0)} }^2$.}

\noindent Theorem~\ref{thm:mom.bound} and Markov's inequality immediately imply:
\begin{equation}\label{eq:first}
	\Prob{}{\norm{ \by^{(t_1)} + \bz^{(t_1)}- \by^{(0)} }^2 \le 
	\frac{4\lambda_2 t_1}{\epsilon n}\|\by^{(0)}\|^2}\ge 1 - \epsilon.
\end{equation}
Moreover, since $\norm{ \by^{(t_1)} + \bz^{(t_1)}- \by^{(0)} }^2 = \norm{ 
\by^{(t_1)} - \by^{(0)}}^2 + \norm{\bz^{(t_1)} }^2$, we have the 
following useful derivations: 
\begin{align}
	&\norm{ \by^{(t_1)} + \bz^{(t_1)}- \by^{(0)} }^2 \le 
	\frac{4\lambda_2 t_1}{\epsilon n}\|\by^{(0)}\|^2\Longrightarrow \norm{ 
	\by^{(t_1)} - \by^{(0)}}^2\le \frac{4\lambda_2 t_1}{\epsilon 
	n}\|\by^{(0)}\|^2\nonumber\\
	&\Longrightarrow \norm{ \by^{(t_1)}}^2\le\left (1 + \frac{4\lambda_2 
	t_1}{\epsilon n}\right)\norm{y^{(0)}}^2\label{eq:norm_y1}
\end{align}

\medskip
\noindent{\em 2. Upper bound to $\|W_{t} \ldots W_{t_1}\by^{(t_1)} - 
\by^{(t_1)}\|^2$, for $t \in \{t_1,\ldots , 2t_1\}$.} 

\noindent We have:
\begin{align*} 
	&\|W_{t} \ldots W_{t_1}\by^{(t_1)} - \by^{(t_1)}\|^2 =
	\|W_{t} \ldots W_1\bchi - \bchi\|^2\frac{\|\by^{(t_1)}\|^2}{n}
\end{align*}

Next, for any steps $\tau_1, 
\tau_2\ge t_1$, denote by $C_{[\tau_1, \tau_2]}$ the number of cross 
edges sampled over the steps $\tau_1, \ldots , \tau_2$. Claim~\ref{claim.wy}
then implies:
\begin{equation}\label{eq:u_steps}
	\|W_{t} \ldots W_{t_1}\by^{(t_1)} - \by^{(t_1)}\|^2 \le  
	6C_{[t_1, t]}\frac{\|\by^{(t_1)}\|^2}{n}
\end{equation}
The deterministic upper bound stated by \eqref{eq:u_steps} immediately 
implies:
\begin{align*} 
	&\Prob{}{\|W_{t} \ldots W_{t_1}\by^{(t_1)} - \by^{(t_1)}\|^2 > 
	\frac{12\lambda_2t}{n}\|\by^{(t_1)}\|^2}\le \Prob{}{C_{[1, t]} > 
	2\lambda_2t}.
\end{align*}
On the other hand, we immediately have $\Expec{}{C_{[1, 
t]}}\le\frac{2b}{d}t = \lambda_2t$. Application of Chernoff bound then 
implies:
\[
\Prob{}{C_{[1, t]} > 2\lambda_2t}\le e^{-\frac{\lambda_2t}{2}}\le 
e^{-3\log n},
\]
since $t\ge t_1 = 6\frac{n}{\lambda_3}\log n$ and $\lambda_2\ge 1/n$, 
given the the graph is regular. As a result:
\begin{equation}\label{eq:exists}
	\Prob{}{\exists t\in\{t_1, \ldots , 2t_1\}: \|W_{t} \ldots 
	W_{t_1}\by^{(t_1)} - \by^{(t_1)}\|^2 > 
	\frac{12\lambda_2t}{n}\|\by^{(t_1)}\|^2} < \frac{\log n}{n^2}.
\end{equation}
Recalling \eqref{eq:one} and using \eqref{eq:first} and \eqref{eq:exists} 
we finally obtain:
\begin{equation}
	\Prob{}{\exists t\in\{t_1, \ldots , 2t_1\}: \norm{ \by^{(t)} + 
	\bz^{(t)} - \by^{(0)} }^2 > \frac{48\lambda_2t}{\epsilon n}\|\by^{(0)}\|^2} < 
	\epsilon + \frac{\log n}{n^2}.
\end{equation}
In particular, from \eqref{eq:first}, \eqref{eq:norm_y1} and \eqref{eq:exists}, we know 
that with probability $1 - \epsilon - \frac{\log n}{n^2}$:
\begin{align*} 
	&\norm{ \by^{(t)} + \bz^{(t)} - \by^{(0)} }^2\le 3\|W_{t} \ldots W_{t_1}\by^{(t_1)} - \by^{(t_1)}\| 
	+ 3\norm{ \by^{(t_1)} + \bz^{(t_1)}- \by^{(0)} }^2\\
	&\le\frac{12\lambda_2 t_1}{\epsilon 	n}\|\by^{(0)}\|^2 + 
	\frac{36\lambda_2t}{n}\|\by^{(t_1)}\|^2 \le\frac{12\lambda_2 t}{\epsilon 	n}\|\by^{(0)}\|^2 + 
	\frac{36\lambda_2t}{n}\left(1 + \frac{4\lambda_2 t_1}{\epsilon 
		n}\right)\|\by^{(0)}\|^2\le \frac{48\lambda_2t}{\epsilon 
		n}\|\by^{(0)}\|^2.
\end{align*}
The last inequality in the derivations above follows by noting that 
$\frac{36\lambda_2t}{n}\left(1 + \frac{4\lambda_2 t_1}{\epsilon 
n}\right)\le \frac{36\lambda_2 t}{\epsilon n}$ since $t\le 2t_1\le 
4 n/\lambda_2$, which in turn follows if 
$\frac{\lambda_2}{\lambda_3}\le\frac{\epsilon^4}{96\log n}$ as we 
assume.

\medskip
\noindent{\em Upper bound on $|B_t|$.}
\eqref{eq:numepsgood} and \eqref{eq:always_good} immediately imply that, 
with probability at least $1 - \epsilon - \log n/n^2$
\[
	\forall t\in\{t_1, \ldots , 2t_1\}: 
	|B_t|\le\frac{48\lambda_2t}{\epsilon^3}.
\]
This concludes the proof of Claim \ref{lem:one} .
\end{proof}

\subsubsection*{Wrapping up: Proof of Lemma~\ref{lem:main}}
Let $t_1 = 6\frac{n}{\lambda_3} \log n$ and define $A_{t_1+1}:= V - B_{t_1+1}$ 
the complement of $B_{t_1+1}$ and, for each $t\in [t_1+1,2t_1]$, let $A_t$ 
denote the set of nodes in $A_{t_1+1}$ that have not been averaged along a 
cross edge or with a node in $B_{t}$.
Inductively, if $e_t = (u_t,v_t)$ is the edge chosen at time $t$ then 
\begin{align}
	A_t = \begin{cases} A_{t-1} & \text{ if } e_{t-1} \text{ is not a cross edge and } e_{t-1} \cap B_{t-1} = \emptyset, \\
		A_{t-1} \setminus \{u_t, v_t \} & \text{ otherwise.} 
	\end{cases}
	\label{eq:badedges}
\end{align}
We say that {\em we sampled a good edge at time $t$} in the first case, 
otherwise we say that {\em we sampled a bad edge}.
The proof proceeds in two steps:

\medskip
\noindent{\em Deterministic lower bound on the number of goods nodes.}
We note that $|A_{2t_1}|$ is a lower bound on the number of nodes that are 
$\epsilon$-good at every step $t\in [t_1, 2t_1]$. Indeed, 
every node $v$ in $A_{2 t_1}$ was $\epsilon$-good at all times
between $t_1$ and $2t_1$, because it is a node whose value was good
at time $t_1+1$, and then was averaged only with nodes $u$ on the
same side of the partition (that is, such that $\sgn (\by_u) = \sgn
(\by_v)$) and at times in which $u$ was good as well. 
Moreover, $|A_t|$ decreases by at most 2 compared to 
$|A_{t-1}|$ and this can only happen if we sample a bad edge. As a 
consequence, $|A_{2t_1}|\ge |A_{t_1}| - Z$, with $Z$ the number of bad 
edges sampled in the interval $[t_1, 2t_1]$.

\medskip
\noindent{\em Lower bound on $|A_{2t_1}|$.}
Next, we condition on the event ${\mathcal B} = (\forall t\in\{t_1, \ldots , 2t_1\}: 
|B_t|\le\frac{\lambda_3\epsilon n}{12\log n})$. From Claim 
\ref{lem:one}, this event holds with probability at least $1 - 
\epsilon$, whenever  
$\frac{\lambda_2}{\lambda_3}\le\frac{\lambda_3\epsilon^4}{6192\log^2n}$. Now,
observe that, for a given time step $t$, conditioning on the event ${\mathcal B}$ can 
only increase the probability of sampling a good edge.
As a consequence, recalling that, without conditioning we sample edges 
uniformly at random, for any integer $x > 0$ we have:
\[
\Prob{}{Z > x \,| \, \mathcal B} \le \Prob{}{\hat{Z} > x},
\]
where $\hat{Z}$ is the sum of independent Bernoulli variables with 
parameter $\frac{2b}{d} + \frac{\lambda_3\epsilon}{12\log n}$. Here, the 
first term is the probability of sampling a cross edge, while the 
second is worst-case upper bound on the probability of sampling an edge 
with an endpoint in $A_t$ and the other in $B_t$, provided that $|B_t|\le 
\frac{\lambda_3\epsilon n}{12\log n}$.
As a consequence, recalling that $\lambda_2 = 2b/d$:
\begin{align*}
	&\Expec{}{\hat{Z}} = \lambda_2 t_1 + \frac{\lambda_3\epsilon}{12\log 
	n} t_1 = 6\frac{n}{\lambda_3}\log n\left (\lambda_2 + \frac{\lambda_3\epsilon}{12\log 
	n}\right) = 6\frac{\lambda_2}{\lambda_3}n\log n + \frac{\epsilon n}{2}\le 
	\epsilon n,
\end{align*}
where the last inequality obviously follows given our assumptions on 
$\lambda_2/\lambda_3$.
At this point, a simple application of Chernoff bound allows to 
conclude that $\hat{Z}$ (and thus $Z$, when conditioned to $\mathcal 
B$), is at most $2\epsilon n$ with probability $1 - e^{-\frac{\epsilon 
n}{2}}$. So, conditioned to the event $\mathcal B$, $A_{2t_1}\ge n - 
\frac{\lambda_3 n}{12\log n} - 2\epsilon n$ with probability $1 - e^{-\frac{\epsilon 
n}{2}}$. This concludes the proof.
\qed

\subsection{Tools for the analysis of Algorithm~\ref{algo:sign_labeling}} \label{apx:toolsforcslsign}

\subsubsection*{Proof of Theorem \ref{thm:main_small} }
In the next two subsections, we provide  Lemmas \ref{lm:csh_gen} and \ref{lemma:numberunlucky}, respectively. 
The two lemmas together
easily imply Theorem \ref{thm:main_small}.

\smallskip

\subsubsection*{Lucky nodes and community-sensitive labeling}
We next consider the behavior of Algorithm~$\labelsign(T,1)$ on a graph 
$G = (V, E)$ where $V$ has a sparse cut $(V_1,V_2)$ that we wish to discover.
For every node $u$, let $\puki^{sign}_u(T)$ be the probability that node $u$ 
sets $\bh_u^{sign}$ differently from the sign of the average of the initial
values of the nodes in its own community
\[
\puki^{sign}_u(T) =
    \begin{cases}
        \Prob{}{\bh^{sign}_u \ne \sgn\left( \frac 2n \sum_{v \in V_1} \bx_v \right) } & \text{ if $u \in V_1$},\\
        \Prob{}{\bh^{sign}_u \ne \sgn\left( \frac 2n \sum_{v \in V_2} \bx_v \right) } & \text{ if $u \in V_2$},
    \end{cases}
\]
where the randomness is over the initial choice of $\bx \sim \{ -1,1\}^n$ and 
over the execution of Algorithm~$\labelsign(T,1)$. Notice that, for a given graph
$G$ and partition of the nodes $V_1, V_2$, $\psign(T)$ depends on the node $u$,
on the protocol (\textit{sign}), and on the number of activations $T$ after 
which the node sets $\hsign$ (we will omit parameter $T$ from $\psign$ when 
clear from context). 

To understand the point of this definition, consider the extreme case in which
the cut $(V_1,V_2)$ is empty, while $V_1$ and $V_2$ induce connected graphs.
In this case, the averaging process $\aveproc$ will converge to a global state
in which all nodes in $V_1$ have a local state close to the average
$\frac{2}{n} \sum_{v \in V_1} \bx_v$ and all nodes in $V_2$ have a state close
to $\frac{2}{n} \sum_{v \in V_2} \bx_v$. 
We call $\tau_u$ the (random) step in which the node $u$ sets its $\hsign$ to 
the sign of the state of $u$, i.e., the global round when $u$ achieves $T$ 
activations. If $T$ is chosen so that $\tau_u$ is large enough, we would expect
$\hsign$ to agree with the sign of $\frac{2}{n} \sum_{v \in V_1} \bx_v$ if 
$u \in V_1$ and with the sign of $\frac{2}{n} \sum_{v \in V_2} \bx_v$ if 
$u \in V_2$, with $\psign$ small for all $u$. 
It seems reasonable that a possibly weakened version of the 
considerations above should apply to graphs exhibiting a sparse,
rather than empty, cut, provided the subgraphs induced by $V_1$ and $V_2$
are good expanders. To quantitavely capture this intuition, we 
introduce the notion of a {\em (un)lucky} node.

\begin{definition}[Unlucky nodes]\label{def:unlucky}
We say that node $u$ is $\epsilon$-\emph{unlucky} if $\psign$ is larger than 
$\epsilon$. We thus define the set of $\varepsilon$-unlucky nodes as follows
\[
U^{\epsilon,sign}_{G,(V_1,V_2)} 
= \left\{ u \,|\, \psign\geq \epsilon \right\} \, .
\]
\end{definition}
We write $U^{\epsilon,sign}$ in place of $U^{\epsilon,sign}_{G,(V_1,V_2)}$,
when the underlying graph and partition of the nodes are clear from the 
context.

\begin{lemma}\label{lm:csh_gen}
Let $G = (V, E)$ be a graph, $V_1,V_2$ be a partition of $V$
and fix $\epsilon \in (0,\frac{1}{12}]$.
Then,
\labelsign$ (T, 10 \varepsilon^{-1}\log n)$ performs a community-sensitive labeling 
of $G$ according to Definition~\ref{def:senshash}, with $c_1 = 4\epsilon$, $c_2 = 1/6$ and $\gamma = 
|U^{\epsilon,T}|/n$.
\end{lemma}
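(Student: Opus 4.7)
The plan is to take $\tilde V := V \setminus U^{\varepsilon,T}$, which has size at least $(1-\gamma)n$ by the definition of $\gamma$, and to show that, with high probability over the execution of \labelsign$(T,\ell)$ with $\ell = 10\varepsilon^{-1}\log n$, every pair of nodes in $\tilde V$ satisfies the required bounds on their Hamming distance. A key preliminary observation is that the $\ell$ bits of the produced signatures arise from $\ell$ logically independent copies of \aveproc: even though the copies are interleaved by the joint random component-selection rule, conditioning on the assignment of each global round to a copy leaves the per-copy sub-sequence of activated edges i.i.d.\ uniform on $E$, and the per-copy Rademacher initializations are drawn freshly, so the joint law of $(\bh_u^{\mbox{sign}}(1),\dots,\bh_u^{\mbox{sign}}(\ell))$ factorizes across $j$. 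This will let us treat the $\ell$ Hamming indicators as independent Bernoulli trials and apply Chernoff.

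For each copy $j\in[\ell]$, let $s_j^{(i)} := \sgn\!\left(\frac{2}{n}\sum_{v\in V_i}\bx_v^{(0,j)}\right)$ denote the sign of the initial community-$i$ average in copy $j$. The definition of $\varepsilon$-luckiness (Definition~\ref{def:unlucky}) gives $\Pr[\bh_u^{\mbox{sign}}(j) \neq s_j^{(i)}] < \varepsilon$ for every lucky $u\in V_i\cap \tilde V$. In \emph{Case (i)}, where $u,v$ lie in the same community $V_i$, a union bound yields $\Pr[\bh_u^{\mbox{sign}}(j)\neq \bh_v^{\mbox{sign}}(j)] \leq 2\varepsilon$, hence $\E[\hamm(\bh_u,\bh_v)] \leq 2\varepsilon\ell$. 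In \emph{Case (ii)}, where $u\in V_1$ and $v\in V_2$, $s_j^{(1)}$ and $s_j^{(2)}$ are signs of independent sums of $n/2$ i.i.d.\ Rademacher variables, hence independent and, up to an $O(1/\sqrt n)$ additive mass at $0$, uniform on $\{\pm 1\}$, so $\Pr[s_j^{(1)} = s_j^{(2)}] \leq \tfrac12 + o(1)$. Observing that the event $\{\bh_u^{\mbox{sign}}(j) = \bh_v^{\mbox{sign}}(j)\}\cap\{s_j^{(1)}\neq s_j^{(2)}\}$ forces at least one of $\{\bh_u^{\mbox{sign}}(j)\neq s_j^{(1)}\}$ or $\{\bh_v^{\mbox{sign}}(j)\neq s_j^{(2)}\}$ to occur, a further union bound gives $\Pr[\bh_u^{\mbox{sign}}(j) = \bh_v^{\mbox{sign}}(j)] \leq \tfrac12 + 2\varepsilon + o(1) \leq \tfrac23$ whenever $\varepsilon\leq 1/12$, and so $\E[\hamm(\bh_u,\bh_v)] \geq \ell/3$.

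By independence of the $\ell$ copies, a standard Chernoff bound yields $\Pr[\hamm(\bh_u,\bh_v) > 4\varepsilon\ell] \leq e^{-\Omega(\varepsilon\ell)}$ in Case~(i) and $\Pr[\hamm(\bh_u,\bh_v) < \ell/6] \leq e^{-\Omega(\ell)}$ in Case~(ii); with $\ell = 10\varepsilon^{-1}\log n$ each of these bounds is at most $n^{-c}$ for some constant $c>2$. A union bound over the at most $n^2$ pairs of nodes in $\tilde V$ then yields the two required inequalities simultaneously, with $c_1 = 4\varepsilon$ and $c_2 = 1/6$, w.h.p. The main subtlety I anticipate is the clean justification that the $\ell$ copies are mutually independent under the joint random component-selection rule: one needs to argue that conditioning on the full assignment $(j_t)_t$ of global rounds to copies preserves the i.i.d.\ uniform law of the edges within each per-copy sub-sequence, and that the per-copy initializations are drawn freshly and independently, so that everything used in the Chernoff step above is genuinely a product of $\ell$ independent experiments.
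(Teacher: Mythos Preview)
Your proof is correct and close in spirit to the paper's, but it takes a slightly different route. The paper introduces explicit \emph{reference signatures} $\bh_{V_i} = (\sgn(\sum_{v\in V_i} x_v(j)))_{j\in[\ell]}$ and first shows, via a Chernoff bound and a union bound over the $n$ nodes in $\tilde V$, that every lucky $u\in V_i$ satisfies $\hamm(\bh_u,\bh_{V_i})\leq 2\varepsilon\ell$ w.h.p.; it then derives both Hamming bounds by the triangle inequality, using a single additional Chernoff bound to show $\hamm(\bh_{V_1},\bh_{V_2})\geq \ell/3$. You instead bound the per-coordinate agreement/disagreement probability directly for every \emph{pair} $(u,v)$ and apply Chernoff pairwise, taking a union bound over $O(n^2)$ pairs. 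Both arguments rest on exactly the same independence observation about the $\ell$ interleaved copies (the paper makes the same remark), and both succeed because the Chernoff tail is $n^{-c}$ with $c$ comfortably above $2$. The reference-vector route is a bit more modular and would tolerate slightly weaker per-event bounds, but your direct pairwise argument is equally valid here and arguably more elementary.
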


\begin{proof}
The proof of the theorem relies on the mutual independence among the components 
of any label $\bh(\cdot)$ and some standard arguments.
We remark that the  independence crucially depends on the fact
that \labelsign(T,m) updates one component per interaction: the evolution of
$\bx(j_1)$ and $\bx(j_2)$ depends solely on the respective initial vector
values (which are independent), and on the sequence of sampled edges which
update component $j_1$ and $j_2$ (which are independent conditional on their
number).

Call $\ell = \frac{10 \log n}{\varepsilon}$ and denote $\bh_{V_i} := (h_{V_i}(1),
\ldots, h_{V_i}(\ell))$ where $h_{V_i}(j) := \sgn\left(\sum_{v\in V_i} x_v\right)$. 
	
We first claim that w.h.p. for every vertex $u \in V_1 \setminus U^{\epsilon}$,
$\hamm(\bh_u,\bh_{V_1}) \leq 2\varepsilon \dimsignature$.
Observe that by definition of $U^{\epsilon}$,
	\begin{equation*}
        \Expec{}{\hamm(\bh_u, \bh_{V_1})} = p(u)\dimsignature  \leqslant \varepsilon \dimsignature.
	\end{equation*}
    Since the $\dimsignature$   components are mutually  independent, the
    Chernoff bound \cite{dubhashi2009concentration} implies that
    $\hamm(\bh_u,\bh_{V_1}) \leq 2\varepsilon \dimsignature$, w.h.p.  
	A union bound over vertices in $V_1 \setminus U^{\epsilon}$ implies the claim.

    Henceforth, let us assume that $\hamm(\bh_u,\bh_{V_1}) \leq 2\varepsilon
    \dimsignature$ for each $u \in V_1 \setminus U^{\epsilon}$ and a similar claim for all
    vertices $v \in V_2 \setminus U^{\epsilon}$.
	
    As for Case (i), w.l.o.g. let us consider $u,v \in V_1 \setminus U$.  By triangle inequality, we get the desired claim.
    \begin{align*}
        \hamm(\bh_u,\bh_{v}) \leq \hamm(\bh_u,\bh_{V_1}) + \hamm(\bh_{V_1}, \bh_v) \leq 4\varepsilon \dimsignature.
    \end{align*}
    As for Case (ii), since the initial values of $x_u(j)$ ($u \in V_1 \cup V_2 $) are chosen independently
    and uniformly at random in $\{-1,1\}$, simple symmetry arguments show that   the probability of the event 
    ``$\sgn(\sum_{u \in V_1} x_u ) = \sgn(\sum_{u\in V_2} x_u)$'' is $1/2$. Hence, $\Expec{}{\hamm(\bh_{V_1},\bh_{V_2})} = \dimsignature/2$
    and from Chernoff bounds we get  that 
    \begin{equation}
        \hamm(\bh_{V_1},\bh_{V_2}) \geqslant \frac{\dimsignature}3,
        \label{eq:avg_hamm_diffcomm}
    \end{equation} with all but a probability exponentially small in $\ell$.  Henceforth, let us condition on the event that $\hamm(\bh_{V_1},\bh_{V_2}) \leq \frac{\dimsignature}{3}$

    Consider $u \in V_1$ and $v \in V_2$.  By triangle inequality, we have that
    \begin{align}
    \hamm(\bh_u,\bh_v) & \geq \hamm(\bh_{V_1},\bh_{V_2}) - \hamm(\bh_u,\bh_{V_1}) - \hamm(\bh_{V_2},\bh_v) \\
    & \geq \frac{\ell}{3} - 2\varepsilon \ell -2 \varepsilon \ell \geq \frac{\ell}{6},
    \end{align}
    concluding the proof.

\end{proof}

\smallskip

\subsubsection{A bound on the number of unlucky nodes}
In Lemma~\ref{lemma:numberunlucky} we give an upper bound on the number of
$\varepsilon$-unlucky nodes. This is the second key step toward proving 
Theorem~\ref{thm:main_small}.

\smallskip\noindent    
We first prove the following technical lemma on the range in which $\tau_v$
falls w.h.p.

\begin{lemma}  
\label{lem:stoppingtimes}
If $T > 72 \log n$ and $t_1 = 3 T n/4$ then
\[
\Prob{}{\{\tau_v \,|\, v \in V \} \subseteq [t_1,2t_1] } \geq 1- \frac{1}{n}.
\]
\end{lemma}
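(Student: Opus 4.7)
The random variable $\tau_v$ is the waiting time for $v$ to collect $T$ activations, where in each round $v$ is activated (i.e.\ incident to the selected edge) with probability $p = d_v/m = 2/n$ (using $d$-regularity). Equivalently, $\tau_v$ is a sum of $T$ i.i.d.\ geometric random variables with parameter $p$, and $\E[\tau_v] = T/p$. My plan is to control both tails of $\tau_v$ via the standard inversion identity, applying a multiplicative Chernoff bound to the associated binomial, and to close the proof with a union bound over the $n$ vertices.

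Concretely, let $S_v(t) := |\{s \leqslant t : v \in e_s\}| \sim \mathrm{Bin}(t, p)$, where $e_s$ denotes the edge activated in round $s$. The inversion identity $\{\tau_v \leqslant t\} = \{S_v(t) \geqslant T\}$ turns the two bad events into binomial tail events:
\[
\Pr[\tau_v < t_1] = \Pr[S_v(t_1 - 1) \geqslant T],
\qquad
\Pr[\tau_v > 2 t_1] = \Pr[S_v(2 t_1) < T].
\]
Because $t_1$ is chosen so that $p\, t_1$ and $p\cdot 2 t_1$ are each a constant multiplicative factor away from $T$ (on the two sides of $T$), the standard Chernoff bound gives an exponential-in-$T$ tail of the form $\exp(-cT)$ for an absolute constant $c > 0$, independently of the specific $v$.

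A union bound over the $n$ vertices and the two tails then yields
\[
\Pr\!\left[ \{\tau_v \mid v \in V\} \not\subseteq [t_1, 2 t_1] \right]
\;\leqslant\; 2 n \cdot e^{-c T}.
\]
The hypothesis $T > 72 \log n$ is exactly what is needed for this probability to be at most $1/n$, so the conclusion follows. The only item requiring care is tracking the explicit multiplicative Chernoff constants (and the fact that activations in different rounds are mutually independent, so $S_v(t)$ is genuinely binomial) to make sure the factor $72$ is enough; the geometric sum viewpoint or, equivalently, the direct binomial Chernoff bound applied to each of $t_1 - 1$ and $2 t_1$ both go through with routine arithmetic. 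Beyond this bookkeeping I do not foresee any real obstacle, since the lemma is essentially a two-sided concentration statement for the $T$-th success time in a sequence of i.i.d.\ Bernoulli trials.
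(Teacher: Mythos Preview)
Your approach is exactly the paper's: reduce the two tails of $\tau_v$ to tail bounds for the binomial $S_v(t)\sim\mathrm{Bin}(t,2/n)$ via the inversion $\{\tau_v\le t\}=\{S_v(t)\ge T\}$, apply a multiplicative Chernoff bound, and union-bound over the $n$ vertices.

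The gap is in the sentence ``Because $t_1$ is chosen so that $p\,t_1$ and $p\cdot 2t_1$ are each a constant multiplicative factor away from $T$ (on the two sides of $T$).'' With $p=2/n$ and $t_1=3Tn/4$ one gets $p\,t_1=3T/2$ and $p\cdot 2t_1=3T$; both lie \emph{above} $T$, not on either side. Equivalently $\E[\tau_v]=T/p=Tn/2<t_1$, so $\{\tau_v<t_1\}$ is the typical event, not a rare one: the binomial $S_v(t_1-1)$ has mean $\approx 3T/2$, and no Chernoff bound will make $\Pr[S_v(t_1-1)\ge T]$ small. Thus the argument as written cannot deliver the lower bound $\tau_v\ge t_1$. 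The paper's own proof contains the identical slip --- it asserts $\Pr\bigl[\sum_{i\le 3Tn/4}X_v^{(i)}\ge T\bigr]\le e^{-T/36}$, which is false for the same reason. In short, the lemma as stated has a constant mismatch between $t_1=3Tn/4$ and $\E[\tau_v]=Tn/2$; once the constants are repaired so that $\E[\tau_v]\in(t_1,2t_1)$ (for instance $t_1=Tn/3$, or a different choice of $T$ relative to the target window), your plan goes through verbatim and the constant $72$ is more than enough.
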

\begin{proof}
For each node $v$, let $X_v^{(i)} = \bone_{[v \text{ is activated at round }
i]}$.  Fix a node $v$.  By applying the Chernoff bound on the i.i.d. random
variables $\{ X_v^{(i)} \}_{i\geq 0}$, we have  
\[ 
\Prob{}{ \sum_{i = 1}^{3Tn/4} X_v^{(i)} \geq T } \leq e^{-\frac{T}{36}}
\quad \text{and}\quad \
\Prob{}{ \sum_{i = 1}^{3Tn/2} X_v^{(i)} \leq T  } \leq e^{ - \frac{T}{12}}\ .
\]
The claim follows by applying a union bound over the nodes.
\end{proof}

\begin{lemma}[Number of unlucky nodes]\label{lemma:numberunlucky}
Let $\varepsilon > 0$ be an arbitrarily small value and let $G$ be an 
$(n,d,b)$-clustered regular graph with 
$\frac{\lambda_2}{\lambda_3}\le\frac{\lambda_3\epsilon^4}{c \log^2 n}$, for a 
large enough costant $c$.
If $T = \frac{8}{\lambda_3} \log n$ then the number of
$\sqrt{\varepsilon}$-unlucky nodes is 
\[
\left|U^{\sqrt{\varepsilon}, sign} \right| \leqslant 6\sqrt \epsilon \, n.
\]
\end{lemma}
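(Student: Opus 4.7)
\textbf{Proof plan for Lemma~\ref{lemma:numberunlucky}.}
The plan is to bound the \emph{expected} number of nodes whose final label disagrees with their community's initial sign, and then derive the claimed bound on unlucky nodes via Markov's inequality with threshold $\sqrt{\varepsilon}$.

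First, I would rewrite the target appearing in Definition~\ref{def:ephemeral} in a concrete form suited to the regular setting: since $\bone$ and $\bchi$ are eigenvectors of $\avgW$, one has $\bx_{\|} = \alpha_1 \bone$ and $\by^{(0)} = \alpha_2 \bchi$, where $\alpha_1 = (\mu_1 + \mu_2)/2$ and $\alpha_2 = (\mu_1 - \mu_2)/2$ with $\mu_h = (2/n)\sum_{v \in V_h} \bx^{(0)}_v$ the initial average in community $V_h$. Consequently $\bx_{\|,u} + \by^{(0)}_u = \mu_{i_u}$ and $\|\by^{(0)}\|^2/n = \alpha_2^2$, so ``$u$ is $\varepsilon$-good at time $t$'' is equivalent to $|\bx^{(t)}_u - \mu_{i_u}| \leqslant \varepsilon|\alpha_2|$. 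In particular, if $u$ is $\varepsilon$-good at time $\tau_u$ and the initial vector satisfies $|\mu_{i_u}| > \varepsilon|\alpha_2|$, then $\sgn(\bx^{(\tau_u)}_u) = \sgn(\mu_{i_u})$, i.e.\ $u$ is labeled correctly by \labelsign.

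Next, set $t_1 = (6n/\lambda_3)\log n$, so that with $T = (8/\lambda_3)\log n$ we have $3Tn/4 = t_1$ and the window $[t_1, 2t_1]$ matches the one from Lemma~\ref{lem:main}. I would then combine three events over the joint randomness of $\bx^{(0)}$ and the edge sequence: $E_1 = \{\text{at most } 3\varepsilon n \text{ nodes are $\varepsilon$-bad at some } t \in [t_1, 2t_1]\}$, obtained in its stronger ``over the window'' form from (the proof of) Lemma~\ref{lem:main}, with $\Pr[E_1] \geqslant 1-\varepsilon$; $E_2 = \{\tau_v \in [t_1, 2t_1] \text{ for every } v\}$, from Lemma~\ref{lem:stoppingtimes}, with $\Pr[E_2] \geqslant 1 - 1/n$; and $E_3 = \{\min(|\mu_1|, |\mu_2|) \geqslant \varepsilon \max(|\mu_1|, |\mu_2|)\}$, which depends only on the initialization. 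Since $\mu_1, \mu_2$ are independent sums of $n/2$ Rademachers scaled by $2/n$, a local-CLT style anti-concentration argument (each $\mu_h$ has pointwise density $\bigO(1/\sqrt{n})$), in the same spirit as Lemma~\ref{lem:toptwoprojections}, gives $\Pr[E_3] \geqslant 1-\bigO(\varepsilon)$.

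On $E_1 \cap E_2 \cap E_3$, every $u$ outside the bad set (of size $\leqslant 3\varepsilon n$) is $\varepsilon$-good at $\tau_u$; moreover $E_3$ yields $|\alpha_2| = |\mu_1 - \mu_2|/2 \leqslant \max(|\mu_1|,|\mu_2|) \leqslant \varepsilon^{-1}\min(|\mu_1|,|\mu_2|) \leqslant \varepsilon^{-1}|\mu_{i_u}|$, so $\bh^{sign}_u = \sgn(\bx^{(\tau_u)}_u) = \sgn(\mu_{i_u})$ for all such $u$. Hence the number of mislabeled nodes is at most $3\varepsilon n$ on this event, and bounding the contribution from its complement by the trivial estimate $n$ yields $\sum_u p^{sign}_u(T) = \mathbb{E}[\#\text{mislabeled}] \leqslant 3\varepsilon n + n\cdot \bigO(\varepsilon) = \bigO(\varepsilon n)$. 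Markov's inequality then bounds $|U^{\sqrt{\varepsilon}, sign}|$ by $\bigO(\varepsilon n)/\sqrt{\varepsilon} = \bigO(\sqrt{\varepsilon})\,n$; a careful accounting of the constants in $E_1, E_2, E_3$ yields the claimed $6\sqrt{\varepsilon}n$.

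The main obstacle is establishing the bound on $E_3$: Lemma~\ref{lem:toptwoprojections} only lower-bounds $|\mu_1|$ and $|\mu_2|$ individually, whereas here I must rule out an extreme ratio $\min(|\mu_1|,|\mu_2|)/\max(|\mu_1|,|\mu_2|) < \varepsilon$, which needs a direct anti-concentration argument for Rademacher sums. A secondary subtlety, which should be flagged in the formal write-up, is that the cleanest use of Lemma~\ref{lem:main} requires the ``entire-window'' reading (few nodes are ever bad during $[t_1,2t_1]$) suggested by its proof overview rather than its literal pointwise-in-$t$ statement; fortunately the proof in Appendix~\ref{apx:proofofmainlm} establishes precisely this stronger form.
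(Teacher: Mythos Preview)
Your plan is correct and tracks the paper's proof closely: both reduce the claim to an expected-mislabeling bound plus a Markov/counting step, and both assemble that bound from Lemma~\ref{lem:main} (for the window concentration), Lemma~\ref{lem:stoppingtimes} (for $\tau_u\in[t_1,2t_1]$), and an initialization event.

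The one substantive difference is the initialization event. The paper takes the absolute event $\Gamma=\{|\mu_1|,|\mu_2|\geqslant \varepsilon/\sqrt n\}$, read off directly from Lemma~\ref{lem:toptwoprojections}. You instead take the ratio event $E_3=\{\min(|\mu_1|,|\mu_2|)\geqslant \varepsilon\max(|\mu_1|,|\mu_2|)\}$. Your choice is the right one for the sign step as you set it up: being $\varepsilon$-good only gives $|\bx^{(\tau_u)}_u-\mu_{i_u}|\leqslant \varepsilon|\alpha_2|$ with $|\alpha_2|=|\mu_1-\mu_2|/2$, and nothing in $\Gamma$ bounds $|\alpha_2|$ from above, so $\Gamma$ alone does not force $\sgn(\bx^{(\tau_u)}_u)=\sgn(\mu_{i_u})$; your $E_3$ does. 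The price is that $E_3$ is not covered by Lemma~\ref{lem:toptwoprojections} and needs the anti-concentration argument you sketched. That argument is fine: conditioning on $\mu_2$, the pointwise bound $\Pr[|\mu_1|<y]=O(y\sqrt n)$ combined with $\mathbb E|\mu_2|\leqslant\sqrt{2/n}$ gives $\Pr[|\mu_1|<\varepsilon|\mu_2|]=O(\varepsilon)$, and symmetrically for the other side.

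Your remark about Lemma~\ref{lem:main} is also on point and matches what the paper actually uses: both your argument and the paper's need the ``few nodes are \emph{ever} bad during $[t_1,2t_1]$'' conclusion, which is what the proof in Appendix~\ref{apx:proofofmainlm} delivers (via the set $A_{2t_1}$), even though the lemma's displayed statement is phrased as a pointwise bound $|B_t|\leqslant 3\varepsilon n$ for every $t$.
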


\begin{proof}
Let $L^{sign}$ be the set of nodes that freeze their sign $\bh_v^{sign}$ 
according to the sign of $\bx_{\|,v} + \by^{(0)}_v$,
\[
L^{sign} = \left\{v \in V_1 \cup V_2 \,:\, \sgn\left(\bx_v^{\tau_v}\right)
= \sgn\left(\bx_{\parallel,v} + \by_v^{(0)} \right) \right\}.
\]
We first observe that, given any $\varepsilon > 0$, if we have a lower bound
on the expected size of $L^{sign}$, namely $\Expec{}{|L^{sign}|} \geqslant 
n - \varepsilon n$, then we have an upper bound on the number of 
$\sqrt{\epsilon}$-unlucky
nodes, namely $|U^{\sqrt{\varepsilon},sign}| \leqslant 6\sqrt{\varepsilon} n$. 
Indeed,
\begin{align}
\label{eq:connectionLUeps}
\Expec{}{|L^{sign}|} 
& = \sum_{u \in U^{\sqrt{\varepsilon},sign}} \Prob{}{u \in L^{sign}} +
    \sum_{u \notin U^{\sqrt{\varepsilon},sign}} \Prob{}{u \in
    L^{sign}}\nonumber\\
& \leqslant (1-\sqrt{\varepsilon}) \left| U^{\sqrt{\varepsilon},sign} \right| 
    + n - \left|U^{\sqrt{\varepsilon},sign}\right| \nonumber\\
& = n - \sqrt{\varepsilon} \left|U^{\sqrt{\varepsilon},sign}\right|.
\end{align}
We now give a lower bound on the expected size of $L^{sign}$.
Let $\Gamma$ be the event 
\[
\Gamma = \text{``$  |\by_v^{(0)}  + \bx_{\|, v}|
\geqslant \frac {\varepsilon}{\sqrt{n}}$''}.
\]

Recall that $\tau_v$ denotes the time at
which node $v$ freezes its value of $\bh^{sign}_v$. Notice that the 
value $| \by_v^{(0)} + \bx_{\|, v}|$ does not depend on the
node $v$, only on the initial assignment.
Hence, for any node $u \in V_1 \cup V_2$ we have that
\begin{align}
\Prob{}{u \in L^{sign}} & \geqslant 
\Prob{}{\Gamma \wedge \{u \mbox{ is } \varepsilon\mbox{-good at round }
\tau_u\}} \nonumber\\
& \geqslant \Prob{}{\Gamma \wedge \{u \mbox{ is } \varepsilon\mbox{-good 
at all rounds } t \in [t_1, 2 t_1]\} \wedge \{\tau_u \in [t_1, 2 t_1] \}}\nonumber\\
& \geqslant 1 - \Prob{}{\overline{\Gamma}} 
- \Prob{}{\{u \mbox{ is \textit{not} } \varepsilon\mbox{-good at \textit{some}
round } t \in [t_1, 2 t_1]\}} - \Prob{}{\tau_u \notin [t_1, 2 t_1]} \nonumber\\
& = \Prob{}{u \mbox{ is } \varepsilon\mbox{-good at all rounds } t \in 
[t_1, 2 t_1]} 
- \Prob{}{\overline{\Gamma}} - \Prob{}{\tau_u \notin [t_1, 2 t_1]}.
\label{eq:estimateL}
\end{align}

From Lemmas~\ref{lem:toptwoprojections} and~\ref{lem:stoppingtimes} it follows that
$\Prob{}{\overline{\Gamma}} \leqslant \bigO(\varepsilon)$ and that 
$\Prob{}{\tau_u \notin [t_1, 2 t_1]} \leqslant 1/n$. 
Hence, from \eqref{eq:estimateL} the expected size of $L^{sign}$ is
\begin{align}
\Expec{}{|L^{sign}|} 
&= \sum_{u} \Prob{}{u \in L^{sign}} \nonumber\\
&\geqslant \Expec{}{|\{u \,:\, u \mbox{ is } \varepsilon\mbox{-good at all 
rounds } t \in [t_1, 2 t_1] \}|}	- \frac{4}{\sqrt{2\pi}}\varepsilon n - 1.
\label{eq:boundL}
\end{align}
Finally, from Lemma~\ref{lem:main} we have
\begin{align*}
&\Expec{}{|\{u \,:\, u \mbox{ is } \varepsilon\mbox{-good at all
rounds } t \in [t_1, 2 t_1] \}|}\ge (1 - 3\epsilon)(1 - 
\epsilon) n.
\end{align*}
Thus from \eqref{eq:boundL} and the previous inequality we get that
\begin{align*}
&\Expec{}{|L^{sign}|} \geqslant (1 - 3\epsilon)(1 - 
\epsilon) n - \frac{4}{\sqrt{2\pi}}\varepsilon n - 1 > (1 - 
6\epsilon)n
\end{align*}
and the thesis follows from~\eqref{eq:connectionLUeps}.
\end{proof}

\section{Omitted Proofs from Section \ref{ssec:pl-analysis}}
\label{app:pl-proof}

The main goal of this subsection is to prove Theorem~\ref{thm:main-concen}. 
For the sake of convenience, we rewrite the state of the averaging process as
\begin{align*}
    \bx^{(t)} = a_{||} \cdot \left(\bone / \sqrt{n}\right) + a_y(t) \cdot \left(\chi / \sqrt{n}\right) + \bz^{(t)}
\end{align*}
where $a_{||}, a_y(t) \in \R$ and $\bz^{(t)}$ is orthogonal to both $\chi$ and $\bone$. Recall also that 
 $\by^{(t)} = a_y(t) \cdot \left(\chi / \sqrt{n}\right)$ and that $a_{||}$ remains unchanged throughout the algorithm.
 Suppose now   we fix a starting vector $\bx^{(0)}$,   then    we can exactly compute the expectation of $a_y(t)$ as stated more formally below.

\begin{obs} \label{obs:onemoment}
For all $t \in \Z_{\geqs 0}$, we have $\E_{\cE} [ a_y(t) ] = \left(1 - \frac{2\delta \lambda_2}{n}\right)^t a_{y}(0).$
\end{obs}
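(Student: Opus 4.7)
The plan is to exploit linearity of expectation together with the independence of the random matrices describing successive rounds of the process, and then to invoke the fact that $\chi/\sqrt{n}$ is an eigenvector of $\avgW$ with a known eigenvalue.

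First, I would write $\bx^{(t)} = W_t \cdots W_1 \bx^{(0)}$ where each $W_i$ is the random averaging matrix of round $i$. As explained in Section~\ref{se:expec}, the $W_i$'s are i.i.d., so
\begin{equation*}
\E_{\cE}[\bx^{(t)}] = \E[W_t \cdots W_1]\, \bx^{(0)} = \avgW^{\,t}\, \bx^{(0)},
\end{equation*}
where $\avgW = \E[W_1]$. For the lazy process \aveproc{$(\delta)$}, an edge $\{u,v\}$ activation replaces $W_1$ by $I - \delta(e_u - e_v)(e_u - e_v)^T$; averaging uniformly over the $m = nd/2$ edges of the $d$-regular graph $G$ gives
$\avgW = I - \frac{\delta}{m} L = I - \frac{2\delta}{n}\, \mathcal{L},$
since $L = d\,\mathcal{L}$ in the regular case.

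Next, I would use the clustered-regular structure. Under the assumption $\lambda_3 > \lambda_2 = 2b/d$, the cut-indicator vector $\chi$ is an eigenvector of $\mathcal{L}$ with eigenvalue $\lambda_2$, hence $\chi/\sqrt{n}$ is a unit eigenvector of $\avgW$ with eigenvalue $1 - \frac{2\delta \lambda_2}{n}$. Likewise $\bone/\sqrt{n}$ is a unit eigenvector of $\avgW$ with eigenvalue $1$. Writing the initial decomposition
\begin{equation*}
\bx^{(0)} = a_{\|}\,(\bone/\sqrt{n}) + a_y(0)\,(\chi/\sqrt{n}) + \bz^{(0)},
\qquad \bz^{(0)} \perp \bone,\chi,
\end{equation*}
and applying $\avgW^{\,t}$, one obtains
\begin{equation*}
\E_{\cE}[\bx^{(t)}] = a_{\|}\,(\bone/\sqrt{n}) + \left(1 - \tfrac{2\delta \lambda_2}{n}\right)^{t}\! a_y(0)\,(\chi/\sqrt{n}) + \avgW^{\,t}\bz^{(0)}.
\end{equation*}

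Finally, since $\mathcal{L}$ (and hence $\avgW$) is symmetric, $\avgW^{\,t}\bz^{(0)}$ remains orthogonal to both $\bone$ and $\chi$. Taking the inner product of both sides with $\chi/\sqrt{n}$ therefore isolates the $\chi/\sqrt{n}$-coordinate and yields $\E_{\cE}[a_y(t)] = \left(1 - \frac{2\delta \lambda_2}{n}\right)^{t} a_y(0)$, as claimed. There is no real obstacle here: the only substantive ingredients are the i.i.d. structure of the $W_i$'s (already noted in Section~\ref{se:expec}) and the identification of $\chi$ as the $\lambda_2$-eigenvector of $\mathcal{L}$, which is immediate from Definition~\ref{def:clusteredregulargood} and the assumption $\lambda_3 > \lambda_2$.
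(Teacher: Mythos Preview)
Your proof is correct and follows essentially the same idea as the paper's: both exploit that $\chi$ is an eigenvector of $\avgW$ with eigenvalue $1-\tfrac{2\delta\lambda_2}{n}$ and then use linearity of expectation. The only cosmetic difference is that the paper establishes the one-step identity $\E_{(u_t,v_t)}[a_y(t)] = (1-\tfrac{2\delta\lambda_2}{n})\,a_y(t-1)$ via $\chi^T\avgW = (1-\tfrac{2\delta\lambda_2}{n})\chi^T$ and iterates, whereas you compute $\E_\cE[\bx^{(t)}]=\avgW^{\,t}\bx^{(0)}$ directly and then project onto $\chi/\sqrt{n}$.
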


\begin{proof}[Proof of Observation~\ref{obs:onemoment}]
To prove the above statement, it is enough to show that $\E_{(u_t, v_t)} [ a_y(t) ] = \left(1 - \frac{2\delta \lambda_2}{n}\right) a_y(t - 1)$ 
for every $t \in \N$. Indeed,  $\E_{(u_t, v_t)} [ a_y(t) ]$ can be rewritten as follows.
\begin{align*}
\E_{W \sim \cM} \left[ \frac{\chi^T W \bx^{(t - 1)}}{\sqrt{n}} \right]
    = \frac{\chi^T \bar{W} \bx^{(t-1)}}{\sqrt{n}}
    = \frac{\left(1 - \frac{2\delta \lambda_2}{n}\right)\chi^T\bx^{(t-1)}}{\sqrt{n}}
    = \left(1 - \frac{2\delta \lambda_2}{n}\right)a_y(t - 1),
\end{align*}
concluding the proof.
\end{proof}

Let $\mu(t) \triangleq \E [ a_y(t) ] = \left(1 - \frac{4\delta b}{dn}\right)^t a_{y}(0)$ be the expectation of $a_y(t)$. We will show that, if we start with $\bx^{(0)}$ such that $\|\bz^{(0)}\|^2$ is not too much larger than $n\|\by^{(0)}\|^2$, then $a_y(t)$ concentrates around $\mu(t)$, as long as $t \leqs O_{b,d,\delta}(n^2)$. Moreover, we will also show that, for $t \geqs \Omega_{b, d, \delta}(n \log n)$, $\|\bz^{(t)}\|$ becomes small compared to $\mu(t)$. This is stated more precisely below.

\begin{theorem} \label{thm:main}
Let $\beta$ be any real number such that $1 \leqs \beta \leqs \frac{d}{\varepsilon b}$. For any initial vector $\bx^{(0)}$ that satisfies satisfies $\|\bz^{(0)}\|^2 \leqs n\beta\|\by{(0)}\|^2$ and for any $t \in \left[\frac{8n}{\delta(\lambda_3 - \lambda_2)} \log\left(\frac{nd\beta}{\varepsilon b}\right), \frac{n^2 \beta}{128 \delta(\lambda_3 - \lambda_2)}\right]$, we have
\begin{align} \label{eq:main-concen-y}
\Pr_{\cE}\left[0.5\mu(t) \leqs a_y(t) \leqs 1.5\mu(t)\right] \geqs 1 - O(\varepsilon \beta b / d)
\end{align}
and
\begin{align} \label{eq:main-concen-z}
\Pr_{\cE}\left[\|\bz^{(t)}\| \leqs \left(0.5\sqrt[4]{\varepsilon b / d}\right) \mu(t)\right] \geqs 1 - O(\sqrt{\varepsilon b / d}).
\end{align}
\end{theorem}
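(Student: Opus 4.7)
The plan is to prove both concentration claims in parallel by a second-moment argument driven by the one-step recursions of Lemma~\ref{lem:one-step-1}. Write $Y_t := \E_{\cE}[\|\by^{(t)}\|^2]$ and $Z_t := \E_{\cE}[\|\bz^{(t)}\|^2]$. Since $\by^{(t)}$ is a scalar multiple of $\chi/\sqrt{n}$, we have $Y_t = \E_{\cE}[a_y(t)^2]$, and from Observation~\ref{obs:onemoment} we already know $\E_{\cE}[a_y(t)] = \mu(t)$ exactly. So if I can control $Y_t - \mu(t)^2 = \mathrm{Var}(a_y(t))$ and $Z_t$ simultaneously, Chebyshev's inequality applied to $a_y(t)$ delivers~\eqref{eq:main-concen-y}, while Markov's inequality applied to $\|\bz^{(t)}\|^2/\mu(t)^2$ delivers~\eqref{eq:main-concen-z}.

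The technical core is a clean unrolling of the coupled recursions
\[
Y_t \le \left(1 - \tfrac{4\delta \lambda_2}{n} + \tfrac{8\delta^2\lambda_2}{n^2}\right) Y_{t-1} + \tfrac{8\delta^2\lambda_2}{n^2}\, Z_{t-1},
\qquad
Z_t \le \tfrac{4\delta^2\lambda_2}{n}\, Y_{t-1} + \left(1 - \tfrac{4\delta(1-\delta)\lambda_3}{n}\right) Z_{t-1}.
\]
Since $\lambda_3 > \lambda_2$ and $\delta < 0.8(\lambda_3 - \lambda_2)$, the geometric decay rate of $Z_t$ in isolation, $\rho_z := 1 - 4\delta(1-\delta)\lambda_3/n$, is strictly smaller than the rate of $Y_t$ in isolation, $\rho_y := 1 - 4\delta\lambda_2/n + 8\delta^2\lambda_2/n^2$. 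Iterating the $Z$-recursion and using $Y_\tau \le \rho_y^\tau Y_0 + (\text{small cross contribution from }Z)$ gives
\[
Z_t \;\le\; \rho_z^t Z_0 \;+\; \bigO\!\left(\frac{\delta\lambda_2}{\lambda_3 - \lambda_2}\right) Y_t,
\]
which I will then plug back into the $Y$-recursion to verify inductively that the cross term there is of lower order, so that $Y_t \le \rho_y^t Y_0 (1 + o(1))$ over the relevant range of $t$. The hypothesis $Z_0 \le n\beta\, Y_0$ and the lower bound $t \ge \frac{8n}{\delta(\lambda_3 - \lambda_2)}\log(nd\beta/(\varepsilon b))$ are exactly what is needed to make the surviving $\rho_z^t Z_0$ term smaller than $\sqrt{\varepsilon b/d}\cdot \mu(t)^2$, so that the Markov step for~\eqref{eq:main-concen-z} goes through after dividing by $\mu(t)^2$.

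For~\eqref{eq:main-concen-y}, I then compute
\[
\mathrm{Var}(a_y(t)) \;=\; Y_t - \mu(t)^2 \;\le\; \Bigl[\rho_y^t - (1 - 2\delta\lambda_2/n)^{2t}\Bigr] Y_0 \;+\; (\text{cross from }Z_0).
\]
A Taylor expansion gives $(1 - 2\delta\lambda_2/n)^{2t} = \rho_y^t \cdot (1 - \bigO(\delta^2\lambda_2 t/n^2))$, so the main bracketed term is $\bigO(\delta^2\lambda_2 t/n^2)\,\mu(t)^2$. Inserting $\lambda_2 = 2b/d$ and the upper bound $t \le n^2\beta/(128\delta(\lambda_3 - \lambda_2))$ converts this to $\bigO(\varepsilon\beta b/d)\,\mu(t)^2$, after which Chebyshev yields~\eqref{eq:main-concen-y}.

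The main obstacle I anticipate is precisely the coupling between the $Y$- and $Z$-recursions: each bound feeds into the other, and without care the cross terms can be amplified by as many as $t = \Theta(n^2)$ iterations, which would swamp the leading decay. The rescue is the gap $\lambda_3 - \lambda_2$, which ensures $\rho_z^t$ is polynomially smaller than $\rho_y^t$ once $t \ge \Omega(n\log(n\beta/\varepsilon)/(\delta(\lambda_3-\lambda_2)))$; tracking the constants carefully enough to obtain the explicit $\bigO(\varepsilon\beta b/d)$ and $\bigO(\sqrt{\varepsilon b/d})$ failure probabilities (rather than weaker polynomial bounds) will be where the bulk of the bookkeeping lies.
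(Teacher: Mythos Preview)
Your proposal is correct and follows essentially the same route as the paper: bound $\E[\|\by^{(t)}\|^2]$ and $\E[\|\bz^{(t)}\|^2]$ from the one-step recursions of Lemma~\ref{lem:one-step-1}, then apply Chebyshev to $a_y(t)$ for~\eqref{eq:main-concen-y} and Markov to $\|\bz^{(t)}\|^2/\mu(t)^2$ for~\eqref{eq:main-concen-z}. The only tactical difference is in how the coupled $2\times 2$ recurrence is solved: the paper diagonalizes the matrix $\begin{pmatrix}\rho_y & 8\delta^2\lambda_2/n^2 \\ 4\delta^2\lambda_2/n & \rho_z\end{pmatrix}$ explicitly (Lemma~\ref{lem:solved-formula-yz}, yielding the shorthands $\xi,\xi_1,\xi_2,\kappa$), whereas you propose to unroll iteratively and bootstrap; both give the same bound $Y_t \le (1+\bigO(\varepsilon\beta b/d))\rho_y^t Y_0$ and $Z_t \le \bigO(\varepsilon b/d)\rho_y^t Y_0 + \rho_z^t Z_0$, and the remaining calculus (your Taylor expansion of $\rho_y^t - (1-2\delta\lambda_2/n)^{2t}$ matches the paper's Lemma~\ref{lem:aux1}, and your use of the lower bound on $t$ to kill $\rho_z^t Z_0$ matches Lemma~\ref{lem:aux2}) is identical.
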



We defer the proof of Theorem~\ref{thm:main} to Subsection~\ref{app:proofthm:main}. For now, let us turn our attention back to the proof of Theorem~\ref{thm:main-concen}.
To go from here to Theorem~\ref{thm:main-concen}, we will also need to upper bound the probability that $\|\bz^{(0)}\|^2 > n\beta\|\by^{(0)}\|^2$. More specifically, when $\bx^{(0)}$ is a random $\pm 1$ vector, we have the following bound.

\begin{prop} \label{prop:starting-condition}
For any $\beta > 0$, we have $\Pr_{\bx^{(0)} \sim \{\pm 1\}^n}\left[\|\bz^{(0)}\|^2 > n \beta \|\by^{(0)}\|^2\right] \leqs O(1/\sqrt{\beta} + 1/\sqrt{n}).$
\end{prop}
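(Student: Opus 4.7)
The plan is to reduce the event in question to an anticoncentration statement about a Rademacher sum. First, I would observe the deterministic identity $\|\bx^{(0)}\|^2 = a_\|^2 + a_y(0)^2 + \|\bz^{(0)}\|^2$, which follows from the orthogonality of the three components in the decomposition of $\bx^{(0)}$. Since $\bx^{(0)} \in \{\pm 1\}^n$, we have $\|\bx^{(0)}\|^2 = n$, and hence $\|\bz^{(0)}\|^2 \leqs n$ always. Recalling that $\|\by^{(0)}\|^2 = a_y(0)^2$, this means the event $\|\bz^{(0)}\|^2 > n\beta\|\by^{(0)}\|^2$ can only occur when $a_y(0)^2 < 1/\beta$, i.e., when $|a_y(0)| < 1/\sqrt{\beta}$. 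Thus it suffices to upper bound $\Pr[|a_y(0)| < 1/\sqrt{\beta}]$.

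Next, I would unfold the definition of $a_y(0)$. By construction, $a_y(0) = \bx^{(0)} \cdot \chi / \sqrt{n} = \frac{1}{\sqrt{n}}\bigl(\sum_{v \in V_1} \bx^{(0)}_v - \sum_{v \in V_2} \bx^{(0)}_v\bigr)$. Setting $S = \sqrt{n}\,a_y(0)$, we see that $S$ is a sum of $n$ independent Rademacher random variables (the sign flip on $V_2$ keeps each summand Rademacher). So the target probability becomes $\Pr[|S| < \sqrt{n/\beta}]$.

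Finally, I would invoke the standard anticoncentration bound for Rademacher sums: for any interval $I \subseteq \R$, $\Pr[S \in I] \leqs O((|I|+1)/\sqrt{n})$, which follows from the well-known fact that the maximum pointwise probability of a Rademacher sum of $n$ terms is $O(1/\sqrt{n})$ (Erd\H{o}s-Littlewood-Offord, or simply Stirling applied to the binomial). Applied with $I = (-\sqrt{n/\beta}, \sqrt{n/\beta})$ of length $2\sqrt{n/\beta}$, this yields
\[
\Pr\bigl[|S| < \sqrt{n/\beta}\bigr] \;\leqs\; O\!\left(\frac{\sqrt{n/\beta} + 1}{\sqrt{n}}\right) \;=\; O\!\left(\frac{1}{\sqrt{\beta}} + \frac{1}{\sqrt{n}}\right),
\]
which is the claimed bound. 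There is no real obstacle here: the only delicate point is ensuring that the constant in the pointwise bound $\Pr[S = k] = O(1/\sqrt{n})$ is handled cleanly for both parities of $n$, which is standard.
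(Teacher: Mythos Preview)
Your proof is correct and follows essentially the same approach as the paper: both reduce to the observation $\|\bz^{(0)}\|^2 \leqs n$, then bound $\Pr[|a_y(0)| < 1/\sqrt{\beta}]$ via the pointwise $O(1/\sqrt{n})$ bound on the central binomial coefficient. The paper writes out the binomial sum explicitly, whereas you package it as the Erd\H{o}s--Littlewood--Offord anticoncentration bound, but the underlying argument is identical.
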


This proposition was also essentially proved in~\cite{BCNPT17}; we repeat the proof from~\cite{BCNPT17} below for completeness.

\begin{proof}[Proof of Proposition~\ref{prop:starting-condition}]
First, note that $\|\bz(0)\|^2 \leqs \|\bx(0)\|^2 = n$. Hence, it suffices to upper bound the probability that $\|\by(0)\|^2$ is less than $1/\beta$. Since $\|\by(0)\| = \left|\frac{\chi^T \bx(0)}{\sqrt{n}}\right|$, the probability that $\|\by(0)\|^2 < 1/\beta$ is exactly equal to the probability that a sum of $n$ i.i.d. Rademacher random variables lie in $[-\sqrt{n/\beta}, \sqrt{n/\beta}]$. The latter probability is exactly equal to
\begin{align*}
    \frac{1}{2^n}\sum_{i = \left(n - \lfloor \sqrt{n/\beta} \rfloor\right) /
    2}^{\left(n + \lfloor \sqrt{n/\beta} \rfloor\right)/2} \binom{n}{i} \leqs
    \left(\sqrt{n/\beta} + 1\right)\frac{\binom{n}{n/2}}{2^n} \leqs
    O(1/\sqrt{\beta} + 1/\sqrt{n}),
\end{align*}
where the second inequality comes from a well-known fact that $\binom{n}{n/2} = O(2^n/\sqrt{n})$.
\end{proof}

 By combining Theorem~\ref{thm:main} and Proposition~\ref{prop:starting-condition}, we immediately get Theorem \ref{thm:main-concen}.  

 \smallskip

\begin{proof}[Proof of Theorem~\ref{thm:main-concen}]
choosing  $\beta = \left(\frac{d}{\varepsilon b}\right)^{2/3}$, we can upper bound $\Pr_{\bx^{(0)}, \cE}\left[\|\bz^{(t)}\|^2 \leqs \sqrt{\varepsilon b / d} \|\by^{(t)}\|^2\right]$ by
\begin{align*}
\Pr_{\cE}\left[\|\bz^{(t)}\|^2 \leqs \sqrt{\varepsilon b / d} \|\by^{(t)}\|^2 \midv \|\bz^{(0)}\|^2 \leqs n \beta \|\by^{(0)}\|^2\right] + \Pr_{\bx^{(0)} \sim \{\pm 1\}^n}\left[\|\bz^{(0)}\|^2 > n \beta \|\by^{(0)}\|^2\right].
\end{align*}
Then, from Theorem~\ref{thm:main}, the first term is at most $O(\varepsilon \beta b / d) + O(\sqrt{\varepsilon b/d}) = O(\sqrt[3]{\varepsilon b / d})$. Moreover, from Proposition~\ref{prop:starting-condition}, the second term is also at most $O(1/\sqrt{\beta} + 1/\sqrt{n}) = O(\sqrt[3]{\varepsilon b / d} + 1/\sqrt{n})$.
 \qed
\end{proof}


\subsection{Proof of Theorem~\ref{thm:main}} \label{app:proofthm:main}
\subsubsection{Evolution of State in One Time Step}

The first step in proving Theorem~\ref{thm:main} is to understand what 
happens in a single update. Specifically, we would like to understand 
how $\|\by^{(t)}\|$ and $\|\bz^{(t)}\|$ behave, given $\|\by^{(t - 
1)}\|$ and $\|\bz^{(t - 1)}\|$. To this end, we prove the following 
lemma, which gives bounds on expectations of $\|\by^{(t)}\|^2$ and 
$\|\bz^{(t)}\|^2$ based on $\|\by^{(t - 1)}\|^2$ and $\|\bz^{(t - 
1)}\|^2$.

\begin{lemma} \label{lem:one-step}
Let $G$ be as above. Let $\by$ be a vector parallel to $\chi$ and $\bz$ 
be a vector orthogonal to $\by$ and to $\bone$. Let $P_\chi$ be the 
projection matrix on $\chi$, that is, $P_\chi = \frac 1n \chi \chi^T$ 
and let $P_\perp$ be the the projection matrix on the space orthogonal 
$\chi$, that is, $P_\perp = I - \frac 1n \chi\chi^T$. Moreover, let 
$\by' = P_\chi W(y + z)$ and $\bz' = P_\perp W (y + z)$ where $W$ is 
randomly selected according to $\cM$. Then,
\begin{align*}
\E_W [ \|\by'\|^2 ] \leqs \left(1 - \frac{8\delta b}{dn} + 
\frac{16\delta^2b}{dn^2}\right)\| \by \|^2 + \left(\frac 
{16\delta^2b}{d n^2}\right)\|\bz\|^2
\end{align*}
and
\begin{align*}
\E_W [ \|\bz'\|^2 ]  \leqs \left(\frac{8\delta^2b}{dn}\right)\|\by\|^2 
+ \left( 1 - \frac{4\delta(1 - \delta)\lambda_3}{n}\right) \|\bz\|^2.
\end{align*}
\end{lemma}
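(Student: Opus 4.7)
The plan is to write the one-step update matrix $W$ explicitly, project onto the two subspaces of interest, and then take expectations over the uniform edge choice using the Laplacian identity $\sum_{\{u,v\}\in E}(x_u - x_v)(y_u - y_v) = x^T L y$. When an edge $\{u,v\}$ is activated, $W = I - \delta\,\be_{uv}\be_{uv}^T$ with $\be_{uv} := \be_u - \be_v$, so $W(\by + \bz) = \by + \bz - \eta\,\be_{uv}$, where $\eta := \delta((\by+\bz)_u - (\by+\bz)_v)$. Using $P_\chi \by = \by$, $P_\chi \bz = 0$, $P_\perp \by = 0$, and $P_\perp \bz = \bz$, this decomposes cleanly as
\[
\by' = \by - \tfrac{\eta(\chi_u - \chi_v)}{n}\,\chi, \qquad \bz' = \bz - \eta\,P_\perp \be_{uv}.
\]

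For each squared norm I then expand $\|\by'\|^2$ and $\|\bz'\|^2$ into a constant term, a linear-in-$\eta$ cross term, and a quadratic-in-$\eta$ term, and take expectations under the uniform edge distribution. Two structural facts do most of the work. First, for a clustered $(n,d,b)$-regular graph one has $L\chi = 2b\,\chi$, because $\chi$ is a $\lambda_2$-eigenvector of $\lapl$ and $L = d\,\lapl$; this immediately gives $\by^T L \bz = 2b\,\by^T\bz = 0$ and $\by^T L \chi = 2b\,\by^T \chi$. Second, since $\bz$ is orthogonal to both $\bone$ and $\chi$, the Rayleigh-quotient bound yields $\bz^T L \bz \geq d\lambda_3 \|\bz\|^2$. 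Expanding each cross term via the Laplacian identity and substituting $m = nd/2$ produces precisely the contributions $-\frac{8\delta b}{nd}\|\by\|^2$ for $\E\|\by'\|^2$ and $-\frac{2\delta}{m}\bz^T L\bz$ for $\E\|\bz'\|^2$, i.e., the shrinkage rates we target.

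For the quadratic-in-$\eta$ terms I split by edge type. For $\by'$, the factor $(\chi_u - \chi_v)^2$ restricts the sum to cut edges, so $\E[\eta^2(\chi_u-\chi_v)^2] = \frac{4\delta^2}{m}(\by+\bz)^T L_{E(V_1,V_2)}(\by+\bz)$; the cut is a bipartite $b$-regular graph whose Laplacian has maximum eigenvalue $2b$ (attained at $\chi$), and since $\by \parallel \chi$ the cross term $\by^T L_{E(V_1,V_2)} \bz$ vanishes, leaving the upper bound $2b(\|\by\|^2 + \|\bz\|^2)$. For $\bz'$, the crude $\|P_\perp \be_{uv}\|^2 \leq 2$ and the Laplacian identity give $\E[\eta^2\|P_\perp \be_{uv}\|^2] \leq \frac{2\delta^2}{m}(2b\|\by\|^2 + \bz^T L \bz)$. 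Assembling these contributions yields the two claimed inequalities. The one subtle step, which I expect to be the main thing to get right, is the recombination in the $\bz$-bound: the $+\frac{2\delta^2}{m}\bz^T L\bz$ produced here must be merged with the $-\frac{2\delta}{m}\bz^T L \bz$ from the linear term to yield the crucial $\delta(1-\delta)$ factor in the shrinkage rate of $\|\bz\|^2$. It is precisely this matching of the $\delta$ and $\delta^2$ coefficients that makes the lemma usable in the lazy regime $\delta < 0.8(\lambda_3 - \lambda_2)$ required by Theorem~\ref{thm:main-concen}; everything else is routine once the two Laplacian identities above are in place.
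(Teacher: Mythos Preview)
Your proposal is correct and arrives at exactly the stated bounds. Your treatment of $\E_W[\|\by'\|^2]$ is essentially the same as the paper's: the observation that $(\chi_u-\chi_v)^2$ is nonzero only on cut edges is precisely the paper's internal/cross case split, and your use of the cut Laplacian's top eigenvalue $2b$ matches their bound $\E_{(u,v)\text{ cross}}[(z_u-z_v)^2]\leq \tfrac{4}{n}\|\bz\|^2$.

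Where you differ is in the second inequality. The paper does not expand $\|\bz'\|^2$ directly. Instead it first bounds $\E_W[\|\by'+\bz'\|^2]$ via the algebraic identity $W^2=(2\delta-1)I+2(1-\delta)W$, which immediately yields
\[
\E_W[\|\by'+\bz'\|^2]=(2\delta-1)(\|\by\|^2+\|\bz\|^2)+2(1-\delta)(\by+\bz)^T\bar W(\by+\bz),
\]
and then obtains $\E_W[\|\bz'\|^2]$ by Pythagoras, subtracting the \emph{lower} bound on $\E_W[\|\by'\|^2]$ from this. The $\delta(1-\delta)$ factor there comes ``for free'' from the $W^2$ identity, whereas in your route it emerges from merging the linear and quadratic Laplacian terms $-\tfrac{2\delta}{m}\bz^TL\bz+\tfrac{2\delta^2}{m}\bz^TL\bz$, as you correctly anticipated. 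Your approach is slightly more hands-on but avoids the need for both an upper and a lower bound on $\E_W[\|\by'\|^2]$; the paper's $W^2$ trick is cleaner algebraically but requires proving the lower bound as well. Both yield identical constants.
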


We note here that Lemma~\ref{lem:one-step} is simply a restatement of Lemma~\ref{lem:one-step-1}.

\begin{proof}
Note that $\by'$ and $\bz'$ are orthogonal. We will estimate the 
expectation of $\|\by' + \bz'\|^2$ and  of $\|\by'\|^2$, and the we 
will use Pythagoras's theorem to deduce a bound on $\|\bz'\|^2$.

To estimate the expected norm squared of $\by' + \bz'$ we see that
\begin{align*} 
\E_W [ \| \by' + \bz'\|^2 ]
    & = \E_W [ \| W (\by + \bz) \|^2 ] \\
    & = \E_W [ (\by + \bz)^T W^T W (\by + \bz) ] \\
    & = \E_W [ (\by + \bz)^T W^2 (\by + \bz) ] \\
    &= \E_W [ (\by + \bz)^T ((2\delta - 1)I + 2(1 - \delta)W) (\by + \bz) ] \\
    & = (2\delta - 1)\left(\|\by\|^2 + \|\bz\|^2\right) + 2(1 - \delta) (\by + \bz)^T\bar{W}(\by + \bz) \\
    & = (2\delta - 1)\left(\|\by\|^2 + \|\bz\|^2\right) + 2(1 - \delta)\left(\by^T \bar{W}\by + 2 \bz^T \bar{W}\by + \bz^T \bar{W}\bz\right) \\
    & = (2\delta - 1)\left(\|\by\|^2 + \|\bz\|^2\right) + 2(1 -
    \delta)\left(\left(1 - \frac{4\delta b}{dn}\right)\|\by\|^2 + \bz^T \bar{W}
    \bz\right),
\end{align*}
where $\bar W := \Expec{}{W}$, and in $(a)$ we used that $\bar{W}\by = \left(1 -
\frac{4\delta b}{dn}\right)\by$.

Moreover, note that $\bz$ can be written as a linear combination of 
eigenvectors of $\bar{W}$ whose eigenvalues are at most $1 - 
\frac{2\delta\lambda_3}{n}$; This implies that $\bz^T\bar{W}\bz \leqs 
\left(1 - \frac{2\delta\lambda_3}{n}\right)\|\bz\|^2$. Plugging this 
inequality into the above equality, we have
\begin{align} \label{eq:yplusz}
\E_W [ \| \by' + \bz'\|^2 ] &\leqs \left(1 - \frac{8\delta(1 - 
\delta)b}{dn}\right)\|\by\|^2 + \left(1 - \frac{4\delta(1 - 
\delta)\lambda_3}{n}\right)\|\bz\|^2.
\end{align}

Now we estimate the expected squared norm of $\by'$. Observe that
\begin{align*}
    \E_W [ \| \by'\|^2 ] 
    & = \E_W [ \| P_\chi W (\by + \bz)\|^2 ]\\
    & = \E_W \ [ (\by+ \bz) ^T W^T P_\chi^T P_\chi W (\by + \bz) ]\\
    & = \E_W \ [ (\by+ \bz) ^T W^T  P_\chi W (\by + \bz) ].
\end{align*}

Let us consider two cases: whether the edge $(u,v)$ defining $W$ is an 
{\em internal} edge, that is, an edge whose endpoints are on the same 
side of the partition, or it is a {\em cross} edge having endpoints on 
different sides of the partition. 
\begin{enumerate}
\item
If $(u,v)$ is an internal edge, which happens with probability $ 1- 
\frac{b}{d}$, then $W \chi = \chi$, and so $W P_\chi W = \frac{1}{n} W
\chi \chi^T W = \frac{1}{n} \chi \chi^T = P_\chi$. This implies that 
$(\by+ \bz) ^T W^T  P_\chi W (\by + \bz) = \|\by\|^2.$
\item
If $(u,v)$ is a cross edge such that $u \in V_1$ and $v\in V_2$, recall 
that $W = I - \delta \be_{u, v} \be_{u, v}^T$. To bound $\|P_\chi W
(\by + \bz)\|^2$, first observe that
\begin{align*}
P_\chi W \chi 
= P_\chi(\chi - \delta \be_{u, v}(\be_{u, v}^T \chi))
= \frac{\chi}{n}(\chi^T\chi - \delta \|\be_{u, v}^T\chi\|^2)
= \chi - \frac{4 \delta}{n}\chi.
\end{align*}

Hence, we have
\begin{align} \label{eq:pwy}
P_\chi W \by = \left(1 - \frac{4 \delta}{n}\right) \by.
\end{align}

Now, let us consider $P_\chi W \bz$. Observe that
\begin{align} \label{eq:pwz}
P_\chi(W \bz) = P_\chi(\bz - \delta \be_{u, v}(\be_{u, v}^T \bz)) = 
P_\chi(\bz - \delta (z_u - z_v) \be_{u, v}) = -\frac{2\delta(z_u - 
z_v)}{n} \chi.
\end{align}

By combining (\ref{eq:pwy}) and (\ref{eq:pwz}), we have
\begin{align*}
\|P_\chi W (\by + \bz)\|^2 = \left(1 - \frac{4 \delta}{n}\right)^2 
\|\by\|^2 - 2 \left(\frac{2\delta(z_u - z_v)}{\sqrt{n}}\right)\left(1 - 
\frac{4 \delta}{n}\right) \|\by\| + \frac{4\delta^2(z_u - z_v)^2}{n}.
\end{align*}

Now, if we take the expectation over cross edges $(u,v)$, the second 
term becomes zero, because both $z_u$ and $z_v$ average to zero for a 
random cross edge (the margninal of $u$ is uniform over $V_1$ and the 
marginal of $v$ is uniform over $V_2$). Moreover, we have
\begin{align*}
\E_{(u,v) \rm \ cross edge} [ (z_u - z_v)^2 ] \leqs \E_{(u, v) \rm 
cross edge} [ 2(z_u^2 + z_v^2) ] = \E_{u \in V} [ 4z_u^2 ]  = 
\frac{4}{n} \|\bz\|^2.
\end{align*}

where the first equality follows from the fact that each vertex has $b$ cross edges.

Thus, in this case, we have
\begin{align*}
\left(1 - \frac{4 \delta}{n}\right)^2 \|\by\|^2 \leqs \|P_\chi W (\by + 
\bz)\|^2 \leqs \left(1 - \frac{4 \delta}{n}\right)^2 \|\by\|^2 + 
\frac{16\delta^2}{n^2} \|\bz\|^2.
\end{align*}
\end{enumerate}

Putting the two cases together, we arrive at the following inequality.
\begin{align} \label{eq:yprime}
	 \left(1 - \frac{8\delta b}{dn} + \frac{16\delta^2b}{dn^2}\right) \| \by \|^2 
	 \leqs \E_W [ \|\by'\|^2 ] 
	 \leqs \left(1 - \frac{8\delta b}{dn} + 
	 \frac{16\delta^2b}{dn^2}\right)\| \by \|^2 + \left(\frac 
	 {16\delta^2b}{d n^2}\right)\|\bz\|^2.
\end{align}

Finally, note that the upper bound in (\ref{eq:yprime}) is already the desired
upper bound for $\E_W [ \|\by'\|^2 ]$ and that the lower bound in (\ref{eq:yprime})
together with (\ref{eq:yplusz}) implies the desired upper bound on $\E_W [ \|\bz'\|^2 ]$.
\end{proof}

\subsubsection{From Evolution of State to Bounds on $\E_\cE [ \|\by^{(t)}\|^2 ]$ and $\E_\cE [ \|\bz^{(t)}\|^2 ]$}

We next turn the bounds from Lemma~\ref{lem:one-step} to bounds on $\E_\cE
[ \|\by^{(t)}\|^2 ]$ and $\E_\cE [ \|\bz^{(t)}\|^2 ]$ based only on $\|\by(0)\|^2,
\|\bz(0)\|^2, \delta$ and the parameters of our graph. This bound will indeed
be enough for us to prove certain concentrations of $\|\by^{(t)}\|$ and
$\|\bz^{(t)}\|$, which are at the heart of the analysis. Before we state the
bounds on $\E_\cE [ \|\by^{(t)}\|^2 ]$ and $\E_\cE [ \|\bz^{(t)}\|^2 ]$, let us define
the following shorthands for some expressions that will appear regularly
throughout the rest of the section.
\begin{itemize}
\item Let $\xi \triangleq \left(1 - \frac{4\delta b}{dn}\right)^2, \xi_1 \triangleq \left(1 - \frac{8\delta b}{dn} + \frac{336\delta^2b}{dn^2}\right), \xi_2 \triangleq \left(1 - \frac{4\delta(1 - \delta)\lambda_3}{n}\right)$. Note that $\mu(t) = \xi^{t/2}a_y(0)$.
\item Let $\kappa \triangleq 1 + (40\varepsilon b/d)\beta$. Recall that $\beta$ is a parameter in Theorem~\ref{thm:main} which satisfies $\beta \geqs \frac{\|\bz(0)\|^2}{n\|\by(0)\|^2}$.
\end{itemize}
We can now state our bounds on $\E_\cE [ \|\by^{(t)}\|^2 ]$ and $\E_\cE [ \|\bz^{(t)}\|^2 ]$:

\begin{lemma} \label{lem:solved-formula-yz}
For any  $t \in \Z_{\geqs 0}$, we have
\begin{align*}
\E_{\cE} [ \|\by^{(t)}\|^2 ] \leqs (\kappa \xi_1^t) \|\by(0)\|^2
& &\text{ and }
& &\E_{\cE} [ \|\bz^{(t)}\|^2 ] \leqs \left((20\varepsilon b/d) \kappa \xi_1^t + \beta n \xi_2^t\right) \|\by(0)\|^2.
\end{align*}
\end{lemma}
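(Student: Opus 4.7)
The proof plan is simultaneous induction on $t$, using the single-step bounds from Lemma~\ref{lem:one-step}. Setting $\alpha_t := \E_{\cE}[\|\by^{(t)}\|^2]$ and $\beta_t := \E_{\cE}[\|\bz^{(t)}\|^2]$, and abbreviating $\phi := 1 - \tfrac{8\delta b}{dn} + \tfrac{16\delta^2 b}{dn^2}$, $\eta := \tfrac{16\delta^2 b}{dn^2}$, $\zeta := \tfrac{8\delta^2 b}{dn}$, Lemma~\ref{lem:one-step} reads as the 2D coupled linear recurrence
\begin{align*}
\alpha_{t+1} \leqs \phi\, \alpha_t + \eta\, \beta_t, \qquad \beta_{t+1} \leqs \zeta\, \alpha_t + \xi_2\, \beta_t.
\end{align*}
The coefficient matrix $M := \begin{pmatrix}\phi & \eta \\ \zeta & \xi_2\end{pmatrix}$ has eigenvalues approximately $\xi_1$ and $\xi_2$ (up to cross-coupling corrections of order $\eta\zeta/(\phi - \xi_2)$), and the claimed bounds are exactly the envelope obtained by expressing $(\alpha_0, \beta_0)$ in the eigenbasis and iterating. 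Translating this viewpoint into a clean induction is the cleanest route.

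For the base case $t = 0$, the first bound holds since $\kappa \geqs 1$, and the second follows from the hypothesis $\|\bz^{(0)}\|^2 \leqs n \beta \|\by^{(0)}\|^2$ of Theorem~\ref{thm:main}. For the inductive step, substituting the inductive hypothesis into the recurrence and collecting the $\xi_1^t$ and $\xi_2^t$ terms separately, the $\bz$-bound reduces to the scalar inequality $\zeta \leqs (20\varepsilon b/d)(\xi_1 - \xi_2)$, which I would verify via the lower bound $\xi_1 - \xi_2 \geqs \tfrac{4\delta}{n}\bigl((1-\delta)\lambda_3 - \lambda_2\bigr) = \Omega\!\bigl(\tfrac{\delta(\lambda_3 - \lambda_2)}{n}\bigr)$, obtained from the hypothesis $\delta < 0.8(\lambda_3 - \lambda_2)$ of Theorem~\ref{thm:main-concen}. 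The $\by$-bound, after the same substitution, reduces to showing $\kappa\, \xi_1^t\, [\xi_1 - \phi - 20\eta\varepsilon b/d] \geqs \eta \beta n\, \xi_2^t$ for every $t \geqs 0$; the per-step slack $\xi_1 - \phi = 320\delta^2 b/(dn^2)$ built into the constant $336$ of $\xi_1$, together with the initial ``budget'' $\kappa - 1 = 40\varepsilon\beta b/d$, is exactly what closes this inequality.

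The main obstacle is the bookkeeping for the $\by$-inductive step: the coupling term $\eta\beta n\, \xi_2^t$ is largest when $t$ is small (so $\xi_2^t \approx 1$), yet it must be absorbed uniformly in $t$ using only the $t$-independent budget encoded in $\kappa$. This is possible because $\xi_2^t/\xi_1^t$ decays exponentially at rate $\Omega(\delta(\lambda_3 - \lambda_2)/n)$, so the cumulative transient contribution along the induction stays within the $40\varepsilon\beta b/d$ slack---this is precisely why the specific constants $336$, $40$, and $20$ appearing in $\xi_1$, $\kappa$ and the $\bz$-bound were chosen the way they are. Everything else is routine algebraic manipulation using $\delta \in (0, 0.8(\lambda_3 - \lambda_2))$, $\beta \leqs d/(\varepsilon b)$, and $\lambda_3 \leqs 1$.
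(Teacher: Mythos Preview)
Your simultaneous-induction plan has a genuine gap in the $\by$-step. After substituting the inductive hypothesis you correctly reduce to
\[
\kappa\,\xi_1^{t}\bigl[\xi_1-\phi-20\eta\,\varepsilon b/d\bigr]\ \geqs\ \eta\,\beta n\,\xi_2^{t},
\]
but this inequality is \emph{false} for small $t$. At $t=0$ the left side is $\kappa\cdot\tfrac{320\delta^2 b}{dn^2}(1-\varepsilon b/d)=\Theta(\kappa/n^2)$, while the right side is $\eta\beta n=\tfrac{16\delta^2 b\beta}{dn}=\Theta(\beta/n)$; since $\kappa\leqs 41$, the two differ by a factor of order $n$. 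The issue is structural: by plugging in $\alpha_t\leqs\kappa\xi_1^{t}$ at each step you immediately burn the entire $\kappa-1$ budget, leaving only the per-step slack $\xi_1-\phi=\Theta(1/n^2)$ to absorb a transient of size $\Theta(1/n)$. Your last paragraph correctly diagnoses that the transient must be paid for \emph{cumulatively}, but an induction on the bare quantities $\alpha_t,\beta_t$ cannot track how much of the budget has already been spent; the proposed argument therefore does not close.

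The paper avoids this by explicitly diagonalising the $2\times2$ matrix $M$: it computes the eigenvalues $\alpha_1\leqs\xi_1$, $\alpha_2\leqs\xi_2$ and writes $(y^{(0)},z^{(0)})$ in the eigenbasis, so that $y^{(t)}=\gamma_1\alpha_1^{t}+\gamma_2\alpha_2^{t}$ with coefficients $\gamma_i$ given in closed form; the factor $\kappa$ then arises directly from the projection $\gamma_1\leqs y^{(0)}+\tfrac{40\varepsilon b}{dn}z^{(0)}$. If you want to salvage an inductive proof, strengthen the invariant to the \emph{linear combination}
\[
\alpha_t+\tfrac{40\varepsilon b}{dn}\,\beta_t\ \leqs\ \kappa\,\xi_1^{t}\,\|\by(0)\|^2,
\]
which is the top-eigenvector component in disguise; one checks (exactly via the constants you identified) that this quantity contracts by a factor $\leqs\xi_1$ at every step, and the stated $\by$-bound follows immediately since $\alpha_t$ is a summand. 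The $\bz$-bound then goes through by the induction you already outlined.
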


We defer the proof of Lemma~\ref{lem:solved-formula-yz}, which is essentially solving the recurrence relation from Lemma~\ref{lem:one-step}, to Subsection~\ref{app:recurrence}. Let us now proceed to use this lemma to derive concentrations of $\|\by^{(t)}\|, \|\bz^{(t)}\|$.

\subsubsection{Concentrations of $\|\by^{(t)}\|$ and $\|\bz^{(t)}\|$}

A direction application of Markov's inequality to the bound on $\E_{\cE} [
\|\bz^{(t)}\|^2 ]$ from Lemma~\ref{lem:solved-formula-yz} gives us the desired
concentration for $\|\bz^{(t)}\|$:

\begin{lemma} \label{lem:concen-z}
For every $t \in \Z_{\geqs 0}$,
$
\Pr_{\cE}\left[\|\bz^{(t)}\|^2 \geqs 0.25\sqrt{\varepsilon b/d} (\mu(t))^2\right] \leqs 80\sqrt{\varepsilon b/d}\kappa(\xi_1 / \xi)^t + \frac{4\beta n}{\sqrt{\varepsilon b/d}} (\xi_2 / \xi)^t.
$
\end{lemma}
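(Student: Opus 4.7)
The proof is essentially a one-line application of Markov's inequality combined with the bound on $\E_{\cE}[\|\bz^{(t)}\|^2]$ from Lemma \ref{lem:solved-formula-yz}, so my plan is simply to execute that calculation carefully and verify the two constants.

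First I would rewrite the threshold in a form compatible with the bound on $\E_{\cE}[\|\bz^{(t)}\|^2]$. Recall that $\mu(t) = \xi^{t/2} a_y(0)$ and that $\|\by(0)\|^2 = (a_y(0))^2$, so $(\mu(t))^2 = \xi^t \|\by(0)\|^2$. Hence the event we want to bound the probability of is
\[
\|\bz^{(t)}\|^2 \geqs 0.25\sqrt{\varepsilon b/d}\, \xi^t \|\by(0)\|^2.
\]

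Next, I would apply Markov's inequality, which gives
\[
\Pr_{\cE}\left[\|\bz^{(t)}\|^2 \geqs 0.25\sqrt{\varepsilon b/d}\,\xi^t \|\by(0)\|^2\right] \leqs \frac{\E_{\cE}[\|\bz^{(t)}\|^2]}{0.25\sqrt{\varepsilon b/d}\,\xi^t \|\by(0)\|^2}.
\]
Plugging in the bound $\E_{\cE}[\|\bz^{(t)}\|^2] \leqs \bigl((20\varepsilon b/d) \kappa \xi_1^t + \beta n \xi_2^t\bigr)\|\by(0)\|^2$ from Lemma \ref{lem:solved-formula-yz}, the $\|\by(0)\|^2$ factors cancel and the right-hand side splits as a sum of two terms, namely
\[
\frac{(20\varepsilon b/d)\kappa}{0.25\sqrt{\varepsilon b/d}}\,(\xi_1/\xi)^t \; + \; \frac{\beta n}{0.25\sqrt{\varepsilon b/d}}\,(\xi_2/\xi)^t.
\]

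Finally, I would simplify the two prefactors: $\frac{20\varepsilon b/d}{0.25\sqrt{\varepsilon b/d}} = 80\sqrt{\varepsilon b/d}$ and $\frac{1}{0.25\sqrt{\varepsilon b/d}} = \frac{4}{\sqrt{\varepsilon b/d}}$, which yields exactly the claimed bound $80\sqrt{\varepsilon b/d}\,\kappa(\xi_1/\xi)^t + \frac{4\beta n}{\sqrt{\varepsilon b/d}}(\xi_2/\xi)^t$. There is no real obstacle here — the work has been done in Lemma \ref{lem:solved-formula-yz}; this lemma is just the Markov step that converts a second-moment bound into a tail bound, and the only thing to be careful about is the arithmetic of the two constants.
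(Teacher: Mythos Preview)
Your proposal is correct and takes exactly the same approach as the paper, which simply states that the lemma follows from a direct application of Markov's inequality to the bound on $\E_{\cE}[\|\bz^{(t)}\|^2]$ from Lemma~\ref{lem:solved-formula-yz}. Your careful verification of the constants (using $(\mu(t))^2 = \xi^t\|\by(0)\|^2$) is exactly the arithmetic the paper leaves implicit.
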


For $\|\by^{(t)}\|$, since we know that ${\|\by^{(t)}\|}^2$ is simply
$a_y(t)^2$ and we also know $\mu(t) = \E_{\cE} [ a_y(t) ]$, we can apply
Cherbychev's inequality on $a_y(t)$, which results in the following lemma.

\begin{lemma} \label{lem:concen-y}
For every $t \in \Z_{\geqs 0}$, 
$
\Pr_{\cE}[a_y(t) \notin (0.5 \mu(t), 1.5 \mu(t))] \leqs 4\left(\kappa(\xi_1/\xi)^t - 1\right).
$
\end{lemma}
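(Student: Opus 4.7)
The plan is to derive Lemma~\ref{lem:concen-y} as a direct application of Chebyshev's inequality to the scalar random variable $a_y(t)$, using the first moment computed in Observation~\ref{obs:onemoment} and the second-moment bound from Lemma~\ref{lem:solved-formula-yz}. The crucial structural observation is that $\by^{(t)}$ lives in the one-dimensional subspace spanned by $\chi/\sqrt{n}$, so $\|\by^{(t)}\|^2 = a_y(t)^2$, and consequently understanding the distribution of $\|\by^{(t)}\|$ reduces to understanding the distribution of the scalar $a_y(t)$. Without loss of generality, I would assume $a_y(0) \geqs 0$, which by Observation~\ref{obs:onemoment} implies $\mu(t) = \xi^{t/2} a_y(0) \geqs 0$ for every $t$ (since $\xi = (1 - 4\delta b/(dn))^2 > 0$).

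Next, I would compute the variance of $a_y(t)$. By Observation~\ref{obs:onemoment} we have $\E_{\cE}[a_y(t)] = \mu(t)$ with $\mu(t)^2 = \xi^t a_y(0)^2 = \xi^t \|\by(0)\|^2$, and by Lemma~\ref{lem:solved-formula-yz} we have $\E_{\cE}[a_y(t)^2] = \E_{\cE}[\|\by^{(t)}\|^2] \leqs \kappa \xi_1^t \|\by(0)\|^2$. Subtracting yields
\begin{align*}
\mathrm{Var}_{\cE}(a_y(t)) \;=\; \E_{\cE}[a_y(t)^2] - \mu(t)^2 \;\leqs\; \bigl(\kappa \xi_1^t - \xi^t\bigr)\|\by(0)\|^2 \;=\; \bigl(\kappa(\xi_1/\xi)^t - 1\bigr)\mu(t)^2.
\end{align*}

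Since we are assuming $\mu(t) \geqs 0$, the event $a_y(t) \notin (0.5\mu(t), 1.5\mu(t))$ is exactly $|a_y(t) - \mu(t)| \geqs 0.5\,\mu(t)$. Chebyshev's inequality therefore gives
\begin{align*}
\Pr_{\cE}\bigl[a_y(t) \notin (0.5\mu(t), 1.5\mu(t))\bigr] \;\leqs\; \frac{\mathrm{Var}_{\cE}(a_y(t))}{(0.5\mu(t))^2} \;\leqs\; 4\bigl(\kappa(\xi_1/\xi)^t - 1\bigr),
\end{align*}
which is exactly the claimed bound.

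There is really no technical obstacle here: the entire argument is a two-line Chebyshev computation that packages together the first-moment identity from Observation~\ref{obs:onemoment} and the second-moment inequality from Lemma~\ref{lem:solved-formula-yz}. The only minor subtlety worth flagging in the write-up is the sign convention for $a_y(0)$ (handled by the WLOG assumption), which ensures that $|\mu(t)| = \mu(t)$ so that the interval $(0.5\mu(t), 1.5\mu(t))$ is symmetric around $\mu(t)$ with half-width $0.5\mu(t)$. All heavy lifting has already been done upstream; this lemma is simply the variance-based companion to the Markov-based Lemma~\ref{lem:concen-z}.
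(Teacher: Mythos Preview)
Your proposal is correct and follows essentially the same approach as the paper: both compute $\E_{\cE}[a_y(t)] = \mu(t)$ from Observation~\ref{obs:onemoment}, bound $\E_{\cE}[a_y(t)^2] = \E_{\cE}[\|\by^{(t)}\|^2] \leqs \kappa \xi_1^t \|\by(0)\|^2$ via Lemma~\ref{lem:solved-formula-yz}, and then apply Chebyshev's inequality with deviation $0.5\mu(t)$. Your write-up is slightly more explicit about the sign convention and the one-dimensionality of the $\chi$-eigenspace, but the argument is identical.
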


\begin{proof}
Recall from Observation~\ref{obs:onemoment} that $\E_{\cE} [ a_y(t) ] = \mu(t) = \xi^{t/2} a_y(0)$. Moreover, from
 Lemma~\ref{lem:solved-formula-yz}, we have $\E_{\cE} [ a_y(t)^2 ] = \E_{\cE} [ \|\by^{(t)}\|^2 ] \leqs (\kappa \xi_1^t)(a_{y^{(0)}})^2$.
  Hence, from Chebyshev's inequality, we have
\begin{align*}
\Pr_{\cE}[a_y(t) \notin (0.5 \mu(t), 1.5 \mu(t))] 
\leqs \frac{\E_{\cE} [ a_y(t)^2 ] - \left(\mu(t)\right)^2}{(0.5\mu(t))^2} 
= 4\left(\kappa(\xi_1/\xi)^t - 1\right)
\end{align*}
as desired.
\end{proof}

\subsubsection{Putting things together}

Finally, we will now prove Theorem~\ref{thm:main} by plugging in the appropriate value for variables in Lemma~\ref{lem:concen-z} and Lemma~\ref{lem:concen-y}. To this end, let us first state a couple of inequalities that will be useful.

\begin{lemma} \label{lem:aux1}
If $\beta \leqs \frac{d}{\varepsilon b}$, then, for any $t \leqs \frac{n^2 \beta}{1344 \delta(\lambda_3 - \lambda_2)}$, we have $\kappa (\xi_1/\xi)^t \leqs 1 + 81(\varepsilon b \beta / d)$.
\end{lemma}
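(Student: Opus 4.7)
The proof is a routine computation that exploits the specific form of $\xi_1$ and $\xi$, combined with the choice of the constant $1344$ in the upper bound on $t$. First, I would isolate $\xi_1 - \xi$ by direct expansion. Since $\xi = 1 - 8\delta b/(dn) + 16\delta^2 b^2/(d^2n^2)$ and $\xi_1 = 1 - 8\delta b/(dn) + 336\delta^2 b/(dn^2)$, subtracting gives $\xi_1 - \xi = (336 - 16b/d)\delta^2 b/(dn^2) \leq 336\delta^2 b/(dn^2)$. For $n$ sufficiently large that $4\delta b/(dn) \leq 1 - 1/\sqrt{2}$, one has $\xi \geq 1/2$, so $\xi_1/\xi \leq 1 + 672\delta^2 b/(dn^2)$.

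Second, I would apply $(1+x)^t \leq \exp(xt)$. Combining the estimate $\xi_1/\xi - 1 \leq 672\delta^2 b/(dn^2)$ with the upper bound $t \leq n^2\beta/(1344\,\delta(\lambda_3-\lambda_2))$, the factors of $n^2$ and the constant $1344$ cancel cleanly to yield $(\xi_1/\xi)^t \leq \exp\!\bigl(\delta b \beta/(2d(\lambda_3-\lambda_2))\bigr)$. In the regime where Lemma~\ref{lem:aux1} is invoked inside the proof of Theorem~\ref{thm:main} (and in turn Theorem~\ref{thm:main-concen}), $\varepsilon$ is calibrated so that $\delta/(\lambda_3 - \lambda_2) \leq \varepsilon$; in the application driving Theorem~\ref{thm:main-concen} one in fact has $\varepsilon = \delta/(\lambda_3 - \lambda_2)$. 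Under this calibration, the exponent above is at most $\varepsilon b\beta/(2d)$.

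Third, the hypothesis $\beta \leq d/(\varepsilon b)$ gives $\varepsilon b\beta/(2d) \leq 1/2 \leq \ln 2$, whence $\exp(\varepsilon b\beta/(2d)) \leq 1 + \varepsilon b\beta/d$ by the elementary bound $e^y \leq 1 + 2y$ for $y \in [0, \ln 2]$. Writing $x := \varepsilon b \beta/d \leq 1$, the bound on $\kappa = 1 + 40x$ then yields
\[
\kappa(\xi_1/\xi)^t \leq (1 + 40x)(1 + x) = 1 + 41x + 40x^2 \leq 1 + 81x,
\]
where the last inequality uses $x^2 \leq x$ on $[0,1]$. This matches the stated $1 + 81(\varepsilon b\beta/d)$.

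There is no genuine obstacle here; the entire argument is a careful constant-tracking exercise. The value $1344$ in the upper bound on $t$ is precisely calibrated to cancel the factor $672$ coming from $(\xi_1 - \xi)/\xi$, so as to produce exactly $\delta b\beta/\bigl(2d(\lambda_3 - \lambda_2)\bigr)$ in the exponent. The clean factor $81$ then emerges from expanding $(1+40x)(1+x)$ under the constraint $x \leq 1$ afforded by the hypothesis $\beta \leq d/(\varepsilon b)$.
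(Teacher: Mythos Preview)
Your proof is correct and follows essentially the same route as the paper's: both compute $\xi_1 - \xi \leq 336\delta^2 b/(dn^2)$, use $\xi \geq 1/2$, exploit the constant $1344$ so that the exponent becomes $\varepsilon b\beta/(2d)$ (via the identification $\varepsilon = \delta/(\lambda_3-\lambda_2)$ that the paper uses implicitly), and finish with $(1+40x)(1+x) \leq 1+81x$ for $x\leq 1$. The only cosmetic difference is that the paper bounds $(\xi/\xi_1)^t$ from below via Bernoulli's inequality and then inverts using $1/(1-a)\leq 1+2a$, whereas you bound $(\xi_1/\xi)^t$ from above via $(1+x)^t\leq e^{xt}$ and then use $e^y\leq 1+2y$ on $[0,\ln 2]$; both yield the same intermediate bound $(\xi_1/\xi)^t \leq 1 + \varepsilon b\beta/d$.
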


\begin{lemma} \label{lem:aux2}
If $\beta \leqs \frac{d}{\varepsilon b}$, then, for any $t \geqs \frac{8n}{\delta(\lambda_3 - \lambda_2)} \log\left(\frac{nd\beta}{\varepsilon b}\right)$, we have $\frac{4\beta n}{\sqrt{\varepsilon b/d}} (\xi_2 / \xi)^t \leqs 4\sqrt{\varepsilon b / d}$.
\end{lemma}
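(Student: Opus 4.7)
The plan is to reduce the target inequality to a lower bound on $\log(\xi/\xi_2)$ and then establish that lower bound via Taylor expansion of $\log(1-\cdot)$, using the hypothesis $\delta \leq 0.8(\lambda_3 - \lambda_2)$ to separate $\xi$ and $\xi_2$ by a margin proportional to $\delta(\lambda_3 - \lambda_2)/n$.

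First, I would take logarithms of both sides of the desired inequality $\frac{4\beta n}{\sqrt{\varepsilon b/d}}(\xi_2/\xi)^t \leq 4\sqrt{\varepsilon b/d}$, which is equivalent to
\[
t \cdot \log(\xi/\xi_2) \;\geq\; \log\!\left(\frac{nd\beta}{\varepsilon b}\right).
\]
Hence it suffices to show that $\log(\xi/\xi_2) \geq \frac{\delta(\lambda_3 - \lambda_2)}{8n}$, so that the hypothesis $t \geq \frac{8n}{\delta(\lambda_3-\lambda_2)}\log(nd\beta/(\varepsilon b))$ closes the gap.

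Next, I would bound each of $\log \xi$ and $\log\xi_2$ separately. Recalling $\lambda_2 = 2b/d$ and $\xi = (1-4\delta b/(dn))^2$, using the inequality $\log(1-y) \geq -y - y^2$ valid for $y \in [0, 1/2]$ (which applies for $n$ large enough) I would obtain
\[
-\log \xi \;\leq\; \frac{4\delta \lambda_2}{n} + \frac{8\delta^2 \lambda_2^2}{n^2}.
\]
Meanwhile, $-\log \xi_2 \geq \frac{4\delta(1-\delta)\lambda_3}{n}$ follows directly from $\log(1-x) \leq -x$. Subtracting,
\[
\log(\xi/\xi_2) \;\geq\; \frac{4\delta\bigl((1-\delta)\lambda_3 - \lambda_2\bigr)}{n} - \frac{8\delta^2\lambda_2^2}{n^2}.
\]

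The key step will be invoking $\delta \leq 0.8(\lambda_3 - \lambda_2)$ to lower-bound the bracket: writing $(1-\delta)\lambda_3 - \lambda_2 = (\lambda_3 - \lambda_2) - \delta\lambda_3$ and using $\lambda_3 \leq 1$ together with the hypothesis, we get $\delta\lambda_3 \leq 0.8(\lambda_3 - \lambda_2)$, so $(1-\delta)\lambda_3 - \lambda_2 \geq 0.2(\lambda_3 - \lambda_2)$. This yields $\log(\xi/\xi_2) \geq 0.8\delta(\lambda_3 - \lambda_2)/n$ minus a lower-order $O(1/n^2)$ term which is absorbed for $n$ sufficiently large, comfortably above the required $\delta(\lambda_3-\lambda_2)/(8n)$. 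Combining with the lower bound on $t$, the resulting inequality $t\log(\xi/\xi_2) \geq \log(nd\beta/(\varepsilon b))$ exponentiates back to the claim.

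The main (modest) obstacle is controlling the Taylor error terms carefully: one must verify that the second-order term $8\delta^2\lambda_2^2/n^2$ is indeed dominated by the leading $\delta(\lambda_3-\lambda_2)/n$ term for the relevant range of parameters (i.e., $\lambda_2/n$ is small), and one must use the specific hypothesis $\delta \leq 0.8(\lambda_3 - \lambda_2)$ with the implicit normalization $\lambda_3 \leq 1$ to convert the awkward combination $(1-\delta)\lambda_3 - \lambda_2$ into a clean constant multiple of $\lambda_3 - \lambda_2$. Once these estimates are made explicit, the proof is a straightforward chain of inequalities.
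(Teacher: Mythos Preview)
Your proposal is correct and follows essentially the same route as the paper: both reduce the claim to showing $\log(\xi/\xi_2)\gtrsim \delta(\lambda_3-\lambda_2)/n$, and both hinge on the estimate $(1-\delta)\lambda_3-\lambda_2\geq c(\lambda_3-\lambda_2)$ derived from the hypothesis $\delta\leq 0.8(\lambda_3-\lambda_2)$. The paper simply cites the earlier computation of $\xi-\xi_2$ (equation~(\ref{eq:a11minusa22})) and bounds $(\xi/\xi_2)^t$ via $(1+x)^t\geq 2^{xt}$, whereas you expand $\log\xi$ and $\log\xi_2$ separately; these are cosmetic differences. One small correction: your step ``$\lambda_3\leq 1$'' is not literally true for normalized Laplacians (eigenvalues can reach $2$); the paper instead uses the trace bound $\lambda_3\leq n/(n-2)$, which for large $n$ gives $\delta\lambda_3\leq 0.9(\lambda_3-\lambda_2)$ and so $(1-\delta)\lambda_3-\lambda_2\geq 0.1(\lambda_3-\lambda_2)$---still more than enough for your argument to go through.
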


We defer the proofs of both lemmas, which are basically calculations, to Appendix~\ref{app:proofaux}. Let us now proceed to prove Theorem~\ref{thm:main}.

\begin{proof}[Proof of Theorem~\ref{thm:main}]
Let $t$ be any positive integer such that $\frac{8n}{\delta(\lambda_3 - \lambda_2)} \log\left(\frac{nd}{\varepsilon b}\right) \leqs t \leqs \frac{n^2 \beta}{1344 \delta(\lambda_3 - \lambda_2)}$. From Lemma~\ref{lem:concen-y} and Lemma~\ref{lem:aux1}, we have $\Pr_{\cE}[a_y(t) \notin (0.5 \mu(t), 1.5 \mu(t))] \leqs O(\varepsilon b \beta / d)$.

Moreover, from Lemma~\ref{lem:concen-z}, Lemma~\ref{lem:aux1} and Lemma~\ref{lem:aux2}, we have
\begin{align*}
\Pr_{\cE}\left[\|\bz^{(t)}\| \geqs  0.5 \sqrt[4]{\varepsilon b/d} \mu(t)\right]
= \Pr_{\cE}\left[\|\bz^{(t)}\|^2 \geqs  0.25 \sqrt{\varepsilon b/d} (\mu(t))^2\right]
&\leqs 80\sqrt{\varepsilon b/d}\kappa(\xi_1 / \xi)^t + \frac{4\beta n}{\sqrt{\varepsilon b/d}} (\xi_2 / \xi)^t \\
\text{(From Lemma~\ref{lem:aux1} and Lemma~\ref{lem:aux2})} &\leqs O(\sqrt{\varepsilon b / d}),
\end{align*}
which concludes the proof of Theorem~\ref{thm:main}.

\end{proof}

\subsubsection{Proof of Lemma~\ref{lem:solved-formula-yz}}\label{app:recurrence}

The goal of this subsection is to prove 
Lemma~\ref{lem:solved-formula-yz}, which is essentially just solving 
the recurrence relation from Lemma~\ref{lem:one-step}. Before we 
proceed to the proof, we state a fact and an observation regarding 
eigenvalues and eigenvectors of certain $2 \times 2$ natruces, which will be 
useful in our proof.

\begin{fact} \label{fact:2x2eigenformula}
Let $A = \begin{bmatrix} a_{11} & a_{12} \\ a_{21} & a_{22} 
\end{bmatrix} \in \mathbb{R}^{2 \times 2}$ be a $2 \times 2$ 
real-valued matrix such that $a_{11} \ne 0$ and $(a_{11} - a_{22})^2 
\ne 4 a_{12}a_{21}$. Then, its eigenvalues are
\begin{align*}
\alpha_1(A) &\triangleq \frac{1}{2} \left(a_{11} + a_{22} + \sqrt{(a_{11} - a_{22})^2 + 4 a_{12}a_{21}}\right) \text{ and } \\
\alpha_2(A) &\triangleq \frac{1}{2} \left(a_{11} + a_{22} - \sqrt{(a_{11} - a_{22})^2 + 4 a_{12}a_{21}}\right),
\end{align*}
and its eigenvectors are
\begin{align*}
\begin{bmatrix}
1 \\ \frac{\alpha_1(A) - a_{11}}{a_{12}}
\end{bmatrix}
\text{ and }
\begin{bmatrix}
1 \\ \frac{\alpha_2(A) - a_{11}}{a_{12}}
\end{bmatrix}.
\end{align*}
\end{fact}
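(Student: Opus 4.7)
The plan is to prove Fact~\ref{fact:2x2eigenformula} via the standard characteristic polynomial computation for a $2\times 2$ matrix. First I would compute
\[
\det(A - \lambda I) = (a_{11}-\lambda)(a_{22}-\lambda) - a_{12}a_{21} = \lambda^2 - (a_{11}+a_{22})\lambda + (a_{11}a_{22} - a_{12}a_{21}),
\]
and then apply the quadratic formula to its two roots. A short algebraic simplification of the discriminant,
\[
(a_{11}+a_{22})^2 - 4(a_{11}a_{22} - a_{12}a_{21}) = (a_{11}-a_{22})^2 + 4 a_{12}a_{21},
\]
recovers exactly the claimed expressions for $\alpha_1(A)$ and $\alpha_2(A)$.

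For the eigenvectors, I would fix $\alpha \in \{\alpha_1(A), \alpha_2(A)\}$ and solve $(A - \alpha I)v = 0$ componentwise, namely
\[
(a_{11}-\alpha)v_1 + a_{12} v_2 = 0, \qquad a_{21} v_1 + (a_{22}-\alpha)v_2 = 0.
\]
Setting $v_1 = 1$ in the first equation and solving for $v_2$ gives $v_2 = (\alpha - a_{11})/a_{12}$, matching the stated form. The second equation is automatically satisfied since $\alpha$ is a root of the characteristic polynomial, so the two equations are proportional whenever $a_{12}\neq 0$ (which is implicit: the formulas written in the statement require dividing by $a_{12}$, and the condition on $a_{11}$ together with the discriminant assumption preclude the degenerate cases in which the vector $(1,(\alpha-a_{11})/a_{12})^\top$ is ill-defined or fails to be a genuine eigenvector).

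The role of the side condition $(a_{11}-a_{22})^2 \neq 4 a_{12}a_{21}$ is then to ensure that $\alpha_1(A) \neq \alpha_2(A)$, so that the two eigenvectors produced by the above rule are distinct (and hence linearly independent); equivalently, this avoids the defective-Jordan-block case where the simple formula for a basis of eigenvectors would fail. There is no real obstacle here: the entire argument is a routine application of the quadratic formula and elementary linear algebra, and the only point requiring even mild care is bookkeeping of the non-degeneracy hypotheses so that division by $a_{12}$ and the distinctness of $\alpha_1(A), \alpha_2(A)$ are both justified.
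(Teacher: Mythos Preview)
Your argument is correct and is exactly the standard derivation one would expect; the paper does not give any proof of this statement at all (it is labelled a ``Fact'' and simply quoted), so there is nothing to compare against. One small remark: the hypothesis $a_{11}\ne 0$ in the statement does not by itself justify dividing by $a_{12}$, and you are right that $a_{12}\ne 0$ is the condition actually needed for the stated eigenvector formulas; in the paper's application (Lemma~\ref{lem:solved-formula-yz}) the off-diagonal entries are strictly positive, so this is harmless there.
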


\begin{obs} \label{obs:2x2eigenvaluebound}
Let $A, a_{11}, a_{12}, a_{21}, a_{22}, \alpha_1(A), \alpha_2(A)$ be as in Fact~\ref{fact:2x2eigenformula}. Suppose further that $a_{11}, a_{12}, a_{21}, a_{22} \geqs 0, a_{11} > a_{22}$ and that $(a_{11} - a_{22})^2 > 4a_{12}a_{21}$. Then, $$a_{11} + \frac{a_{12}a_{21}}{a_{11} - a_{22}}  \geqs \alpha_1(A) \geqs a_{11} \geqs a_{22} \geqs \alpha_2(A) \geqs a_{22} - \frac{a_{12}a_{21}}{a_{11} - a_{22}}.$$
\end{obs}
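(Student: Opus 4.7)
The plan is to plug in the closed-form expressions for $\alpha_1(A)$ and $\alpha_2(A)$ given by Fact~\ref{fact:2x2eigenformula} and then reduce every inequality in the chain to a simple one-variable estimate on the square root $\sqrt{(a_{11} - a_{22})^2 + 4a_{12}a_{21}}$. To streamline notation, set $s := a_{11} + a_{22}$, $d := a_{11} - a_{22} > 0$, and $p := a_{12}a_{21} \geqs 0$, so that
\[
\alpha_1(A) = \tfrac{1}{2}\bigl(s + \sqrt{d^2 + 4p}\bigr),
\qquad
\alpha_2(A) = \tfrac{1}{2}\bigl(s - \sqrt{d^2 + 4p}\bigr).
\]

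The middle inequality $a_{11} \geqs a_{22}$ is one of the hypotheses, so it suffices to establish the two pairs of outer inequalities. For the upper half of the chain, I will show $a_{11} \leqs \alpha_1(A) \leqs a_{11} + p/d$. The lower bound $\alpha_1(A) \geqs a_{11} = (s+d)/2$ is equivalent to $\sqrt{d^2 + 4p} \geqs d$, which is immediate from $p \geqs 0$. The upper bound $\alpha_1(A) \leqs a_{11} + p/d$ rewrites as $\sqrt{d^2 + 4p} \leqs d + 2p/d$, and squaring the (nonnegative) sides yields $d^2 + 4p \leqs d^2 + 4p + 4p^2/d^2$, which clearly holds.

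For the lower half of the chain, I will symmetrically establish $a_{22} - p/d \leqs \alpha_2(A) \leqs a_{22}$. The inequality $\alpha_2(A) \leqs a_{22} = (s-d)/2$ again reduces to $\sqrt{d^2 + 4p} \geqs d$, and $\alpha_2(A) \geqs a_{22} - p/d$ rewrites, after flipping signs, to the same $\sqrt{d^2 + 4p} \leqs d + 2p/d$ handled above. Combining the four inequalities with the assumed $a_{11} \geqs a_{22}$ yields the full chain.

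There is no real obstacle here; the only thing to verify is that the hypothesis $(a_{11} - a_{22})^2 > 4 a_{12} a_{21}$, which ensures that the eigenvalues are real and distinct (and hence that the formulas from Fact~\ref{fact:2x2eigenformula} genuinely make sense), together with $a_{11} > a_{22}$ and $a_{12}, a_{21} \geqs 0$, is precisely what is needed so that $d > 0$ and $p \geqs 0$, making the division by $d$ and the square-root manipulations legitimate.
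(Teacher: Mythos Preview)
Your proof is correct and follows essentially the same approach as the paper: both reduce the chain of inequalities to the two-sided estimate $a_{11}-a_{22} \leqs \sqrt{(a_{11}-a_{22})^2 + 4a_{12}a_{21}} \leqs (a_{11}-a_{22}) + \frac{2a_{12}a_{21}}{a_{11}-a_{22}}$, which you verify by squaring. You spell out the details more carefully than the paper's one-line justification, but the idea is identical.
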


\begin{proof}
The inequalities come from an observation that $a_{11} - a_{22} \leqs \sqrt{(a_{11} - a_{22})^2 + 4 a_{12}a_{21}} \leqs a_{11} - a_{22} + \frac{2 a_{12}a_{21}}{a_{11} - a_{22}}$.
\end{proof}

We are now ready to prove Lemma~\ref{lem:solved-formula-yz}.

\begin{proof}[Proof of Lemma~\ref{lem:solved-formula-yz}]
    Let $\{y^{(t)}\}_{t \in \Z_{\geqs 0}}$ and $\{z^{(t)}\}_{t \in \Z_{\geqs 0}}$ be sequence of non-negative real numbers defined by $y^{(0)} = \|\by(0)\|^2, z^{(0)} = \|\bz(0)\|^2$ and
    \begin{align*}
    \begin{bmatrix} y^{(t)} \\ z^{(t)} \end{bmatrix}
    =
    A
    \begin{bmatrix} y^{(t - 1)} \\ z^{(t - 1)} \end{bmatrix}
    \text{ where } A \triangleq \begin{bmatrix}
    1 - \frac{8\delta b}{dn} + \frac{16\delta^2b}{dn^2} & \frac{16\delta^2b}{d n^2} \\
    \frac{8\delta^2b}{dn} & 1 - \frac{4\delta(1 - \delta)\lambda_3}{n}
    \end{bmatrix}
    \end{align*}
    for every $t \in \N$. Note that, from Lemma~\ref{lem:one-step} and from the
    initial values $y^{(0)}, z^{(0)}$, we have $y^{(t)} \geqs \E_\cE [
    \|\by(t)\|^2 ]$ and $z^{(t)} \geqs \E_\cE [ \|\bz(t)\|^2 ]$ for every $t
    \in \Z_{\geqs 0}$. Hence, to prove the lemma, it suffices to prove that
    \begin{align*}
    y^{(t)} &\leqs \left(y^{(0)} + \left(\frac{40\varepsilon b}{dn}\right)z^{(0)}\right)\left(1 - \frac{8\delta b}{dn} + \frac{336\delta^2b}{dn^2}\right)^t
    \end{align*}
    and
    \begin{align*}
    z^{(t)} &\leqs \left((y^{(0)} + \left(\frac{40\varepsilon b}{dn}\right)z^{(0)}\right)\left(1 - \frac{8\delta b}{dn} + \frac{336\delta^2b}{dn^2}\right)^t + z^{(0)}\left(1 - \frac{4\delta(1 - \delta)\lambda_3}{n}\right)^t.
    \end{align*}

    Let $a_{11}, a_{12}, a_{21}, a_{22}$ be the entries of $A$. Note that
    \begin{align*} 
    a_{11} - a_{12} &= \left(1 - \frac{8\delta b}{dn} + \frac{16\delta^2b}{dn^2}\right) - \left(1 - \frac{4\delta(1 - \delta)\lambda_3}{n}\right) \\
    &\geqs \frac{4\delta(1 - \delta)\lambda_3}{n} - \frac{8\delta b}{dn} \\
    &= \frac{4\delta(1 - \delta)\lambda_3}{n} - \frac{4\delta\lambda_2}{n} \\
    &= \frac{4\delta}{n}\left(\lambda_3 - \lambda_2 - \delta\lambda_3\right).
    \end{align*}

    Observe here that, since the sum of eigenvalues of $L$ is equal to $\tr(L) = n$ and $\lambda_1, \lambda_2 \geqs 0$, we have $\lambda_3 \leqs \frac{n}{n - 2}$. Thus, from $\delta \leqs 0.8(\lambda_3 - \lambda_2)$, we have that $\delta\lambda_3 \leqs 0.9(\lambda_3 - \lambda_2)$ for any sufficiently large $n$. Plugging this into the above inequality gives
    \begin{align} \label{eq:a11minusa22}
    a_{11} - a_{12} \geqs \frac{0.4\delta(\lambda_3 - \lambda_2)}{n} = \frac{0.4\delta^2}{\varepsilon n}.
    \end{align}

    The above inequality implies that $a_{11} - a_{22} > 0$. Moreover, for $n \geqs 1000$, we have
    \begin{align*}
    (a_{11} - a_{22})^2 > \frac{0.16\delta^4}{\varepsilon^2 n^2} > \frac{128\delta^4b^2}{d^2n^3} = a_{12}a_{21}.
    \end{align*}

    In other words, the conditions in Fact~\ref{fact:2x2eigenformula} and Observation~\ref{obs:2x2eigenvaluebound} are satisfied. From Fact~\ref{fact:2x2eigenformula}, the eigenvalues of $A$ are
    \begin{align*}
    \alpha_1 \triangleq \frac{1}{2} \left(a_{11} + a_{22} + \sqrt{(a_{11} - a_{22})^2 + 4 a_{12}a_{21}}\right)
    \end{align*}
    and
    \begin{align*}
    \alpha_2 \triangleq \frac{1}{2} \left(a_{11} + a_{22} - \sqrt{(a_{11} - a_{22})^2 - 4 a_{12}a_{21}}\right).
    \end{align*}
    Furthermore, the eigenvectors of $A$ are
    \begin{align*}
    \bv_1 \triangleq 
    \begin{bmatrix}
    1 \\ \frac{\alpha_1 - a_{11}}{a_{12}}
    \end{bmatrix}
    \text{ and }
    \bv_2 \triangleq
    \begin{bmatrix}
    1 \\ \frac{\alpha_2 - a_{11}}{a_{12}}
    \end{bmatrix}.
    \end{align*}
    Let $\gamma_1 \triangleq \left(\frac{a_{11} - \alpha_2}{\alpha_1 - \alpha_2}\right)y^{(0)} + \left(\frac{a_{12}}{\alpha_1 - \alpha_2}\right)z^{(0)}$ and $\gamma_2 \triangleq \left(\frac{\alpha_1 - a_{11}}{\alpha_1 - \alpha_2}\right)y^{(0)} - \left(\frac{a_{12}}{\alpha_1 - \alpha_2}\right)z^{(0)}$. It is easy to see that
    \begin{align*}
    \begin{bmatrix} y^{(0)} \\ z^{(0)} \end{bmatrix} = \gamma_1 \bv_1 + \gamma_2 \bv_2.
    \end{align*}
    Since $\bv_1$ and $\bv_2$ are eigenvectors of $A$ with eigenvalues $\alpha_1$ and $\alpha_2$ respectively, we have
    \begin{align*}
    \begin{bmatrix} y^{(t)} \\ z^{(t)} \end{bmatrix} = \gamma_1 \alpha_1^t \bv_1 + \gamma_2 \alpha_2^t \bv_2
    \end{align*}
    for every $t \in \Z$. In other words, we have
    \begin{align*}
    y^{(t)} = \gamma_1 \alpha_1^t + \gamma_2 \alpha_2^t, 
    & &z^{(t)} = \left(\frac{\alpha_1 - a_{11}}{a_{12}}\right) \gamma_1 \alpha_1^t + \left(\frac{\alpha_2 - a_{11}}{a_{12}}\right) \gamma_2 \alpha_2^t.
    \end{align*}
    Having derived the above formula, we will now bound $y^{(t)}, z^{(t)}$ by appropriately bounding the eigenvalues and coefficients. Before we do so, let us list a few inequalities that will be useful.
    \begin{itemize}
    \item From (\ref{eq:a11minusa22}), we have the following three ineqalities.
    \begin{align} \label{eq:a12divdiff}
    \frac{a_{12}}{a_{11} - a_{22}} \leqs \frac{40\varepsilon b}{dn},
    \end{align}
    \begin{align} \label{eq:a21divdiff}
    \frac{a_{21}}{a_{11} - a_{22}} \leqs \frac{20\varepsilon b}{d},
    \end{align}
    and
    \begin{align} \label{eq:proddivdiff}
    \frac{a_{12}a_{21}}{a_{11} - a_{22}} \leqs \frac{320\varepsilon\delta^2b^2}{d^2n^2} \leqs \frac{320\delta^2b}{dn^2}
    \end{align}
    where the second inequality comes from $\varepsilon < 1$ and $b \leqs d$.
    \item From (\ref{eq:proddivdiff}) and from Observation~\ref{obs:2x2eigenvaluebound}, we have
    \begin{align} \label{eq:eigenvaluesinq}
    a_{11} + \frac{320\delta^2b}{dn^2} \geqs \alpha_1 \geqs a_{11} \geqs a_{22} \geqs \alpha_2 \geqs a_{22} - \frac{320\delta^2b}{dn^2}.
    \end{align}
    Note also that the right-most term above is non-negative for sufficiently large $n$.
    \end{itemize}

    \paragraph{Bounding $y^{(t)}$.} With the above inequalities in place, it is now easy to bound $y^{(t)}$ as follows.
    \begin{align*}
    y^{(t)} &= \gamma_1 \alpha_1^t + \gamma_2 \alpha_2^t \\
    (\text{Since } \gamma_2 \leqs \left(\frac{\alpha_1 - a_{11}}{\alpha_1 - \alpha_2}\right)y^{(0)} \text{ and } \alpha_2 \geqs 0) &\leqs \gamma_1 \alpha_1^t + \left(\frac{\alpha_1 - a_{11}}{\alpha_1 - \alpha_2}\right)y^{(0)} \alpha_2^t \\
    (\text{Since } \alpha_2 \leqs \alpha_1) &\leqs \gamma_1 \alpha_1^t + \left(\frac{\alpha_1 - a_{11}}{\alpha_1 - \alpha_2}\right)y^{(0)} \alpha_1^t \\
    &= \left(y^{(0)} + \left(\frac{a_{12}}{\alpha_1 - \alpha_2}\right)z^{(0)}\right)\alpha_1^t \\
    (\text{From }(\ref{eq:eigenvaluesinq})) &\leqs \left(y^{(0)} + \left(\frac{a_{12}}{a_{11} - a_{22}}\right)z^{(0)}\right)\alpha_1^t \\
    (\text{From }(\ref{eq:a12divdiff})) &\leqs \left(y^{(0)} + \left(\frac{8\varepsilon b}{dn}\right)z^{(0)}\right)\alpha_1^t \\
    (\text{From }(\ref{eq:eigenvaluesinq})) &\leqs \left(y^{(0)} + \left(\frac{40\varepsilon b}{dn}\right)z^{(0)}\right)\left(1 - \frac{8\delta b}{dn} + \frac{336\delta^2b}{dn^2}\right)^t
    \end{align*}
    as desired.

    \paragraph{Bounding $z^{(t)}$.} Recall that $z^{(t)} = \left(\frac{\alpha_1 - a_{11}}{a_{12}}\right) \gamma_1 \alpha_1^t + \left(\frac{\alpha_2 - a_{11}}{a_{12}}\right) \gamma_2 \alpha_2^t$. Let us bound the two terms separately, starting with the first term $\left(\frac{\alpha_1 - a_{11}}{a_{12}}\right) \gamma_1 \alpha_1^t$. To this end, we can bound $\left(\frac{\alpha_1 - a_{11}}{a_{12}}\right) \gamma_1$ as follows.
    \begin{align*}
    \left(\frac{\alpha_1 - a_{11}}{a_{12}}\right) \gamma_1 &\leqs \left(\frac{a_{21}}{a_{11} - a_{22}}\right) \gamma_1 \\
    (\text{From } (\ref{eq:a21divdiff})) &\leqs \left(\frac{20\varepsilon b}{d}\right) \gamma_1 \\
    &= \left(\frac{20\varepsilon b}{d}\right)\left(\left(\frac{a_{11} - \alpha_2}{\alpha_1 - \alpha_2}\right)y^{(0)} + \left(\frac{a_{12}}{\alpha_1 - \alpha_2}\right)z^{(0)}\right) \\
    (\text{From }(\ref{eq:eigenvaluesinq})) &\leqs \left(\frac{20\varepsilon b}{d}\right)\left(y^{(0)} + \left(\frac{a_{12}}{a_{11} - a_{22}}\right)z^{(0)}\right) \\
    (\text{From } (\ref{eq:a12divdiff})) &\leqs \left(\frac{20\varepsilon b}{d}\right)\left(y^{(0)} + \left(\frac{40\varepsilon b}{dn}\right)z^{(0)}\right).
    \end{align*}
    Note that the first inequality comes from Observation~\ref{obs:2x2eigenvaluebound} and from $\gamma_1 \geqs 0$. From the above bound on $\left(\frac{\alpha_1 - a_{11}}{a_{12}}\right) \gamma_1$ and our bound on $\alpha_1$ from (\ref{eq:eigenvaluesinq}), we have
    \begin{align} \label{eq:ztbound-firstterm}
    \left(\frac{\alpha_1 - a_{11}}{a_{12}}\right) \gamma_1 \alpha_1^t \leqs \left(\frac{20\varepsilon b}{d}\right)\left(y^{(0)} + \left(\frac{40\varepsilon b}{dn}\right)z^{(0)}\right)\left(1 - \frac{8\delta b}{dn} + \frac{336\delta^2b}{dn^2}\right)^t.
    \end{align}

    Let us next bound $\left(\frac{\alpha_2 - a_{11}}{a_{12}}\right) \gamma_2 \alpha_2^t$. Again, we first rearrange the coefficient $\left(\frac{\alpha_2 - a_{11}}{a_{12}}\right) \gamma_2$ as
    \begin{align*}
    \left(\frac{\alpha_2 - a_{11}}{a_{12}}\right) \gamma_2 &= \left(\frac{\alpha_2 - a_{11}}{a_{12}}\right) \left(\left(\frac{\alpha_1 - a_{11}}{\alpha_1 - \alpha_2}\right)y^{(0)} - \left(\frac{a_{12}}{\alpha_1 - \alpha_2}\right)z^{(0)}\right) \\
    (\text{Since } \frac{\alpha_2 - a_{11}}{a_{12}} \leqs 0 \text{ and } \frac{\alpha_1 - a_{11}}{\alpha_1 - \alpha_2} \geqs 0) &\leqs \left(\frac{a_{11} - \alpha_2}{\alpha_1 - \alpha_2}\right) z^{(0)} \\
    (\text{From }(\ref{eq:eigenvaluesinq})) &\leqs z^{(0)}.
    \end{align*}
    Hence, from the above inequality and $(\ref{eq:eigenvaluesinq})$, we have
    \begin{align} \label{eq:ztbound-secondterm}
    \left(\frac{\alpha_2 - a_{11}}{a_{12}}\right) \gamma_2 \alpha_2^t \leqs z^{(0)} a_{22}^t = z^{(0)}\left(1 - \frac{4\delta(1 - \delta)\lambda_3}{n}\right)^t.
    \end{align}
    Combining (\ref{eq:ztbound-firstterm}) and (\ref{eq:ztbound-secondterm}) indeed yields the desired bound on $z^{(t)}$.
\end{proof}

\subsubsection{Proofs of Lemma~\ref{lem:aux1} and Lemma~\ref{lem:aux2}} \label{app:proofaux}

\begin{proof}[Proof of Lemma~\ref{lem:aux1}]
Since $\kappa = 1 + 40 \varepsilon b \beta / d$ and since $\varepsilon b \beta / d \leqs 1$, it suffices to show that $(\xi_1/\xi)^t \leqs 1 + \varepsilon b \beta / d$. To show this, let us rearrange $(\xi/\xi_1)^t$ as follows.
\begin{align*}
    \left(\frac{\xi}{\xi_1}\right)^t
    &= \left(1 - \frac{\xi_1 - \xi}{\xi_1}\right)^t \\
    &\geqs \left(1 - \frac{\frac{336\delta^2 b}{dn^2}}{\xi_1}\right)^t \\
    (\text{Since } \xi \geqs 1/2 \text{ when } n \geqs 8) &\geqs \left(1 - \frac{672\delta^2 b}{dn^2}\right)^t \\
    (\text{From Bernoulli's inequality}) &\geqs 1 - \frac{672\delta^2 b t}{dn^2} \\
    (\text{Since } t \leqs \frac{n^2 \beta}{1344 \delta(\lambda_3 - \lambda_2)}) &\geqs 1 - \frac{\varepsilon b \beta}{2d}.
    \end{align*}
    Note that we can apply Bernoulli's inequality since $\frac{672\delta^2
    b}{dn^2} \leqs 1$ for any sufficiently large $n$ (i.e. $n \geqs 30$).
    Finally, note that the above inequality implies that $(\xi_1/\xi)^t \leqs 1
    + \varepsilon b \beta / d$ since $\frac{1}{1 - \frac{\varepsilon b
    \beta}{2d}} \leqs 1 + \varepsilon b \beta/d$ because $\varepsilon b \beta/d
    \leqs 1$.
\end{proof}

\begin{proof}[Proof of Lemma~\ref{lem:aux2}]
    Observe that, in (\ref{eq:a11minusa22}), we have already proved that $\xi - \xi_2 \geqs \frac{2\delta^2}{\varepsilon n} = \frac{2\delta(\lambda_3 - \lambda_2)}{n}$. Moreover, observe that $\xi_2 \leqs 1$. Hence, we have
    \begin{align*}
    \left(\frac{\xi}{\xi_2}\right)^t 
    &= \left(1 + \frac{\xi - \xi_2}{\xi_2}\right)^t \\
    &\geqs \left(1 + \frac{2\delta(\lambda_3 - \lambda_2)}{n}\right)^t. \\
    (\text{From Bernoulli's inequality}) &\geqs 2^{\frac{2\delta(\lambda_3 - \lambda_2)t}{n}} \\
    &\geqs 2^{4\log\left(\frac{nd\beta}{\varepsilon b}\right)} \\
    &= \left(\frac{nd\beta}{\varepsilon b}\right)^4,
    \end{align*}
    which implies the inequality stated in the lemma.
\end{proof}

\section{Omitted Proofs from Section~\ref{sec:reconstruct-analysis}}
\label{app:reconstruct}

The main goal of this section is to prove Theorem~\ref{thm:main-reconstruct}. The actual proof deviates in a couple of subtle ways from the outline in Section~\ref{sec:reconstruct-overview}:
\begin{itemize}
\item Firstly, we use a slightly different notion of ``good at time $t$''. In the outline, we say that a
node is good at time $t$ if $\chi_u(\bx_u^{(t)} - a_{||}) \approx \mu(t) / n$.
However, since $\chi_u \cdot \bx_u^{(T_u(\tau^{\stored}_u)) } >
\chi_u \cdot \bx_u^{(T_u(\tau^{\tend}_u))}$ suffices to conclude that
$\bh^{jump}_u = \chi_u$, it is enough for us to pick $\eta \in \R$ as a
cutoff threshold and says that $u$ such that $\chi_u(\bx_u^{(T_u(t))} - a_{||})
\geqs \eta$ is \emph{good for stored time $t$} and $u$ such that
$\chi_u(\bx_u^{(T_u(t))} - a_{||}) < \eta$ is \emph{good for end time $t$}.
More formally, for each $t \in \N$, let $R^\eta_t \triangleq \{u \in V \mid
 \chi_u(\bx_u^{(t)} - a_{||}) \geqs \eta\}$ be the set of good nodes for stored time $t$ and $\oR^\eta_t \triangleq V \setminus R^\eta_t$ be the set of good nodes for end time $t$.
\item Secondly, instead of arguing that $[T_u(\tau^{\stored}), T_u(\ttau^{\stored})] \cap [0.5 n \tau^{\stored}, 0.5 n \ttau^{\stored}]$ is large for most $u$ (and similarly for the ending time), we will argue that $[T_u(\tau^{\stored}), T_u(\ttau^{\stored})] \subseteq [0.4 n \tau^{\stored}, 0.6 n \ttau^{\stored}]$ for most $u$, which suffices for our purpose.
\end{itemize}

More precisely, the main steps of the proof are as follows. After selecting appropriate values of $\tau^{\stored}, \ttau^{\stored}, \tau^{\tend}, \ttau^{\tend}, \eta$, our proof consists of three main steps as follows. For brevity, let us focus on the stored time here as the statements for the end time are analogous.
\begin{enumerate}
\item We start by using the concentration result from the previous section to
    argue that, for each $t \in [0.4 n \tau^{\stored}, 0.6 n \ttau^{\stored}]$,
    most nodes are good for stored time $t$, i.e., $\E_{\cE} |R^{\eta}_t|$ is large. In other words, we will show that $\E_{\cE} |\oR^{\eta}_t|$ is small for such $t$'s.
\item We next argue that, for most nodes $u$, $T_u(\tau^{\stored}), \dots, T_u(\ttau^{\stored})$ are ``sufficiently uniform'' in the following sense: $[T_u(\tau^{\stored}), T_u(\ttau^{\stored})] \subseteq [0.4 n \tau^{\stored}, 0.6 n \ttau^{\stored}]$ and, for most $\tau \in [\tau^{\stored}, \ttau^{\stored}]$, $T_u(\tau + 1) - T_u(\tau)$ is not too much smaller than its expected value, $n/2$.
\item Finally, we show that, if most nodes are uniform, then using local time
    is not much worse than using global time. In other words, we show that, if
    the average size of the sets of bad nodes $\oR^{\eta}_t$ is small over $t \in
    [0.4 n \tau^{\stored}, 0.6 n \ttau^{\stored}]$, then the average size of
    $\oR^{\eta}_{T_u(\tau)}$ is also small over all $\tau \in [\tau^{\stored},
    \ttau^{\stored}]$. The latter indeed implies that most $u$ is unlikely to
    be bad for random stored time $\tau^{\stored}_u \in [\tau^{\stored},
    \ttau^{\stored}]$.
\end{enumerate}

The values of the parameters that we will be using throughout this section are as follows.
\begin{itemize}
    \item Pick $\tau^{\stored} \triangleq
        \frac{100\log\left(\frac{nd}{\varepsilon b}\right)}{\delta(\lambda_3 -
        \lambda_2)}$, $\ttau^{\stored} \triangleq 2 \tau^{\stored}$,
        $\tau^{\tend} \triangleq 3\ttau^{\stored} + \frac{10d}{\delta b}$ and
        $\ttau^{\tend} \triangleq 2 \tau^{\tend}$.
    \item Let $\eta \triangleq 0.25 \mu(0.6 n\ttau^{\stored}) / n$. 
\end{itemize}


\subsection{The proof}

Let us now proceed to the proof. The two main lemmas of the first step can be stated as follows. Since both lemmas follow easily from our concentration result, we defer their proofs to Appendix~\ref{subsec:bound-mu-correlated}.

\begin{lemma} \label{lem:tstored}
For any $\bx^{(0)}$ such that $\|\bz^{(0)}\|^2 \leqs n \sqrt{d/(\varepsilon b)} \|\by^{(0)}\|^2$ and every $t \in [0.4n\tau^{\stored}, 0.6n\ttau^{\stored}]$, we have $\E_{\cE} |\oR^{\eta}_t| \leqs O\left(n \sqrt{\varepsilon b / d}\right).$
\end{lemma}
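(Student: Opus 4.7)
\textbf{Proof proposal for Lemma~\ref{lem:tstored}.} The plan is to combine the concentration statement of Theorem~\ref{thm:main} with a pointwise Markov argument on the orthogonal component $\bz^{(t)}$. First, I would rewrite the event defining $\oR^\eta_t$ explicitly. Using the decomposition $\bx^{(t)} = a_{||}(\bone/\sqrt n) + a_y(t)(\chi/\sqrt n) + \bz^{(t)}$ together with $\chi_u^2 = 1$, one gets
\begin{equation*}
\chi_u\bigl(\bx_u^{(t)} - \bx_{||,u}\bigr) \;=\; \frac{a_y(t)}{\sqrt n} \,+\, \chi_u\,\bz_u^{(t)},
\end{equation*}
so that $u\in\oR^\eta_t$ precisely when $\chi_u\bz_u^{(t)} < \eta - a_y(t)/\sqrt n$. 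Without loss of generality I may assume $a_y(0)\geq 0$, so $\mu(t)\geq 0$ throughout the interval of interest.

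Next, I would apply Theorem~\ref{thm:main} with the choice $\beta \triangleq \sqrt{d/(\varepsilon b)}$, which is exactly the hypothesis $\|\bz^{(0)}\|^2 \leq n\beta\|\by^{(0)}\|^2$ assumed in Lemma~\ref{lem:tstored}. A purely calculational step then shows that every $t\in[0.4\,n\tau^{\stored},\,0.6\,n\ttau^{\stored}]$ lies inside the admissible window
\begin{equation*}
\left[\frac{8n}{\delta(\lambda_3-\lambda_2)}\log\!\Bigl(\frac{nd\beta}{\varepsilon b}\Bigr),\;\frac{n^2\beta}{128\,\delta(\lambda_3-\lambda_2)}\right]
\end{equation*}
required by Theorem~\ref{thm:main}; the lower bound is precisely why $\tau^{\stored}$ was chosen to scale like $\log(nd/(\varepsilon b))/(\delta(\lambda_3-\lambda_2))$, while the upper bound is loose for any sufficiently large $n$. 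Consequently, with probability at least $1 - O(\sqrt{\varepsilon b/d})$ the good event $\mathcal{G}_t \triangleq \{a_y(t)\geq 0.5\,\mu(t)\} \cap \{\|\bz^{(t)}\|^2 \leq 0.25\sqrt{\varepsilon b/d}\,\mu(t)^2\}$ holds.

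Because $\mu(\cdot)$ is nonincreasing, on $\mathcal{G}_t$ we have $\mu(t) \geq \mu(0.6\,n\ttau^{\stored}) = 4\eta n$, so that $a_y(t)/\sqrt n \geq 2\eta\sqrt n \gg \eta$. Thus any $u\in\oR^\eta_t$ must satisfy $|\bz_u^{(t)}| \geq a_y(t)/\sqrt n - \eta \geq a_y(t)/(2\sqrt n)$, and a pointwise Markov bound yields
\begin{equation*}
|\oR^\eta_t| \;\leq\; \frac{4n\,\|\bz^{(t)}\|^2}{a_y(t)^2} \;\leq\; \frac{4n\cdot 0.25\sqrt{\varepsilon b/d}\,\mu(t)^2}{(0.5\,\mu(t))^2} \;=\; 4n\sqrt{\varepsilon b/d}.
\end{equation*}
On the complementary event $\overline{\mathcal{G}_t}$, which has probability $O(\sqrt{\varepsilon b/d})$, I use the trivial bound $|\oR^\eta_t|\leq n$; this contributes $O(n\sqrt{\varepsilon b/d})$ to the expectation. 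Summing the two contributions gives the claimed $\E_{\cE}|\oR^\eta_t| \leq O(n\sqrt{\varepsilon b/d})$.

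The only real obstacle I anticipate is the bookkeeping in the second step: verifying that the admissible window of Theorem~\ref{thm:main} really contains the full interval $[0.4\,n\tau^{\stored},\,0.6\,n\ttau^{\stored}]$ for the prescribed $\beta$, and tracking the multiplicative constants so that $a_y(t)/\sqrt n$ dominates $\eta$ by a comfortable factor (at least $2$) uniformly over the range of $t$. Both checks reduce to substituting the definitions of $\tau^{\stored}, \ttau^{\stored}, \eta$ and using $\mu(t) = \xi^{t/2}a_y(0)$, so no new ideas are needed beyond Theorem~\ref{thm:main}.
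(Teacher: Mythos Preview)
Your proposal is correct and follows essentially the same argument as the paper: apply Theorem~\ref{thm:main} with $\beta = \sqrt{d/(\varepsilon b)}$, condition on the good event $\{a_y(t)\geq 0.5\mu(t)\}\cap\{\|\bz^{(t)}\|^2\leq 0.25\sqrt{\varepsilon b/d}\,\mu(t)^2\}$, bound $|\oR^\eta_t|$ by a Markov-type counting argument on the coordinates of $\bz^{(t)}$, and handle the complementary event via the trivial bound $|\oR^\eta_t|\leq n$. The paper uses the threshold $0.25\,\mu(t)/\sqrt n$ on $|\bz_u^{(t)}|$ directly rather than your intermediate bound $a_y(t)/(2\sqrt n)$, but this is a cosmetic difference and yields the same $O(n\sqrt{\varepsilon b/d})$ estimate.
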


\begin{lemma} \label{lem:tend}
For any $\bx^{(0)}$ such that $\|\bz^{(0)}\|^2 \leqs n \sqrt{d/(\varepsilon b)} \|\by^{(0)}\|^2$ and every $t \in [0.4n\tau^{\tend}, 0.6n\ttau^{\tend}]$, we have $\E_{\cE} |R^{\eta}_t| \leqs O\left(n \sqrt{\varepsilon b / d}\right).$
\end{lemma}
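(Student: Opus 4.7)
The strategy is to instantiate the concentration result of Theorem~\ref{thm:main} with $\beta = \sqrt{d/(\varepsilon b)}$, use the fact that $\tau^{\tend}$ is chosen to exceed $\ttau^{\stored}$ by at least $10d/(\delta b)$ (so that $\mu(t)$ decays by an absolute constant factor between the two windows), and then close out with a Chebyshev/counting argument on $\bz^{(t)}$. The choice $\beta = \sqrt{d/(\varepsilon b)}$ is consistent with the hypothesis $\|\bz^{(0)}\|^2 \leqs n\sqrt{d/(\varepsilon b)}\|\by^{(0)}\|^2$ and makes the two error probabilities $O(\varepsilon\beta b/d)$ and $O(\sqrt{\varepsilon b/d})$ of Theorem~\ref{thm:main} coincide at $O(\sqrt{\varepsilon b/d})$.

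First I would verify that every $t \in [0.4 n\tau^{\tend}, 0.6 n \ttau^{\tend}]$ lies in the valid window of Theorem~\ref{thm:main}, that is, $t \in [\frac{8n}{\delta(\lambda_3-\lambda_2)}\log(\frac{nd\beta}{\varepsilon b}), \frac{n^2\beta}{1344\,\delta(\lambda_3-\lambda_2)}]$. The lower bound follows because $0.4n\tau^{\tend} \geqs 1.2 n\ttau^{\stored}\gg \tau^{\stored}$ and $\tau^{\stored}$ is already $\Theta(\log(nd/(\varepsilon b))/(\delta(\lambda_3-\lambda_2)))$; the upper bound holds for $n$ sufficiently large. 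Then a union bound over the two conclusions of Theorem~\ref{thm:main} yields an event $\cE_{\mathrm{good}}$ of probability at least $1-O(\sqrt{\varepsilon b/d})$ on which simultaneously $|a_y(t)|\leqs 1.5\,\mu(t)$ and $\|\bz^{(t)}\|^2 \leqs 0.25\sqrt{\varepsilon b/d}\,\mu(t)^2$.

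Next I would exploit the $10d/(\delta b)$ slack built into $\tau^{\tend}$ to show $\mu(t)$ is much smaller than $\eta$. Since $0.4n\tau^{\tend} - 0.6n\ttau^{\stored} \geqs 4nd/(\delta b) = 8n/(\delta\lambda_2)$, the monotonicity of $\mu$ gives
\[
\mu(t) \;\leqs\; \mu(0.6n\ttau^{\stored})\cdot\left(1-\tfrac{2\delta\lambda_2}{n}\right)^{8n/(\delta\lambda_2)} \;\leqs\; e^{-16}\,\mu(0.6n\ttau^{\stored})
\]
for every $t$ in the end-time window. Reading $\eta$ at the scale for which the informal equivalence $\chi_u(\bx_u^{(t)}-\bx_{\|,u})\approx \mu(t)/\sqrt{n}$ from the overview is literally correct (i.e.\ taking $\eta$ proportional to $\mu(0.6n\ttau^{\stored})/\sqrt{n}$ up to a small constant), this gives $\mu(t)/\sqrt{n} \leqs O(e^{-16})\,\eta \leqs \eta/4$.

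The counting step is then deterministic on $\cE_{\mathrm{good}}$. Decomposing $\bx_u^{(t)}-\bx_{\|,u} = a_y(t)\chi_u/\sqrt{n} + \bz_u^{(t)}$ and multiplying by $\chi_u$, any $u\in R^{\eta}_t$ must satisfy $a_y(t)/\sqrt{n} + \chi_u \bz_u^{(t)}\geqs \eta$. On $\cE_{\mathrm{good}}$ the first summand is at most $1.5\,\mu(t)/\sqrt{n}\leqs \eta/2$, so $\chi_u \bz_u^{(t)} \geqs \eta/2$, forcing $(\bz_u^{(t)})^2 \geqs \eta^2/4$. Summing yields the pointwise bound $|R^{\eta}_t|\leqs 4\|\bz^{(t)}\|^2/\eta^2 \leqs \sqrt{\varepsilon b/d}\,\mu(t)^2/\eta^2 = O(n\sqrt{\varepsilon b/d})$, using $\mu(t)/\eta = O(\sqrt{n})$ from the scale of $\eta$. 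Finally, the complement of $\cE_{\mathrm{good}}$ has probability $O(\sqrt{\varepsilon b/d})$ and contributes at most $n$ to $|R^{\eta}_t|$, so its expected contribution is also $O(n\sqrt{\varepsilon b/d})$. Adding the two cases gives the desired bound $\E_{\cE}|R^{\eta}_t| \leqs O(n\sqrt{\varepsilon b/d})$. The main obstacles will be (i) checking carefully that the end-time window is inside the Theorem~\ref{thm:main} window for the chosen $\beta$, and (ii) tracking the scale conventions for $\eta$ (reconciling the normalisation in the definition of $R^{\eta}_t$ with the $\mu(t)/\sqrt{n}$ scale that naturally arises from the decomposition), so that the inequality $\mu(t)/\sqrt{n} \leqs \eta/2$ is quantitatively correct.
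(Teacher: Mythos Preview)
Your proposal is correct and follows essentially the same route as the paper: instantiate Theorem~\ref{thm:main} with $\beta=\sqrt{d/(\varepsilon b)}$, exploit the gap $0.4n\tau^{\tend}-0.6n\ttau^{\stored}\geqs 4nd/(\delta b)$ to force $\mu(t)\leqs e^{-16}\mu(0.6n\ttau^{\stored})$, and then count via $\|\bz^{(t)}\|^2$ on the good event while absorbing the complement's $n\cdot O(\sqrt{\varepsilon b/d})$ contribution. One small cosmetic difference: in the counting step the paper takes the threshold on $|\bz_u^{(t)}|$ to be a constant times $\mu(t)/\sqrt{n}$ rather than $\eta/2$, so the $\mu(t)^2$ in $\|\bz^{(t)}\|^2\leqs 0.25\sqrt{\varepsilon b/d}\,\mu(t)^2$ cancels directly and the $O(n\sqrt{\varepsilon b/d})$ bound drops out without needing the auxiliary observation $\mu(t)^2/\eta^2=O(n)$; your version works but is one step longer. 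Your caveat about normalisation is well taken---the paper itself is slightly inconsistent between $\mu(t)/n$ and $\mu(t)/\sqrt{n}$ in the displayed computations, so tracking this carefully is indeed the main place where care is needed.
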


As stated in the proof overview, the second step is to argue that a random
sequence of edges $\cE$ is ``sufficiently uniform'' for most node $u$ with high probability. The notion of uniformity needed here is formalized below. Note that the parameters $a, b$ below will later be set to $\frac{\tau^{\stored}}{\log n}, \frac{\ttau^{\stored}}{\log n}$ to achieve uniformity for stored time and $\frac{\tau^{\tend}}{\log n}, \frac{\ttau^{\tend}}{\log n}$ to achieve uniformity for end time.

\begin{definition}
Let $a, b, \zeta$ be any positive real number such that $b \geqs 2a$. We say
that a node $u \in V$ is \emph{$(a, b, \zeta)$-uniform} (with respect to a sequence of edges $\cE$) if
\begin{itemize}
\item $T_u(a \log n) > 0.4an\log n$ and $T_u(b \log n + 1) \leqs 0.6bn\log n$, and, 
\item $\Pr_{\tau \in [a \log n, b \log n]} \left[T_u(\tau + 1) < T_u(\tau) + \sqrt{\zeta} n\right] \leqs 4\sqrt{\zeta}$. 
\end{itemize}
\end{definition}

We can argue that, for a random $\cE$, with high probability, most nodes are uniform, as stated below. Since this follows from standard Chernoff bound, we defer the proof to Appendix~\ref{app:standard}.

\begin{lemma} \label{lem:manystandardvertices}
With probability $1 - n^{-\Omega(\sqrt{\zeta}a)}$, at least $n - n^{1 -
    \Omega(\sqrt{\zeta}a)}$ nodes are $(a, b, \zeta)$-uniform.
\end{lemma}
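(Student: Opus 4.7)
\medskip
\noindent\textbf{Proof proposal.}
The plan is to fix a single vertex $u \in V$, bound the probability that $u$ fails to be $(a,b,\zeta)$-uniform by $n^{-\Omega(\sqrt{\zeta}\, a)}$, and then conclude by a first-moment / Markov argument over the $n$ vertices. The key observation that makes everything tractable is that, for a fixed $u$, the indicator that $u$ is an endpoint of the $i$-th active edge is a Bernoulli random variable of parameter $d/m = 2/n$, \emph{independently} across rounds $i$. Consequently, the local inter-arrival times $G_{\tau} \triangleq T_u(\tau+1) - T_u(\tau)$ are i.i.d. geometric random variables with mean $n/2$, which is the only probabilistic fact we will need.

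I would split the non-uniformity event into the three natural sub-events matching the definition and control each by a Chernoff-style estimate. (i) For the lower bound $T_u(a \log n) > 0.4 a n \log n$, note that this fails iff the number $N$ of activations of $u$ in the first $0.4 a n \log n$ rounds is at least $a \log n$. Since $N$ is a sum of Bernoulli$(2/n)$ variables with mean $0.8 a \log n$, a multiplicative Chernoff bound yields $\Pr[N \geqs a \log n] \leqs \exp(-\Omega(a \log n)) = n^{-\Omega(a)}$. (ii) For the upper bound $T_u(b \log n + 1) \leqs 0.6 b n \log n$, symmetrically, this fails iff the number of activations in the first $0.6 b n \log n$ rounds is less than $b \log n + 1$; the expectation is $1.2 b \log n$, so the multiplicative Chernoff bound again gives failure probability $n^{-\Omega(b)} \leqs n^{-\Omega(a)}$. (iii) For the ``few small gaps'' condition, let $Y_{\tau} \triangleq \mathbf{1}[G_{\tau} < \sqrt{\zeta}\, n]$. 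Since $G_\tau$ is geometric with parameter $2/n$, we have $\E[Y_\tau] = 1 - (1-2/n)^{\lfloor \sqrt{\zeta} n \rfloor} \leqs 2\sqrt{\zeta}$ for sufficiently large $n$. The $Y_\tau$ are i.i.d., so their sum over $\tau \in [a\log n, b\log n]$ has mean at most $2\sqrt{\zeta}(b-a)\log n$; the Chernoff bound gives that the sum exceeds $4\sqrt{\zeta}(b-a)\log n$ only with probability $\exp(-\Omega(\sqrt{\zeta}(b-a)\log n)) = n^{-\Omega(\sqrt{\zeta}\, a)}$ (using $b \geqs 2a$). Combining the three by a union bound, the probability that $u$ is not $(a,b,\zeta)$-uniform is at most $n^{-\Omega(\sqrt{\zeta}\, a)}$.

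Finally, let $X$ denote the (random) number of vertices that fail to be $(a,b,\zeta)$-uniform with respect to $\cE$. By linearity of expectation, $\E[X] \leqs n \cdot n^{-\Omega(\sqrt{\zeta}\, a)} = n^{1-\Omega(\sqrt{\zeta}\, a)}$. Markov's inequality applied with threshold $n^{1-\Omega(\sqrt{\zeta}\, a)}$ (taking the implicit constant in the exponent to be, say, half the one coming from the expectation bound) then yields $\Pr[X > n^{1-\Omega(\sqrt{\zeta}\, a)}] \leqs n^{-\Omega(\sqrt{\zeta}\, a)}$, which is exactly the statement of the lemma.

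\medskip
\noindent\textbf{Main obstacle.} Conceptually the proof is routine once one recognizes the independence of $u$'s activation indicators across rounds; there is no product-of-random-matrices issue here because we are asking a purely timing-based question about a single vertex. The only mildly delicate point is keeping the exponents consistent throughout: one wants the failure probability for the gap-count event (iii) to dominate, since it is weaker than the Chernoff bounds in (i) and (ii), and one must be careful that the $b \geqs 2a$ assumption is invoked so that the interval $[a\log n, b \log n]$ has length $\Omega(a \log n)$, guaranteeing the $n^{-\Omega(\sqrt{\zeta}\, a)}$ exponent rather than a weaker one.
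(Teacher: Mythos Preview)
Your proposal is correct and follows essentially the same approach as the paper: fix a vertex, bound each of the three failure sub-events by a Chernoff estimate (using that the per-round activation indicators are i.i.d.\ Bernoulli$(2/n)$), union bound to get per-vertex failure probability $n^{-\Omega(\sqrt{\zeta}a)}$, and finish with linearity of expectation plus Markov. The only cosmetic difference is that you bound $\Pr[G_\tau < \sqrt{\zeta}n]$ via the exact geometric tail $1-(1-2/n)^{\lfloor\sqrt{\zeta}n\rfloor}$, whereas the paper uses a direct union bound over the $\sqrt{\zeta}n$ rounds; both give $\leqs 2\sqrt{\zeta}$.
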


To state the main lemma of the final step of the proof, let us defined an additional notation: we call a sequence $\{S_t\}_{t \in \N}$ of subsets $S_t \subseteq V$ \emph{compatible with $\cE$} if, for every $t \in \N$, $S_t \triangle S_{t + 1} \subseteq \{u_{t + 1}, v_{t + 1}\}$, i.e., $S_{t + 1}$ can only differ from $S_t$ on the endpoints of the edge in step $t$. Observe that the sequences $\{R^\eta_t\}_{t \in N}$and $\{\oR^\eta_t\}_{t \in N}$ are compatible with $\cE$ since $\bx_u^{(t + 1)}$ can change only when $u \in \{u_{t + 1}, v_{t + 1}\}$. The main lemma of this part is stated below.

\begin{lemma} \label{lem:globaltolocal}
For any sequence of edges $\cE$ such that at least $(1 - \sqrt{\zeta})n$ nodes are $(a, b, \zeta)$-uniform and for any sequence of subsets $\{S_t\}_{t \in \N}$ that is compatible with $\cE$ and that $\E_{t \in [0.4an\log n, 0.6b n \log n]} \left[|S_t|\right] \leqs \zeta n$, we have
\begin{align*}
\Pr_{u \in V, \tau \in [a\log n, b\log n]}\left[u \in S_{T_u(\tau)}\right] \leqs O(\sqrt{\zeta}).
\end{align*}
\end{lemma}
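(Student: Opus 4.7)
\textbf{Proof plan for Lemma~\ref{lem:globaltolocal}.} The proof will be a double-counting argument that converts the ``local'' quantity $\Pr_{u,\tau}[u \in S_{T_u(\tau)}]$ into the ``global'' quantity $\E_t[|S_t|]/n$ by exploiting the compatibility of $\{S_t\}$ with $\cE$ together with uniformity. The key elementary observation is that, since $S_{t}\triangle S_{t+1}\subseteq\{u_{t+1},v_{t+1}\}$, the membership $\mathbb{1}[u\in S_t]$ is constant for $t$ ranging over any interval between two consecutive activations of $u$; in particular, for any $\tau$,
\[
(T_u(\tau+1)-T_u(\tau))\cdot\mathbb{1}[u\in S_{T_u(\tau)}]\;=\;\sum_{t=T_u(\tau)}^{T_u(\tau+1)-1}\mathbb{1}[u\in S_t].
\]

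The plan is to bound $N:=\sum_{u\in V}\sum_{\tau\in[a\log n,b\log n]}\mathbb{1}[u\in S_{T_u(\tau)}]$ and then divide by the total number of pairs $n(b-a)\log n\geqs nb\log n/2$ (using $b\geqs 2a$). First I would split $V$ into uniform and non-uniform nodes. The non-uniform nodes number at most $\sqrt{\zeta}n$ by hypothesis and each contributes at most $(b-a)\log n$ pairs, yielding at most $\sqrt{\zeta}n(b-a)\log n$ in $N$.

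For a uniform $u$, I would first use the identity above, telescope over $\tau\in[a\log n,b\log n]$, and then invoke the first uniformity condition ($T_u(a\log n)>0.4an\log n$ and $T_u(b\log n+1)\leqs 0.6bn\log n$) to conclude
\[
\sum_{\tau\in[a\log n,b\log n]}(T_u(\tau+1)-T_u(\tau))\cdot\mathbb{1}[u\in S_{T_u(\tau)}]\;\leqs\;\sum_{t\in[0.4an\log n,\,0.6bn\log n]}\mathbb{1}[u\in S_t].
\]
Next, I split the $\tau$'s into ``well-spaced'' ones (with $T_u(\tau+1)-T_u(\tau)\geqs\sqrt{\zeta}n$) and the rest. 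The second uniformity condition guarantees that at most $4\sqrt{\zeta}(b-a)\log n$ of the $\tau$'s are not well-spaced, and for the well-spaced ones the displayed bound gives
\[
\sum_{\tau\text{ well-spaced}}\mathbb{1}[u\in S_{T_u(\tau)}]\;\leqs\;\frac{1}{\sqrt{\zeta}n}\sum_{t\in[0.4an\log n,\,0.6bn\log n]}\mathbb{1}[u\in S_t].
\]

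Finally, summing this bound over uniform $u$ (so the right-hand side becomes $\sum_t|S_t|/(\sqrt{\zeta}n)$), adding the contribution of non-well-spaced $\tau$'s across uniform $u$'s (at most $4\sqrt{\zeta}n(b-a)\log n$) and of non-uniform $u$'s (at most $\sqrt{\zeta}n(b-a)\log n$), and using the hypothesis $\E_{t\in[0.4an\log n,0.6bn\log n]}[|S_t|]\leqs \zeta n$ gives
\[
N\;\leqs\;\frac{\zeta n\cdot 0.6bn\log n}{\sqrt{\zeta}\,n}+O(\sqrt{\zeta}\,bn\log n)\;=\;O(\sqrt{\zeta}\,bn\log n).
\]
Dividing by $n(b-a)\log n\geqs nb\log n/2$ yields the advertised $O(\sqrt{\zeta})$ bound. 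There is no serious obstacle here; the entire argument is driven by the pointwise identity above, and the only care is in booking the exact constants so that the ``well-spaced'' window $[0.4an\log n,0.6bn\log n]$ supplied by uniformity matches the window in the hypothesis on $\E_t[|S_t|]$.
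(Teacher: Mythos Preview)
Your proposal is correct and is essentially the same argument as the paper's own proof: both split off non-uniform vertices and, for uniform ones, split off the $\tau$'s with short gaps, then use the compatibility identity $(T_u(\tau+1)-T_u(\tau))\cdot\mathbb{1}[u\in S_{T_u(\tau)}]=\sum_{t=T_u(\tau)}^{T_u(\tau+1)-1}\mathbb{1}[u\in S_t]$ to convert the remaining local sum into a global one bounded via the hypothesis on $\E_t[|S_t|]$. The only cosmetic difference is that you phrase everything as a count $N$ and normalize at the end, while the paper works directly with probabilities throughout.
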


$S_t$ should be thought of as the set of bad nodes for $t$; for stored time, we
should think of $S_t$ as $\oR^{\eta}_t$ whereas, for end time, we should think
of $S_t$ as $R^{\eta}_t$. The above lemma asserts that, by randomly choosing a
local time from $[a\log n, b \log n]$, most nodesill likely end up in a global time step where it is good. The proof of Lemma~\ref{lem:globaltolocal} is deferred to Subsection~\ref{subsec:global-to-local}.

Let us now show how to use these lemmas to prove Theorem~\ref{thm:main-reconstruct}.

\begin{proof}[Proof of Theorem~\ref{thm:main-reconstruct}]
    First, note that the fact that every node is labeled at time $O(\frac{n \log
    n}{\delta(\lambda_3 - \lambda_2)} + \frac{nd}{\delta b})$ w.h.p. follows easily
    from applying a union bound on top of a Chernoff bound on $\Pr_{\cE}
    [T_u(\ttau^{\tend}) \leqs 100 n \ttau^{\tend}]$ for each node $u \in V$;
    this latter probability is simply the same as the probability that sum of
    $100 n\ttau^{\tend}$ i.i.d. Bernoulli random variables each with mean $2/n$
    is less than $\ttau^{\tend}$.


    To we prove the reconstruction guarantee, let us define the following notations for brevity:
    \begin{itemize}
    \item $\Theta^{\initial}$ denotes the event that $\|\bz^{(0)}\|^2 \leqs n\sqrt{d/(\varepsilon b)}\|\by^{(0)}\|^2$.
    \item $\Theta^{\stored}$ denotes the event that $\E_{t \in [0.4n\tau^{\stored}, 0.6n\tau^{\stored}]} |\oR^\eta_t| \leqs n\sqrt[4]{\varepsilon b / d}$.
    \item $\Theta^{\tend}$ denotes the event that $\E_{t \in [0.4n\tau^{\tend}, 0.6n\tau^{\tend}]} |R^\eta_t| \leqs n\sqrt[4]{\varepsilon b / d}$.
    \item Let $a^{\stored} \triangleq \frac{\tau^{\stored}}{\log n},
        b^{\stored} \triangleq \frac{\ttau^{\stored}}{\log n}, a^{\tend}
        \triangleq \frac{\tau^{\tend}}{\log n}$ and $b^{\tend} \triangleq
        \frac{\ttau^{\tend}}{\log n}$.
    \item Let $\zeta \triangleq \max\left\{\sqrt{\frac{\varepsilon b}{d}}, \frac{1}{\log n}\right\}$.
    \item $\Theta^{\uni, \stored}$ denotes the event that at least $(1 -
        \sqrt{\zeta})n$ nodes are $(a^{\stored}, b^{\stored}, \zeta)$-uniform.
    \item $\Theta^{\uni, \tend}$ denotes the event that at least $(1 -
        \sqrt{\zeta})n$ nodes are $(a^{\tend}, b^{\tend}, \zeta)$-uniform.
    \item $\Theta$ denotes the event that $\Theta^{\initial}, \Theta^{\stored}, \Theta^{\tend}, \Theta^{\uni, \stored}$ and $\Theta^{\uni, \tend}$ all occur. 
    \end{itemize}
    Note that $\Theta$ here is the ``nice'' event, where the conditions required in Lemma~\ref{lem:tstored}, Lemma~\ref{lem:tend} and Lemma~\ref{lem:globaltolocal} are satisfied and we can invoke them. Our proof will proceed in two steps: we will first show that the probability that $\Theta$ occurs is large and, then, we will use our auxiliary lemmas to show that, conditioned on $\Theta$ happening, we achieve the desired reconstruction most of the time.

    To bound $\Pr_{\bx^{(0)}, \cE}\left[\neg \Theta\right]$, first note that,
    from Proposition~\ref{prop:starting-condition}, we have
    $\Pr_{\bx^{(0)}}[\neg \Theta^{\initial}] \leqs O(\sqrt[4]{\varepsilon b /
    d})$. Moreover, from Lemma~\ref{lem:tstored}, we have $$\E_{\bx^{(0)},
    \cE}\left[\E_{t \in [0.4n\tau^{\stored}, 0.6n\tau^{\stored}]} |\oR^\eta_t|
    \midv \Theta^{\initial}\right] \leqs O\left(n \sqrt{\varepsilon b /
    d}\right).$$ From Markov's inequality, this implies that $\Pr_{\bx^{(0)},
    \cE}[\neg \Theta^{\stored} \mid \Theta^{\initial}] \leqs
    O\left(\sqrt[4]{\varepsilon b / d}\right)$. Similarly, Lemma~\ref{lem:tend}
    implies that $\Pr_{\bx^{(0)}, \cE}[\neg \Theta^{\tend} \mid \Theta^{\initial}]
    \leqs O\left(\sqrt[4]{\varepsilon b / d}\right)$. Now, note that, since $\zeta
    \geqs 1 / \log n$ and $a^{\stored} \geqs 1$, we have that $\sqrt{\zeta} n \geqs
    n^{1 - \Omega(\sqrt{\zeta} a^{\stored})}$ for sufficiently large $n$. Hence, we
    can apply Lemma~\ref{lem:manystandardvertices}, which implies that
    $\Pr_{\cE}[\neg \Theta^{\uni, \stored}] \leqs n^{-\Omega(\sqrt{\zeta} a)} \leqs
    O(\sqrt{\zeta})$. Similarly, we also have $\Pr_{\cE}[\neg \Theta^{\uni, \tend}]
    \leqs O(\sqrt{\zeta})$. Finally, by combining these four bounds, we have
    \begin{align}
    \Pr[\neg \Theta] 
    &\leqs \Pr[\neg \Theta^{\initial}] + \Pr[\neg \Theta^{\stored} \mid
        \Theta^{\initial}] \nonumber\\
    &\qquad+ \Pr[\neg \Theta^{\tend} \mid \Theta^{\initial}] + \Pr[\neg
        \Theta^{\uni, \stored}]  + \Pr[\neg \Theta^{\uni, \tend}] \nonumber \\
    &\leqs O(\sqrt{\zeta}). \label{eq:nicebound}
    \end{align}

    We now proceed to the second part of the proof. Let us denote the set of
    nodes incorrectly labeled by $V^{\incorrect}$, i.e., $V^{\incorrect} =
    \left\{u \in V \mid \chi_u\left(\bx_{T_u(\tau_u^{\stored})} -
    \bx_{T_u(\tau_u^{\tend})}\right) < 0\right\}$. We will show that 
    \begin{align} 
        \label{eq:expected-correct}
        \E_{\bx^{(0)}, \cE, \{\tau_u^{\stored}\}_{u \in V},
        \{\tau_u^{\tend}\}_{u \in V}} \left[|V^{\incorrect}| \midv
        \Theta\right] \leqs O\left(n\sqrt{\zeta}\right).
    \end{align}
    Before we prove the above inequality, let us first show how this implies
    the desired reconstruction property. By applying Markov's inequality to
    (\ref{eq:expected-correct}), we can conclude that $\Pr
    \left[|V^{\incorrect}| \geqs n\sqrt[4]{\zeta} / 2 \midv \Theta\right] \leqs
    O(\sqrt[4]{\zeta})$. From this and from (\ref{eq:nicebound}), we can
    conclude that
    $\Pr\left[|V^{\incorrect}| \geqs n\sqrt[4]{\zeta} / 2\right] \leqs
    O(\sqrt[4]{\zeta})$. Note that, when $|V^{\incorrect}| \leqs
    n\sqrt[4]{\zeta} / 2$, the protocol achieves a $\sqrt[4]{\zeta}$-weak
    reconstruction. Hence, with probability $1 - O(\sqrt[4]{\zeta})$, our
    protocol achieves a $\sqrt[4]{\zeta}$-weak reconstruction of the graph.
    Since $\zeta = \max\{\sqrt{\varepsilon b/d}, 1/\log n\}$, this indeed
    implies the reconstruction property as stated in
    Theorem~\ref{thm:main-reconstruct}.

    Finally, let us next prove (\ref{eq:expected-correct}) and complete our
    proof of Theorem~\ref{thm:main-reconstruct}. Since $\{\tau_u^{\stored}\}_{u
    \in V}, \{\tau_u^{\tend}\}_{u \in V}$ are independent of $\Theta$, we can
    write the left-hand side of (\ref{eq:expected-correct}) as
    \begin{align*}
        & \E_{\bx^{(0)}, \cE, \{\tau_u^{\stored}\}_{u \in V}, \{\tau_u^{\tend}\}_{u
            \in V}} \left[|V^{\incorrect}| \midv \Theta\right] \\
        & = \E_{\bx^{(0)}, \cE, \{\tau_u^{\stored}\}_{u \in V},
            \{\tau_u^{\tend}\}_{u \in V}} \left[n \cdot \Pr_{u \in V} [u \in
            V^{\incorrect}] \midv \Theta\right] \\
        & = \E_{\bx^{(0)}, \cE} \left[n \cdot \Pr_{u \in V, \tau_u^{\stored},
            \tau_u^{\tend}} [u \in V^{\incorrect}] \midv \Theta\right].
    \end{align*}

    Hence, it suffices for us to show that, assuming that $\Theta$ happens, $\Pr_{u \in V, \tau_u^{\stored}, \tau_u^{\tend}} [u \in V^{\incorrect}] \leqs O(\sqrt{\zeta})$. To prove this, first observe that if $u \in V^{\incorrect}$, then either $u \in \oR^\eta_{T_u(\tau^{\stored}_u)}$ or $u \in R^\eta_{T_u(\tau^{\tend}_u)}$ or both. Thus, we have
    \begin{align*}
    \Pr_{u \in V, \tau_u^{\stored}, \tau_u^{\tend}} [u \in V^{\incorrect}] \leqs \Pr_{u \in V, \tau_u^{\stored}} [u \in \oR^\eta_{T_u(\tau^{\stored}_u)}] + \Pr_{u \in V, \tau_u^{\tend}} [u \in R^\eta_{T_u(\tau^{\tend}_u)}].
    \end{align*}
    Since we assume that $\Theta$ occurs, $\Theta^{\uni, \stored}$ also occurs, which means that we can apply Lemma~\ref{lem:globaltolocal} for the sequence $S_t = \oR^{\eta}_t, a = a^{\stored}$ and $b = b^{\stored}$. This implies that
    \begin{align*}
    \Pr_{u \in V, \tau_u^{\stored}} [u \in \oR^\eta_{T_u(\tau^{\stored}_u)}] \leqs O(\sqrt{\zeta}).
    \end{align*}
    Similarly, applying Lemma~\ref{lem:globaltolocal} with $S_t = R^{\eta}_t, a = a^{\tend}$ and $b = b^{\tend}$, we also have
    \begin{align*}
    \Pr_{u \in V, \tau_u^{\stored}} [u \in R^\eta_{T_u(\tau^{\tend}_u)}] \leqs O(\sqrt{\zeta}).
    \end{align*}
    By combining the above three inequalities, we indeed arrive at the desired bound.
\end{proof}

\subsection{Bounding $|R^\eta_t|$ and $|\oR^{\eta}_t|$: Proofs of Lemma~\ref{lem:tstored} and Lemma~\ref{lem:tend}} \label{subsec:bound-mu-correlated}

\begin{proof}[Proof of Lemma~\ref{lem:tstored}]
Consider any $t \in [0.4n\tau^{\stored}, 0.6n\ttau^{\stored}]$. For brevity, let $\Theta$ denote an event that $a_y(t) \in [0.5\mu(t), 1.5\mu(t)]$ and $\|\bz(t)\| \leqs \left(0.5\sqrt[4]{\varepsilon b / d}\right) \mu(t)$. Applying Theorem~\ref{thm:main} with $\beta = \sqrt{\frac{d}{\varepsilon b}}$, we have $\Pr_\cE[\Theta] \geqs 1 - O(\sqrt{\varepsilon b / d})$.

Now, let us bound the size of $\oR^{\eta}_t$ conditioned on $\Theta$ happening. To do so, first recall that, since $\bx(t) = a_{||} \cdot (\bone / \sqrt{n}) + a_y(t) \cdot (\chi / \sqrt{n}) + \bz(t)$, we have $\bx_u(t) = a_{||} + \frac{a_y(t)\chi_u}{\sqrt{n}} + \bz_u(t)$. Hence, we have
\begin{align*}
(\bx_u(t) - a_{||})\chi_u - \eta = \frac{a_y(t)}{n} + \chi_u\bz_u(t) - \eta \geqs \frac{0.25 \mu(t)}{n} + \chi_u\bz_u(t).
\end{align*}
where the inequality comes from $a_y(t) \geqs 0.5\mu(t)$ and from $\mu(t) \geqs \mu(0.6n\ttau^{\stored})$. This inequality implies that, if $u \in \oR^{\eta}_t$, then $|\bz_u(t)| > \frac{0.25 \mu(t)}{\sqrt{n}}$. As a result, we can conclude that
\begin{align*}
|\oR^\eta_t| < \frac{\|\bz(t)\|^2}{(0.25\mu(t) / \sqrt{n})^2} \leqs O\left(n\sqrt{\varepsilon b / d}\right)
\end{align*}
where the latter comes from $\|\bz(t)\| \leqs \left(0.5\sqrt[4]{\varepsilon b / d}\right)\mu(t)$.

Thus, we have
\begin{align*}
\E_\cE |\oR^\eta_t|
= \E_\cE [|\oR^\eta_t| \mid \Theta]\Pr[\Theta] + \E_\cE [|\oR^\eta_t| \mid \neg \Theta]\Pr[\neg \Theta]
\leqs O(n\sqrt{\varepsilon b / d}) \cdot 1 + n \cdot O(\sqrt{\varepsilon b / d})
\leqs O(n\sqrt{\varepsilon b / d})
\end{align*}
as desired.
\end{proof}

The proof of Lemma~\ref{lem:tstored} is analogous to the above proof and is presented below.

\begin{proof}[Proof of Lemma~\ref{lem:tend}]
    Consider any $t \in [0.4n\tau^{\tend}, 0.6n\ttau^{\tend}]$. First of all, let us note that
    \begin{align} \label{eq:mutoeta}
    \frac{\mu(t)}{\eta} 
    = 4\left(1 - \frac{4\delta b}{d n}\right)^{t - 0.6n\ttau^{\stored}}
    \leqs 4\left(1 - \frac{4\delta b}{d n}\right)^{\frac{4dn}{\delta b}}
    \leqs 4 \cdot e^{-16}
    \leqs 0.25
    \end{align}
    where the second inequality comes from the fact that $1 + x \leqs e^x$ for all $x \in \R$.

    The rest of the proof proceeds similar to the proof of Lemma~\ref{lem:tstored}. Again, let $\Theta$ denote an event that $a_y(t) \in [0.5\mu(t), 1.5\mu(t)]$ and $\|\bz(t)\| \leqs \left(0.5\sqrt[4]{\varepsilon b / d}\right) \mu(t)$. From Theorem~\ref{thm:main} with $\beta = \sqrt{\frac{d}{\varepsilon b}}$, we have $\Pr_\cE[\Theta] \geqs 1 - O(\sqrt{\varepsilon b / d})$.

    Conditioned on $\Theta$, observe that
    \begin{align*}
    (\bx_u(t) - a_{||})\chi_u - \eta = \frac{a_y(t)}{n} + \chi_u\bz_u(t) - \eta \leqs - \frac{0.5 \mu(t)}{n} + \chi_u\bz_u(t).
    \end{align*}
    where the inequality comes from $a_y(t) \leqs 1.5\mu(t)$ and (\ref{eq:mutoeta}). Hence, if $u \in R^\eta_t$, then $|\bz_u(t)| > \frac{0.5\mu(t)}{\sqrt{n}}$. Since $\|\bz(t)\| \leqs \left(0.5\sqrt[4]{\varepsilon b / d}\right)\mu(t)$, this implies that $|R^\eta_t| < \frac{\|\bz(t)\|^2}{(0.5\mu(t) / \sqrt{n})^2} \leqs n\sqrt{\varepsilon b / d}$.

    Thus, we have
    \begin{align*}
    \E_\cE |R^\eta_t|
    = \E_\cE [|R^\eta_t| \mid \Theta]\Pr[\Theta] + \E_\cE [|R^\eta_t| \mid \neg \Theta]\Pr[\neg \Theta]
    \leqs n\sqrt{\varepsilon b / d} + n \cdot O(\sqrt{\varepsilon b / d})
    \leqs O(n\sqrt{\varepsilon b / d})
    \end{align*}
    as desired.
\end{proof}

\subsection{Most Vertices are Uniform: Proof of Lemma~\ref{lem:manystandardvertices}} \label{app:standard}

\begin{proof}[Proof of Lemma~\ref{lem:manystandardvertices}]
    Let us fix a vertex $u \in V$. We will compute the probability that $u$ is
    $(a, b, \zeta)$-uniform. First, we will bound the probability that the
    first condition is not satisfied. To do so, let us introduce an additional
    notation; we use $X_t$ to denote an indicator variable of the event $u \in
    \{u_t, v_t\}$. Note each $X_t$ is an i.i.d. Bernoulli random variable which
    is one with probability $2/n$. The probability that $T_u(a \log n) <
    0.4an\log n$ can now be written in terms of $X_t$'s as follows.
    \begin{align} 
    \Pr_{\cE}[T_u(a \log n) \leqs 0.4an\log n]
    &= \Pr_{\cE}[X_1 + \cdots + X_{0.4an\log n} \geqs a \log n] 
    \leqs 2^{-\Omega(a \log n)}
        \label{eq:standard1}
    \end{align}
    where the inequality comes from an application of Chernoff bound. Similarly, we get the following bound for $\Pr_{\cE}[T_u(b \log n + 1) > 0.6b \log n]$:
    \begin{align} \label{eq:standard2}
    \Pr_{\cE}[T_u(b \log n + 1) < 0.6b \log n] = \Pr_{\cE}[X_1 + \cdots + X_{0.6bn\log n} \leqs b \log n] \leqs 2^{-\Omega(b \log n)}.
    \end{align}

    Next, we proceed to bound the probability that the second condition fails. Let $Y_\tau$ denote an indicator variable of the event $T_u(\tau + 1) < T_u(\tau) + \sqrt{\zeta} n$. Observe that $Y_\tau$'s are i.i.d. Moreover, the probability that $Y_\tau = 1$ can be bounded as follows.
    \begin{align*}
    \Pr[Y_\tau = 1] = \Pr\left[\bigwedge_{i = T_u(\tau)}^{T_u(\tau) + \sqrt{\zeta} n - 1} u \in \{u_t, v_t\}\right]
    \leqs \sum_{i = T_u(\tau)}^{T_u(\tau) + \sqrt{\zeta} n - 1} \Pr\left[u \in \{u_t, v_t\}\right]
    = 2 \sqrt{\zeta}.
    \end{align*}

    Hence, by Chernoff bound, we have
    \begin{align} \label{eq:standard3}
    \Pr\left[\sum_{\tau = a \log n}^{b \log n} Y_\tau \leqs 4\sqrt{\zeta}(b - a)\log n\right] &\leqs 2^{-\Omega(\sqrt{\zeta}(b - a)\log n)}.
    \end{align}

    Observe that $\sum_{\tau = a \log n}^{b \log n} Y_\tau \leqs
    4\sqrt{\zeta}(b - a)\log n$ is equivalent to 
    \[
        \Pr_{\tau \in [a \log n, b \log n]} \left[T_u(\tau + 1) < T_u(\tau) +
        \sqrt{\zeta} n\right] \leqs 4\sqrt{\zeta}.
    \]
    Thus, (\ref{eq:standard1}), (\ref{eq:standard2}) and
    (\ref{eq:standard3}) together with the fact that $b = 2a$ imply that the
    probability that $u$ is $(a, b, \zeta)$-standard is at least $1 -
    2^{-\Omega(\sqrt{\zeta} a \log n)}$, which is at least $1 -
    n^{-C\sqrt{\zeta} a}$ for some global constant $C$.

    As a result, the expected number of vertices that are not $(a, b,
    \zeta)$-uniform is at most $n^{1 - C\sqrt{\zeta} a}$. Hence, an application
    of Markov's inequality implies that, with probability at most
    $n^{C\sqrt{\zeta} a / 2}$, the number of non-uniform vertices is at most
    $n^{1 - C\sqrt{\zeta}a/2}$, which concludes the proof of this lemma.
\end{proof}

\subsection{From Global to Local Time: Proof of Lemma~\ref{lem:globaltolocal}} \label{subsec:global-to-local}

In this subsection, we present the proof of Lemma~\ref{lem:globaltolocal}.

\begin{proof}[Proof of Lemma~\ref{lem:globaltolocal}]
Let $St \subseteq V$ denote the set of $(a, b, \zeta)$-uniform vertices. We can first write the left-handside term so that we separate out the uniform $u$'s from the non-uniform ones as follows.
\begin{align*}
\Pr_{u \in V, \tau \in [a\log n, b\log n]}\left[u \in S_{T_u(\tau)}\right] 
&\leqs \Pr_{u \in V, \tau \in [a\log n, b\log n]}\left[u \in St \wedge u \in S_{T_u(\tau)}\right] + \Pr_{u \in V}[u \notin St] \\
&\leqs \Pr_{u \in V, \tau \in [a\log n, b\log n]}\left[u \in St \wedge u \in S_{T_u(\tau)}\right] + \sqrt{\zeta}
\end{align*}
where the last inequality comes from our assumption that there are only $\sqrt{\zeta}n$ non-uniform vertices. For each vertex $u \in V$, denote the set of $\tau \in [a \log n, b \log n]$ such that $T_u(\tau + 1) - T_u(\tau) < \sqrt{\zeta} n$ by $R_u$. We can further bound the term $\Pr_{u \in V, \tau \in [a\log n, b\log n]}\left[u \in St \wedge u \in S_{T_u(\tau)}\right]$ by
\begin{align*}
\Pr_{u \in V, \tau \in [a\log n, b\log n]}\left[u \in St \wedge u \in S_{T_u(\tau)}\right]
&\leqs \Pr_{u \in V, \tau \in [a\log n, b\log n]}\left[u \in St \wedge \tau \notin R_u \wedge u \in S_{T_u(\tau)}\right] \\
&\text{ } + \Pr_{u \in V, \tau \in [a\log n, b\log n]}\left[u \in St \wedge \tau \in R_u\right] \\
&\leqs \Pr_{u \in V, \tau \in [a\log n, b\log n]}\left[u \in St \wedge \tau \notin R_u \wedge u \in S_{T_u(\tau)}\right] + O(\sqrt{\zeta})
\end{align*}
where the second inequality comes from the fact that, if $u$ is $(a, b,
\zeta)$-uniform, then $\Pr_{\tau \in [a\log n, b\log n]}\left[\tau \in
R_u\right] \leqs 4\sqrt{\zeta}$. Hence, to prove the intended inequality, it
suffices to show that 
\[
    \Pr_{u \in V, \tau \in [a\log n, b\log n]}\left[u \in St
        \wedge \tau \notin R_u \wedge u \in S_{T_u(\tau)}\right] \leqs \sqrt{\zeta}.
\]
Observe that this probability can be further rearranged as
\begin{align} \label{eq:boundlastterm}
    &\Pr_{u \in V, \tau \in [a\log n, b\log n]}\left[u \in St \wedge \tau
        \notin R_u \wedge u \in S_{T_u(\tau)}\right] \nonumber \\
    &= \frac{1}{n(b \log n - a \log n + 1)}\left(\sum_{u \in V} \sum_{\tau \in
        [a\log n, b\log n]} \ind\left[u \in St \wedge \tau \notin R_u \wedge u \in
        S_{T_u(\tau)}\right] \right) \nonumber \\
    &\leqs \frac{1}{na\log n}\left(\sum_{u \in St} \sum_{\tau \in [a\log n,
        b\log n] \setminus R_u} \ind\left[u \in S_{T_u(\tau)}\right]\right).
\end{align}
Let us fix $u \in St$ and $\tau \in [a\log n, b\log n] \setminus R_u$. Recall that, since $\{S_t\}_{t \in \N}$ is compatible with $\cE$, we have $\ind[u \in S_{t}] = \ind[u \in S_{T_u(\tau)}]$ for every $t \in [T_u(\tau), T_u(\tau + 1))$. Moreover, because $\tau \notin R_u$, we have $T_u(\tau + 1) - T_u(\tau) \geqs \sqrt{\zeta} n$. Thus, we have
\begin{align*}
 \ind\left[\tau \notin R_u \wedge u \in S_{T(\tau)}\right] &= \ind\left[\tau \notin R_u\right]\ind\left[u \in S_{T_u(\tau)}\right] \\
 &= \ind\left[\tau \notin R_u\right]\left(\frac{\sum_{t = T_u(\tau)}^{T_u(\tau + 1) - 1} \ind\left[u \in S_t\right]}{T_u(\tau + 1) - T_u(\tau)}\right) \\
 &\leqs \ind\left[\tau \notin R_u\right]\left(\frac{\sum_{t = T_u(\tau)}^{T_u(\tau + 1) - 1} \ind\left[u \in S_t\right]}{\sqrt{\zeta} n}\right) \\
 &\leqs \frac{\sum_{t = T_u(\tau)}^{T_u(\tau + 1) - 1} \ind\left[u \in S_t\right]}{\sqrt{\zeta} n}.
\end{align*}

Plugging the above inequality back into (\ref{eq:boundlastterm}), the
probability
\[
    \Pr_{u \in V, \tau \in [a\log n, b\log n]}\left[u \in St \wedge \tau \notin R_u \wedge u \in S_{T_u(\tau)}\right]
\]
can be upper bounded by
\begin{align*}
&\frac{1}{\sqrt{\zeta}n^2 a \log n} \left(\sum_{u \in St} \sum_{\tau \in [a\log n, b\log n] \setminus R_u} \sum_{t = T_u(\tau)}^{T_u(\tau + 1) - 1} \ind\left[u \in S_t\right]\right) \\
&\leqs \frac{1}{\sqrt{\zeta}n^2 a \log n} \left(\sum_{u \in St} \sum_{\tau \in [a\log n, b\log n]} \sum_{t = T_u(\tau)}^{T_u(\tau + 1) - 1} \ind\left[u \in S_t\right]\right) \\
&= \frac{1}{\sqrt{\zeta}n^2 a \log n} \left(\sum_{u \in St} \sum_{t = T_u(a \log n)}^{T_u(b \log n + 1) - 1} \ind\left[u \in S_t\right]\right).
\end{align*}
Finally, recall from definition of uniform vertices that, if $u$ is $(a, b, \zeta)$-uniform, then $T_u(a \log n) > 0.4an \log n$ and $T_u(b \log n + 1) \leqs 0.6 b \log n$. Combining this with the above inequality, we have
\begin{align*}
\Pr_{u \in V, \tau \in [a\log n, b\log n]}\left[u \in St \wedge \tau \notin R_u \wedge u \in S_{T_u(\tau)}\right] &\leqs \frac{1}{\sqrt{\zeta}n^2 a \log n} \left(\sum_{u \in St} \sum_{t = 0.4 an\log n}^{0.6 b \log n - 1} \ind\left[u \in S_t\right]\right) \\
&\leqs \frac{1}{\sqrt{\zeta}n^2 a \log n} \left(\sum_{u \in V} \sum_{t = 0.4 an\log n}^{0.6 b \log n - 1} \ind\left[u \in S_t\right]\right) \\
&= \frac{1}{\sqrt{\zeta}n^2 a \log n} \left(\sum_{t = 0.4 an\log n}^{0.6 b \log n - 1} \sum_{u \in V} \ind\left[u \in S_t\right]\right) \\
&= \frac{1}{\sqrt{\zeta}n^2 a \log n} \left(\sum_{t = 0.4 an\log n}^{0.6 b \log n - 1} |S_t|\right) \\
\left(\text{Since } \E_{t \in [0.4an\log n, 0.6 b \log n]} |S_t| \leqs \zeta n\right) &\leqs \frac{1}{\sqrt{\zeta}n^2 a \log n} \left((0.6 bn \log n - 0.4 an \log n + 1)\zeta n\right) \\
&= O(\sqrt{\zeta}),
\end{align*}
which concludes our proof.
\end{proof}

\end{document}